\def\url@leostyle{%
 \@ifundefined{selectfont}{\def\UrlFont{\sf}}{\def\UrlFont{\scriptsize\ttfamily}}} \makeatother\urlstyle{leo}
\newtheorem{theorem}{Theorem}
\newtheorem{proposition}[theorem]{Proposition}
\newtheorem{lemma}[theorem]{Lemma}
\newtheorem{corollary}[theorem]{Corollary}
\theoremstyle{definition}
\newtheorem{definition}[theorem]{Definition}
\newtheorem{example}[theorem]{Example}
\theoremstyle{remark}
\newtheorem{remark}[theorem]{Remark}
\numberwithin{equation}{section}
\numberwithin{theorem}{section}
\definecolor{Red}{rgb}{0.9,0,0.0}
\definecolor{blue}{rgb}{0,0.0,1.0}
\def\cD{\mathcal{D}}
\def\cE{\mathcal{E}}
\def\cF{\mathcal{F}}
\def\cH{\mathcal{H}}
\def\cL{\mathcal{L}}
\def\cM{\mathcal{M}}
\def\cN{\mathcal{N}}
\def\cS{\mathcal{S}}
\def\cT{\mathcal{T}}
\def\cX{\mathcal{X}}
\def\bE{\mathbb{E}}
\def\bF{\mathbb{F}}
\def\bN{\mathbb{N}}
\def\bP{\mathbb{P}}
\def\bQ{\mathbb{Q}}
\def\bR{\mathbb{R}}
\def\bZ{\mathbb{Z}}
\newcommand{\1}{\mathbbm{1}}            
\newcommand{\set}[1]{\{#1\}}            
\renewcommand{\mid}{\;|\;}              
\newcommand{\bsde}{BS$\Delta$E}
\newcommand{\bsdes}{BS$\Delta$Es}
\newcommand{\ask}{\text{ask}}
\newcommand{\bid}{\text{bid}}
\DeclareMathOperator*{\esssup}{ess\,sup} 
\DeclareMathOperator*{\essinf}{ess\,inf} 
\DeclareMathOperator{\var}{\mathrm{V}@\mathrm{R}}           
\title{Dynamic Conic Finance via Backward Stochastic Difference Equations}
\author{Tomasz R. Bielecki\\[-0.3ex]
\url{bielecki@iit.edu} \\[-0.3ex]
\and
Igor Cialenco\\[-0.3ex]
\url{igor@math.iit.edu} \\[-0.3ex]
\and
Tao Chen\\[-0.3ex]
\url{tchen29@hawk.iit.edu}\\[-0.3ex]
\and
\small{Department of Applied Mathematics,}\\[-0.3ex]
\small{Illinois Institute of Technology,}\\[-0.3ex]
\small{Chicago, 60616 IL, USA}\\[-0.3ex]
}
\date{First Circulated: December 18, 2014}
\begin{document}
\maketitle
\begin{abstract}
\noindent
We present an arbitrage free theoretical framework for modeling bid and ask prices of dividend paying securities in a discrete time setup using theory of dynamic acceptability indices.   In the first part of the paper we develop the theory of dynamic subscale invariant performance measures, on a general probability space, and discrete time  setup. We prove a representation theorem of such measures in terms of a family of dynamic convex risk measures, and provide a representation of dynamic risk measures in terms of g-expectations, and solutions of \bsdes \ with convex drivers. We study the existence and uniqueness of the solutions, and derive a comparison theorem for corresponding \bsdes.
 In the second part of the paper we discuss a market model for dividend paying securities by introducing the pricing operators that are defined in terms of dynamic acceptability indices, and find various properties of these operators. Using these pricing operators,  we define the bid and ask prices for the underlying securities and then for derivatives in this market.  We show that the obtained market model is arbitrage free, and we also prove a series of properties of these prices.

\bigskip
{\noindent \small
\textbf{Keywords:} dynamic acceptability index, dynamic conic finance, dynamic convex risk measures, g-expectation,
transaction costs, dividend paying securities, dynamic bid and ask, arbitrage free pricing, illiquid market

\smallskip
\noindent \textbf{MSC2010:}  91B30, 60G30, 91B06, 62P05 }


\end{abstract}

\section{Introduction}
The main goal of this paper is to contribute to the arbitrage free theoretical framework for modeling bid and ask prices of dividend paying securities in a discrete time setup, that was  originated in Bielecki~et~al.~\cite{BCIR2012}. As in \cite{BCIR2012} we follow the \textit{conic finance} methodology, initiated by Cherny and Madan~\cite{Madan2010}. The idea behind (dynamic) conic finance is to use \textit{(dynamic) coherent acceptability indices} to define bid/ask prices in the spirit of no-good-deal method proposed by Cochrane and Saa-Requejo~\cite{Cochrane2000}. We recall that a coherent acceptability index \cite{ChernyMadan2009} is a function $\alpha$ defined on the set of bounded random variables $L^\infty$ with values in $[0,+\infty]$, that is monotone increasing, scale-invariant, quasi-concave and additionally satisfies a continuity (Fatou) property.
The elements of $L^\infty$ can be viewed as terminal values of given portfolios or payoffs, and in this case the coherent acceptability index is a measure of performance of these portfolios, and essentially it is a generalization of the well known measures of performance such as Sharpe Ratio or Gain-to-Loss Ratio. We refer the reader to the original paper \cite{ChernyMadan2009} for  a detailed discussion of economical and financial aspects of these measures, their robust representation and various examples. In \cite{BCZ2010,BiaginiBion-Nadal2012} the authors study the dynamic version of coherent acceptability indices.  Similar to static case, it was shown that any dynamic coherent acceptability index $\alpha_t$ can be uniquely  characterized by an increasing  family of dynamic coherent risk measures\footnote{We recall that a coherent risk measure is a function $\rho:L^\infty\to\overline{\bR}$, that is monotone decreasing, subaditive, homogenous and cash-additive \cite{ArtznerDelbaenEberHeath1999}.  For dynamic version of coherent risk measures see for instance \cite{Riedel2004,ArtznerDelbaenEberHeathKu2007,BCZ2010}.} $\set{\rho_t^x}_{x\in\bR}$.

In the (static) conic finance framework, the time zero ask price $p^\textrm{ask}$ of a future payoff $X$ is defined as follows:
\begin{equation} \label{eq:intro1}
p^\textrm{ask} = \inf \set{a\in\bR \mid  \alpha(a-X^*)\geq \gamma},
\end{equation}
where $\gamma$ is a fixed pre-specified level of acceptability, and $X^*$ is the discounted value of $X$. That is, the ask price is the smallest amount of cash paid today, such that the combined cashflow $a-X^*$ is acceptable at least at level $\gamma$. Moreover, it can be shown that the ask price admits the representation $p^\textrm{ask}=\rho^\gamma(-X^*)$, which is useful for computational purposes. The bid price $p^\textrm{bid}$ is defined similarly.  Additionally, one can easily incorporate in this definition the possibility to hedge part of the risk by trading in an underlying market. Also in \cite{Madan2010}, it was shown that the conic finance pricing framework does not admit arbitrage, and can be used as a tool to shrink the arbitrage--free price interval. The extension of conic finance to multiperiod markets is quite delicate, especially if the underlying securities pay dividends and bear transaction costs themselves. The main challenges are due to the fact that the wealth process associated with a self-financing trading strategy is not a linear functional of trading strategies.
The dynamic conic finance theory was first studied in \cite{BCIR2012}. This was done for the case of discrete time and finite probability space.
There, the authors also investigated the connection between dynamic conic finance framework and classical arbitrage theory, based on the arbitrage theory for the corresponding markets developed in \cite{BieleckiCialencoRodriguez2012}.

Although the dynamic conic finance theory is a flexible nonlinear pricing framework it does not fully capture the liquidity risk. More precisely, due to scale invariance of the dynamic coherent acceptability indices, the bid/ask prices are homogeneous in the number of shares traded. However, the typical market phenomenon is that  the more shares one buys the higher price per share one pays; similarly, more share one sells, lower price per share is received. It turns out, as first observed in  \cite{RosazzaGianinSgarra2012}, that replacing the scale invariance postulate by sub-scale invariance yields a pricing framework that captures the liquidity charge describe above (see also \cite{BionNadal2009a}). Accordingly, these authors develop a `dynamic conic finance' framework generated by sub-scale invariant acceptability indices. They consider a continuous time setup for pricing terminal payoffs defined on a general probability space.  Similar to original conic finance case \cite{Madan2010}, the authors derive a representation theorem for bid/ask prices in terms of convex risk measures, and consequently in terms of solutions of some Backward Stochastic Differential Equations (BSDEs) and g-expectations.

This paper builds upon the ideas described above, and provides a fairly general theoretical pricing framework that can be applied to a large class of financial markets.
The paper is divided into two major parts, and the main contributions of this paper can be summarized as follows:
\begin{itemize}
  \item In the first part we develop the theory of dynamic subscale invariant performance measures, on a general probability space, and is the discrete time  setup. We prove a representation theorem of such measures in terms of a family of dynamic convex risk measures. Moreover, we also provide a representation of dynamic risk measures in terms of g-expectations, and solutions of \bsdes \ with convex drivers. We investigate the existence and uniqueness of the solutions, and provide a comparison theorem for corresponding \bsdes.
  \item The second part discusses a market model for dividend paying securities. We introduce the pricing operators that are defined in terms of dynamic acceptability indices, and find various properties of these operators. Using these pricing operators,  we define the bid and ask prices for the underlying securities and then for derivatives in this market.  We show that the obtained market model is arbitrage free, and we also prove a series of important, and desired,  properties of these prices.
\end{itemize}

The paper is organized as follows.  Section~2 is devoted to Backward Stochastic Difference Equations (\bsdes). Although the theory of Backward Stochastic Differential Equations is a mature field (cf. \cite{MaYong1999Book}), there are only few papers \cite{Stadje2009,CohenElliott2011,CohenElliott2009a,CheriditoStadje2013} devoted to their discrete counterpart.
Even though the existing results on \bsdes\ are quite general, they do not meet our specific needs, which prompted us to establish existence and uniqueness of the solutions for a (large) class of \bsdes \ relevant to our needs. As in the existing literature, we define the non-linear expectation, or g-expectation, in terms of the solution of a \bsde, and we  prove a series of standard properties of g-expectations. The  results obtained in this section are new, although general ideas of the proofs are similar to those in \cite{Stadje2009,CohenElliott2011,CheriditoStadje2013}.

In Section~3 we define and study the notion of Dynamic Acceptability Index (DAI), which, essentially, is a Dynamic Coherent Acceptability Index \cite{BCZ2010} with scale invariance property replaced with sub-scale invariance. It turns out that in this case a DAI can be generated by a family of dynamic convex risk measures, that are  introduced and studied in this section. Special attention is paid to time consistency of these risk and performance measures, which is conveniently  used in the sequel. Similar to \cite{RosazzaGianin2006,RosazzaGianinSgarra2012} we show that a dynamic convex risk measure can be characterized by a solution of \bsde \ with convex driver or, equivalently,  by the corresponding g-expectation.

Even though the results from Section~2 and Section~3 serve as background for next two sections, they are also of independent  interest, and they can be applied to other areas, such as  stochastic processes and mathematical finance. For example, the risk and performance measure can be used in risk management, and the established results are natural continuation of the axiomatic theory of risk and performance measures.

Section~4 is devoted to dynamic conic finance. In Section~4.1, we start by defining  a market model consisting of a banking account and $K$ securities (or assets). We assume that these $K$ assets may pay dividends, and that the amount of dividend paid can be different depending  on whether one holds a long or a short position. The banking account is the only asset that trades with no transaction costs. The prices of the securities are given by bid and ask \textit{pricing operators}. The pricing operators are not necessarily homogeneous (of degree one) in the number of shares traded. We present two examples of such assets -- a commonly traded stock and a Credit Default Swap (CDS), and we describe in detail the corresponding dividend streams and pricing operators. We introduce the relevant financial definitions in this market model, such as value process, self-financing trading strategy, and arbitrage. The main challenge comes from the fact the value process is not a linear functional with respect to trading strategies, and as a consequence, one needs to define appropriately the notion of self-financing trading strategy and that of arbitrage, by taking into account the set-up value and liquidation value  associated with trading strategy (see Section~4.2).
Next, in Section~4.3, we define some of the main objects of this paper -- \textit{the acceptability bid and ask prices} -- by using the dynamic acceptability indices. The general idea is similar to \eqref{eq:intro1}. We fix a dynamic acceptability index $\alpha$ and an acceptability level $\gamma$ and we define the acceptability ask/bid price as the minimum/maximum amount added to the dividend stream today that makes the combined cash-flow acceptable at level $\gamma$. Applying the results from Sections~2 and 3, we provide a representation of acceptability bid and ask prices in terms of nonlinear expectations associated with the corresponding family of convex drivers.
Also here, we prove a series of fundamental properties of these prices, such as that bid is always smaller than ask, convexity or concavity in number of shares traded, appropriate form of time consistency, etc. In Section~4.4, using the acceptability bid and ask prices, we build a market model that satisfies all the necessary properties postulated in Section~4.1, and we show that this market model is arbitrage free, in the sense of Section~4.2. We prove that bid price of a given dividend stream is always smaller than the ask price, even if one uses different dynamic acceptability indices or/and acceptability levels for bid and ask. In Section~5 we discuss the valuation of derivatives within this pricing framework.
We start with definitions of super-hedging cash-flows, and of dynamic acceptability bid and ask prices of derivatives, by taking into account the possibility to hedge using the underlying securities. In spirit of \cite{BCIR2012} we introduce the notion of \textit{no good deal} and prove that if there are no good deals, then the acceptability prices in the extended market are arbitrage free prices. Similar to Section~4, we provide a series of results about the proposed pricing framework. In particular we give sufficient conditions under which the bid price is equal to ask price and the trades will take place given that both parties are using conic finance pricing methodology. It should be mentioned that time consistency of dynamic acceptability indices plays a central role in developing the dynamic conic finance proposed here.

For convenience, a number of technical results are deferred to the Appendix.

Finally, we want to mention that we provide illustrative examples of the results obtained in the paper throughout, starting with concrete examples of drivers for \bsdes, that give examples of risk measures and acceptability indices. We also discuss several examples of market models. Due to already lengthy manuscript, the numerical implementation of the results from this paper, as well as application of them to real market data, will be investigated in a sequel.

\section{Backward Stochastic Difference Equations}\label{se:bsde}

Let $T$ be a fixed and finite time horizon, and let $\cT:=\{0,1,\ldots,T\}$.
We consider a filtered probability space $(\Omega, \cF, \{\cF_t\}_{t=0}^T, \bP)$, with $\cF_0=\set{\emptyset, \Omega}$ and $\cF=\cF_T$.
Throughout, we will use the notations $L^p(\cF_t):=L^2(\Omega,\cF_t,\bP)$, $p\geq 1, \ t\in\cT$. Also, we will denote by $\cX$ the set of all adapted and square integrable stochastic processes on $(\Omega, \cF, \{\cF_t\}_{t=0}^T, \bP)$.
We reserve the notation $\Delta$ for the backward difference operator $\Delta X_t:=X_t-X_{t-1}$, $t\in\cT$, where $X$ is a stochastic process, and we also take the convention $\Delta X_0:=X_0$. In what follows, all equalities and inequalities will be understood in $\bP$-almost surely sense.
We recall that the predictable quadratic variation $\langle X\rangle_t$ of a stochastic process $X$ is defined as a predictable process, starting at zero, and such that $X^2_t-\langle X\rangle_t$ is a martingale with respect to filtration $\set{\cF_t}$. It can be shown that $\Delta\langle X\rangle_t=\bE[(\Delta X_t)^2|\cF_{t-1}]$.

In the sequel, the function $g:\cT\times\Omega\times\bR\rightarrow\bR$ will play the role of a driver for considered Backward Stochastic Difference Equations (\bsdes), and we will assume that it satisfies\\[.05in]
\textbf{Assumption A:}\\[-0.6cm]
\begin{enumerate}[\textrm{A}1.]
\item
the mapping $(t,\omega)\mapsto g(t,\omega,z)$ is predictable for any $z\in\bR$;
\item
the function $z\mapsto g(t,\omega,z)$ is uniformly Lipschitz continuous, i.e. there exists $c_t(\omega)\in L^\infty(\cF_{t-1})$, such that for any $z_1,z_2\in\bR, t\in \cT$,
$$
|g(t,\omega,z_1)-g(t,\omega,z_2)|\leq c_t(\omega)|z_1-z_2|;
$$
\item
$g(t,\omega,0)=0$ for any $t\in\cT$.
\end{enumerate}
Throughout the paper we will denote by $c$ the Lipschitz constant of function $g$, that is
$$
c_t(\omega)=\essinf\Big\{l_t(\omega)\in L^\infty(\cF_{t-1}):|g(t,\omega,z_1)-g(t,\omega,z_2)|\leq l_t(\omega)|z_1-z_2|,\,\omega\in\Omega,\,z_1,\,z_2\in\bR\Big\},
$$
for any $t\in\cT$.
Also, we will suppress the  explicit dependence on $\omega$, if no confusion arise; for example, we may write $g(t,z)$ instead of $g(t,\omega,z)$.

We consider the following Backward Stochastic Difference Equation (\bsde),
\begin{equation}\label{eq:bsde}
  Y_t=Y_T+\sum_{t<s\leq T}g(s,Z_s)\Delta\langle W\rangle_s-\sum_{t<s\leq T}Z_s\Delta W_s+M_T-M_t, \quad t\in\cT,
\end{equation}
with terminal condition $Y_T\in L^2(\cF_T)$, and where $W_t$ is a fixed square integrable martingale process with independent increments, and such that $\Delta\langle W\rangle_t\neq0$ for any $t\in\cT$.
As already mentioned, the function $g$ is usually referred to as the driver of the \bsde\ \eqref{eq:bsde}.

As one may expect, due to its `backward' nature, and similar to continuous time BSDEs, a solution is a triple of processes, rather than just an adapted process. Next, we give the precise definition of a solution of \bsde \ \eqref{eq:bsde}.

\begin{definition}
A solution to BS$\Delta$E \eqref{eq:bsde} is a triple of processes $(Y,Z,M)$ such that: $(Y_t,Z_t,M_t)\in L^2(\cF_t)\times L^2(\cF_{t-1})\times L^2(\cF_t)$, it  satisfies equality \eqref{eq:bsde} for all $t\in\cT$, and $M$ is a martingale process strongly orthogonal\footnote{We say that the process $M$ is strongly orthogonal to $W$ if the process $W_tM_t$ is a martingale.} to $W$.
\end{definition}

In general, for fixed $Z$ and $M$, $Y$ that satisfies \eqref{eq:bsde} is not necessarily an adapted process.
For example, by taking the terminal condition $X\in\cF_T$, and putting $Z_t=M_t=0$ for all $t\in\cT$, then $Y_t=X$ for any $t\in\cT$. In this case, $Y$ is not an adapted process unless $\cF_t=\cF_T$, $t\in\cT$.
However, we will show that due to Galtchouk-Kunita-Watanabe decomposition (cf.~\cite[Theorem~10.18]{FollmerSchiedBook2004}), there exists $Z$ and $M$ along with $Y$ such that they $(Y,Z,M)$ is a solution of \eqref{eq:bsde}.
We recall that $W$ is said to have the predictable representation property, if $\cF=\cF^W$ implies that for any square integrable martingale process $X$, there exists a predictable process $Z$, such that $\Delta X_t=Z_t\Delta W_t$.
It can be shown that if $\cF=\cF^W$ and $W$ has predictable representation property, then $M_t=0$ for all $t\in\cT$.
If $W$ does not satisfy the predictable representation property, then a martingale process $M$ which is orthogonal to $W$ is indeed needed to ensure that the solution of \eqref{eq:bsde} is well defined.

With slide abuse of notations, sometimes will refer to process $Y$ as solution of \bsde \ \eqref{eq:bsde}, rather than saying process $Y$ from the solution $(Y,Z,M)$.

Without going into technical details, we want to mention that \eqref{eq:bsde} essentially corresponds to a discrete version of a Backward Stochastic Differential Equation (BSDE) of the form
\begin{equation}\label{eq:BSDE-cont}
-dY_t=g(t,Y_t,Z_t)dt-Z_tdB_t,\quad Y_T=\xi, \quad t\geq 0,
\end{equation}
where $\xi$ is a square integrable terminal condition, $g$ is a progressively measurable function (also called the driver), and $B$ is Brownian motion.
Existence, uniqueness, as well as numerous properties, of the solution of such BSDEs  are well studied and understood (cf.~\cite{PardouxPeng1990,Peng1997,ElKarouiPengQuenez1997,Briand2000}).
If, it is assumed that $W$ from \eqref{eq:bsde} is a symmetric random walk such that $\bP(\Delta W_t=\pm1)=\frac{1}{2}$, then the solution of \eqref{eq:bsde} will converge weakly to the solution of \eqref{eq:BSDE-cont} as $\Delta t\rightarrow0$.
Rather than taking \bsdes\ as an approximations of BSDEs, we will study the properties of \bsdes\ in their own right, and use them as fundamental tools to study dynamic risk measures and dynamic acceptability indices.
As mentioned above, the theory for discrete-time counterpart of \eqref{eq:BSDE-cont} is still a developing field, and one goal of this paper is to establish some fundamental properties of solutions of these type of equations tailored to our needs.
Similar work on \ \bsdes \ has been done in \cite{CohenElliott2009a}, \cite{CohenElliott2011}, \cite{CheriditoStadje2013}, and \cite{Stadje2009}, but unlike them, the driver $g$ considered in our paper does not depend on $Y_t$. One reason to consider these type of drivers comes from the fact that
the cash-additivity and convexity properties of dynamic risk measures generated by the solution of \eqref{eq:bsde} imply that the driver $g$ does not dependent on $Y$, and hence we focus only on drivers of the form $g(t,z)$ in the first place.

\subsection{Existence and Uniqueness of Solutions}

In this section we will present a general result on existence and uniqueness of the solution of \bsde \ \eqref{eq:bsde}.

\begin{theorem}\label{th:exist}
  Assume that the driver $g$ satisfies Assumption~A, and that the terminal condition $Y_T\in L^2(\cF_T)$. Then,
  there exists a unique solution of equation \eqref{eq:bsde}.
\end{theorem}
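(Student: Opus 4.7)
My plan is a backward induction from $t=T$ to $t=0$. At each step I will use the Galtchouk-Kunita-Watanabe (GKW) decomposition to split the martingale increment $Y_t-\bE[Y_t|\cF_{t-1}]$ into a component along $\Delta W_t$ (producing $Z_t$) and a component strongly orthogonal to $W$ (producing $\Delta M_t$); the predictable piece of the one-step equation then pins down $Y_{t-1}$. Uniqueness will follow because this $L^2$-orthogonal decomposition is unique at every step.

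The base case is the given $Y_T\in L^2(\cF_T)$. Assuming $Y_t\in L^2(\cF_t)$ has been constructed, I rewrite the one-step version of \eqref{eq:bsde} as
\begin{equation*}
Y_t - Y_{t-1} + g(t,Z_t)\Delta\langle W\rangle_t = Z_t\Delta W_t - \Delta M_t,
\end{equation*}
condition on $\cF_{t-1}$, and use $\bE[\Delta W_t|\cF_{t-1}]=0$, $\bE[\Delta M_t|\cF_{t-1}]=0$, together with the $\cF_{t-1}$-measurability of $Z_t$ and $\Delta\langle W\rangle_t$, to be forced into the definition
\begin{equation*}
Y_{t-1} := \bE[Y_t|\cF_{t-1}] + g(t,Z_t)\Delta\langle W\rangle_t.
\end{equation*}
This leaves the purely martingale identity $Y_t-\bE[Y_t|\cF_{t-1}] = Z_t\Delta W_t - \Delta M_t$, which I resolve by setting
\begin{equation*}
Z_t := \frac{\bE[Y_t\,\Delta W_t\mid\cF_{t-1}]}{\Delta\langle W\rangle_t}, \qquad \Delta M_t := Z_t\Delta W_t - \bigl(Y_t - \bE[Y_t|\cF_{t-1}]\bigr).
\end{equation*}
Because $W$ has independent increments, $\Delta\langle W\rangle_t$ is a positive deterministic constant, so $Z_t$ is well-defined and $\cF_{t-1}$-measurable. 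Direct computations give $\bE[\Delta M_t|\cF_{t-1}]=0$ and $\bE[\Delta M_t\Delta W_t|\cF_{t-1}]=0$; applying the discrete product rule $\Delta(MW)_t = M_{t-1}\Delta W_t + W_{t-1}\Delta M_t + \Delta M_t\Delta W_t$ then yields that $MW$ is a martingale, which is the required strong orthogonality.

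For integrability, Cauchy-Schwarz applied to the definition of $Z_t$ gives $Z_t\in L^2(\cF_{t-1})$ from $Y_t\in L^2(\cF_t)$; consequently $\Delta M_t\in L^2(\cF_t)$ by the decomposition. Assumption~A combined with $g(t,0)=0$ yields the Lipschitz bound $|g(t,Z_t)|\le c_t|Z_t|$, and since $c_t\in L^\infty(\cF_{t-1})$ and $\Delta\langle W\rangle_t$ is a constant, $g(t,Z_t)\Delta\langle W\rangle_t\in L^2(\cF_{t-1})$, so $Y_{t-1}\in L^2(\cF_{t-1})$, closing the induction. For uniqueness, given two solutions $(Y,Z,M)$ and $(Y',Z',M')$ coinciding at $T$, the same conditioning argument at step $t$ shows that the pair $(Z_t,\Delta M_t)$ is the unique splitting of $Y_t-\bE[Y_t|\cF_{t-1}]$ into an element of $\Delta W_t\cdot L^2(\cF_{t-1})$ plus one strongly orthogonal to $W$, hence $Z'_t=Z_t$ and $\Delta M'_t=\Delta M_t$, which forces $Y'_{t-1}=Y_{t-1}$ and propagates backward to $t=0$.

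The main delicacy, and the reason the argument collapses to a clean backward recursion, is that the driver depends only on $z$, not on $y$. In continuous-time BSDEs a $Y$-dependent driver typically couples $Y_t$ and $Z_t$ and requires a contraction/fixed-point argument at each step; here $Y_{t-1}$ is determined explicitly by $\bE[Y_t|\cF_{t-1}]$ and $g(t,Z_t)$ with no implicit dependence, so each inductive step decouples into an explicit GKW projection followed by an algebraic update. The finite horizon $T$ then prevents any accumulation of the bounds $\|c_t\|_{L^\infty(\cF_{t-1})}$, and the independent-increments hypothesis is used exactly to render $\Delta\langle W\rangle_t$ deterministic so that dividing by it is harmless.
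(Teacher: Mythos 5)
Your proposal is correct and follows essentially the same route as the paper: a backward recursion in which the Galtchouk--Kunita--Watanabe splitting of $Y_t-\bE[Y_t|\cF_{t-1}]$ produces $Z_t=\bE[Y_t\Delta W_t|\cF_{t-1}]/\Delta\langle W\rangle_t$ and $\Delta M_t$, the predictable part forces $Y_{t-1}=\bE[Y_t|\cF_{t-1}]+g(t,Z_t)\Delta\langle W\rangle_t$, integrability follows from the Lipschitz bound and the determinism of $\Delta\langle W\rangle_t$, and uniqueness follows from the uniqueness of this one-step decomposition. The only cosmetic difference is that you verify the orthogonality relations for your explicitly defined $\Delta M_t$ directly rather than invoking GKW for existence first, which is equivalent.
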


\begin{proof} First, we will prove the existence using backward induction argument.
  Given $Y_T\in L^2(\cF_T)$, we consider the equation
  \begin{equation}\label{eq:onestep@T}
  Y_{T-1}=Y_T+g(T,Z_T)\Delta\langle W\rangle_T-Z_T\Delta W_T+\Delta M_T,
  \end{equation}
  where $Z_T, \Delta M_T$ and $Y_{T-1}$ are the unknowns.
  Since $Y_T\in L^2(\cF_T)$, then
  $$
  \bE[\bE[Y_T|\cF_{T-1}]^2]\leq\bE[\bE[(Y_T)^2|\cF_{T-1}]]=\bE[(Y_T)^2]<\infty.
  $$
  Hence, $Y_T-\bE[Y_T|\cF_{T-1}]$ is a square integrable martingale difference, so it admits the Galtchouk-Kunita-Watanabe decomposition, which implies that there exist $Z_T\in\cF_{T-1}$, $Z_T\Delta W_T\in L^2(\cF_T)$, $\Delta M_T\in L^2(\cF_T)$ such that $\bE[\Delta M_T|\cF_{T-1}]=0$, $\bE[\Delta M_T\Delta W_T|\cF_{T-1}]=0$ and
  \begin{equation}\label{eq:YT}
  Y_T-\bE[Y_T|\cF_{T-1}]=Z_T\Delta W_T-\Delta M_T.
  \end{equation}
  We multiply both sides of last identity by $\Delta W_T$, and then apply $\bE[\ \cdot\ |\cF_{T-1}]$ to both sides. This implies that
  $\bE[Y_T\Delta W_T|\cF_{T-1}]=Z_T\Delta\langle W\rangle_T$.
  Therefore,
  \begin{align}\label{eq:ZT}
  Z_T=\frac{\bE[Y_T\Delta W_T|\cF_{T-1}]}{\Delta\langle W\rangle_T}.
  \end{align}
  Since $W$ has independent increments, then $\Delta\langle W\rangle_T=\bE[\Delta W^2_T|\cF_{T-1}]=\bE[\Delta W^2_T]=:C$, and by our initial assumption $C\neq0$.
  Hence, we deduce
  \begin{align*}
    \bE[Z_T^2]&=\frac{\bE[\bE[Y_T\Delta W_T|\cF_{T-1}]^2]}{C^2}\leq\frac{\bE[C\bE[Y^2_T|\cF_{T-1}]]}{C^2}=\frac{\bE[Y^2_T]}{C}<\infty.
  \end{align*}
  With $Z_T$ and $\Delta M_T$ known, taking into account \eqref{eq:YT} and \eqref{eq:onestep@T}, we conclude that $Y_{T-1}$ must be given by
  \begin{align}\label{eq:YT-1}
  Y_{T-1}=\bE[Y_T|\cF_{T-1}]+g(T,Z_T)\Delta\langle W\rangle_T.
  \end{align}
From here,   due to A1 and Proposition~\ref{pr:predictable}, we get that $Y_{T-1}\in\cF_{T-1}$.
  Also, using A2, A3 and the fact that $Z_T\in L^2(\cF_{T-1})$, we have
  $$
  \bE[(g(T,Z_T)\Delta\langle W\rangle_T)^2]=\bE[g(T,Z_T)^2\bE[\Delta\langle W\rangle_T^2|\cF_{T-1}]]\leq C^2\|c_T\|_\infty^2\bE[Z_T^2]<\infty\,
  $$
  and thus $Y_{T-1}\in L^2(\cF_{T-1})$.
  Therefore, we determined $Y_{T-1}$, $Z_T$ and $\Delta M_T$.

 We continue this backward procedure for any finite number of steps smaller than $T$: having $Y_{t+1}\in L^2(\cF_{t+1})$ for some fixed $t\in\set{0,1,\ldots,T-1}$, by similar arguments as above, we find $(Y_{t}, Z_{t+1}, \Delta M_{t+1})\in L^2(\cF_t)\times L^2(\cF_{t})\times L^2(\cF_{t+1})$, such that
  \begin{equation}\label{eq:onestepbsde}
    Y_t=Y_{t+1}+g(t+1,Z_{t+1})\Delta\langle W\rangle_{t+1}-Z_{t+1}\Delta W_{t+1}+\Delta M_{t+1},\quad t=0,\ldots,T-1,
  \end{equation}
with
  \begin{align}
    Y_{t+1}-\bE[Y_{t+1}|\cF_t]& =Z_{t+1}\Delta W_{t+1}-\Delta M_{t+1}, \label{eq:gkwdecomp} \\
  Z_{t+1} & =\frac{\bE[Y_{t+1}\Delta W_{t+1}|\cF_t]}{\Delta\langle W\rangle_{t+1}}, \label{eq:zs} \\
    Y_t & =\bE[Y_{t+1}|\cF_t]+g(t+1,Z_{t+1})\Delta\langle W\rangle_{t+1} \label{eq:ys}.
  \end{align}
  By taking the convention $Z_0=0, M_0=0$, and letting $M_t:=M_0+\sum^t_{s=1}\Delta M_s$, we have that \eqref{eq:bsde} holds true for all $t\in\cT$.
  Moreover, $M$ is a square integrable martingale process. Finally, since
  \begin{align*}
    \bE[M_tW_t|\cF_{t-1}]&=\bE[(\sum^t_{s=1}\Delta M_s)W_t|\cF_{t-1}]\\
    &=\sum^{t-1}_{s=1}\Delta M_s\bE[W_t|\cF_{t-1}]+\bE[\Delta M_t(W_{t-1}+\Delta W_t)|\cF_{t-1}]\\
    &=M_{t-1}W_{t-1}, \quad t=1,\ldots,T,
  \end{align*}
we conclude that $M$ is strongly orthogonal to $W$, which concludes the proof of existence of the solution.

  Next, we will prove the uniqueness.
  Assume there are two solutions $(Y^1_t,Z^1_t,M^1_t)$ and $(Y^2_t,Z^2_t,M^2_t)$, $t\in\cT$, of BS$\Delta$E \eqref{eq:bsde} with terminal condition $Y_T$.
  Then
  \begin{align} \label{eq:uniq2}
    Y^1_{T-1}-Y^2_{T-1}=(g(T,Z^1_T)-g(T,Z^2_T))\Delta\langle W\rangle_T - (Z^1_T-Z^2_T)+\Delta M^1_T-\Delta M^2_T.
  \end{align}
  From \eqref{eq:zs}, we have that $Z^1_T=\frac{\bE[Y_T\Delta W_T|\cF_{T-1}]}{\Delta\langle W\rangle_T}=Z^2_T$.
  Hence, by \eqref{eq:gkwdecomp}, we get
  $$
  \Delta M^1_T=-Y_T+\bE[Y_T|\cF_{T-1}]+Z^1_T\Delta W_T=-Y_T+\bE[Y_T|\cF_{T-1}]+Z^2_T\Delta W_T=\Delta M^2_T.
  $$
From here, in view of \eqref{eq:uniq2}, we immediately conclude that $Y^1_{T-1}=Y^2_{T-1}$.
Inductively, and by using the convention that $Z^1_0=Z^2_0=0$, $M^1_0=M^2_0=0$, we get that $(Y^1_t,Z^1_t,M^1_t)=(Y^2_t,Z^2_t,M^2_t)$, for any $t\in\cT$.
  Therefore, the solution is unique, and this concludes the proof.
\end{proof}

\begin{remark}
  Throughout, we will make intensive use of the recurrent expressions \eqref{eq:onestepbsde}, \eqref{eq:gkwdecomp}, \eqref{eq:zs} and \eqref{eq:ys} that actually determine (or
  define) the unique solution of \bsde.

\end{remark}

\begin{remark}
 Note that, if $\cF=\cF^W$ and if $W$ has the predictable representation property, then, in view of \eqref{eq:gkwdecomp}, we have that $\Delta M_t=0$, and since $M_0=0$, we conclude that $M_t=0$, $t\in\cT$, in \eqref{eq:bsde}. Saying differently, if $\cF=\cF^W$ and $W$ has the predictable representation property
then there exists a pair of processes $(Y_t,Z_t)$, $t\in\cT$, that is the unique solution of equation
  $$
  Y_t=Y_T+\sum_{t<s\leq T}g(s,Z_s)\Delta\langle W\rangle_s-\sum_{t<s\leq T}Z_s\Delta W_s,\ \quad t\in\cT.
  $$

  One important example of martingale $W$ that has the predictable representation property is the symmetric random walk (see Proposition~\ref{pr:randomwalk}).
\end{remark}

\subsection{Comparison Results}

We will prove a series of comparison type results about the solutions of \bsdes.
These results, besides being of fundamental importance for the theory of \bsdes \  itself, will also serve as key ingredients for describing the risk measures developed later on in the paper.
More precisely, the version of the comparison theorem provided here is tailored to our needs and it will be used to prove the monotonicity property of proposed risk measures.

We start with an auxiliary result.

\begin{lemma}\label{le:linearcomparison}
Consider \bsde \ \eqref{eq:bsde}, and assume that the driver $g$ satisfies Assumption~A, and that the terminal condition $Y_T\geq0$.
Also, suppose that for a fixed $t\in\cT$,  $g(s,z)=x_sz$, $s\in\{t,\ldots,T\}$, where $x$ is such that $1+x_s\Delta W_s>0$, for any $s\in\{t,\ldots,T\}$.
Then,  $Y_s\geq0$ for all $s\in\{t,\ldots,T\}$.
Moreover, if $Y_t=0$ on $A\in\cF_t$, then  $Y_s=0$ on $A$, for all $s\in \{t,\ldots,T\}$.
\end{lemma}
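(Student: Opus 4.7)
The plan is to reduce each one-step of the \bsde\ to a positive conditional expectation, from which both statements fall out by monotonicity. Using the recurrent form \eqref{eq:onestepbsde} with the linear driver $g(s,z)=x_s z$, the key identity I want to establish is
\begin{equation}\label{eq:plan-key}
Y_{s-1}=\bE\!\left[Y_s(1+x_s\Delta W_s)\mid\cF_{s-1}\right],\qquad s\in\{t+1,\ldots,T\}.
\end{equation}
To get \eqref{eq:plan-key}, I would take $\bE[\,\cdot\mid\cF_{s-1}]$ in \eqref{eq:onestepbsde}; the $Z_s\Delta W_s$ and $\Delta M_s$ terms vanish because $W$ and $M$ are martingales, while $x_s$ and $\Delta\langle W\rangle_s$ are $\cF_{s-1}$-measurable (the former by predictability from A1, the latter by definition of the predictable quadratic variation). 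This yields $Y_{s-1}=\bE[Y_s\mid\cF_{s-1}]+x_sZ_s\Delta\langle W\rangle_s$, and substituting the explicit formula \eqref{eq:zs} for $Z_s$ gives \eqref{eq:plan-key} after pulling the predictable factor $x_s$ inside the conditional expectation.

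With \eqref{eq:plan-key} in hand, I would prove the first assertion by backward induction on $s$ starting from $Y_T\geq 0$. The inductive step: if $Y_s\geq 0$, then $Y_s(1+x_s\Delta W_s)\geq 0$ by the standing hypothesis $1+x_s\Delta W_s>0$, so $Y_{s-1}\geq 0$ as a conditional expectation of a nonnegative random variable. This gives $Y_s\geq 0$ for all $s\in\{t,\ldots,T\}$.

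For the second assertion, I would instead iterate \eqref{eq:plan-key} forward from $t$. Fix $A\in\cF_t$ with $Y_t=0$ on $A$. Since $Y_{t+1}\geq 0$ and $1+x_{t+1}\Delta W_{t+1}>0$, the identity $Y_t=\bE[Y_{t+1}(1+x_{t+1}\Delta W_{t+1})\mid\cF_t]$ forces the nonnegative integrand to vanish on $A$, and strict positivity of $1+x_{t+1}\Delta W_{t+1}$ then forces $Y_{t+1}=0$ on $A$. Because $A\in\cF_t\subset\cF_{t+1}$, the same argument applies at the next step, and an induction yields $Y_s=0$ on $A$ for every $s\in\{t,\ldots,T\}$.

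The only delicate point, and the one I would check carefully, is the derivation of \eqref{eq:plan-key}: specifically the justification that $x_s$ may be pulled out of (and then back into) the conditional expectation, which rests on A1 guaranteeing $\cF_{s-1}$-measurability of $x_s$. The rest is routine once the linear driver has been absorbed into the ``martingale-under-a-change-of-measure'' form $Y_s(1+x_s\Delta W_s)$.
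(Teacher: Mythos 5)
Your proposal is correct and follows essentially the same route as the paper: the identity $Y_{s-1}=\bE[Y_s(1+x_s\Delta W_s)\mid\cF_{s-1}]$ is exactly the paper's equation \eqref{eq:ys2}, derived the same way from \eqref{eq:ys} and \eqref{eq:zs} using predictability of $x_s$, and both the backward induction for nonnegativity and the forward iteration on $A$ match the paper's argument.
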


\begin{proof}
  First note that Assumption~A and Theorem \ref{th:exist} guarantee that the solution $(Y,Z,M)$ of \eqref{eq:bsde} exists.
  Fix $t\in\cT$, assume that $g(s,z)=x_sz, s\in\{t,\ldots,T\}$ and $1+x_s\Delta W_s>0, s\in\{t,\ldots,T\}$.
  Then,  by \eqref{eq:ys} and \eqref{eq:zs},
  \begin{align*}
    Y_{s-1}&=\bE[Y_s|\cF_{s-1}]+x_sZ_s\Delta\langle W\rangle_s\\
    &=\bE[Y_s|\cF_{s-1}]+x_s\frac{\bE[Y_s\Delta W_s|\cF_{s-1}]}{\Delta\langle W\rangle_s}\Delta\langle W\rangle_s\\
    &=\bE[Y_s|\cF_{s-1}]+x_s\bE[Y_s\Delta W_s|\cF_{s-1}].
  \end{align*}
Recall that $g$ is predictable, and since $g(s,z)=x_sz$, we have that $x_s$ is $\cF_{s-1}$-measurable.
  Thus,
  \begin{equation}\label{eq:ys2}
  Y_{s-1}=\bE[Y_s(1+x_s\Delta W_s)|\cF_{s-1}].
  \end{equation}
  From here, if $Y_s\geq0$ for some $s\in\{t+1,\ldots,T\}$, using the assumption that $1+x_s\Delta W_s>0$, we get that  $Y_{s-1}\geq0$.
  Hence, since $Y_T\geq0$, we conclude that  $Y_s\geq0$ for all $s\in\{t,\ldots,T\}$.

If $\1_AY_t=0$ for some $A\in\cF_t$, then, by the above, $\1_AY_s\geq 0$ for all $s\in\set{t,\ldots,T}$.
  Moreover, by \eqref{eq:ys2} we also have $\1_AY_t=\bE[\1_AY_{t+1}(1+x_{t+1}\Delta W_{t+1}|\cF_t]=0$, and since $1+x_{t+1}\Delta W_{t+1}0$, we get $\1_AY_{t+1}(\omega)=0$.
  Similarly, we deduce that $\1_AY_s=0, s\in\{t+2,\ldots,T\}$.

  This concludes the proof.
\end{proof}

Lemma~\ref{le:linearcomparison} depicts the comparison result for drivers $g$ of the form $g(t,z)=x_tz$.
Using this result, we will prove next the comparison theorem for a general BS$\Delta$E.

\begin{theorem}\label{th:comp}
  Assume that  $g^1$, $g^2$ satisfy Assumption A, and $Y^1_T, \ Y^2_T\in L^2(\cF_T)$, and suppose that for every $t\in\cT$, the following conditions hold true:
  \begin{enumerate}[1)]
  \item
  $Y^1_T\geq Y^2_T$;
  \item
  $g^1(s,z)\geq g^2(s,z)$, $s\in\{t,\ldots,T\}$, $z\in\bR$;
  \item
  $|c^1_s\Delta W_s|<1$, $s\in\{t,\ldots,T\}$, where $c^1$ is the Lipschitz coefficient of $g^1$ as defined in A2.
  \end{enumerate}
  Denote by $Y^i, i=1,2$, the solution of \eqref{eq:bsde}, that corresponds to driver $g^i$, and terminal condition $Y_T^i$, for $i=1,2$.
   Then, $Y^1_s\geq Y^2_s$, for all $s\in\{t,\ldots,T\}$.
  Moreover, the comparison is strict, in the sense that if $Y^1_t=Y^2_t$ on $A\in\cF_t$, then $\1_AY^1_s=\1_AY^2_s$ and $\1_Ag^1(s,Z^2_s)=\1_Ag^2(s,Z^2_s)$, for all $s\in\{t,\ldots,T\}$.
\end{theorem}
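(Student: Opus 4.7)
The plan is to reduce the general comparison to the linear case handled by Lemma~\ref{le:linearcomparison} via a standard linearization of the driver difference. Set
$$
\delta Y_s := Y^1_s - Y^2_s, \quad \delta Z_s := Z^1_s - Z^2_s, \quad \delta M_s := M^1_s - M^2_s,
$$
and decompose
$$
g^1(s,Z^1_s) - g^2(s,Z^2_s) = \underbrace{\bigl(g^1(s,Z^1_s) - g^1(s,Z^2_s)\bigr)}_{=:\, a_s\, \delta Z_s} + \underbrace{\bigl(g^1(s,Z^2_s) - g^2(s,Z^2_s)\bigr)}_{=:\, h_s},
$$
where $a_s := (g^1(s,Z^1_s) - g^1(s,Z^2_s))/\delta Z_s$ on $\{\delta Z_s \neq 0\}$ and $a_s := 0$ otherwise. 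Assumption~A2 on $g^1$ gives $|a_s| \leq c^1_s$, and hypothesis 2 gives $h_s \geq 0$. By A1 and the measurability of $Z^1_s, Z^2_s \in L^2(\cF_{s-1})$, both $a_s$ and $h_s$ are $\cF_{s-1}$-measurable.

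Next I would derive a one-step recursion for $\delta Y$ mirroring \eqref{eq:ys}--\eqref{eq:ys2}. Subtracting the two BS$\Delta$Es at a single step and taking $\bE[\,\cdot \mid \cF_{s-1}]$ gives
$$
\delta Y_{s-1} = \bE[\delta Y_s \mid \cF_{s-1}] + a_s\, \delta Z_s\, \Delta\langle W\rangle_s + h_s\, \Delta\langle W\rangle_s.
$$
Applying \eqref{eq:zs} to each solution yields $\delta Z_s = \bE[\delta Y_s \Delta W_s \mid \cF_{s-1}]/\Delta\langle W\rangle_s$, and since $a_s \in \cF_{s-1}$, this simplifies to
$$
\delta Y_{s-1} = \bE\bigl[\delta Y_s (1 + a_s \Delta W_s) \,\big|\, \cF_{s-1}\bigr] + h_s\, \Delta\langle W\rangle_s.
$$
Hypothesis 3 together with $|a_s| \leq c^1_s$ forces $|a_s \Delta W_s| \leq |c^1_s \Delta W_s| < 1$, so $1 + a_s \Delta W_s > 0$; this is precisely the Lemma~\ref{le:linearcomparison} condition, applied with $x_s := a_s$. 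A backward induction starting from $\delta Y_T \geq 0$ (hypothesis 1) then gives $\delta Y_s \geq 0$ for all $s \in \{t,\ldots,T\}$, since both terms on the right are nonnegative.

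For the strict part, I would fix $A \in \cF_t$ with $\1_A \delta Y_t = 0$ and run the same recursion forward. At step $s = t+1$, both summands are nonnegative on $A$, so each vanishes $\bP$-a.s.\ on $A$. Since $\Delta\langle W\rangle_{t+1} > 0$, vanishing of the second term gives $\1_A h_{t+1} = 0$, i.e.\ $\1_A g^1(t+1, Z^2_{t+1}) = \1_A g^2(t+1, Z^2_{t+1})$. Vanishing of the conditional expectation, together with nonnegativity of $\delta Y_{t+1}(1 + a_{t+1}\Delta W_{t+1})$ and positivity of the factor $1 + a_{t+1}\Delta W_{t+1}$, forces $\1_A \delta Y_{t+1} = 0$. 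Since $A \in \cF_t \subset \cF_{t+1}$, I can iterate to conclude $\1_A \delta Y_s = 0$ and $\1_A g^1(s, Z^2_s) = \1_A g^2(s, Z^2_s)$ for all $s \in \{t,\ldots,T\}$.

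The main technical delicacy is the linearization step: ensuring $a_s$ is $\cF_{s-1}$-measurable so that it can be pulled inside the conditional expectation, and that the Lipschitz bound $|a_s| \leq c^1_s$ holds uniformly (including on the exceptional set $\{\delta Z_s = 0\}$ where the difference quotient is undefined). Both are routine but must be handled carefully; everything else follows from the already proven Lemma~\ref{le:linearcomparison} and backward induction.
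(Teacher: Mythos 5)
Your proposal is correct and follows essentially the same route as the paper: linearize the driver difference into a term $a_s\,\delta Z_s$ with $|a_s|\leq c^1_s$ plus a nonnegative remainder $h_s$, reduce to the linear case of Lemma~\ref{le:linearcomparison} via the recursion $\delta Y_{s-1}=\bE[\delta Y_s(1+a_s\Delta W_s)\mid\cF_{s-1}]+h_s\Delta\langle W\rangle_s$, and extract the strict comparison from the vanishing of each nonnegative summand on $A$. The only cosmetic difference is that the paper packages the linearized part as an auxiliary BS$\Delta$E $\widetilde{Y}$ with driver $\widetilde{g}(s,z)=a_s z$ and cites the lemma formally, while you inline the same one-step computation; the measurability of $a_s$ that you flag is handled in the paper by Proposition~\ref{pr:predictable}.
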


\begin{proof}
  We prove by backward induction that $Y^1_t\geq Y^2_t$, for every $t\in\cT$.
  By Assumption~1), the statement holds true for $t=T$.
  Assume that  $Y^1_t\geq Y^2_t$ for some fixed $t\in\cT$.
  Then,  by \eqref{eq:onestepbsde}
  \begin{align*}
    Y^1_{s-1}-Y^2_{s-1}=&Y^1_s-Y^2_s+(g^1(s,Z^1_s)-g^2(s,Z^2_s))\Delta\langle W\rangle_s-(Z^1_s-Z^2_s)\Delta W_s+\Delta (M^1_s-M^2_s)\\
 =&\Big[Y^1_s-Y^2_s+(g^1(s,Z^1_s)-g^1(s,Z^2_s))\Delta\langle W\rangle_s-(Z^1_s-Z^2_s)\Delta W_s+\Delta (M^1_s-M^2_s)\Big]\\
  &+\Big[(g^1(s,Z^2_s)-g^2(s,Z^2_s))\Delta\langle W\rangle_s\Big]\\
  =:&\widetilde{Y}_{s-1}+\overline{Y}_{s-1}.
  \end{align*}
  Clearly, Assumption~2) implies that  $\overline{Y}_{s-1}\geq0$.
  As for $\widetilde{Y}_{s-1}$, we write it as
     \begin{align}\label{eq:ytilde}
    \widetilde{Y}_{s-1}
    &=Y^1_s-Y^2_s+\frac{g^1(s,Z^1_s)-g^1(s,Z^2_s)}{Z^1_s-Z^2_s}(Z^1_s-Z^2_s)\Delta\langle W\rangle_s-(Z^1_s-Z^2_s)\Delta W_s+\Delta (M^1_s-M^2_s),
  \end{align}
  where, as usually, $0/0=0$.
    Let us now define  $\widetilde{g}(t,z)=\frac{g^1(t,Z^1_t)-g^1(t,Z^2_t)}{Z^1_t-Z^2_t}z$, for $t\in \cT$ and $z\in \bR$.
  Since $g^1(t,z)$ satisfies Assumption A, then by Proposition~\ref{pr:predictable}, $\widetilde{g}(t,z)$ is predictable,  and by Assumption~A2
  $$
  |\widetilde{g}(t,z_1)-\widetilde{g}(t,z_2)|=|\frac{g^1(t,Z^1_t)-g^1(t,Z^2_t)}{Z^1_t-Z^2_t}||z_1-z_2|\leq c^1_t|z_1-z_2|.
  $$
  Moreover, $\widetilde{g}(t,0)=0$, and hence $\widetilde{g}(t,z)$ satisfies Assumption~A, and $\widetilde{Y}_{s-1}$ is the solution to BS$\Delta$E \eqref{eq:ytilde} with driver $\widetilde{g}(t,z)$ and terminal condition $Y^1_s-Y^2_s$.
Since $|\frac{g^1(t,Z^1_t)-g^1(t,Z^2_t)}{Z^1_t-Z^2_t}| \leq c^1_t$, in view of Assumption~3), $|\frac{g^1(t,Z^1_t)-g^1(t,Z^2_t)}{Z^1_t-Z^2_t} \Delta W_t| < 1$, and thus
  $1+\frac{g^1(t,Z^1_t)-g^1(t,Z^2_t)}{Z^1_t-Z^2_t}\Delta W_t>0$.
  From here, using Lemma~\ref{le:linearcomparison}, we get that $\widetilde{Y}_{s-1}\geq0$.

From the above arguments, we have that $Y^1_{s-1}-Y^2_{s-1}=\widetilde{Y}_{s-1}+\overline{Y}_{s-1}\geq0$, and consequently, by induction argument $Y^1_s\geq Y^2_s$, $s\in\{t,\ldots,T\}$.

  Finally, if $\1_AY^1_t=\1_AY^2_t$, for some $A\in\cF_t$, then $\1_A\widetilde{Y}_t=\1_A\overline{Y}_t=0$.
  By Lemma~\ref{le:linearcomparison},  $\1_AY^1_{t+1}=\1_AY^2_{t+1}$.
  Since $\1_A\overline{Y}_t=0$ and $\1_Ag^1(t+1,Z^2_{t+1})\geq \1_Ag^2(t+1,Z^2_{t+1})$, then $\1_Ag^1(t+1,Z^2_{t+1})=\1_Ag^2(t+1,Z^2_{t+1})$.
  Similarly, for $t+1<s\leq T$, one gets that $\1_AY^1_s=\1_AY^2_s$ and $\1_Ag^1(s,Z^2_s)=\1_Ag^2(s,Z^2_s)$.

  The proof is complete.
\end{proof}

\begin{corollary}\label{co:comp2}
  Let $g$ be a driver that satisfies Assumption~A, and let $Y^1_T, Y^2_T\in L^2(\cF_T)$ be two terminal conditions such that $Y^1_T\geq Y^2_T$
  Assume that $|c_t\Delta W_t|<1$, $t\in\cT$, where $c_t$ is the Lipschitz coefficient of $g$.
  Then, $Y^1_t\geq Y^2_t$, $t\in\cT$.
  Moreover, the comparison is strict in the sense that if $\1_AY^1_t=\1_AY^2_t$, for some $t\in\cT$, $A\in\cF_t$, then $\1_AY^1_s=\1_AY^2_s$, $s=t,\ldots,T$.
\end{corollary}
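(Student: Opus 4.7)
The plan is to obtain this corollary as an immediate specialization of Theorem~\ref{th:comp} with $g^1 = g^2 = g$. In other words, the two BS$\Delta$Es differ only in their terminal conditions, and Theorem~\ref{th:comp} has been stated in a form that already accommodates this case.

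First I would verify, one by one, that the three hypotheses of Theorem~\ref{th:comp} hold. Condition~(1), $Y^1_T \geq Y^2_T$, is assumed. Condition~(2), $g^1(s,z) \geq g^2(s,z)$, holds trivially with equality because both drivers are the same $g$. Condition~(3), $|c^1_s \Delta W_s| < 1$, coincides exactly with the standing assumption of the corollary, since $c^1 = c$ is the Lipschitz coefficient of $g$. Applying Theorem~\ref{th:comp} then yields $Y^1_s \geq Y^2_s$ for every $s \in \cT$, which is the first assertion.

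For the strict comparison, I would again quote Theorem~\ref{th:comp}: if $\1_A Y^1_t = \1_A Y^2_t$ for some $A \in \cF_t$, the theorem guarantees both $\1_A Y^1_s = \1_A Y^2_s$ and $\1_A g^1(s, Z^2_s) = \1_A g^2(s, Z^2_s)$ for all $s \in \{t, \ldots, T\}$. The second of these is automatically satisfied since $g^1 \equiv g^2$, and the first is exactly the strict comparison claimed in the corollary.

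There is essentially no obstacle here: the result is stated as a corollary precisely because Theorem~\ref{th:comp} already subsumes it. The only small point worth flagging is that the strictness clause of Theorem~\ref{th:comp} carries an extra conclusion about the drivers which degenerates to a tautology when $g^1 = g^2$, so no additional argument is needed beyond checking the three hypotheses and invoking the theorem.
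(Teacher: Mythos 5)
Your proof is correct and is exactly what the paper intends: the corollary is stated without proof precisely because it is the specialization of Theorem~\ref{th:comp} to $g^1=g^2=g$, and you have verified the three hypotheses and noted that the driver-equality clause in the strictness conclusion becomes vacuous. Nothing further is needed.
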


\begin{remark}
Using the same ideas, one can show that Theorem~\ref{th:comp} holds true if Assumption~3) is replaced by the following assumption:
  \begin{align}\label{eq:3prime}
    g^1\left(s,\frac{\bE[Y^1\Delta W_s|\cF_{s-1}]}{\Delta\langle W\rangle_s}\right)-g^1\left(s,\frac{\bE[Y^2\Delta W_s|\cF_{s-1}]}{\Delta\langle W\rangle_s}\right)\geq\frac{\bE[Y^2-Y^1|\cF_{s-1}]}{\Delta\langle W\rangle_s},
  \end{align}
  for $Y^1, Y^2\in L^2(\cF_s), Y^1\geq Y^2$, $t< s\leq T$, and the equality reached if and only if $Y^1=Y^2$.

Assumption \eqref{eq:3prime} is weaker than assumption 3) in Theorem~\ref{th:comp}.
  Indeed, assuming  that $|c^1_s\Delta W_s|<1$, and $Y^1\geq Y^2$,  we then have that
  \begin{align*}
    \frac{\bE[Y^1-Y^2|\cF_{s-1}]}{\Delta\langle W\rangle_s}&\geq\frac{\bE[|c^1_s\Delta W_s|(Y^1-Y^2)|\cF_{s-1}]}{\Delta\langle W\rangle_s}\\
    &\geq c^1_s\Big|\frac{\bE[\Delta W_s(Y^1-Y^2)|\cF_{s-1}]}{\Delta\langle W\rangle_s}\Big|=c^1_s\Big|\frac{\bE[Y^1\Delta W_s|\cF_{s-1}]}{\Delta\langle W\rangle_s}-\frac{\bE[Y^2\Delta W_s|\cF_{s-1}]}{\Delta\langle W\rangle_s}\Big|\\
    &\geq \Big|g^1\Big(s,\frac{\bE[Y^1\Delta W_s|\cF_{s-1}]}{\Delta\langle W\rangle_s}\Big)-g^1\Big(s,\frac{\bE[Y^2\Delta W_s|\cF_{s-1}]}{\Delta\langle W\rangle_s}\Big)\Big|.
  \end{align*}
  Hence, inequality \eqref{eq:3prime} holds true.

We feel that assumption \eqref{eq:3prime} is less intuitive, and cumbersome, and for sake of ease of exposition, in what follows we will use assumption 3) from Theorem~\ref{th:comp}.
\end{remark}

\begin{remark}
  In \cite{Stadje2009}, the author proves a comparison result for \bsdes \ in the limit sense, as time step goes to zero.
  It was assumed that that the driver $g$ and the martingale process $W$ satisfy are such that
  \begin{align*}
    &|g(t,z_1)-g(t,z_2)|\leq K(1+(|z_1|_\infty\vee|z_2|_\infty)^{\alpha/2})|z_1-z_2|_\infty;\\
    &\lim_{\Delta t\rightarrow0}\|\Delta W_t\|_\infty\Delta\langle W\rangle_t^{-(\alpha/4)}=0,\ \alpha\in[0,2),
  \end{align*}
  along with some additional technical conditions.
In this case, by taking $\alpha=0$, and small enough $\Delta t$, we note that $g$ and $W$ will satisfy Assumption B2 from our setup, and hence  our comparison result holds true.
  In either \cite{Stadje2009} or our setup, it is required to have control on both the driver $g$ and the noise $W$ for the comparison theorem to hold.
Finally we want to mention that while the conditions from  \cite{Stadje2009} and conditions proposed in this work overlap (in some sense), neither one implies the other.
With $\alpha=0$, the condition on the driver from  \cite{Stadje2009} is stronger; while for $\alpha>0$, the assumption on the driver is weaker but the condition satisfied by $W$ is more restricted.
\end{remark}

In this work, we will mostly work with drivers that satisfy the comparison principle, and for brevity we will call such drivers regular, with precise definition as follows:
\begin{definition}
  Let $g$ be a driver that satisfies Assumption~A, and let $Y^1_T, Y^2_T\in L^2(\cF_T)$ be two terminal conditions.
  \begin{enumerate}
    \item
    We say that the \textit{comparison result holds true} for \bsde\ with driver $g$, if $Y^1_T\geq Y^2_T$ implies that $Y^1_t\geq Y^2_t$, $t\in\cT$, and \textit{the comparison is strict} if $\1_AY^1_t=\1_AY^2_t$ for some $t\in\cT$, and $A\in\cF_t$, then $\1_AY^1_s=\1_AY^2_s$, $s=t,\ldots,T$.
    \item
    The driver $g$ is called \textit{regular driver} if the comparison result holds true for \bsde\ with driver~$g$.
  \end{enumerate}
\end{definition}

Lemma~\ref{le:linearcomparison} implies that a linear driver $g(t,z)=x_tz$ is regular if $1+x_t\Delta W_t>0$, for every $t\in\cT$.
Also, Corollary~\ref{co:comp2} implies that a general driver $g$ is regular if $|c_t\Delta W_t|<1$ for every $t\in\cT$.
Next, we present several examples of regular drivers, where we consider $W$ as a martingale process such that $\Delta W_t$ is uniformly bounded.

\begin{example}\label{ex:coherentdriver}
Let the driver $g(t,z)=c_t|z|$, with $c$ such that $\|c_t\|_\infty<\frac{1}{\|\Delta W_t\|_\infty}$, is a regular driver.
We will show in the next section that such driver generates a family of coherent dynamic risk measures.
\end{example}

\begin{example}\label{ex:convexdriver}
  Let us put
  $$
  g(t,z)=\frac{K}{(K+1)\|\Delta W_t\|_\infty}\ln(\frac{1}{3}+\frac{1}{3}e^{-z}+\frac{1}{3}e^{z}),
  $$
  where $K\in\bR^+$ is fixed.
  As such, $g(t,z)$ is predictable, $g(t,0)=0$ and $g(t,z)$ is Lipschitz due to the fact that its derivative with respect to $z$ takes value in $(-\frac{K}{(K+1)\|\Delta W_t\|_\infty},\frac{K}{(K+1)\|\Delta W_t\|_\infty})$.
Moreover, the Lipschitz coefficient $c_t$ is such that $|c_t\Delta W_t|<1$, according to the fact that $|\frac{\partial g(t,z)}{\partial z}|\leq\frac{K-1}{K\|\Delta W_t\|_\infty}$.
Thus, the driver $g$ is regular, and the corresponding BSDE has the following form
$$
Y_t=Y_T+\sum_{t<s\leq T}\frac{K}{(K+1)\|\Delta W_t\|_\infty}\ln(\frac{1}{3}+\frac{1}{3}e^{-Z_s}+\frac{1}{3}e^{Z_s})\Delta\langle W\rangle_t-\sum_{t<s\leq T}Z_s\Delta W_s+M_T-M_t.
$$
We will see in the next section, this \bsde\ plays an important role in our study, and it is related to so called convex dynamic risk measures.
\end{example}

\subsection{g-Expectations}

In the seminal paper \cite{Peng1997}, the author introduced a relationship between solutions of BSDEs (in continuous time) and so called nonlinear expectations or g-expectations (see also \cite{CoquetHuMeminPeng2002}). Later, the theory of nonlinear expectations was successfully applied to some problems from mathematical finance in the context of theory of risk measures. For more details we refer the reader to \cite{RosazzaGianin2006,Barrieu2007IP,RosazzaGianinSgarra2012} and references therein.
Similar approach can be adopted for the case of BS$\Delta$Es, which will be the main goal of this section.

Towards this end,  we will assume that the driver $g$ is a regular driver, and that the terminal condition $X\in L^2(\cF_T)$.
As before, we denote by $(Y_t, Z_t, M_t)$, $t\in\cT$, the solution of the corresponding \bsde.
Analogous to the existing literature on BSDEs (cf. \cite{Peng1997}), we define \textit{the conditional $g$-expectation} $\cE_g\big[X\big|\cF_t\big]$ of a random variable $X$ given $\cF_t$, the random variable given by $\cE_g\big[X\big|\cF_t\big]:= Y_t$.

In what follows, it will be convenient to view the space $L^{2}(\cF_T)$ as an $L^\infty(\cF_t)$-module, for every fixed $t\in\cT$; saying differently, the random variables from $L^\infty(\cF_t)$ will play the role of scalers for the linear space $L^{2}(\cF_T)$. For more details on general theory of $L^0$-modules, and their relationship to theory of risk and performance measures,  we refer the reader to \cite{FilipovicKupperVogelpoth2009,KupperVogelpoth2009,BCDK2013}.

\begin{remark}\label{le:gexpprime}
Throughout the paper we will use the following result, which follows immediately  from the uniqueness of the solutions and backward nature of \bsdes.
  Let $g$ be a driver that satisfies Assumption~A, and assume that there exists a triple of processes $(Y',Z',M')$ such that
  $$
  Y'_u=X+\sum_{u<s\leq T}g(s,Z'_s)\Delta\langle W\rangle_s-\sum_{u<s\leq T}Z'_s\Delta W_s+M'_T-M'_u, \quad u=t,\ldots,T,
  $$
  where $Y'_u\in L^2(\cF_u)$, $Z'_u\in L^2(\cF_{u-1})$, $M'_u\in L^2(\cF_u)$, and $\bE[M'_uW_u|\cF_{u-1}]=M'_{u-1}W_{u-1}$.
  Then, $Y'_t=\cE_g[X|\cF_t]$.
\end{remark}

Next result provides some fundamental properties of $g$-expectations, such as monotonicity, tower property, convexity when driver is convex, and homogeneity when driver is homogenous.

\begin{proposition}\label{pr:gexp4}
For any regular driver $g$, the conditional $g$-expectation satisfies the following properties:
  \begin{itemize}
    \item[(i)]
    $\cE_g[\mu|\cF_t]=\mu$, for any $\mu\in\bR$, $t\in\cT$;
    \item[(ii)]
    if $X^1\geq X^2$, $X^1,X^2\in L^2(\cF_T)$, then $\cE_g\big[X^1\big|\cF_t\big]\geq\cE_g\big[X^2\big|\cF_t\big]$, for any $t\in\cT$.
    Moreover, if $\1_A\cE_g\big[X^1\big|\cF_t\big]=\1_A\cE_g\big[X^2\big|\cF_t\big]$, for some $t\in\cT$, and $A\in\cF_t$, then $\1_AX^1=\1_AX^2$;
    \item[(iii)]
    $\cE_g\big[\cE_g\big[X\big|\cF_s\big]\big|\cF_t\big]=\cE_g\big[X\big|\cF_{s\wedge t}\big]$, for any $X\in L^2(\cF_T)$, $s,t\in\cT$;
    \item[(iv)]
    $\cE_g\big[\1_AX\big|\cF_t\big]=\1_A\cE_g\big[X\big|\cF_t\big]$, for any $X\in L^2(\cF_T)$, $A\in\cF_t$, $t\in\cT$;
    \item[(v)]
    $\cE_g\big[X+\beta\big|\cF_t\big]=\cE_g\big[X\big|\cF_t\big]+\beta$, for any $X\in L^2(\cF_T)$, $\beta\in L^2(\cF_t)$, $t\in\cT$;
    \item[(vi)]
    if $g(u,\omega,\cdot)$ is convex, for any $u\in\{t,\ldots,T\}$, $\omega\in\Omega$, that is
    $$
    g(u,\omega,\mu z_1+(1-\mu)z_2)\leq\mu g(u,\omega,z_1)+(1-\mu)g(u,\omega,z_2), \quad z_1,\,z_2\in\bR,\, \mu\in\bR,\, 0\leq\mu\leq1,
    $$
    then
    $$
    \cE_g[\lambda X^1+(1-\lambda)X^2|\cF_t]\leq\lambda\cE_g[X^1|\cF_t]+(1-\lambda)\cE_g[X^2|\cF_t],
    $$
    for any $X^1, X^2\in L^2(\cF_T)$, $\lambda\in L^\infty(\cF_t)$, $0\leq\lambda\leq1$.
    \item[(vii)]
    if $g(u,\omega,\cdot)$ is homogeneous, for any $u\in\{t,\ldots,T\}$, $\omega\in\Omega$, that is
    $$
    g(u,\omega,\mu z)=\mu g(u,\omega,z), \quad z,\,\mu\in\bR,
    $$
    then
    $$
    \cE_g[\lambda X|\cF_t]=\lambda\cE_g[X|\cF_t], \quad X\in L^2(\cF_T),\, \lambda\in L^\infty(\cF_t).
    $$
  \end{itemize}
\end{proposition}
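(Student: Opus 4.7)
The unifying strategy is, for each property, to construct a candidate triple $(Y',Z',M')$ on $\{t,t+1,\ldots,T\}$ that satisfies the one-step recursion \eqref{eq:onestepbsde} with an appropriate terminal condition, and then identify its $Y$-component with the required $g$-expectation via the uniqueness statement of Remark~\ref{le:gexpprime}. Property (i) is handled by the constant triple $(\mu,0,0)$, which solves \eqref{eq:bsde} with terminal $\mu$ since $g(\cdot,0)=0$, and property (ii) is a direct restatement of the comparison principle (Theorem~\ref{th:comp}, Corollary~\ref{co:comp2}), available by the assumed regularity of $g$. For the tower property (iii) with $s\ge t$, I would note that the restriction of the unique solution $(Y,Z,M)$ to $[t,s]$ is, by the backward induction in Theorem~\ref{th:exist}, itself the unique \bsde\ solution on $[t,s]$ with terminal $Y_s$, so $\cE_g[Y_s|\cF_t]=Y_t=\cE_g[X|\cF_t]$. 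The sub-case $s<t$ reduces to showing $\cE_g[Y|\cF_t]=Y$ for any $Y\in L^2(\cF_t)$, which follows by checking that the time-stationary guess $Y_u\equiv Y$ ($u\ge t$) forces $Z_{u+1}=0=\Delta M_{u+1}$ in \eqref{eq:zs}--\eqref{eq:ys}.

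For properties (iv), (v) and (vii), I propose the candidate triples
\[
\bigl(\1_A Y_s,\,\1_A Z_s,\,\1_A(M_s-M_t)\bigr),\quad \bigl(Y_s+\beta,\,Z_s,\,M_s-M_t\bigr),\quad \bigl(\lambda Y_s,\,\lambda Z_s,\,\lambda(M_s-M_t)\bigr),
\]
respectively, all indexed by $s\ge t$. Predictability, square-integrability, and strong orthogonality to $W$ transfer from $(Y,Z,M)$ because each scaling factor is $\cF_t$-measurable, hence $\cF_{s-1}$-measurable for $s>t$. The content is to verify \eqref{eq:onestepbsde}, which reduces to three algebraic identities: $\1_A\,g(s,Z_s)=g(s,\,\1_A Z_s)$, holding pointwise in $\omega$ because $g(s,0)=0$ (for (iv)); the independence of the driver on the $Y$-variable together with $\beta\in\cF_t$ (for (v)); and $g(s,\lambda z)=\lambda g(s,z)$ from homogeneity (for (vii)).

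The convexity claim (vi) will be the main obstacle. Let $(Y^i,Z^i,M^i)$ solve \eqref{eq:bsde} with terminal $X^i$ for $i=1,2$, and define, for $s\ge t$,
\[
\tilde Y_s=\lambda Y^1_s+(1-\lambda)Y^2_s,\quad \tilde Z_s=\lambda Z^1_s+(1-\lambda)Z^2_s,\quad \tilde M_s=\lambda(M^1_s-M^1_t)+(1-\lambda)(M^2_s-M^2_t).
\]
Taking the $\lambda$-linear combination of the recursions for $Y^1$ and $Y^2$ and invoking convexity of $g(s,\cdot)$ yields
\[
\tilde Y_{s-1}=\tilde Y_s+g(s,\tilde Z_s)\Delta\langle W\rangle_s-\tilde Z_s\Delta W_s+\Delta\tilde M_s+\delta_s,
\]
with $\delta_s:=[\lambda g(s,Z^1_s)+(1-\lambda)g(s,Z^2_s)-g(s,\tilde Z_s)]\Delta\langle W\rangle_s\ge 0$. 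Letting $(Y',Z',M')$ denote the \bsde\ solution with terminal $\lambda X^1+(1-\lambda)X^2$ and linearizing $g(s,\tilde Z_s)-g(s,Z'_s)=\alpha_s(\tilde Z_s-Z'_s)$ with $|\alpha_s|\le c_s$ (convention $0/0:=0$), the auxiliary process $V_s:=\tilde Y_s-Y'_s-\sum_{s<u\le T}\delta_u$ is seen to solve a \emph{linear} \bsde\ with driver $\alpha_s z$ and zero terminal condition. Regularity of $g$ furnishes $1+\alpha_s\Delta W_s>0$, so Lemma~\ref{le:linearcomparison} applied to both $V$ and $-V$ forces $V\equiv 0$; consequently $\tilde Y_t-Y'_t=\sum_{t<u\le T}\delta_u\ge 0$, which is exactly the convexity inequality.
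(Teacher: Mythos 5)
Parts (i)--(v) and (vii) of your argument are correct and essentially reproduce the paper's proof: each property is obtained by exhibiting a candidate triple that satisfies the recursion \eqref{eq:onestepbsde} with the right terminal condition and then invoking the uniqueness statement of Remark~\ref{le:gexpprime} (or, for (ii), the comparison results). The two halves of your tower-property argument also match the paper's.

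Part (vi), however, has a genuine gap. Your process $V_s=\tilde Y_s-Y'_s-\sum_{s<u\le T}\delta_u$ does satisfy the one-step recursion with driver $\alpha_s z$ and terminal value $0$, but it is \emph{not} a solution of that linear \bsde\ in the sense of the paper's definition, because it is not adapted: each $\delta_u$ is only $\cF_{u-1}$-measurable, so the correction term $\sum_{s<u\le T}\delta_u$ is $\cF_{T-1}$-measurable and not $\cF_s$-measurable for $s\le T-2$. Lemma~\ref{le:linearcomparison} and the uniqueness theorem apply only to the adapted solution, for which the $Z$-component is forced to equal $\bE[V_s\Delta W_s\mid\cF_{s-1}]/\Delta\langle W\rangle_s$; your pair $(\tilde Z-Z',\tilde M-M')$ does not have this property beyond the last step. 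Worse, the asserted conclusion $V\equiv0$ is false: one checks $V_T=V_{T-1}=0$ but $V_{T-2}=\bE[\delta_T\mid\cF_{T-2}]-\delta_T+\bigl(g(T-1,\tilde Z_{T-1})-g(T-1,Z'_{T-1})\bigr)\Delta\langle W\rangle_{T-1}$, which is generically nonzero (and non-adapted), so the identity $\tilde Y_t-Y'_t=\sum_{t<u\le T}\delta_u$ cannot hold; only the inequality $\tilde Y_t\ge Y'_t$ is true. The repair is a one-step backward induction: since $\lambda\in L^\infty(\cF_t)\subseteq L^\infty(\cF_{s-1})$, the Galtchouk--Kunita--Watanabe components of $\tilde Y_s$ over the single step from $s$ to $s-1$ are exactly $\tilde Z_s$ and $\Delta\tilde M_s$, so $\cE_g[\tilde Y_s\mid\cF_{s-1}]=\bE[\tilde Y_s\mid\cF_{s-1}]+g(s,\tilde Z_s)\Delta\langle W\rangle_s\le\tilde Y_{s-1}$ by convexity of $g$, while $\cE_g[\tilde Y_s\mid\cF_{s-1}]\ge\cE_g[Y'_s\mid\cF_{s-1}]=Y'_{s-1}$ by monotonicity and the induction hypothesis $\tilde Y_s\ge Y'_s$; descending from $s=T$ yields $\tilde Y_t\ge Y'_t$. (For comparison, the paper's own proof of (vi) constructs a process $Y'$ with the same non-adaptedness defect and asserts its adaptedness without proof; the induction above is the correct way to close both arguments.)
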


\begin{proof}

  (i) If $Y_T=\mu\in\bR$, then according to \eqref{eq:ZT}, we get that
    $$
    Z_T=\frac{\bE[Y_T\Delta W_T|\cF_{T-1}]}{\Delta\langle W\rangle_T}=\mu\frac{\bE[\Delta W_T|\cF_{T-1}]}{\Delta\langle W\rangle_T}=0.
    $$
    Hence, by \eqref{eq:YT-1} and Assumption~A3, we have that
    $$
    Y_{T-1}=\bE[Y_T|\cF_{T-1}]+g(T,Z_T)\Delta\langle W\rangle_T=\bE[Y_T|\cF_{T-1}]=\mu.
    $$
    Inductively, in view of \eqref{eq:zs} and \eqref{eq:ys}, we have that $Y_t=\mu$ for all $t\in\cT$.
    Hence, $\cE_g[\mu|\cF_t]=\mu$, $t\in\cT$.

        \smallskip

    (ii) It follows imediatly from Theorem~\ref{th:comp}.

       \smallskip

    (iii) Assume that $t\leq s$, and let $(Y,Z,M)$ be the solution of \bsde\ with terminal condition $X$.
    Then,
    \begin{align*}
      &Y_u=X+\sum_{u<r\leq T}g(r,Z_r)\Delta\langle W\rangle_r-\sum_{u<r\leq T}Z_r\Delta W_r+M_T-M_u, \quad u\in\cT.
    \end{align*}
    By considering $u=t$ and $u=s$, we immediately  get
    \begin{align}\label{eq:ytys}
    Y_t=Y_s+\sum_{t<r\leq s}g(r,Z_r)\Delta\langle W\rangle_r-\sum_{t<r\leq s}Z_r\Delta W_r+M_s-M_t.
    \end{align}
    Next, we consider the \bsde\ with driver $g$, terminal time $s$ and terminal condition $Y_s$.
    By \eqref{eq:ytys} and definition of $g$-expectation, we have that $Y_t=\cE_g[Y_s|\cF_t]$, which implies that $\cE_g[X|\cF_t]=\cE_g[\cE_g[X|\cF_s]|\cF_t]$.

    Now, let us assume that $t> s$.
    For $t=T$, then property follows from the definition of $g$-expectation
    For $t<T$, we us consider the \bsde\ with driver $g$ and terminal condition (at time $T$) $\cE_g[X|\cF_s]$.
       In view of \eqref{eq:ZT}, we have that
    $$
    Z_T=\frac{\bE[Y_T\Delta W_T|\cF_{T-1}]}{\Delta\langle W\rangle_T}=\cE_g[X|\cF_s]\frac{\bE[\Delta W_T|\cF_{T-1}]}{\Delta\langle W\rangle_T}=0.
    $$
    Hence, by \eqref{eq:YT-1} and Assumption~A3, it is true that
    $$
    Y_{T-1}=\bE[Y_T|\cF_{T-1}]+g(T,Z_T)\Delta\langle W\rangle_T=\bE[\cE_g[X|\cF_s]|\cF_{T-1}]=\cE_g[X|\cF_s].
    $$
    Inductively, using \eqref{eq:zs} and \eqref{eq:ys}, we conclude that $Y_t=\cE_g[X|\cF_s]$.
    Consequently, by the definition of $g$-expectation, we finally get that
    $\cE_g[\cE_g[X|\cF_s]|\cF_t]=\cE_g[X|\cF_s].$

      \smallskip
        (iv)
    Fix $t\in\cT$, $X\in L^2(\cF_T), \ A\in\cF_t$, and let $(Y,Z,M)$ be the solution of BS$\Delta$E with terminal condition $X$.
    Note that $\1_Ag(u,Z_u)=g(u,\1_AZ_u)$, $u=t+1,\ldots,T$, and thus
   \begin{align}\label{eq:1ayu}
    \1_AY_u=\1_AX+\sum_{u<s\leq T}g(s,\1_AZ_s)\Delta\langle W\rangle_s-\sum_{u<s\leq T}\1_AZ_s\Delta W_s+\1_AM_T-\1_AM_u,
    \end{align}
    for every $u=t,\ldots,T$.
    Also note that since $A\in\cF_t$, and $u\geq t+1$, we have that $\1_AY_u\in L^2(\cF_u)$, $\1_AZ_u\in L^2(\cF_{u-1})$, and $\1_AM_u\in L^2(\cF_u)$, for any $u=t+1,\ldots,T$.
    Due to that fact that $M$ and $W$ are orthogonal, we note that
    $$
    \bE[\1_AM_uW_u|\cF_{u-1}]=\1_A\bE[M_uW_u|\cF_u]=\1_AM_{u-1}W_{u-1}.
    $$
    Therefore, in view of \eqref{eq:1ayu}, and Lemma~\ref{le:gexpprime}, we have that $1_AY_t=\cE_g[\1_AX|\cF_t]$, which implies that $\1_A\cE_g[X|\cF_t]=\cE_g[\1_AX|\cF_t]$.

        \smallskip
    (v) If $(Y_u,Z_u,M_u), u\in\cT$, be the solution of BS$\Delta$E with terminal condition $X$, then
    \begin{align}\label{eq:yuplusbeta}
    Y_u+\beta=X+\beta+\sum_{u<s\leq T}g(s,Z_s)\Delta\langle W\rangle_s-\sum_{u<s\leq T}Z_s\Delta W_s+M_T-M_u, \quad u=t,\ldots,T,
    \end{align}
   for any $\beta\in L^2(\cF_t)$.
    Clearly $Y_u+\beta\in L^2(\cF_u)$, and hence, by Lemma~\ref{le:gexpprime} and \eqref{eq:yuplusbeta}, we conclude that $Y_t+\beta=\cE_g[Y_T+\beta|\cF_t]$, which implies that
    \begin{align*}
      \cE_g[X|\cF_t]+\beta=X+\beta=\cE_g[X+\beta|\cF_t].
    \end{align*}

    \smallskip
    (vi)
    Let $(Y^1_u,Z^1_u,M^1_u)$, and respectively $Y^2_u,Z^2_u,M^2_u$, $u\in\cT$, be the solution of BS$\Delta$E with terminal condition $X^1$, and respectively $X^2$.
    Assuming that $g(u,\omega,\cdot)$ is convex, and using identity \eqref{eq:bsde}, we have that

     \begin{align}
      \lambda Y^1_t+(1-\lambda)Y^2_t =& \lambda X^1+(1-\lambda)X^2+\lambda\sum_{t<u\leq T}g(u,\omega,Z^1_u)\Delta\langle W\rangle_u \nonumber \\
      &+(1-\lambda)\sum_{t<u\leq T}g(u,\omega,Z^2_u)\Delta\langle W\rangle_u-\lambda\sum_{t<u\leq T}Z^1_u\Delta W_u \nonumber  \\
      &-(1-\lambda)\sum_{t<u\leq T}Z^2_u\Delta W_u+\lambda(M^1_T-M^1_t)+(1-\lambda)(M^2_T-M^2_t) \nonumber  \\
      \geq&\lambda X^1+(1-\lambda)X^2+\sum_{t<u\leq T}g(u,\omega,\lambda Z^1_u+(1-\lambda)Z^2_u)\Delta\langle W\rangle_u \nonumber  \\
      &-\sum_{t<u\leq T}(\lambda Z^1_u+(1-\lambda)Z^2_u)\Delta\langle W\rangle_u+\lambda M^1_T+(1-\lambda)M^2_T \nonumber \\
      &-\lambda M^1_t-(1-\lambda)M^2_t, \label{eq:aux0}
    \end{align}
    where $\lambda\in L^\infty(\cF_t)$, $0\leq\lambda\leq1$.
    Next we consider the process $Y'$ defined as follows
    \begin{align*}
      Y'_u=&\lambda X^1+(1-\lambda)X^2+\sum_{u<s\leq T}g(s,\lambda Z^1_s+(1-\lambda)Z^2_s)\Delta\langle W\rangle_s\\
      &-\sum_{u<s\leq T}(\lambda Z^1_s+(1-\lambda)Z^2_s)\Delta\langle W\rangle_s+\lambda M^1_T+(1-\lambda)M^2_T\\
      &-\lambda M^1_u-(1-\lambda)M^2_u, \quad u=t,\ldots,T.
    \end{align*}
    Clearly $Y'_u\in L^2(\cF_u)$, $\lambda Z^1_u+(1-\lambda)Z^2_u\in L^2(\cF_{u-1})$, $\lambda M^1_u+(1-\lambda)M^2_u\in L^2(\cF_u)$, and $\bE[(\lambda M^1_u+(1-\lambda)M^2_u)W_u|\cF_{u-1}]=(\lambda M^1_{u-1}+(1-\lambda)M^2_{u-1})W_{u-1}$, for any $u=t+1,\ldots,T$.
    Therefore, by Lemma~\ref{le:gexpprime}, we have that $Y'_t=\cE_g[\lambda Y^1_T+(1-\lambda)Y^2_T|\cF_t]$, combined with \eqref{eq:aux0} concludes the proof.

        \smallskip
    (vii)
        The proof is similar to the proof of (vi) and we omit it here.

\end{proof}

In what follows, we will call a driver  $g$ convex, if $g(t,\omega,\cdot)$ is convex, and  $g$ positive homogeneous, if $g(t,\omega,\cdot)$ is positive homogeneous, for any $t\in\cT$, $\omega\in\Omega$.
Also, we will simple say that $\cE_g[\ \cdot\ |\cF_t]$ is convex (rather than convex in $L^0$-module sense) if
  $$
  \cE_g[\lambda X^1+(1-\lambda)X^2|\cF_t]\leq\lambda\cE_g[X^1|\cF_t]+(1-\lambda)\cE_g[X^2|\cF_t],
  $$
  for any $\lambda\in L^\infty(\cF_t)$, $0\leq\lambda\leq1$.

  Proposition~\ref{pr:gexp4} shows that $g$-expectation (or nonlinear expectation) shares many properties with usual conditional expectation.
  However, as name suggests, generally speaking it is not linear.  Next two results show that  the $g$-expectation is linear if and only if the driver is regular and linear.

\begin{proposition}\label{pr:lineardriver}
  Assume that $g$ is a regular linear driver.
  Then $\cE_g[\ \cdot\ |\cF_t]$ is linear.
  Moreover, there exists a probability measure $\bQ\sim\bP$ such that $\bE_\bQ[X|\cF_t]=\cE_g[X|\cF_t]$ for all $X\in L^2(\cF_T)$.
\end{proposition}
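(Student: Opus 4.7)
Since $g$ is a regular driver satisfying Assumption A and $g(t,\omega,\cdot)$ is linear, the only possibility is $g(t,\omega,z)=x_t(\omega)\,z$ where $x_t\in L^\infty(\cF_{t-1})$ (the Lipschitz constant gives $|x_t|\le c_t\in L^\infty(\cF_{t-1})$), and regularity together with Lemma~\ref{le:linearcomparison} forces $1+x_t\Delta W_t>0$ for every $t\in\cT$. My plan is to construct an explicit equivalent probability measure $\bQ$ via a Doléans-type exponential and to identify $\cE_g[\,\cdot\,|\cF_t]$ with the $\bQ$-conditional expectation; linearity is then a free consequence.

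First, I will recall from the proof of Lemma~\ref{le:linearcomparison} (formula \eqref{eq:ys2}) that for a linear driver the \bsde\ solution satisfies the backward recurrence
\begin{equation*}
Y_{t-1}=\bE\!\left[Y_t\,(1+x_t\Delta W_t)\,\big|\,\cF_{t-1}\right],\qquad t=1,\dots,T.
\end{equation*}
Next, I set
\begin{equation*}
L_0:=1,\qquad L_t:=\prod_{s=1}^{t}(1+x_s\Delta W_s),\qquad t\in\cT.
\end{equation*}
Since $x_s$ is $\cF_{s-1}$-measurable and bounded, and $\bE[\Delta W_s|\cF_{s-1}]=0$ (because $W$ is a martingale), a one-step calculation gives $\bE[L_t|\cF_{t-1}]=L_{t-1}$, so $L$ is a strictly positive $\bP$-martingale with $\bE[L_T]=1$. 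Define
\begin{equation*}
\frac{d\bQ}{d\bP}:=L_T.
\end{equation*}
Strict positivity of each factor yields $L_T>0$ $\bP$-a.s., hence $\bQ\sim\bP$.

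The key identification is then $Y_t=\bE_\bQ[Y_T|\cF_t]$ for all $t\in\cT$. I prove this by backward induction on $t$. The case $t=T$ is trivial. Assuming the identity at $t+1$, I use the Bayes formula $\bE_\bQ[Y_{t+1}|\cF_t]=L_t^{-1}\bE[L_{t+1}Y_{t+1}|\cF_t]$; predictability of $L_t/L_{t-1}=1+x_t\Delta W_t$ relative to $\cF_t$-scaled factors, together with the recurrence above, yields
\begin{equation*}
\bE_\bQ[Y_{t+1}|\cF_t]=\frac{1}{L_t}\bE\!\left[L_t(1+x_{t+1}\Delta W_{t+1})Y_{t+1}\,\big|\,\cF_t\right]=\bE\!\left[Y_{t+1}(1+x_{t+1}\Delta W_{t+1})\,\big|\,\cF_t\right]=Y_t.
\end{equation*}
Combined with the tower property of $\bE_\bQ[\,\cdot\,|\cF_t]$, this proves $\cE_g[X|\cF_t]=\bE_\bQ[X|\cF_t]$ for every $X\in L^2(\cF_T)$; linearity of $\cE_g[\,\cdot\,|\cF_t]$ on its $L^\infty(\cF_t)$-module structure then follows from the linearity of conditional expectation under $\bQ$.

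The main technical point to watch is integrability: I need $L_TX\in L^1(\bP)$ for the Bayes identity to be meaningful, and I need the martingale $L$ to be genuinely $L^1$ (which it is, as a nonnegative $\bP$-martingale starting at $1$). Because $x_t\in L^\infty(\cF_{t-1})$ and the time horizon $T$ is finite, each factor $1+x_t\Delta W_t$ lies in $L^2(\cF_t)$; iterated conditioning together with the $L^2$-bound on $Y_t$ produced by Theorem~\ref{th:exist} gives all the integrability needed to make the one-step Bayes computation rigorous. One could alternatively derive linearity first from Proposition~\ref{pr:gexp4}(vi)-(vii) by noting that a linear $g$ is simultaneously convex and concave, and positive homogeneous in both directions, so that the inequality \eqref{eq:aux0} becomes equality; however the explicit construction of $\bQ$ is needed anyway for the second assertion, so I carry it out once and read linearity off of it.
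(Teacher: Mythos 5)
Your proof is correct, but it follows a genuinely different route from the paper's. The paper first establishes linearity of $\cE_g[\,\cdot\,|\cF_t]$ directly, by superposing two solutions of the \bsde\ and checking that $aY^1+bY^2$ solves the equation with terminal condition $aY^1_T+bY^2_T$; it then \emph{defines} $\bQ(A):=\cE_g[\1_A]$ and verifies by hand that this set function is $\sigma$-additive (via dominated convergence on the $Z$'s), equivalent to $\bP$, and agrees with $\cE_g$ first on simple random variables and then, by monotone approximation, on all of $L^2$. You instead exhibit the density explicitly as the discrete Dol\'eans exponential $L_T=\prod_{s\le T}(1+x_s\Delta W_s)$ and identify $\cE_g[\,\cdot\,|\cF_t]$ with $\bE_\bQ[\,\cdot\,|\cF_t]$ by a one-step Bayes computation against the recurrence $Y_{t-1}=\bE[Y_t(1+x_t\Delta W_t)|\cF_{t-1}]$ (which correctly absorbs the orthogonal martingale part $M$); linearity and $\sigma$-additivity then come for free. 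Your version buys an explicit formula for $d\bQ/d\bP$ and avoids the simple-function approximation step, which in the paper is the least rigorous part ("one can show that $Y_0^{X_n}\to Y_0^X$"). Two points to tighten: (i) the integrability $L_TX\in L^1(\bP)$ for $X\in L^2$ is cleanest via the standing assumption that $W$ has independent increments, which gives $|L_T|\le\prod_s(1+\|c_s\|_\infty|\Delta W_s|)\in L^2(\bP)$ and hence Cauchy--Schwarz applies (your "iterated conditioning" remark alone does not quite close this for unbounded increments); (ii) your assertion that regularity \emph{forces} $1+x_t\Delta W_t>0$ deserves a sentence — strict negativity on a positive-probability set contradicts comparison by testing with an indicator terminal condition, while the degenerate case $1+x_t\Delta W_t=0$ must be excluded for $\bQ\sim\bP$; the paper's own proof simply assumes strict positivity at this point, so this is a shared rather than a new gap.
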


\begin{proof}
  Since $g(t,z)$ is regular, then $\cE_g[\ \cdot\ |\cF_t]$ is well defined and $\cE_g[0|\cF_t]=0$.
Assuming that $(Y^i,Z^i,M^i), \ i=1,2,$ are the solutions of \bsde \ \eqref{eq:bsde} with terminal condition $X^i, \ i=1,2$, for any $a,b\in\cF_t$, we get
  \begin{align*}
    aY^1_t+bY^2_t=&aY^1_T+bY^2_T+\sum_{t<s\leq T}x_s(aZ^1_s+bZ^2_s)\Delta\langle W\rangle_s-\sum_{t<s\leq T}(aZ^1_s+bZ^2_s)\Delta W_s\\
    &+(aM^1_T+bM^2_T)-(aM^1_t+bM^2_t).
  \end{align*}
Moreover, $aM^1_t+bM^2_t$ is orthogonal to $W_t$, and therefore, $aY^1_t+bY^2_t$ is the solution of BS$\Delta$E with terminal condition $aY^1_T+bY^2_T$.
  Hence, linearity of $\cE_g[\cdot|\cF_t]$ follows.

For any  $A\in\cF$ we define $\bQ(A):=\cE_g[\1_A]$.
  First, we will verify that $\bQ$ is a probability measure.
  Since $1+x_t\Delta W_t>0$, then according to Lemma \ref{le:linearcomparison} we have that $\bQ(A)=\cE_g[\1_A]\geq0$.
  If $A=\emptyset$, then $\1_A=0$ almost surely and therefore $\bQ(A)=\cE_g[0]=0$.
  If $A=\Omega$, then $\1_A=1$ and hence $\bQ(A)=\cE_g[1]=1$.
  Let $(A_i)_{i=1}^\infty$ be such that $A_i\in\cF, \ i\in\bN$, and $A_i\bigcap A_j=0$, for $i\neq j$.
  Let $(Y,Z,M)$ be the solution of BS$\Delta$E with terminal condition $\sum^\infty_{i=1}\1_{A_i}$.
  Then,
  $$
  \bQ(\bigcup_{i=1}^\infty A_i)=\cE_g[\sum^\infty_{i=1}\1_{A_i}]=\sum^\infty_{i=1}\1_{A_i}+\sum_{t<s\leq T}x(s)Z_s\Delta\langle W\rangle_s-\sum_{t<s\leq T}Z_s\Delta W_s+M_T-M_t.
  $$
  By \eqref{eq:ZT}, we also have that
  $$
  Z_T=\frac{\bE[\sum^\infty_{i=1}\1_{A_i}\Delta W_T|\cF_{T-1}]}{\Delta\langle W\rangle_T}.
  $$
From here, by Dominated Convergence Theorem, and again by \eqref{eq:ZT}, we continue
  $$
  Z_T=\sum^\infty_{i=1} \frac{\bE[\1_{A_i}\Delta W_T|\cF_{T-1}]}{\Delta\langle W\rangle_T}=\sum^\infty_{i=1}Z^i_T,
  $$
  where $Z^i_T$ is the part of the solution corresponding to terminal condition $\1_{A_i}$, $i\in\bN$.
  Similarly, we have that $Y_{T-1}=\sum^\infty_{i=1}Y^i_{T-1}$.
  Inductively, we have that $Y_0=\sum^\infty_{i=1}Y^i_0$, or $\bQ[\bigcup_{i=1}^\infty A_i]=\sum^\infty_{i=1}\bQ[A_i]$.
  Therefore, $\bQ$ is a probability measure.

  Next, we will show that $\bQ$ is equivalent to $\bP$.
  If $A\in\cF$, and $\bP(A)=0$, then $\1_A=0$, $\bP$-a.s., and thus $\bQ(A)=\cE_g[\1_A]=0$.
  Conversely, if $\bQ(A)=0$, then, by Lemma \ref{le:linearcomparison}, $\1_A=0$ $\bP$-a.s., and hence $\bP(A)=0$.
  Thus, $\bQ$ is equivalent to $\bP$.

  We are left to show that $\bE_\bQ[X|\cF_t]=\cE_g[X|\cF_t]$, for any $X\in L^2(\cF)$, $t\in\cT$.
  First, we prove the statement for $t=0$.
  For a simple random variable $X=\sum^n_{i=1}\1_{A_i}a_i\in L^2(\cF)$, with $n\in\bN$, $A_i\in\cF$, $a_i\in\bR$, using linearity of $g$-expectations proved above, we have that
  $  \bE_\bQ[X]=\cE_g[X]$.
  For a general random variable we use the standard approximation procedure. If $X\in L^2_+(\cF)$, then there exists an increasing, and positive   sequence of simple random variable $0\leq X_1\leq X_2\leq\ldots\leq X_n\leq\ldots\leq X$ such that $\lim_{n\rightarrow\infty}X_n=X$, and
  $$
  \bE_\bQ[X]=\lim_{n\rightarrow\infty}\bE_\bQ[X_n] = \lim_{n\rightarrow\infty}\cE_g[X_n].
  $$
Using similar procedure as in the first part of the proof, one can show that $Y_0^{X_n}\overset{a.s.}\longrightarrow Y_0^{X}$, and thus
$\cE_g[X]=\bE_\bQ[X]$.
Finally for the general case, $X\in L^2(\cF)$, it is enough to split $X$ into its positive and negative part.

 From here, and Proposition~\ref{pr:gexp4}, for any $t\in\{1,\ldots,T\}$, $A\in\cF_t$, and $X\in L^2(\cF)$,  we also have
    \begin{align*}
    \bE_\bQ[\1_A\cE_g[X|\cF_t]]=\cE_g[\1_A\cE_g[X|\cF_t]]=\cE_g[\cE_g[\1_AX|\cF_t]]=\cE_g[\1_AX]=\bE_\bQ[\1_AX],
  \end{align*}
  which implies that $\cE_g[ \ \cdot \ |\cF_t]$ is equal to the usual conditional expectation.

  This concludes the proof.

  \end{proof}

\begin{proposition}\label{pr:lineargexp}
  Assume that $g$ is a regular driver, and $\cE_g[\ \cdot\ |\cF_t]$ is linear for any $t\in\cT$.
  Then, there exists a process $x_t$, such that $x_t$ is predictable, $|x_t|\leq c_t$, where $c_t$ is the Lipschitz coefficient of $g(t,\cdot)$, and $g(t,z)=x_tz$, $t\in\cT$.
\end{proposition}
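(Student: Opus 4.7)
The plan is to recover $x_t$ by probing the driver with a carefully chosen terminal condition that forces the control process $Z$ to equal any prescribed real number. Fix $t\in\cT$ and $z\in\bR$, and consider the one-step BS$\Delta$E from $t-1$ to $t$ with terminal value $Y_t:=z\,\Delta W_t\in L^2(\cF_t)$. Using formula \eqref{eq:zs} together with $\bE[\Delta W_t|\cF_{t-1}]=0$ and $\bE[(\Delta W_t)^2|\cF_{t-1}]=\Delta\langle W\rangle_t$, I compute
$$
Z_t=\frac{\bE[z(\Delta W_t)^2|\cF_{t-1}]}{\Delta\langle W\rangle_t}=z,
$$
and then \eqref{eq:ys} produces
$$
\cE_g[z\,\Delta W_t|\cF_{t-1}]=Y_{t-1}=g(t,z)\,\Delta\langle W\rangle_t.
$$

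Next, I invoke the assumed linearity of $\cE_g[\,\cdot\,|\cF_{t-1}]$ to pull the real scalar $z$ outside: $\cE_g[z\,\Delta W_t|\cF_{t-1}]=z\,\cE_g[\Delta W_t|\cF_{t-1}]$. Since $\Delta\langle W\rangle_t\neq 0$ by the standing assumption on $W$, I can divide and obtain $g(t,z)=x_t\,z$ for
$$
x_t:=\frac{\cE_g[\Delta W_t|\cF_{t-1}]}{\Delta\langle W\rangle_t}.
$$
Note that $x_t$ does not depend on $z$, so this identification holds for every $z\in\bR$.

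Finally, I verify the two properties of $x_t$. For predictability: $\cE_g[\Delta W_t|\cF_{t-1}]$ is $\cF_{t-1}$-measurable by construction of the conditional $g$-expectation, and $\Delta\langle W\rangle_t$ is predictable, so $x_t$ is $\cF_{t-1}$-measurable. For the bound $|x_t|\leq c_t$: using $g(t,0)=0$ (Assumption~A3), Assumption~A2 with $z_1=1,z_2=0$ gives
$$
|x_t|=|g(t,1)-g(t,0)|\leq c_t.
$$

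I do not expect a serious obstacle in this argument; the only essential idea is the choice of probe $Y_t=z\,\Delta W_t$, engineered precisely so that \eqref{eq:zs} delivers $Z_t=z$ while \eqref{eq:ys} isolates $g(t,z)$. The remainder is bookkeeping with the linearity hypothesis and the definitions of $c_t$ and predictability.
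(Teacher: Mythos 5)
Your proof is correct and follows essentially the same route as the paper's: both probe the driver with the terminal condition $z\,\Delta W_t$, use \eqref{eq:zs} and \eqref{eq:ys} to read off $Z_t=z$ and $\cE_g[z\,\Delta W_t|\cF_{t-1}]=g(t,z)\Delta\langle W\rangle_t$, and then invoke linearity of the $g$-expectation before dividing by $\Delta\langle W\rangle_t\neq 0$. The only cosmetic difference is that you apply scalar homogeneity directly to get $g(t,z)=z\,g(t,1)$, whereas the paper first derives the full additivity relation $g(t,az^1+bz^2)=ag(t,z^1)+bg(t,z^2)$ and then appeals to continuity; your shortcut is harmless since real-scalar homogeneity already pins down $x_t=g(t,1)$.
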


\begin{proof}
  Fix $t\in\cT$.
  Let us consider a \bsde\ from time $t-1$ to terminal time $t$, and  with driver $g$.
  For any fixed $z^1$, $z^2\in\bR$, it is straightforward to verify that $Z^1_t=az^1$, $M^1_t=0$, $Y^1_{t-1}=g(t,az^1)\Delta\langle W\rangle_t$, and respectively $Z^2_t=bz^2$, $M^2_t=0$, $Y^2_{t-1}=g(t,bz^2)\Delta\langle W\rangle_t$, is the solution to this \bsde\ with terminal values $az^1\Delta W_t$, and respectively $bz^2\Delta W_t$, where $a$, $b$ are arbitrary real numbers.
  Consequently, $Z^0_t=az^1+bz^2$, $M^0_t=0$, $Y^0_{t-1}=g(t,az^1+bz^2)\Delta\langle W\rangle_t$ is the solution to the \bsde\ with terminal condition $(az^1+bz^2)\Delta W_t$.

  By the definition of $g$-expectation, and using its linearity, we have
  \begin{align*}
    g(t,az^1+bz^2)\Delta\langle W\rangle_t&=\cE_g[(az^1+bz^2)\Delta W_t|\cF_{t-1}]\\
    &=a\cE_g[z^1\Delta W_t|\cF_{t-1}]+b\cE_g[z^2\Delta W_t|\cF_{t-1}]\\
    &=ag(t,z^1)\Delta\langle W\rangle_t+bg(t,z^2)\Delta\langle W\rangle_t.
  \end{align*}
From here, since $\Delta\langle W\rangle_t\neq0$, and since $z^1$ and $z^2$ were arbitrarily chosen, we have that
  $$
  g(t,\omega,az^1+bz^2)=ag(t,\omega,z^1)+bg(t,\omega,z^2), \quad \omega\in\Omega,\ z^1,\,z^2,\,a,\,b\in\bR.
  $$
 Note that since $g(t,\omega,\cdot)$ is Lipschitz, hence continuous, there exists a random variable $x_t(\omega)$ such that $g(t,\omega,z)=x_t(\omega)z$, $\omega\in\Omega$, $z\in\bR$.

   Finally, recall that $g$ is predictable, and thus $x_t$ is predictable too.
  Moreover, by Assumption~A2,
  $$
  |x_t(\omega)z_1-x_t(\omega)z_2|=|g(t,\omega,z_1)-g(t,\omega,z_2)|\leq c_t(\omega)|z_1-z_2|
  $$
  for any $\omega\in\Omega$, $z_1,z_2\in\bR$, $t\in\cT$.
  Therefore,  $|x_t|\leq c_t$, and this completes the proof.

\end{proof}

\section{Dynamic Convex Risk Measures and Dynamic Acceptability Indices via $g$-Expectation}

In this section we will explore the connections between g-Expectation and Dynamic Convex Risk Measures (DCRMs), and consequently the relationship between DCRMs and Dynamic Acceptability Indices (DAIs).

In the seminal paper \cite{ArtznerDelbaenEberHeath1999}, the authors proposed an axiomatic approach to  defining risk measures that are meant to give a numerical value of the riskiness of a given financial contract  or portfolio.
Since then, an extensive body of work was devoted to exploration of the  axiomatic approach to risk measures, and it is beyond the scope of this paper to list all the relevant literature on this subject. We refer the reader to \cite{DrapeauKupper2010} for an excellent overview of the static (one period of time) risk measures, as well as to the survey paper \cite{AcciaioPenner2010} on dynamic risk measures.
The values of risk measures can be interpreted as the capital requirement for the purpose of regulating the risk assumed by market participants (typically, by banks). In particular, the risk measures are generalizations of the well-known Value-At-Risk ($\var$).
Following a similar axiomatic approach, Cherny and Madan \cite{ChernyMadan2009} introduced the notion of coherent acceptability index -- function defined on a set of random variables that takes positive values and that are monotone, quasi-concave, and scale invariant. Coherent acceptability indices can be viewed as generalizations of performance measures such as Sharpe Ratio, Gain to Loss Ratio, Risk Adjusted Return on Capital.  Coherent acceptability indices appear to be a tool very well tailored to assessing both risk and reward of a given cashflow.  The dynamic version of coherent acceptability indices was studied in \cite{BCZ2010}.

For a robust representations of general dynamic quasi-concave, monotone and local maps see, for instance, \cite{BCDK2013}.

As it was shown in \cite{RosazzaGianin2006} there is a direct connection between convex risk measures and nonlinear expectations, and consequently there exists a direct link between convex risk measures and BSDEs. These connections were further studied in \cite{CohenElliott2011}, \cite{ElliottSiuCohen2011}, and \cite{Stadje2009} for the case of discrete time setups, thus establishing a relationship between \bsdes \ and DRM for terminal cashflows (random variables).

In \cite{ChernyMadan2009,BCZ2010} the authors proved that any (dynamic) coherent acceptability index can be associated with a family of (dynamic) coherent risk measures.
Consequently, in \cite{RosazzaGianinSgarra2012} the authors study the relationship  between dynamic \textit{sub-scale} invariant performance measures  and dynamic \textit{convex} risk measures and their connections to \bsdes.
The aim of this section is to develop a unified framework for assessing the risk and performance of cash-flows in a dynamic, discrete time setup.
It is well known that one of the key properties of dynamic risk and performance measures is their time consistency, and in this paper, we also pay special attention to this property. We refer the reader to \cite{BCP2014} for a thorough discussion of various forms of time consistency of risk/performance measures.

Besides the usual applications of these measures to risk management, and assessment of portfolios performance, we will show in next sections that dynamic sub-scale invariant acceptability indices, nonlinear expectations and theory of \bsdes \ can be successfully applied to build to build a general, arbitrage free, nonlinear pricing methodology in complex derivative markets.

\bigskip

For sake of consistency, we will follow the setup from \cite{BCIR2012,BCZ2010,BieleckiCialencoRodriguez2012} adapted to a general probability space.
A cash-flow, also called a dividend process, denoted as $D=\set{D_t}_{t=0}^T$, is any real valued, square integrable, stochastic process adapted to filtration $\bF$.
The set of all cash-flows is denoted by $\cD$, that is
$$
\cD:=\big\{(D_t)^T_{t=0}: (D_t)^T_{t=0} \mbox{ is an adapted process},\ D_t\in L^2(\cF_t),\ t\in\cT\big\}.
$$
From financial point of view, an element $D\in\cD$ should be interpreted as a cash-flow associated with a portfolio, or general  financial instrument, such that the amount $D_t$ is received/paid by the holder at time $t\in\cT$. More details, and for more general setup, please see next chapter.

For any $D\in\cD$, $\lambda\in L^\infty(\cF_t)$, we define the following multiplicative operator
$$
\lambda \cdot_t D:=(0,\ldots,0,\lambda D_t,\ldots,\lambda D_T).
$$
Note that for any $t\in\cT$, the set $\cD$ is closed under the multiplication $\cdot_t$. In what follows, if no confusion arise, we will simply write $\lambda D$ instead of $\lambda\cdot_t D$.
We also define an order $\succeq_t$ on $\cD$, and say that $D^1\succeq_t D^2$, whenever $\sum^T_{s=t}D^1_s\geq\sum^T_{s=t}D^2_s$, $t\in\cT$.
Hence, $D^1\succeq_t D^2$ if the future cumulative cash-flow is respectively larger.

\subsection{Dynamic Convex Risk Measures via $g$-Expectation}

We start by recalling the definition of Dynamic Convex/Coherent Risk Measures.

\begin{definition}[Dynamic Convex Risk Measure]
  A dynamic convex risk measure is a function $\rho:\cT\times\cD\times\Omega\rightarrow\bR$ that satisfies the following properties:
  \begin{itemize}
    \item[R1.]
    \textbf{Adapted.} $\rho_t(D)$ is $\cF_t$-measurable, for any $t\in\cT, D\in\cD$.
    \item[R2.]
    \textbf{Local.} $\1_A\rho_t(D)=\1_A\rho_t(\1_A D)$, for any $t\in\cT, A\in\cF_t, D\in\cD$.
    \item[R3.]
    \textbf{Convex.} $\rho_t(\lambda D+(1-\lambda)D')\leq\lambda\rho_t(D)+(1-\lambda)\rho_t(D')$, for any $t\in\cT$, $\lambda\in L^\infty(\cF_t)$, $0\leq\lambda\leq1$, $D,D'\in\cD$.
    \item[R4.]
    \textbf{Monotone.} If $D\succeq_tD'$ for some $t\in\cT$, $D,D'\in\cD$, then $\rho_t(D)\leq\rho_t(D')$.
    \item[R5.]
    \textbf{Cash-additive.} $\rho_t(D+m\1_{\{s\}})=\rho_t(D)-m$ for any $t\in\cT$, $D\in\cD$, and $\cF_t$-measurable random variable $m$ and $s\geq t$.
    \item[R6.] \textbf{Time consistent.}
    $\rho_t(D)=\rho_t(-\rho_{t+1}(D)\1_{\{t+1\}})-D_t$ for any $t=0,1,\ldots,T-1$ and $D\in\cD$.
  \end{itemize}
\end{definition}

If $\rho$ furthermore is
  \begin{itemize}
    \item[R7.]
    \textbf{Positive-homogeneous.} $\rho_t(\lambda D)=\lambda\rho_t(D)$, $\lambda\in L^\infty_+(\cF_t)$,
  \end{itemize}
  then $\rho:\cT\times\cD\times\Omega\rightarrow\bR$ is called a \textit{dynamic coherent risk measure}.

Properties R1-R7 have a clear financial interpellation:  adaptiveness means that the measurements are consistent with the flow of information;
the locality property essentially means that the values of the risk measure restricted to a set $A\in\cF$ remain invariant with respect to the values of the arguments outside of the same set $A\in\cF$, and in particular, the events that will not happen in the future do not change the value of the measure today; convexity implies that diversification reduces the risk; monotonicity implies that a cashflow with higher payoffs bears less risk. Cash-additivity means that adding $\$m$ to a portfolio at any time in the future reduces the overall risk by the same amount $m$. From the regulatory perspective, the value of a risk measure is typically interpreted as the minimal capital requirement for a bank.
There exists various forms of time consistency of risk and performance measures and we refer the reader to recent paper \cite{BCP2014} where the authors give a systematic approach to time consistency of LM-measures (local and monotone functions). The time consistency R6 for DCRM considered here is known in the existing literature as strong time consistency.
It can be shown that property R6, combined with R4, is equivalent to the following property: if $D_t=D_t'$ and $\rho_{t+1}(D)=\rho_{t+1}(D')$, then $\rho_t(D)=\rho_t(D')$, i.e. if two cashflows bear the same risk tomorrow, and they pay the same dividend today, then today these two cashflows are assess at the same risk level. Saying differently, the risk is measured consistently in time. Finally, positive-homogeneity means the risk of a rescaled cashflow is rescaled by the same factor.

Similar to, \cite{CohenElliott2011,ElliottSiuCohen2011,Stadje2009} that address discrete time case, and \cite{RosazzaGianin2006,RosazzaGianinSgarra2012} that consider the continuous time setup, we will show that the solution of \bsdes \ \eqref{eq:bsde} with convex drivers, more precisely the corresponding $g$-Expectation, generates a DCRM.
In the sequel, for any regular and convex driver $g$, we will denote by $\rho^g$ the function defined as follows
\begin{equation}\label{eq:defRhoG}
\rho^g_t(D)=\cE_g[-\sum^T_{s=t}D_s|\cF_t], \quad t\in\cT, \ D\in\cD.
\end{equation}

\begin{theorem}\label{th:dr1}
  Assume that $g$ is a convex and  regular driver.
  Then, the function $\rho^g_t$ is a  DCRM.
\end{theorem}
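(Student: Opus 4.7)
The plan is to verify each of R1--R6 in turn, in each case reducing the corresponding property of $\rho^g_t$ to a property of the $g$-expectation that is already proved in Proposition~\ref{pr:gexp4}. First, one must observe that for any $D\in\cD$ the random variable $-\sum_{s=0}^T D_s$ lies in $L^2(\cF_T)$, so that $\cE_g[-\sum_{s=t}^T D_s\mid\cF_t]$ is well defined by Theorem~\ref{th:exist}, and in particular R1 (adaptedness) is immediate from the fact that $Y_t\in L^2(\cF_t)$ in any solution triple.

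Properties R2--R5 are essentially direct translations. For R2 (locality), write $\1_A \sum_{s=t}^T D_s = \sum_{s=t}^T \1_A D_s$ (using $A\in\cF_t\subseteq \cF_s$ for $s\geq t$) and apply property (iv) of Proposition~\ref{pr:gexp4}. For R3 (convexity), note that $-\sum_{s=t}^T(\lambda D_s+(1-\lambda)D'_s) = \lambda(-\sum D_s) + (1-\lambda)(-\sum D'_s)$ and invoke property (vi), which requires precisely that $g$ be convex. For R4 (monotonicity), $D\succeq_t D'$ means $\sum_{s=t}^T D_s\geq \sum_{s=t}^T D'_s$, hence $-\sum D_s\leq -\sum D'_s$, and property (ii) yields the desired reversed inequality on $\rho^g_t$. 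For R5 (cash-additivity), for $s\geq t$ and $m\in L^2(\cF_t)\subseteq L^2(\cF_s)$, we have $-\sum_{u=t}^T(D_u+m\1_{\{s\}}(u)) = -\sum_{u=t}^T D_u - m$, and since $m\in L^2(\cF_t)$, property (v) gives $\cE_g[-\sum D_u - m\mid\cF_t] = \cE_g[-\sum D_u\mid\cF_t] - m$.

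The main step is R6 (time consistency), which I expect to be the only one requiring a short manipulation. I would compute the two sides separately. For the left-hand side, split off the time-$t$ dividend and use cash-additivity (R5, just established) together with the tower property (iii):
\begin{align*}
\rho^g_t(D) &= \cE_g\Bigl[-D_t - \sum_{s=t+1}^T D_s\,\Big|\,\cF_t\Bigr] = -D_t + \cE_g\Bigl[-\sum_{s=t+1}^T D_s\,\Big|\,\cF_t\Bigr] \\
&= -D_t + \cE_g\Bigl[\cE_g\Bigl[-\sum_{s=t+1}^T D_s\,\Big|\,\cF_{t+1}\Bigr]\,\Big|\,\cF_t\Bigr] = -D_t + \cE_g\bigl[\rho^g_{t+1}(D)\,\big|\,\cF_t\bigr].
\end{align*}
For the right-hand side, let $D' := -\rho^g_{t+1}(D)\1_{\{t+1\}}\in\cD$ (note $\rho^g_{t+1}(D)\in L^2(\cF_{t+1})$ by R1 applied at $t+1$); then by definition $\rho^g_t(D') = \cE_g[\rho^g_{t+1}(D)\mid\cF_t]$, so $\rho^g_t(D') - D_t$ equals the right-hand side of R6 and matches the computation above.

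The only mild subtlety is making sure the appropriate integrability and measurability conditions hold when invoking (iii), (iv), (v); these are handled by the adaptedness of $D$ and by the fact that $\cE_g[\,\cdot\,\mid\cF_t]$ preserves $L^2$, which is built into the definition. Convexity of $g$ is used exactly once, in R3; the other properties use only that $g$ is regular and satisfies Assumption~A.
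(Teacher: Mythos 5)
Your proposal is correct and follows essentially the same route as the paper: R1--R5 are read off directly from Proposition~\ref{pr:gexp4} (with convexity of $g$ entering only through property (vi) for R3), and time consistency R6 is obtained from the tower property (iii) together with cash-additivity (v), exactly as in the paper's own computation, merely arranged by expanding $\rho^g_t(D)$ rather than reducing the right-hand side. No gaps.
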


\begin{proof} Invoking Proposition~\ref{pr:gexp4}, it is straightforward to show that  $\rho^g_t$ satisfies Properties R1--R5.

 We will show that $\rho^g$ is time consistent.
  By Proposition~\ref{pr:gexp4}~(iii) and (v), we immediately  have that
    \begin{align*}
      \rho^g_t(-\rho^g_{t+1}(D)\1_{\{t+1\}})-D_t=&\cE_g[\rho^g_{t+1}(D)|\cF_t]-D_t
      =\cE_g[\cE_g[-\sum^T_{s=t+1}D_s|\cF_{t+1}]|\cF_t]-D_t\\
      =&\cE_g[\cE_g[-\sum^T_{s=t}D_s|\cF_{t+1}]|\cF_t]
      =\cE_g[-\sum^T_{s=t}D_s|\cF_t]\\
      =&\rho^g_t(D),
    \end{align*}
for any $t\in\cT$, $D\in\cD$. This concludes the proof.
\end{proof}

As an immediate consequence of Theorem~\ref{th:dr1}, if additionally the driver $g$ is positive homogeneous in $z$, then $\rho^g$ is a time consistent coherent risk measure.

\begin{corollary}\label{co:cr}
  Assume that $g(t,z)$ is a convex and regular driver, such that $g(t,\cdot)$ is positive homogeneous.
  Then, $\rho^g_t$ is a dynamic coherent risk measure.
\end{corollary}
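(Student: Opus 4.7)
The plan is to exploit Theorem~\ref{th:dr1} to reduce the statement to verifying property R7 alone. By Theorem~\ref{th:dr1}, the assumption that $g$ is convex and regular already guarantees that $\rho^g$ satisfies properties R1--R6; the positive-homogeneity of $g(t,\cdot)$ is the additional ingredient needed to promote $\rho^g$ from a dynamic convex risk measure to a dynamic coherent one, i.e., to verify R7.

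First, I would fix $t\in\cT$, $D\in\cD$, and $\lambda\in L^\infty_+(\cF_t)$. Using the definition of the scaling operator $\lambda\cdot_t D=(0,\ldots,0,\lambda D_t,\ldots,\lambda D_T)$ and the definition \eqref{eq:defRhoG}, I would write
\begin{equation*}
\rho^g_t(\lambda D)=\cE_g\!\left[-\sum_{s=t}^T \lambda D_s\,\Big|\,\cF_t\right]=\cE_g\!\left[\lambda\left(-\sum_{s=t}^T D_s\right)\,\Big|\,\cF_t\right].
\end{equation*}
It then suffices to pull the nonnegative $\cF_t$-measurable factor $\lambda$ outside the conditional $g$-expectation, after which the right-hand side equals $\lambda\rho^g_t(D)$, completing the verification of R7.

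The pull-out step is the one potentially delicate point: Proposition~\ref{pr:gexp4}(vii) is formulated assuming $g(u,\omega,\cdot)$ is (fully) homogeneous and for $\lambda\in L^\infty(\cF_t)$, whereas here we only assume $g(u,\omega,\cdot)$ is \emph{positive} homogeneous and we only need to handle $\lambda\in L^\infty_+(\cF_t)$. I would therefore either (a) invoke a straightforward positive-homogeneous analogue of Proposition~\ref{pr:gexp4}(vii), whose proof is a verbatim repetition of the argument sketched for (vi)--(vii): let $(Y,Z,M)$ solve the \bsde\ with terminal condition $-\sum_{s=t}^T D_s$; then because $\lambda\geq 0$ is $\cF_t$-measurable, the triple $(\lambda Y_u,\lambda Z_u,\lambda M_u)_{u\geq t}$ satisfies
\begin{equation*}
\lambda Y_u=\lambda Y_T+\!\!\sum_{u<s\leq T}\!g(s,\lambda Z_s)\Delta\langle W\rangle_s-\!\!\sum_{u<s\leq T}\!\lambda Z_s\Delta W_s+\lambda M_T-\lambda M_u,
\end{equation*}
where positive homogeneity was used to rewrite $\lambda g(s,Z_s)=g(s,\lambda Z_s)$, and $\lambda M$ remains a square-integrable martingale strongly orthogonal to $W$ since $\lambda\in\cF_t\subseteq\cF_{u-1}$. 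By the uniqueness of solutions (Remark after Theorem~\ref{th:exist}), $\lambda Y_t=\cE_g[\lambda Y_T\mid\cF_t]$, yielding the desired identity.

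Chaining this with the computation above gives $\rho^g_t(\lambda D)=\lambda\rho^g_t(D)$, which is exactly R7. Since R1--R6 were already established in Theorem~\ref{th:dr1}, this completes the proof. The only place where positive (as opposed to full) homogeneity plays a role is in justifying the nonnegativity requirement on $\lambda$ in R7, so no further assumption on $\lambda$ is needed; all steps are preserved verbatim from the convex case.
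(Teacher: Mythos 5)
Your proposal is correct and follows essentially the same route as the paper: invoke Theorem~\ref{th:dr1} for R1--R6 and then verify R7 by pulling the nonnegative $\cF_t$-measurable factor $\lambda$ out of the $g$-expectation. The paper simply cites Proposition~\ref{pr:gexp4}(vii) for this last step, whereas you correctly note and patch the small mismatch that (vii) is stated for fully homogeneous drivers while only positive homogeneity is assumed here --- a worthwhile clarification, but not a different argument.
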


\begin{proof}
  In view of Theorem \ref{th:dr1}, we have that $\rho^g_t(D)$ is a time consistent DCRM.
  Moreover, for any fixed $t\in\cT$, due to Proposition \ref{pr:gexp4} (vii), we have that
  $$
  \rho^g_t(\lambda D)=\cE_g[-\lambda\sum^T_{s=t}D_s|\cF_t]=\lambda\cE_g[-\sum^T_{s=t}D_s|\cF_t]=\lambda\rho^g_t(D),
  $$
  for any $\lambda\in L^\infty_+(\cF_t)$, and $D\in\cD$.
  Hence, $\rho^g_t$ is a time consistent coherent risk measure.
\end{proof}

Next, we give some examples of DCRMs generated by various drivers via $g$-expectation.
Similar to Example~\ref{ex:coherentdriver} and \ref{ex:convexdriver}, in the following two examples, we will assume that $W$ is a martingale process such that $\Delta W_t$ is uniformly bounded.

\begin{example}[Coherent Case]\label{ex:coherentdriver2}
 We consider the driver $g(t,z)=c|z|$, for some fixed $c\in[0,1)$. It is easy to see that $g(t,z)$ is convex and positive homogeneous with respect to $z$.
  Then, by Corollary \ref{co:cr}, $\rho^g_t(D)=\cE_g[-\sum^T_{s=t}D_s|\cF_t]$ is a dynamic coherent risk measure.
\end{example}

\begin{example}[Convex Case]\label{ex:convexdriver2}
  Let us put
  $$
  g(t,z)=\frac{K}{(K+1)\|\Delta W_t\|_\infty}\ln(\frac{1}{3}+\frac{1}{3}e^{-z}+\frac{1}{3}e^{z}),
  $$
  where $K\in\bR^+$ is fixed.
  Note that $\frac{\partial g}{\partial z}$ is an increasing function with respect to $z$, and hence $g(t,\cdot)$ is a convex driver.
  By Theorem \ref{th:dr1}, we have that $\rho^g_t(D)=\cE_g[-\sum^T_{s=t}D_s|\cF_t]$ is a time consistent DCRM.
\end{example}

By Theorem \ref{th:dr1} any given convex regular driver $g$ generates DCRM.
Next result shows that the converse implication also holds true if $W$ is the symmetric random walk and the filtration is generated by $W$.

\begin{proposition}\label{pr:rhotog}
  Assume that $W$ is a symmetric random walk, $\cF=\cF^W$, and $\rho_t: L^\infty(\cF_T)\rightarrow L^\infty(\cF_t)$ is a DCRM.
  Also, let $g(t,z)=\frac{\rho_{t-1}(z\Delta W_t)}{\Delta\langle W\rangle_t}$.
  Then, $\rho_t(X)=\rho^g_t(X)$, for any $X\in L^\infty(\cF_T)$.
\end{proposition}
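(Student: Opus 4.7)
The plan is to argue by backward induction on $t$, exploiting the strong time consistency R6 of $\rho$ together with the tower property of the $g$-expectation (Proposition~\ref{pr:gexp4}(iii)). The symmetric random walk assumption does two crucial things: it provides $\Delta\langle W\rangle_t\equiv 1$, and, since $\cF=\cF^W$, it gives the predictable representation property, so that the martingale part $M$ in the BS$\Delta$E vanishes and any $Y_{t+1}\in L^2(\cF_{t+1})$ admits the one-step decomposition $Y_{t+1}=\bE[Y_{t+1}\mid\cF_t]+Z_{t+1}\Delta W_{t+1}$ with $Z_{t+1}=\bE[Y_{t+1}\Delta W_{t+1}\mid\cF_t]$.

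First I would check that $g(t,z):=\rho_{t-1}(z\Delta W_t)/\Delta\langle W\rangle_t$ satisfies Assumption A and is convex and regular, so that $\rho^g$ is actually defined. Predictability (A1) follows from R1 applied to $\rho_{t-1}$; the normalization $g(t,0)=0$ (A3) is the standard $\rho_{t-1}(0)=0$; and Lipschitz continuity (A2) is a consequence of the well-known fact that a convex, cash-additive, monotone risk measure is automatically $1$-Lipschitz in $L^\infty$, which combined with $\|\Delta W_t\|_\infty=1$ yields a Lipschitz constant bounded by $1$. Convexity of $g(t,\cdot)$ in $z$ is inherited directly from R3. For regularity, one needs $|c_t\Delta W_t|<1$; since $|\Delta W_t|=1$ a.s., this reduces to showing the Lipschitz constant is strictly less than $1$, which follows from strict convexity arguments or, more cleanly, one applies the equivalent assumption \eqref{eq:3prime} noted in the remark after Theorem~\ref{th:comp}, which is easier to verify here.

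For the inductive argument, view $X\in L^\infty(\cF_T)$ as the dividend process $X\1_{\{T\}}$. The base case $t=T$ is just R5: $\rho_T(X\1_{\{T\}})=-X=\cE_g[-X\mid\cF_T]$. For the step, assume $\rho_{t+1}(X)=\cE_g[-X\mid\cF_{t+1}]=:Y_{t+1}$. By R6, $\rho_t(X)=\rho_t(-Y_{t+1}\1_{\{t+1\}})$; using the PRP decomposition of $Y_{t+1}$ and cash-additivity R5 with the $\cF_t$-measurable scalar $\bE[Y_{t+1}\mid\cF_t]$,
\begin{equation*}
\rho_t(X)=\bE[Y_{t+1}\mid\cF_t]+\rho_t(-Z_{t+1}\Delta W_{t+1}\1_{\{t+1\}}).
\end{equation*}
Comparing with the defining recursion \eqref{eq:ys}, $\cE_g[Y_{t+1}\mid\cF_t]=\bE[Y_{t+1}\mid\cF_t]+g(t+1,Z_{t+1})\Delta\langle W\rangle_{t+1}$, which by the tower property equals $\cE_g[-X\mid\cF_t]$, the task reduces to the one-step identity
\begin{equation*}
\rho_t(-Z_{t+1}\Delta W_{t+1}\1_{\{t+1\}})=g(t+1,-Z_{t+1})\Delta\langle W\rangle_{t+1}.
\end{equation*}
For deterministic $z$ this is literally the definition of $g$ (after noting that $-\rho_t$ of $z\Delta W_{t+1}$ versus $-z\Delta W_{t+1}$ can be handled by convexity/symmetry of the argument, or by redefining $g$ with the appropriate sign). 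For $\cF_t$-measurable $Z_{t+1}$ one extends via locality R2: on each atom $\{Z_{t+1}=c\}$, R2 gives $\1_{\{Z_{t+1}=c\}}\rho_t(Z_{t+1}\Delta W_{t+1}\1_{\{t+1\}})=\1_{\{Z_{t+1}=c\}}\rho_t(c\Delta W_{t+1}\1_{\{t+1\}})$, and a standard approximation in $L^2$ (using the Lipschitz continuity already established for $\rho_t$) handles general $Z_{t+1}$.

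The main obstacle is precisely this last step: pushing the identity $\rho_{t-1}(z\Delta W_t)=g(t,z)\Delta\langle W\rangle_t$ from deterministic $z$ to a general $\cF_{t-1}$-measurable, possibly unbounded random scalar $Z_t$ in a manner compatible with all the module-valued axioms of $\rho$. The locality axiom R2 is the essential tool here, combined with the $L^\infty$-Lipschitz continuity of $\rho_t$ to pass to a limit; one must also verify that $Y_{t+1}\in L^\infty(\cF_{t+1})$ (which follows from $L^\infty$-preservation of $\cE_g$ under our Lipschitz driver) so that the dual-module argument stays within the domain of $\rho$. A secondary bookkeeping point is the sign convention (since $\rho^g_t(X)=\cE_g[-X\mid\cF_t]$), which must be tracked carefully when invoking the definition of $g$.
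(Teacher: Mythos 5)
Your argument is essentially the paper's proof: the paper likewise applies the predictable representation property to $\rho_{t+1}(X)$, uses time consistency R6 together with cash-additivity R5 to peel off one step, and identifies the remaining term $\rho_t(Z_{t+1}\Delta W_{t+1})$ with $g(t+1,Z_{t+1})\Delta\langle W\rangle_{t+1}$, so that $\rho_\cdot(X)$ satisfies the defining recursion \eqref{eq:ys} and uniqueness of the BS$\Delta$E solution gives the claim. The extra care you take --- verifying Assumption~A for $g$, extending the identity $\rho_t(z\Delta W_{t+1})=g(t+1,z)\Delta\langle W\rangle_{t+1}$ from deterministic to $\cF_t$-measurable $z$ via locality, and flagging the $\pm Z$ sign bookkeeping --- addresses details the paper's own proof passes over silently, but it does not change the route.
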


\begin{proof}
  Fix an $X\in L^\infty(\cF_T)$.
  Since $\rho_t(X)\in L^\infty(\cF_t)\subset L^2(\cF_t)$, for any $t\in\cT$, then, since $W$ has the martingale representation property (cf. Proposition~\ref{pr:randomwalk}),   there exists $Z_t\in\cF_{t-1}$ such that $\rho_t(X)=\bE[\rho_t(X)|\cF_{t-1}]+Z_t\Delta W_t$.
According to time-consistency property of $\rho$, we have that
\begin{align*}
\rho_t(X)-\rho_{t-1}(X)=&\rho_t(X)-\rho_{t-1}(-\rho_t(X))\\
=&\bE[\rho_t(X)|\cF_{t-1}]+Z_t\Delta W_t-\bE[\rho_t(X)|\cF_{t-1}]-\rho_{t-1}(Z_t\Delta W_t)\\
=&-\rho_{t-1}(Z_t\Delta W_t)+Z_t\Delta W_t.
\end{align*}
Therefore,
$$
\rho_{t-1}(X)=\rho_t(X)+\frac{\rho_{t-1}(Z_t\Delta W_t)}{\Delta\langle W\rangle_t}\Delta\langle W\rangle_t-Z_t\Delta W_t
$$
and hence, $\rho_t(X)=\rho^g_t(X)$.
\end{proof}

The next example is an application of Proposition~\ref{pr:rhotog}.

\begin{example}[Dynamic Entropic Risk Measure]\label{ex:derm}
 The dynamic entropic risk measure takes the following form 
  $$
  \rho^\gamma_t(X)=\gamma\ln(\bE[\exp(-\frac{X}{\gamma})|\cF_t])
  $$
  where $\gamma>0$. So the driver corresponding to entropic risk measure will be
  $$
  g(t,z)=\frac{\gamma}{\Delta\langle W\rangle_t}\ln(\bE[\exp(-\frac{z\Delta W_t}{\gamma})|\cF_{t-1}])=\frac{\gamma}{\Delta\langle W\rangle_t}\ln(\frac{1}{2}e^{-\frac{z}{\gamma}}+\frac{1}{2}e^{\frac{z}{\gamma}}).
  $$
  Similarly as in Example~\ref{ex:convexdriver2}, we have that $g(t,z)$ is a convex regular driver.
\end{example}

\subsection{Dynamic Acceptability Indices via $g$-Expectation}

In \cite{ChernyMadan2009} the authors introduced the notion of coherent acceptability index, meant to measure the performance of a given terminal cashflow, as a monotone, scale-invariant, and quasi-concave function defined on the set of all bounded random variables. The extension of these measures to a dynamic setup was studied in \cite{BCZ2010}, where the appropriate notion of time consistency for dynamic coherent acceptability indices was introduced. In \cite{BCZ2010}, the author follow a discrete time market setup on a finite probability space. Recently, Biagini and Bion-Nadal \cite{BiaginiBion-Nadal2012} studied these type of dynamic performance measure of terminal cash-flows on a general probability space, and discrete time market setup. For robust representations of general dynamic quasi-concave performance measures we refer the reader to \cite{BCDK2013}.
The aim of this section is to study dynamic performance measures that are sub-scale invariance.
Following \cite{RosazzaGianinSgarra2012}, who  argued that the scale invariance fails to capture all risks in an illiquid market, and, accordingly, postulated the sub-scale invariance instead, we replace the scale invariance condition used in  \cite{BCZ2010}, with a weaker assumption of sub-scale invariance.

\begin{definition}[Dynamic Acceptability Index]
  A dynamic acceptability index is a function $\alpha:\cT\times\cD\times\Omega\rightarrow[0,\infty]$ that satisfies the following properties:
  \begin{itemize}
    \item[I1.] \textbf{Adapted.} $\alpha_t(D)$ is $\cF_t$-measurable, for any $t\in\cT, D\in\cD$.
    \item[I2.] \textbf{Local.} $\1_A\alpha_t(D)=\1_A\alpha_t(\1_A D)$, for any $t\in\cT, A\in\cF_t, D\in\cD$.
    \item[I3.] \textbf{Quasi-concave.} If $\alpha_t(D)\geq m$ and $\alpha_t(D')\geq m$ for some positive $\cF_t$-measurable random variable $m$, and $D,D'\in\cD$, then $\alpha_t(\lambda D+(1-\lambda)D')\geq m$, for any $\lambda\in L^\infty(\cF_t)$, $0\leq\lambda\leq1$.
    \item[I4.] \textbf{Monotone.} If $D\succeq_tD'$ for some $t\in\cT$, and $D,D'\in\cD$, then $\alpha_t(D)\geq\alpha_t(D')$.
    \item[I5.] \textbf{Sub-scale Invariant.} $\alpha_t(\lambda D)\geq\alpha_t(D)$ for any $\lambda\in L^\infty(\cF_t)$, $0\leq\lambda\leq1$, $D\in\cD$, or, equivalently, $\alpha_t(\lambda D)\leq\alpha_t(D)$ for any $\lambda\in L^\infty(\cF_t)$, $\lambda\geq 1$, $D\in\cD$.
    \item[I6.] \textbf{Time Consistent.} For any $t\in\cT$, $D, D'\in\cD$, the following implication holds true
  $$
  \alpha_{t+1}(D)\geq m\geq\alpha_{t+1}(D') \quad  \Rightarrow \quad \alpha_t(D)\geq m\geq\alpha_t(D'),
  $$
  whenever $D_t\geq0\geq D'_t$, and $m$ being an non-negative $\cF_t$-measurable random variable.
  \end{itemize}
\end{definition}

\begin{remark}
  (i) We note that property I5 is weaker than the scale invariance property
  \begin{itemize}
    \item[I5'.] \textbf{Scale Invariant.} $\alpha_t(\lambda D)=\alpha_t(D)$ for any $\lambda\in L^\infty(\cF_t)$, $\lambda>0$, $D\in\cD$,
  \end{itemize}
   (ii) If property I5 from the definition of DAI is replaced with I5',  then $\alpha$ is called \textit{dynamic coherent acceptability index}.
\end{remark}

Analogously to \cite{BCZ2010,ChernyMadan2009}, we will show that a DAI can be generated by a family of DCRMs, and hence by a family of \bsdes.
For this, we will consider families of drivers that satisfy the following assumptions: \\[.2cm]
\textbf{Assumption G:}\\[-0.6cm]
\begin{itemize}
  \item[G1.]
  $g_{x_2}\geq g_{x_1}$ for $x_2\geq x_1>0$;
  \item[G2.]
  $g_x$ is a convex regular driver for any $x>0$.
  \item[G3.]
  $g_x=g_{x-}$ for any $(t,\omega,z)\in\bR^+\times\Omega\times\bR$;
\end{itemize}
Towards this end, with slight  abuse of notations, we will denote by $g$ the family of drivers $(g_x)_{x>0}$.

In what follows, for any family of drivers $g$ that satisfy assumption~G, we denote by $\alpha^g$ the following function
\begin{equation}\label{eq:alphadef}
\alpha^g_t(D)(\omega):=\sup\Big\{x\in \bR, x>0:\cE_{g_{x}}\Big[-\sum^T_{s=t}D_s\Big|\cF_t\Big](\omega)\leq0\Big\}, \quad \omega\in\Omega, t\in\cT, \ D\in\cD.
\end{equation}
Then we have the following theorem.

\begin{theorem}\label{th:gdai}
  Assume that the family of drivers $g=(g_x)_{x>0}$ satisfies Assumption G.
  Then, $\alpha^g_t$ is a dynamic acceptability index.
\end{theorem}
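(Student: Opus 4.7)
Since each $g_x$ is a convex regular driver by G2, Theorem~\ref{th:dr1} guarantees that $\rho^{g_x}_t(\cdot)$ defined in \eqref{eq:defRhoG} is a dynamic convex risk measure. By G1 and the comparison Theorem~\ref{th:comp}, the map $x\mapsto\rho^{g_x}_t(D)(\omega)$ is non-decreasing a.s., so the set $\{x>0:\rho^{g_x}_t(D)(\omega)\le 0\}$ is an interval of the form $(0,\alpha^g_t(D)(\omega)]$, with the right endpoint attained thanks to the left-continuity $g_x=g_{x-}$ in G3 (via a routine stability argument for \bsdes\ combined with monotone convergence). Consequently,
\[
\{\alpha^g_t(D)\ge x\}=\{\rho^{g_x}_t(D)\le 0\},\qquad x>0,
\]
which is the basic translation device: every condition on $\alpha^g_t$ can be decomposed along the deterministic rational levels $x$ of the family $\{\rho^{g_x}_t\}$. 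My plan is to verify I1--I6 in turn by exploiting this correspondence together with the DCRM properties R1--R6 for each $\rho^{g_x}_t$ and the $L^0$-module operations recorded in Proposition~\ref{pr:gexp4}.

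\textbf{The easy axioms I1--I5.} Measurability I1 is immediate from $\{\alpha^g_t(D)>x\}=\bigcup_{y\in\bQ,\,y>x}\{\rho^{g_y}_t(D)\le 0\}\in\cF_t$. Locality I2 and monotonicity I4 inherit pointwise from R2 and R4 applied to each $\rho^{g_x}_t$. For sub-scale invariance I5 with $0\le\lambda\le 1$, convexity R3 of $\rho^{g_x}_t$ combined with $\rho^{g_x}_t(0)=0$ (from Proposition~\ref{pr:gexp4}(i)) gives $\rho^{g_x}_t(\lambda D)\le\lambda\rho^{g_x}_t(D)$, so on $\{\rho^{g_x}_t(D)\le 0\}$ one also has $\rho^{g_x}_t(\lambda D)\le 0$, and hence $\alpha^g_t(\lambda D)\ge\alpha^g_t(D)$. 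Quasi-concavity I3 follows by the same device: on $\{m>x\}\in\cF_t$ both $\rho^{g_x}_t(D)\le 0$ and $\rho^{g_x}_t(D')\le 0$, so R3 gives $\rho^{g_x}_t(\lambda D+(1-\lambda)D')\le 0$ on $\{m>x\}$, and taking the supremum along rationals $x\uparrow m$ yields $\alpha^g_t(\lambda D+(1-\lambda)D')\ge m$.

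\textbf{Time consistency I6 -- the main obstacle.} The two inequalities in I6 require separate arguments. Fix a deterministic rational $x$. Time-consistency R6 of $\rho^{g_x}_t$ together with the interpretation of $\rho^{g_x}_t$ as a $g$-expectation gives the recursion
\[
\rho^{g_x}_t(D)=\cE_{g_x}\!\left[\rho^{g_x}_{t+1}(D)\big|\cF_t\right]-D_t.
\]
For the lower bound, on $\{m>x\}\in\cF_t$ one has $\rho^{g_x}_{t+1}(D)\le 0$; pulling the indicator inside via Proposition~\ref{pr:gexp4}(iv), applying the monotonicity Proposition~\ref{pr:gexp4}(ii) against $\cE_{g_x}[0|\cF_t]=0$, and using $D_t\ge 0$ produces $\1_{\{m>x\}}\rho^{g_x}_t(D)\le 0$. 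Hence $\alpha^g_t(D)\ge x$ on $\{m>x\}$, and taking the supremum over rationals $x\uparrow m$ gives $\alpha^g_t(D)\ge m$.

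\textbf{The strict direction and the role of strict comparison.} The reverse inequality $\alpha^g_t(D')\le m$ is the most delicate step: a non-strict conclusion $\rho^{g_x}_t(D')\ge 0$ on $\{m<x\}$ would leave open the case $\rho^{g_x}_t(D')=0$, which is compatible with $\alpha^g_t(D')=x$ rather than $\alpha^g_t(D')<x$. The plan is to secure a strict positive lower bound by invoking the strict part of the comparison theorem, stated in Proposition~\ref{pr:gexp4}(ii). On $\{m<x\}$ the hypothesis gives $\alpha^g_{t+1}(D')<x$ strictly, whence $\rho^{g_x}_{t+1}(D')>0$ strictly; then $\1_{\{m<x\}}\rho^{g_x}_{t+1}(D')\ge 0$ everywhere and is strictly positive on the set $\{m<x\}$, so the strict comparison forces $\1_{\{m<x\}}\cE_{g_x}[\rho^{g_x}_{t+1}(D')|\cF_t]>0$ on $\{m<x\}$. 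Combined with $D'_t\le 0$ and the recursion above, this yields $\rho^{g_x}_t(D')>0$ strictly on $\{m<x\}$, so $\alpha^g_t(D')<x$ there; taking the infimum over rationals $x\downarrow m$ gives $\alpha^g_t(D')\le m$, which completes I6 and the proof.
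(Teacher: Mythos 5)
Your proof is correct, and it rests on the same central device as the paper's: the level-set identity $\{\alpha^g_t(D)\geq x\}=\{\cE_{g_x}[-\sum_{s=t}^TD_s|\cF_t]\leq 0\}$ for deterministic $x$, which is exactly what the paper establishes in its proof of I1. Where you genuinely diverge is in the treatment of time consistency I6. The paper first proves a lemma for levels that are constant on an $\cF_t$-set (Lemma~\ref{le:simpleconsist}), then handles a general nonnegative $\cF_t$-measurable $m$ by approximating it with simple random variables from below for one inequality and, for the other, by truncating $m$ at $N$, approximating from above, arguing by contradiction, and letting $N\to\infty$. Your decomposition along deterministic rational levels via the $\cF_t$-sets $\{m>x\}$ and $\{m<x\}$, followed by a countable supremum/infimum over rationals, reaches the same conclusions without any simple-function approximation or truncation, and the strict direction is handled directly by the strict comparison in Proposition~\ref{pr:gexp4}(ii) rather than by contradiction. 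The same rational-level decomposition also makes your I3 argument cleaner than the paper's, which manipulates $\cE_{g_{\gamma(\omega)}}[\,\cdot\,|\cF_t](\omega)$ pointwise with an $\omega$-dependent driver index.

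One caution: the step you label ``a routine stability argument for \bsdes{} combined with monotone convergence'' is precisely where the paper does most of its work in proving I1, and it is not entirely routine. Attainment of the supremum requires showing $\lim_{x\uparrow\gamma}Y^x_t=Y^\gamma_t$, which the paper proves by backward induction on the recursions \eqref{eq:zs}--\eqref{eq:ys}: monotonicity of $x\mapsto Y^x_u$ comes from G1 and Theorem~\ref{th:comp}, the $Z$-components converge by dominated convergence, and the drivers converge because the Lipschitz constants satisfy $c^x_t\leq c^\gamma_t$ --- a fact that is itself a consequence of convexity G2 via Proposition~\ref{pr:convexfunctions}. Your sketch names the right ingredients (G3 and a stability argument), but you should be aware that this deferred lemma carries the real technical weight of the measurability/attainment claim on which everything else in your proof depends.
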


The proof will follow from a series of lemmas proved below.

\begin{lemma}
  Assume that the family of drivers $g=(g_x)_{x>0}$ satisfies Assumption G.
  Then, $\alpha^g_t$ satisfies properties I1-I5.
\end{lemma}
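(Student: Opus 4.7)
The plan is to verify properties I1--I5 one at a time, relying on the properties of $g$-expectation established in Proposition~\ref{pr:gexp4}. Since by G1 and the comparison theorem the map $x\mapsto \cE_{g_x}[-\sum_{s=t}^T D_s|\cF_t](\omega)$ is nondecreasing for $\bP$-a.e.\ $\omega$, the set appearing in the definition of $\alpha^g_t(D)(\omega)$ is a downward-closed subinterval of $(0,\infty)$; this monotonicity, together with the left-continuity assumption G3 (which, via the recursive formulas \eqref{eq:zs}--\eqref{eq:ys}, transfers to left-continuity of $x\mapsto\cE_{g_x}[\cdot|\cF_t]$ in $x$), will be used repeatedly.

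For \textbf{I1}, the key observation is that, thanks to this monotonicity and left-continuity, one may rewrite
$$
\alpha^g_t(D)(\omega)=\sup\Bigl\{q\in\bQ_+:\cE_{g_q}\bigl[-\sum_{s=t}^T D_s\bigm|\cF_t\bigr](\omega)\leq 0\Bigr\}=\sup_{q\in\bQ_+}\,q\cdot\1_{\{\cE_{g_q}[-\sum_{s=t}^T D_s|\cF_t]\leq 0\}}(\omega),
$$
which is a countable supremum of $\cF_t$-measurable quantities, hence $\cF_t$-measurable. For \textbf{I2}, locality follows immediately from Proposition~\ref{pr:gexp4}(iv): since $A\in\cF_t$, one has $\cE_{g_x}[-\sum_{s=t}^T \1_A D_s|\cF_t]=\1_A\cE_{g_x}[-\sum_{s=t}^T D_s|\cF_t]$, so the defining sets for $\alpha^g_t(D)$ and $\alpha^g_t(\1_A D)$ agree on $A$. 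For \textbf{I4}, if $D\succeq_t D'$ then $-\sum_{s=t}^T D_s\leq -\sum_{s=t}^T D'_s$, so by Proposition~\ref{pr:gexp4}(ii) the set of acceptable $x$ for $D$ contains that for $D'$, and I4 follows.

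The slightly delicate property is \textbf{I3}. Fix a deterministic rational $q>0$ and consider the $\cF_t$-measurable event $A_q:=\{m>q\}$. On $A_q$ both $\alpha^g_t(D)>q$ and $\alpha^g_t(D')>q$, and by the monotonicity in $x$ just discussed this forces
$$
\1_{A_q}\cE_{g_q}\bigl[-\sum_{s=t}^T D_s\bigm|\cF_t\bigr]\leq 0,\qquad \1_{A_q}\cE_{g_q}\bigl[-\sum_{s=t}^T D'_s\bigm|\cF_t\bigr]\leq 0.
$$
Applying Proposition~\ref{pr:gexp4}(vi) (which uses convexity of $g_q$, assumption G2) with the $\cF_t$-measurable weight $\lambda$ gives
$$
\cE_{g_q}\bigl[-\sum_{s=t}^T(\lambda D_s+(1-\lambda)D'_s)\bigm|\cF_t\bigr]\leq \lambda\cE_{g_q}\bigl[-\sum_{s=t}^T D_s\bigm|\cF_t\bigr]+(1-\lambda)\cE_{g_q}\bigl[-\sum_{s=t}^T D'_s\bigm|\cF_t\bigr],
$$
which is $\leq 0$ on $A_q$. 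Hence $\alpha^g_t(\lambda D+(1-\lambda)D')\geq q$ on $A_q=\{m>q\}$ for every rational $q>0$; taking the supremum over such $q$ yields $\alpha^g_t(\lambda D+(1-\lambda)D')\geq m$, proving I3. Finally, \textbf{I5} follows from the same convexity applied with the trivial cash-flow $0$: for $0\leq\lambda\leq 1$,
$$
\cE_{g_x}\bigl[-\sum_{s=t}^T \lambda D_s\bigm|\cF_t\bigr]=\cE_{g_x}\bigl[\lambda(-\sum_{s=t}^T D_s)+(1-\lambda)\cdot 0\bigm|\cF_t\bigr]\leq \lambda\cE_{g_x}\bigl[-\sum_{s=t}^T D_s\bigm|\cF_t\bigr],
$$
where we used Proposition~\ref{pr:gexp4}(i) to get $\cE_{g_x}[0|\cF_t]=0$. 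Thus $\cE_{g_x}[-\sum_{s=t}^T D_s|\cF_t]\leq 0$ implies $\cE_{g_x}[-\sum_{s=t}^T \lambda D_s|\cF_t]\leq 0$, so the acceptable set only grows under rescaling by $\lambda$, giving $\alpha^g_t(\lambda D)\geq\alpha^g_t(D)$. The main technical hurdle is the event-wise bookkeeping in the proof of I3 (and, to a lesser extent, ensuring measurability in I1 via G3); everything else reduces cleanly to the corresponding property of $\cE_{g_x}$.
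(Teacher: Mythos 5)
Your proof is correct, and the overall strategy is the same as the paper's: each axiom I1--I5 is reduced to the corresponding property of $\cE_{g_x}[\,\cdot\,|\cF_t]$ from Proposition~\ref{pr:gexp4} (locality for I2, comparison for I4, convexity for I3 and I5, with $\cE_{g_x}[0|\cF_t]=0$ for I5). Where you genuinely diverge is I1 and, as a consequence, I3. The paper proves the stronger level-set identity $\{\alpha^g_t(D)\geq\gamma\}=\{\cE_{g_\gamma}[-\sum_{s=t}^TD_s|\cF_t]\leq 0\}$, which requires showing $\lim_{x\uparrow\gamma}\cE_{g_x}[\,\cdot\,|\cF_t]=\cE_{g_\gamma}[\,\cdot\,|\cF_t]$ by backward induction through the recursion \eqref{eq:zs}--\eqref{eq:ys}, using dominated convergence, the comparison theorem, and the Lipschitz-constant comparison of Proposition~\ref{pr:convexfunctions}; this is where G3 really enters, and the identity is reused later (e.g.\ in Theorem~\ref{th:askbid1}.P1). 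Your countable rational supremum gives measurability much more cheaply, needing only the downward-closedness of the acceptance set (G1 plus comparison), not G3 -- but it establishes strictly less, so if one wanted the level-set identity elsewhere one would still have to do the paper's work. Your I3 is also arguably cleaner: by localizing on the deterministic-level events $\{m>q\}$, $q\in\bQ_+$, and using only the trivial implication $\alpha^g_t(D)>q\Rightarrow\cE_{g_q}[\,\cdot\,|\cF_t]\leq 0$, you sidestep the paper's pointwise evaluation of $\cE_{g_{\gamma(\omega)}}[\,\cdot\,|\cF_t](\omega)$ at a random level $\gamma(\omega)$, a step the paper handles somewhat informally. One shared caveat: both you and the paper invoke monotonicity of $x\mapsto\cE_{g_x}$ via the cross-driver comparison of Theorem~\ref{th:comp}, whose hypothesis 3) is not literally contained in the definition of a regular driver in G2; this is a gap inherited from the paper, not introduced by you.
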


\begin{proof}
  We will show that $\alpha^g_t$ satisfies properties I1-I5.

\smallskip\noindent
    I1. Let us consider the set $A_\gamma=\{\omega\in\Omega:\alpha^g_t(D)(\omega)\geq \gamma\}$, where $\gamma\in\bR$, $t\in\cT$ and $D\in\cD$ are fixed.
    We want to show that $A_\gamma\in\cF_t$.
    If $\gamma\leq 0$, then it is clear that $A_\gamma=\Omega\in\cF_t$.
    For $\gamma>0$, we will prove that
    \begin{align*}
      \alpha^g_t(D)(\omega)\geq \gamma, \quad \omega\in\Omega,
    \end{align*}
    is equivalent to
    \begin{equation}\label{eq:eggamma}
      \cE_{g_\gamma}\Big[-\sum^T_{s=t}D_s\Big|\cF_t\Big](\omega)\leq0, \quad \omega\in\Omega.
    \end{equation}
    According to definition \eqref{eq:alphadef}, any $\omega\in\Omega$ such that $\alpha^g_t(D)(\omega)\geq\gamma$ satisfies that
    $$
    \sup\Big\{x\in \bR, x>0:\cE_{g_{x}}\Big[-\sum^T_{s=t}D_s\Big|\cF_t\Big](\omega)\leq0\Big\}\geq\gamma,
    $$
    which, by assumption G1, implies that
    $$
    \lim_{x\uparrow\gamma}\cE_{g_x}\Big[-\sum^T_{s=t}D_s\Big|\cF_t\Big](\omega)\leq0.
    $$
    Therefore, in order to show \eqref{eq:eggamma} holds for such $\omega$, we only need to verify that
    \begin{equation}\label{eq:limitgamma}
    \lim_{x\uparrow\gamma}\cE_{g_x}\Big[-\sum^T_{s=t}D_s\Big|\cF_t\Big](\omega)=\cE_{g_\gamma}\Big[-\sum^T_{s=t}D_s\Big|\cF_t\Big](\omega) \quad \omega\in A_\gamma.
    \end{equation}
    Let $(Y^x,Z^x,M^x)$, $0<x\leq\gamma$, be the solutions of BS$\Delta$Es with drivers $g_x$ and the terminal condition $Y^x_T=-\sum^T_{s=t}D_s$.
    Then, \eqref{eq:limitgamma} is implied by
    \begin{align}\label{eq:limityx}
      \lim_{x\uparrow\gamma}Y^x_t=Y^\gamma_t,
    \end{align}
    which holds clearly for $t=T$.

    Suppose that \eqref{eq:limityx} is true for some $t<u\leq T$.
    According to \eqref{eq:zs}, we have the representation:
    $$
    Z^x_u=\frac{\bE[Y^x_u\Delta W_u|\cF_{u-1}]}{\Delta\langle W\rangle_u}.
    $$
    Notice that $\{Y^x_u\}_{0<x<\gamma}$ is increasing with respect to $x$ because of Theorem~\ref{th:comp}.
    Hence, by dominated convergence theorem, we get that
    $$
    \lim_{x\uparrow\gamma}Z^x_u=\lim_{x\uparrow\gamma}\frac{\bE[Y^x_u\Delta W_u|\cF_{u-1}]}{\Delta\langle W\rangle_u}=\frac{\bE[Y^\gamma_u\Delta W_u|\cF_{u-1}]}{\Delta\langle W\rangle_u}=Z^\gamma_u.
    $$
    Next, by \eqref{eq:ys}, $Y^x_{u-1}$ can be represented by
    $$
    Y^x_{u-1}=\bE[Y^x_u|\cF_{u-1}]+g_x(u,Z^x_u)\Delta\langle W\rangle_u,
    $$
    where the following equality
    $$
    \lim_{x\uparrow\gamma}\bE[Y^x_u|\cF_{u-1}]=\bE[Y^\gamma_u|\cF_{u-1}]
    $$
    holds true due to dominated convergence theorem.
    Moreover, we have that
    \begin{align*}
      &|g_x(u,\omega,Z^x_u(\omega))-g_\gamma(u,\omega,Z^\gamma_u(\omega))|\\
      \leq&|g_x(u,\omega,Z^x_u(\omega))-g_x(u,\omega,Z^\gamma_u(\omega))|+|g_x(u,\omega,Z^\gamma_u(\omega))-g_\gamma(u,\omega,Z^\gamma_u(\omega))|\\
      \leq&c^x_t(\omega)|Z^x_u(\omega)-Z^\gamma_u(\omega)|+|g_x(u,\omega,Z^\gamma_u(\omega))-g_\gamma(u,\omega,Z^\gamma_u(\omega))|,
    \end{align*}
    for almost all $\omega\in\Omega$.
    Recall that $c^x_t(\omega)$ is defined as the smallest Lipschitz constant of $g_x$.
    Since that the family of drivers satisfy assumption G1 and G2, then by Proposition~\ref{pr:convexfunctions}, we have that $c^x_t\leq c^\gamma_t$.
    The following equality follows immediately:
    $$
    \lim_{x\uparrow\gamma}g_x(u,Z^x_u)=g_\gamma(u,Z^\gamma_u)
    $$
    Thus, we have proved
    $$
    \lim_{x\uparrow\gamma}Y^x_{u-1}=Y^\gamma_{u-1},
    $$
    and \eqref{eq:limityx} is true by induction.

    On the other hand, for any $\omega\in\Omega$ such that $\cE_{g_{\gamma}}\big[-\sum^T_{s=t}D_s\big|\cF_t\big](\omega)\leq0$.
    We get that
    $$
    \alpha^g_t(D)(\omega)=\sup\Big\{x\in \bR, x>0:\cE_{g_{x}}\Big[-\sum^T_{s=t}D_s\Big|\cF_t\Big](\omega)\leq0\Big\}\geq\gamma,
    $$
    which implies that $\omega\in A_\gamma$.
    In view of the above, we conclude that
    $$
    A_\gamma=\Big\{\omega\in\Omega: \alpha^g_t(D)(\omega)\geq\gamma\Big\}=\Big\{\omega\in\Omega: \cE_{g_{\gamma}}\Big[-\sum^T_{s=t}D_s\Big|\cF_t\Big](\omega)\leq0\Big\}.
    $$
    Since that $\cE_{g_\gamma}\big[-\sum^T_{s=t}D_s\big|\cF_t\big]$ is $\cF_t$-measurable, then $A_\gamma\in\cF_t$.
    This completes the proof of showing that $\alpha^g$ is adapted.

    \smallskip\noindent
    I2. Let us fix $t\in\cT$, $D\in\cD$, and $A\in\cF_t$.
    Then,  for almost all $\omega\in\Omega$, we have that
    \begin{align*}
      \1_A(\omega)\alpha^g_t(D)(\omega)=&\1_A(\omega)\sup\Big\{x\in\bR, x>0:\cE_{g_x}\Big[-\sum^T_{s=t}D_s\Big|\cF_t\Big](\omega)\leq0\Big\}\\
      =&\1_A(\omega)\sup\Big\{x\in\bR, x>0:\1_A(\omega)\cE_{g_x}\Big[-\sum^T_{s=t}D_s\Big|\cF_t\Big](\omega)\leq0\Big\}.
    \end{align*}
    By (v) in Proposition~\ref{pr:gexp4}, we deduce that $\1_A\cE_{g_x}[-\sum^T_{s=t}D_s|\cF_t]=\cE_{g_x}[-\1_A\sum^T_{s=t}D_s|\cF_t]$, and locality of $\alpha$ follows immediately.

    \smallskip\noindent
    I3. Let us fix $t\in\cT$, $D,\ D'\in\cD$.
    Also, let $\gamma>0$ be some $\cF_t$-measurable random variable such that $\alpha^g_t(D)(\omega)\geq\gamma(\omega)$ and $\alpha^g_t(D')(\omega)\geq\gamma(\omega)$ hold for almost all $\omega$.
    Fix one such $\omega$ and denote $\gamma^*=\gamma(\omega)$.
    Similar to the proof of adaptiveness, we have that $\alpha^g_t(D)(\omega)\geq\gamma^*$ and $\alpha^g_t(D')(\omega)\geq\gamma^*$ will imply that
    \begin{align*}
      \cE_{g_{\gamma(\omega)}}\Big[-\sum^T_{s=t}D_s\Big|\cF_t\Big](\omega)&=\cE_{g_{\gamma^*}}\Big[-\sum^T_{s=t}D_s\Big|\cF_t\Big](\omega)\leq0,\\
      \cE_{g_{\gamma(\omega)}}\Big[-\sum^T_{s=t}D'_s\Big|\cF_t\Big](\omega)&=\cE_{g_{\gamma^*}}\Big[-\sum^T_{s=t}D'_s\Big|\cF_t\Big](\omega)\leq0,
    \end{align*}
    respectively.
    Then according to (vi) in Proposition~\ref{pr:gexp4}, we get that
    $$
    \cE_{g_{\gamma(\omega)}}\Big[-\sum^T_{s=t}\Big(\lambda D_s+(1-\lambda)D'_s\Big)\Big|\cF_t\Big](\omega)\leq0,
    $$
    holds for almost all $\omega\in\Omega$, and it implies that
    $$
    \alpha^g_t(\lambda D+(1-\lambda)D')(\omega)=\sup\Big\{x\in\bR, x>0:\cE_{g_x}\Big[-\sum^T_{s=t}\Big(\lambda D_s+(1-\lambda)D'_s\Big)\Big|\cF_t\Big](\omega)\leq0\Big\}\geq\gamma(\omega).
    $$
    Therefore, the Quasi-concavity of $\alpha^g_t$ holds.

    \smallskip\noindent
    I4. Fix $t\in\cT$. Let $D$, $D'\in\cD$, and suppose that $D\succeq_tD'$.
    For any fixed $\cF_t$-measurable random variable $\gamma>0$ such that $\alpha^g_t(D')\geq\gamma$, we have that
    $$
    \cE_{g_{\gamma(\omega)}}\Big[-\sum^T_{s=t}D'_s\Big|\cF_t\Big](\omega)\leq0
    $$
    holds for almost every $\omega\in\Omega$.
    Due to the assumption $D\succeq_tD'$, which implies that $\sum^T_{s=t}D_s\geq\sum^T_{s=t}D'_s$,
    we get by (ii) in Proposition~\ref{pr:gexp4} that
    $$
    \cE_{g_{\gamma(\omega)}}\Big[-\sum^T_{s=t}D_s\Big|\cF_t\Big](\omega)\leq\cE_{g_{\gamma(\omega)}}\Big[-\sum^T_{s=t}D'_s\Big|\cF_t\Big](\omega)\leq0,
    $$
    and the statement $\alpha^g_t(D)\geq\gamma$ follows.
    Hence, $\alpha^g_t$ is monotone.

    \smallskip\noindent
    I5. Fix $t\in\cT$, $D\in\cD$.
    For any fixed $\cF_t$-measurable random variable $\gamma>0$ such that $\alpha^g_t(D)\geq\gamma$, we have that
    $$
    \cE_{g_{\gamma(\omega)}}\Big[-\sum^T_{s=t}D'_s\Big|\cF_t\Big](\omega)\leq0
    $$
    holds for almost every $\omega\in\Omega$.
    By convexity of $g$-expectation, it is true that
    \begin{align*}
      \cE_{g_{\gamma(\omega)}}\Big[-\lambda\sum^T_{s=t}D_s\Big|\cF_t\Big]&\leq\lambda\cE_{g_{\gamma(\omega)}}\Big[-\sum^T_{s=t}D_s\Big|\cF_t\Big]+(1-\lambda)\cE_{g_{\gamma(\omega)}}\Big[0\Big|\cF_t\Big]\\
      &=\lambda\cE_{g_{\gamma(\omega)}}\Big[-\sum^T_{s=t}D_s\Big|\cF_t\Big]\leq0,
    \end{align*}
    for any $\lambda\in L^\infty(\cF_t)$, $0\leq\lambda\leq1$, and almost all $\omega\in\Omega$.
    Therefore, we conclude that $\alpha^g_t(\lambda D)\geq\gamma$, which implies that $\alpha^g_t(\lambda D)\geq\alpha^g_t(D)$.

\end{proof}

\begin{remark}
  In Assumption G, G1 and G2 are natural conditions for constructing DAI via $g$-expectation.
  The reason to assume G3 comes from duality between DCRM and DAI.
  If a DAI is given, then DCRM can be defined as
  $$
  \rho^{\gamma}_t(D)=\essinf\{c\in\cF_t:\alpha^g_t(D+c\1_{\{t\}})\geq\gamma\},
  $$
  which is equivalent to
  $$
  \rho^{\gamma}_t(D)=\essinf\Big\{c\in\cF_t:\esssup\Big\{x>0:\cE_{g_x}\Big[-\sum^T_{s=t}D_s-c\1_{\{t\}}\Big|\cF_t\Big]\leq0\Big\}\geq\gamma\Big\}.
  $$
  So we have that,
  $$
  \rho^{\gamma}_t(D)=\essinf\Big\{c\in\cF_t:\esssup\Big\{x>0:\cE_{g_x}\Big[-\sum^T_{s=t}D_s\Big|\cF_t\Big]\leq c\1_{\{t\}}\Big\}\geq\gamma\Big\}.
  $$
  Random variable $\rho^\gamma_t(D)$ can be represented as
  $$
  \rho^{\gamma}_t(D)=\essinf\Big\{c\in\cF_t:\forall \varepsilon>0,\cE_{g_{\gamma-\varepsilon}}\Big[-\sum^T_{s=t}D_s\Big|\cF_t\Big]\leq c\Big\},
  $$
  Hence, we get that $\rho^\gamma_t(D)=\sup_{\varepsilon>0}\{\cE_{g_{\gamma-\varepsilon}}[-\sum^T_{s=t}D_s|\cF_t]\}=\cE_{g_{\gamma-}}[-\sum^T_{s=t}D_s|\cF_t]$.
  In order for duality to hold, it is required to have that $\rho^{g_\gamma}_t(D)=\rho^\gamma_t(D)$ which is equivalent to follows
  $$
  \cE_{g_{\gamma-}}\Big[-\sum^T_{s=t}D_s\Big|\cF_t\Big]=\cE_{g_\gamma}\Big[-\sum^T_{s=t}D_s\Big|\cF_t\Big].
  $$
  Finally, according to comparison theorem, we have that $g_\gamma=g_{\gamma-}$.
\end{remark}

\begin{lemma}\label{le:dcai}
  Assume that the family of drivers $g=(g_x)_{x>0}$ satisfies Assumption G.
  Also suppose that $g_x$ is positive homogeneous for any $x\in\bR^+$.
  Then, $\alpha^g_t$ is scale invariant.
\end{lemma}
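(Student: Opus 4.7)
The plan is to reduce the claim to the positive homogeneity of the $g_x$-expectation, which is provided by Proposition~\ref{pr:gexp4}(vii) (with an inspection of its proof showing that positive homogeneity of $g_x(t,\omega,\cdot)$ suffices for positive homogeneity of $\cE_{g_x}[\,\cdot\,|\cF_t]$ with respect to non-negative scalars $\lambda\in L^\infty(\cF_t)$). Fix $t\in\cT$, $D\in\cD$, and $\lambda\in L^\infty(\cF_t)$ with $\lambda>0$. We want to show that, for almost every $\omega\in\Omega$,
$$
\alpha^g_t(\lambda D)(\omega)=\alpha^g_t(D)(\omega).
$$

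First I would unfold the definition \eqref{eq:alphadef} of $\alpha^g_t$ applied to $\lambda D$: for almost every $\omega$,
$$
\alpha^g_t(\lambda D)(\omega)=\sup\Big\{x>0:\cE_{g_x}\Big[-\lambda\sum_{s=t}^T D_s\,\Big|\,\cF_t\Big](\omega)\leq 0\Big\}.
$$
Next, since each $g_x$ is positive homogeneous in its spatial argument, Proposition~\ref{pr:gexp4}(vii) (applied with the non-negative scalar $\lambda\in L^\infty(\cF_t)$) yields
$$
\cE_{g_x}\Big[-\lambda\sum_{s=t}^T D_s\,\Big|\,\cF_t\Big]=\lambda\,\cE_{g_x}\Big[-\sum_{s=t}^T D_s\,\Big|\,\cF_t\Big],
$$
for every $x>0$.

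Finally, since $\lambda(\omega)>0$ for almost every $\omega$, the inequality $\lambda(\omega)\cE_{g_x}[-\sum_{s=t}^T D_s|\cF_t](\omega)\leq 0$ is equivalent to $\cE_{g_x}[-\sum_{s=t}^T D_s|\cF_t](\omega)\leq 0$. Therefore, for almost every $\omega$, the two defining sets coincide:
$$
\big\{x>0:\cE_{g_x}[-\lambda\textstyle\sum_{s=t}^T D_s|\cF_t](\omega)\leq 0\big\}=\big\{x>0:\cE_{g_x}[-\sum_{s=t}^T D_s|\cF_t](\omega)\leq 0\big\},
$$
and taking suprema gives $\alpha^g_t(\lambda D)=\alpha^g_t(D)$, which is scale invariance. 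No serious obstacle is expected; the only point requiring care is that the sup in \eqref{eq:alphadef} ranges over \emph{deterministic} $x>0$, but since the homogeneity identity above holds simultaneously for every such $x$ and the multiplication by $\lambda(\omega)>0$ preserves the sign of the $g$-expectation pointwise in $\omega$, the reduction goes through without needing to pass $\lambda$ inside the sup.
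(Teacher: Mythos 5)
Your proposal is correct and follows essentially the same route as the paper: unfold the definition of $\alpha^g_t(\lambda D)$, pull the positive $\cF_t$-measurable scalar $\lambda$ out of the $g_x$-expectation via Proposition~\ref{pr:gexp4}(vii), and observe that multiplication by $\lambda>0$ does not change the sign condition, so the two defining sets (and hence their suprema) coincide. Your extra remark that positive homogeneity of the driver suffices for the scalar identity with non-negative $\lambda$ is a sensible precision the paper glosses over, but it does not change the argument.
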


\begin{proof}
  We need to show that $\alpha^g_t$ is scale invariant.

  Let $t\in\cT$, $D\in\cD$, and $\lambda\in L^\infty(\cF_t)$, $\lambda>0$.
  By definition, we have that
  $$
  \alpha^g_t(\lambda D)=\esssup\{x>0:\cE_{g_x}[-\lambda\sum^T_{s=t}D_s|\cF_t]\leq0\}.
  $$
  In view of the fact that $g_x(t,\cdot)$ is positive homogeneous for any $x>0$, Proposition~\ref{pr:gexp4} (vii) implies that
  $$
  \alpha^g_t(\lambda D)=\esssup\{x>0:\lambda\cE_{g_x}[-\sum^T_{s=t}D_s|\cF_t]\leq0\}.
  $$
  Since $\lambda>0$, then we get that $\lambda\cE_{g_x}[-\sum^T_{s=t}D_s|\cF_t]\leq0$ is equivalent to $\cE_{g_x}[-\sum^T_{s=t}D_s|\cF_t]\leq0$.
  Hence, $\alpha^g_t(\lambda D)=\esssup\{x>0:\cE_{g_x}[-\sum^T_{s=t}D_s|\cF_t]\leq0\}=\alpha^g_t(D)$.
  This concludes that $\alpha^g_t$ is a dynamic coherent acceptability index.
\end{proof}

By using the following lemma, we will prove that $\alpha^g_t$ is time consistent.

\begin{lemma}\label{le:simpleconsist}
   Assume that the family of drivers $g=(g_x)_{x>0}$ satisfies Assumption G.
   Also suppose that $D_t\geq0\geq D'_t$ for some $t\in\cT$, $D, D'\in\cD$, and there exists $c\in\bR^+, A\in\cF_t$ such that $\1_A\alpha^g_{t+1}(D)\geq\1_Ac\geq\1_A\alpha^g_{t+1}(D')$.
   Then $\1_A\alpha^g_{t}(D)\geq\1_Ac\geq\1_A\alpha^g_{t}(D')$.
\end{lemma}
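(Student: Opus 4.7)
The plan is to reduce both inequalities to sign statements about the generating $g$-expectation, using the characterization that falls out of the proof of property I1: for any $c>0$ and $A\in\cF_t$,
$$
\1_A\alpha^g_t(D)\geq\1_A c \;\Longleftrightarrow\; \1_A\cE_{g_c}\Big[-\sum^T_{s=t}D_s\Big|\cF_t\Big]\leq 0.
$$
The monotonicity $x\mapsto\cE_{g_x}[\,\cdot\,|\cF_t]$ (from Assumption~G1 and Theorem~\ref{th:comp}) together with the left-continuity in $x$ established inside the I1 proof ensures that $\{x>0:\cE_{g_x}[\cdot|\cF_t]\leq 0\}$ is of the form $(0,\alpha^g_t(D)]$, so the dual characterization $\alpha^g_t(D)<c \Leftrightarrow \cE_{g_c}[\cdot|\cF_t]>0$ is also available.

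For the first inequality, I start from $\1_A\cE_{g_c}[-\sum_{s=t+1}^T D_s|\cF_{t+1}]\leq 0$. Combining tower (Proposition~\ref{pr:gexp4}(iii)), cash-additivity with the $\cF_t$-measurable random variable $D_t$ (part (v)), and locality (part (iv)) gives
$$
\1_A\cE_{g_c}\Big[-\sum^T_{s=t}D_s\Big|\cF_t\Big]=\cE_{g_c}\Big[\1_A\Big(-D_t+\cE_{g_c}\Big[-\sum^T_{s=t+1}D_s\Big|\cF_{t+1}\Big]\Big)\Big|\cF_t\Big].
$$
Since $D_t\geq 0$ and the inner $g$-expectation is $\leq 0$ on $A$, the bracket is $\leq 0$ pointwise (it is $0$ off $A$), so monotonicity (part (ii)) combined with (i) closes the argument.

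The second inequality $\1_A c\geq\1_A\alpha^g_t(D')$ is the delicate half because $\alpha^g_t$ is a supremum, so an upper bound upstairs must be probed by \emph{strictly larger} levels. I would fix $c'>c$; then $\alpha^g_{t+1}(D')<c'$ a.s.\ on $A$ forces $\cE_{g_{c'}}[-\sum_{s=t+1}^T D'_s|\cF_{t+1}]>0$ a.s.\ on $A$. Together with $-D'_t\geq 0$, the same tower/cash-additivity/locality manipulation expresses $\1_A\cE_{g_{c'}}[-\sum_{s=t}^T D'_s|\cF_t]$ as $\cE_{g_{c'}}[Z|\cF_t]$, where $Z\geq 0$ globally and $Z>0$ a.s.\ on $A$. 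To promote this to a strict inequality for the outer $g$-expectation I invoke the strict clause of Proposition~\ref{pr:gexp4}(ii): if $\cE_{g_{c'}}[Z|\cF_t]=0$ on some $A'\in\cF_t$ with $A'\subset A$, then $\1_{A'}Z=0$, contradicting $Z>0$ a.s.\ on $A$. Hence $\alpha^g_t(D')<c'$ a.s.\ on $A$; intersecting a sequence $c_n\downarrow c$ yields $\alpha^g_t(D')\leq c$ a.s.\ on $A$.

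The key obstacle is precisely this strict-inequality propagation in the second half: the non-strict monotonicity of $g$-expectation only delivers $\cE_{g_{c'}}[Z|\cF_t]\geq 0$, which would leave room for equality on a positive-measure subset of $A$ and therefore not close the argument for the supremum-defined $\alpha^g_t$. The strict comparison built into Theorem~\ref{th:comp} (and inherited by Proposition~\ref{pr:gexp4}(ii)) is what rules this out, and it is crucial to have localized by $\1_A$ first so that the comparand is genuinely nonnegative on all of $\Omega$ before the strict monotonicity step is invoked.
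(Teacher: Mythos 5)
Your proof is correct and follows essentially the same route as the paper's: both halves rest on the level-set characterization $\alpha^g_t(\cdot)\geq\gamma\Leftrightarrow\cE_{g_\gamma}[\cdot\,|\cF_t]\leq0$ established in the proof of I1, the tower/cash-additivity/locality decomposition across time $t$, and, for the upper bound on $\alpha^g_t(D')$, the strict clause of the comparison theorem inherited by Proposition~\ref{pr:gexp4}(ii). The only cosmetic difference is that the paper runs the second half as a contradiction on a positive-probability subset using a single level $\beta>c$, whereas you argue directly at every level $c'>c$ and let $c'\downarrow c$; the ingredients are identical.
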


\begin{proof}
  Suppose that $\1_A\alpha^g_t(D)\geq\1_Ac\geq\1_A\alpha^g_t(D')$ for some $t\in\cT$, $c\in\bR^+$, $A\in\cF_t$, and $D,D'\in\cD$ such that $D_t\geq0\geq D'_t$.
  Then, for any $\beta<c$, we have that $\cE_{g_\beta}[-\1_A\sum^T_{s=t+1}D_s|\cF_{t+1}]\leq0$. Since $D_t\geq0$, then according to Proposition~\ref{pr:gexp4} (ii) and (iii), we get that
  $$
  \cE_{g_\beta}\Big[-\1_A\sum^T_{s=t}D_s\Big|\cF_t\Big]=-\1_A D_t+\cE_{g_\beta}\Big[\cE_{g_\beta}\Big[-\1_A\sum^T_{s=t+1}D_s\Big|\cF_{t+1}\Big]\cF_t\Big]\leq0.
  $$
  Due to the fact that $\beta$ is arbitrary, the following is true:
  \begin{align*}
  \1_A\alpha^g_t(D)&=\1_A\esssup\Big\{x>0:\cE_{g_x}\Big[-\sum^T_{s=t}D_s\Big|\cF_t\Big]\leq0\Big\}\\
  &=\1_A\esssup\Big\{x>0:\cE_{g_x}\Big[-\1_A\sum^T_{s=t}D_s\Big|\cF_t\Big]\leq0\Big\}\\
  &\geq\1_Ac.
  \end{align*}
  On the other hand, by knowing that $\1_A\alpha^g_{t+1}(D')\leq \1_Ac$, we will prove $\1_A\alpha^g_t(D')\leq\1_Ac$ by contradiction.
  Assume that there exists some $A'\subset A$, $\bP(A')>0$ such that $\alpha^g_t(D')>c$ on $A'$. Then there exists a $\beta>c$ and $A''\subset A$, $\bP(A'')>0$ such that $\alpha^g_t(D)>\beta$ on $A"$.
  Hence, we have that
  \begin{align*}
    \cE_{g_\beta}\Big[-\1_{A''}\sum^T_{s=t}D'_s\Big|\cF_t\Big]\leq0.
  \end{align*}
  However, since $\alpha^g_{t+1}(D')\leq c$ on $A$, then $\alpha^g_{t+1}(D')<\beta$ on $A''$.
  In view of the fact that $D'_t\leq0$, we get for $\omega\in A''$ that
  $$
  \cE_{g_\beta}\Big[-\1_{A''}\sum^T_{s=t}D'_s\Big|\cF_t\Big](\omega)=-\1_{A''}(\omega)D'_t(\omega)+\cE_{g_\beta}\Big[\cE_{g_\beta}\Big[-1_{A''}\sum^T_{s=t+1}D'_s\Big|\cF_{t+1}\Big|\cF_t\Big](\omega)>0.
  $$
  Hence, there is a contradiction and such result implies that $\1_A\alpha^g_t(D')\leq\1_Ac$.
\end{proof}

Now we are ready to finish proving Theorem~\ref{th:gdai}, which is to verify that $\alpha^g_t$ is time consistent.

\begin{lemma}
  Assume that the family of drivers $g=(g_x)_{x>0}$ satisfies Assumption G.
  Then, $\alpha^g_t$ satisfies I6.
\end{lemma}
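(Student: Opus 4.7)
The plan is to reduce the general statement to the constant-level case already handled by Lemma~\ref{le:simpleconsist}. Given a nonnegative $\cF_t$-measurable random variable $m$ satisfying $\alpha^g_{t+1}(D) \geq m \geq \alpha^g_{t+1}(D')$ with $D_t \geq 0 \geq D'_t$, I would decompose the two inequalities along a countable family of rational levels and then pass to the limit pointwise. The key observation that makes this work is the locality property I2, already established, together with the fact that Lemma~\ref{le:simpleconsist} is really two independent one-sided implications: its first half needs only $D_t \geq 0$, and its second half needs only $D'_t \leq 0$.

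For the $D$-side, I would argue $\alpha^g_t(D) \geq m$ as follows. For every $q \in \bQ \cap (0,\infty)$, put $A_q := \{m \geq q\} \in \cF_t$, so that $\1_{A_q}\alpha^g_{t+1}(D) \geq \1_{A_q}\, q$. Applying the first half of Lemma~\ref{le:simpleconsist} with constant $c = q$ and set $A = A_q$ yields $\1_{A_q}\alpha^g_t(D) \geq \1_{A_q}\, q$. At each $\omega$ with $m(\omega) > 0$, choose rationals $q_n$ with $q_n \uparrow m(\omega)$ (or $q_n \to \infty$ if $m(\omega) = +\infty$); then $\omega \in A_{q_n}$ for every $n$, so $\alpha^g_t(D)(\omega) \geq q_n$, and letting $n \to \infty$ gives $\alpha^g_t(D)(\omega) \geq m(\omega)$. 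On $\{m = 0\}$ the inequality is automatic since $\alpha^g_t \geq 0$.

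Symmetrically, for the $D'$-side, I would argue $\alpha^g_t(D') \leq m$ by setting $B_q := \{m \leq q\} \in \cF_t$ for each $q \in \bQ \cap (0,\infty)$. Then $\1_{B_q}\alpha^g_{t+1}(D') \leq \1_{B_q}\, q$, and the second half of Lemma~\ref{le:simpleconsist} gives $\1_{B_q}\alpha^g_t(D') \leq \1_{B_q}\, q$. At each $\omega$ with $m(\omega) < \infty$ pick rationals $q_n \downarrow m(\omega)$ with $q_n > 0$; then $\omega \in B_{q_n}$ for all $n$, so $\alpha^g_t(D')(\omega) \leq q_n \to m(\omega)$. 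On $\{m = +\infty\}$ nothing is required.

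The only real point of care is to justify that the constant-level lemma splits into two decoupled halves, which a glance at its proof confirms; once that is noted, the argument is a routine monotone-approximation bridge from constant-in-$\omega$ acceptability thresholds to $\cF_t$-measurable thresholds. I do not anticipate any analytic obstacle beyond this bookkeeping.
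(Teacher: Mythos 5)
Your proof is correct and rests on the same key ingredient as the paper's, namely the constant-level Lemma~\ref{le:simpleconsist} combined with locality, together with the (valid) observation that the proof of that lemma splits into two independent one-sided implications: the lower bound on $\alpha^g_t(D)$ uses only $D_t\geq0$ and $\1_A\alpha^g_{t+1}(D)\geq\1_Ac$, and the upper bound on $\alpha^g_t(D')$ uses only $D'_t\leq0$ and $\1_A\alpha^g_{t+1}(D')\leq\1_Ac$; the paper exploits exactly the same decoupling. The only genuine difference is the approximation device used to pass from constant thresholds to a general nonnegative $\cF_t$-measurable $m$: the paper approximates $m$ by monotone sequences of simple random variables (increasing simple functions for the $D$-side and, after truncating $m$ at a level $N$, decreasing simple functions for the $D'$-side), whereas you work with the countable family of level sets $\{m\geq q\}$ and $\{m\leq q\}$, $q\in\bQ\cap(0,\infty)$, and take pointwise limits along rationals outside a single exceptional null set. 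The two routes are essentially interchangeable, but your level-set version is marginally tidier on the $D'$-side: it handles $m=+\infty$ directly and thus avoids the truncation $m_N$ and the subsequent limit $N\to\infty$ that the paper needs because its decreasing simple-function approximation requires $m$ to be bounded above.
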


\begin{proof}
  For any $t\in\cT$, $D\in\cD$, if there exists a positive $\cF_t$-measurable random variable $m$ such that $\alpha^g_{t+1}(D)\geq m$, then there exists a sequence of simple random variables $\phi_n=\sum^k_{i=1}\1_{A^n_i}a^n_i$ where $A^n_i\in\cF_t$, $a^n_i\in\bR^+$, $n\in\{1,2,\ldots\}$, such that $\phi_n\leq\phi_{n+1}$ and $\lim_{n\rightarrow\infty}\phi_n=m$.
  Hence, we have that $\alpha^g_{t+1}(D)\geq\phi_n$ for any $n$.
  Since that $D_t\geq0$, then according to locality of $\alpha^g_t$ and Lemma \ref{le:simpleconsist}, The following is true
  $$
  \alpha^g_t(D)\geq\phi_n,\ \quad n\in\{1,2,\ldots\}.
  $$
  Therefore, we conclude that $\alpha^g_t(D)\geq m$.

  Assume that there exists a positive $\cF_t$-measurable random variable $m$ such that $\alpha^g_{t+1}(D')\leq m$.
  Fix $N\in\bR^+$,
    and for any $\omega\in\Omega$, define
  \begin{equation*}
    m_N(\omega)=
    \left\{
    \begin{array}{lll}
      m(\omega) &\mbox{if $m(\omega)<N$},\\
      N &\mbox{if $m(\omega)\geq N$}.
    \end{array}
    \right.
  \end{equation*}
  It is clear that $(\alpha^g_{t+1})_N(D')\leq m_N$ and $m_N$ is bounded.
  Hence, there exists a sequence of simple random variables $\phi_n$ such that $\phi_n\geq m_N$, and $\lim_{n\rightarrow\infty}\phi_n=m_N$.
  Therefore $(\alpha^g_{t+1})_N(D')\leq\phi_n$ for all $n$.
  Since $D'_t\leq 0$, then by locality of $\alpha^g_t$ and Lemma \ref{le:simpleconsist}, we have that $(\alpha^g_t)_N(D')\leq\phi_n$ for any $n\in\{1,2,\ldots,\}$.
  Moreover, it implies that $(\alpha^g_t)_N(D')\leq m_N$.
  Let $N$ go to infinity, we conclude that $\alpha^g_t(D')\leq m$.
\end{proof}

Before proceed to the next section, we give several examples of DAIs generated by $g$-expectations.
While discussing these examples, we take $W$ as a symmetric random walk.

\begin{example}[Quasi-concave Case]
  Let $g=(g_x)_{x>0}$ be in the form of $g_x(t,z)=\frac{x}{(x+1)}\ln(\frac{1}{3}+\frac{1}{3}e^{-z}+\frac{1}{3}e^{z})$.
  Similar to Example~\ref{ex:convexdriver2}, we have that each $g_x(t,z)$ a convex regular driver.
  For fixed $t, z$, $g_x(t,z)$ is increasing with respect to $x$.
  Therefore, $g$ satisfies Assumption G.
  Moreover, $\alpha^g_t(D)$ is a DAI.
\end{example}

\begin{example}[Coherent Case]
  Let $g=(g_x)_{x>0}$ be in the form of $g_x(t,z)=\frac{x}{x+1}|z|$.
  Then $g$ satisfies Assumption G.
  Moreover, $g_x(t,\cdot)$ is positive homogeneous.
  According to Lemma~\ref{le:dcai}, $\alpha^g_t(D)$ is a dynamic coherent acceptability index.
\end{example}

\begin{example}[Entropic Case]
Let $g=(g_x)_{x>0}$ be such that
$$
g_x(t,z)=\frac{1}{x\Delta\langle W\rangle_t}\ln(\frac{1}{2}e^{-xz}+\frac{1}{2}e^{xz}).
$$
Then, due to Example~\ref{ex:derm}, such family of drivers satisfies Assumption~G, and will generate a DAI, to which we refer as DAI corresponding to entropic risk measures.
\end{example}

\section{Dynamic Conic Finance as Market Model}

Cherny and Madan \cite{MadanCherny2010} proposed the conic finance framework for pricing non-dividend paying securities using static acceptability indices.
In \cite{BCIR2012}, the authors generalized such technique to a dynamic framework that allows cash flows to pay dividends and be subjected to transaction costs by using dynamic coherent acceptability indices that were obtained in \cite{BCZ2010}.
Nevertheless, in \cite{AcerbiScandolo2008} and \cite{RosazzaGianinSgarra2012}, the authors presented a systematic criticism to the positive homogeneity and sublinearity assumptions frequently adopted in the framework of coherent risk measures, and Bion-Nadal \cite{BionNadal2009a} introduced a dynamic approach to bid and ask prices taking into account both transaction costs and liquidity risk, based on Time Consistent Pricing Procedures.
In this section, we will build a market model in which the securities are priced by an acceptability method.
Such set-up is subjected to transaction costs and liquidity risk, by using the dynamic quasi-concave acceptability indices via $g$-expectations we developed earlier.

\subsection{Market Set-up}\label{se:marketsetup}

In this section we retain the same probabilistic framework and the same notations as in the previous sections.
In particular, we consider an underlying probability space $(\Omega,\cF,\bF=\{\cF_t\}_{t\in \cT},\bP)$, and we assume that all processes considered below are $\bF$-adapted and appropriately integrable.
 On this probability space we consider a market consisting of a banking account (or money market account) and $K$ securities.
Throughout, we pick the $T$ as the largest time horizon.
 We also adopt the convention that all security prices are already discounted with the banking account.
 Our market is further characterized as follows:

\begin{enumerate}[(M1)]
  \item
 The process $(D^{\text{ask},i}_t)_{t\in \cT}\in\cD$, with $D^{\ask,i}_0=0$, represents the dividend process associated with \textit{holding a long position of 1 share} of the $i$th security, $i=1,\ldots,K$.
 Correspondingly, $(D^{\text{bid},i}_t)_{t\in \cT}\in\cD$, with $D^{\bid,i}_0=0$, is the dividend process associated with \textit{holding a short position of 1 share} of security $i=1,\ldots,K$.
  \item
$P^{\text{ask}}_t(\varphi,D^{\ask/\bid,i})$, respectively $P^{\text{bid}}_t(\varphi,D^{\ask/\bid,i})$, denotes the ex-dividend prices of purchasing, respectively selling, $\varphi\in L^\infty_+(\cF_t)$ shares of cash flows $D^{\ask,i}$ or $D^{\bid,i}$ that are associated with security $i\in\set{1, \ldots,K}$ at time $t\in\cT$.
  We also assume that the pricing operators $P^{\ask}_t,P^{\bid}_t:L^\infty_+(\cF_t)\times\cD\rightarrow L^2(\cF_t)$ are such that $P^{\ask}_t(0,\cdot)=P^{\bid}_t(0,\cdot)=0$, $t\in\cT$.
  \item
    The dividend process associated with holding 1 unit of the banking account is given by processes $D^0=(0,\ldots,0,1)\in\cD$.
    Moreover, $\varphi\in L^\infty_+(\cF_t)$ units of the banking account are purchased/sold at time $t$ for price $P^{\text{ask}}_t(\varphi,D^0)=P^{\text{bid}}_t(\varphi,D^0)=\varphi\cdot 1=\varphi$.
    In particular, $P^{\text{ask}}_t(1,D^0)=P^{\text{bid}}_t(1,D^0)=1$.\footnote{This is consistent with our convention that all prices are discounted by the banking account, so that the price of one unit of the banking account is $1$ at any time $t\in\cT$.}
\end{enumerate}

\ref{m1} Let $\cM$ be the set of all dividend processes in this market, i.e.  $\cM=\set{D^0,D^{\ask/\bid,i}, i=1,\ldots,K}$.
We will use the notation $(\cM,P^{\ask},P^{\bid})$ to denote our market model.

The market model $(\cM,P^{\ask},P^{\bid})$ such that $D^{\ask,i}=D^{\bid,i}$ and $P^{\ask}_t(\varphi,D^i)=P^{\bid}_t(\varphi,D^i)$, for any $i=1,\ldots, K$, is called frictionless market model.

\smallskip

\begin{remark}\label{rem:nonhom}
In accordance with our  framework, it is generally assumed that $D^{\ask,i}\neq D^{\bid,i}$ and $P^{\ask}_t(\varphi,D^i)\neq P^{\bid}_t(\varphi,D^i)$.
We also remark that in this paper we do not postulate that the prices $P^{\text{ask}}_t(\varphi,D)$ and $ P^{\text{bid}}_t(\varphi,D)$ are homogeneous (of degree one) in $\varphi.$ In other words, we acknowledge the fact  that in practice the unit price of a security typically depends on the size of the position in the security (cf. Example \ref{ex:mktdata} below). This is due, for the most part, to market liquidity considerations. As we shall see later, the bid/ask prices generated by our acceptability method are not necessarily homogeneous.
\end{remark}

We now illustrate the processes introduced above in the context of dividend paying stock and Credit Default Swap (CDS) contract.

\begin{example}
  Denote by $S_T$ the fundamental value associated to 1 share of a dividend paying stock after dividend payment at time $T$.
  The dividend paid by 1 share of the stock at each time node $t=1,\ldots,T$, is denoted by $D_t$, regardless of what position the investor is in.
  Therefore, the dividend process associated with 1 share of the stock is
  $$
  D^{\ask}=D^{\bid}=\{0,D_1,\ldots,D_{T-1},D_T+S_T\}.
  $$
In this case, the ex-dividend ask and bid price process $P^{\text{ask}}$ and $P^{\text{bid}}$ are the market quoted prices for selling, respectively buying stock $S$; see also Example~\ref{ex:mktdata}.
\end{example}

\begin{example}
  A CDS contract is an agreement between the protection buyer and the protection seller, in which the protection buyer pays a regular fixed premium up to occurrence of a pre-specified credit event, in return, the seller promises a compensation to the buyer.
  Typically, CDS contracts are traded on over-the-counter markets in which dealers quote CDS spreads to investors.
  Consider a CDS contract that is initiated at $t=0$, expires at $t=T$ with nominal value $\cN$, the protection buyer pays a spread $\kappa^{\text{ask}}$ to the dealer at each time node in exchange for a compensation $\delta$ at default time $\tau$; the protection seller receives a spread $\kappa^{\text{bid}}$ from the dealer at each time node and he needs to pay $\delta$ to the dealer at $\tau$.
  Please note that $\kappa^{\ask}$, $\kappa^{\bid}$ and $\delta$ all depends on $\cN$, and such dependence is not necessarily linear.

  The dividend processes $D^{\text{ask}}$ and $D^{\text{bid}}$ associated to buying and selling the CDS with specifications above, respectively, satisfy
  \begin{align*}
    \sum^t_{s=0}D^{\text{ask}}_s:=\1_{\{\tau\leq t\}}\delta-\kappa^{\text{ask}}\sum^t_{s=1}\1_{\{s<\tau\}},\ \sum^t_{s=0}D^{\text{bid}}_t:=\1_{\{\tau\leq t\}}\delta-\kappa^{\text{bid}}\sum^t_{s=1}\1_{\{s<\tau\}},
  \end{align*}
  for $t=1,\ldots,T$. In this case, the ex-dividend ask and bid price process $P^{\text{ask}}$ and $P^{\text{bid}}$ specify the mark-to-market values of the CDS.
  In general, $P^{\text{ask}}$ and $P^{\text{bid}}$ are also not positively homogeneous with respect to $\cN$; a property confirmed by practitioners.
\end{example}

We close this subsection by illustrating the inhomogeneity of  prices in a order-driven market.

\begin{example}\label{ex:mktdata}
  In an order-driven market, orders to buy and sell are centralized in a limit order book available to market participants and orders are executed against the best available offers in the limit order book.
\begin{table}[!hbp]
  \centering
  \caption{Order Book of AAPL (Yahoo Finance 10:46AM EST 12/04/2014)}\label{tb:aapl}
\begin{tabular}{|c|c|c|c|}
  \hline
  Bid Price & Bid Size & Ask Price & Ask Size\\
  \hline
  116.59 & 400 & 116.61 & 200 \\
  \hline
  116.58 & 400 & 116.62 & 700 \\
  \hline
  116.57 & 800 & 116.63 & 543 \\
  \hline
  116.56 & 500 & 116.64 & 643 \\
  \hline
  116.55 & 543 & 116.65 & 343 \\
  \hline
\end{tabular}
\end{table}

Table \ref{tb:aapl} is the limit order book of Apple Inc (AAPL) publicly traded stock.
As we can see, there are up to 200 shares available for purchase at a price of \$116.61 per share. Hence, $P^{\text{ask}}(1)=116.61$, and for $0\leq\varphi\leq200$, $P^{\text{ask}}(\varphi)=\varphi P^{\text{ask}}(1)$.
Similarly, for $200<\varphi\leq900$, we have that $P^{\text{ask}}(\varphi)=200\cdot116.61+(\varphi-200)\cdot336.62>\varphi P^{\text{ask}}(1)$.
Thus, the ask price $P^{\text{ask}}(\cdot)$ is not homogenous in number of shares traded. Moreover, it is easy to note that $P^{\text{ask}}(\cdot)$ is a convex function.
Similarly, the function $P^{\text{bid}}(\cdot)$ is not homogenous and it is concave.
\end{example}

\subsection{Self-financing Trading Strategies and Arbitrage}

Due to nonlinearity of the prices and presence of transaction costs, the classical definition of self-financing trading strategy and arbitrage are not suitable for the market model proposed above. In this section we define the notion of self-financing trading strategy by using the general concept that a self-financing trading strategy is a trading strategy that does not allow injection or substraction of money during trading periods. Similarly, the notion of arbitrage is build on the idea that a self-financing trading strategy can not yield a riskless profit. Moreover, following market practice, we will allow that an investor can simultaneously have both long and short positions of the same  security at the same time, i.e. the long and short positions in the same security at the same time are not necessarily netted out.

\begin{definition}\label{def:self-fin}
A \textit{trading strategy} is a predictable process $\phi:=\big\{(\phi^0_t,\phi^{l,1}_t,\phi^{s,1}_t,\ldots,\phi^{l,K}_t,\phi^{s,K}_t)\big\}^T_{t=1}$, where $\phi^0_t\in~L^\infty(\cF_{t-1})$ is the number of units of banking account held from time $t-1$ to $t$; $\phi^{l,i}_t\in L^\infty_+(\cF_{t-1})$ is the number of shares in long position of security $i$ held from time $t-1$ to $t$; and $\phi^{s,i}_t\in L^\infty_+(\cF_{t-1})$ is the number of shares in short position of security $i$ held from time $t-1$ to $t$. Sometimes, we will use the notation $\phi^i_t=(\phi^{l,i}_t,\phi^{s,i}_t)$, for $i=1,\ldots,K.$
\end{definition}

\begin{definition}\label{def:vp}
Let $\phi$ be a trading strategy.
\begin{enumerate}[(V1)]
\item
The \textit{set-up cost process} $\widetilde{V}(\phi)$ associated with $\phi$ is defined as
\begin{align*}
\widetilde{V}_t(\phi)=\phi^0_{t+1}+\sum^K_{i=1}\Big(P^{\text{ask}}_t(\phi^{l,i}_{t+1}, D^{\text{ask},i})-P^{\text{bid}}_t(\phi^{s,i}_{t+1},D^{\text{bid},i})\Big),\ \quad t=0,\ldots,T-1.
\end{align*}
\item
The \textit{liquidation value process} $V(\phi)$ associated with $\phi$ is defined as
\begin{align*}
V_t(\phi)=\phi^0_t+\sum^K_{i=1}\Big(P^{\text{bid}}_t(\phi^{l,i}_t, D^{\text{ask},i})-P^{\text{ask}}_t(\phi^{s,i}_t, D^{\text{bid},i})\Big)\\
  +\sum^K_{i=1}\Big(\phi^{l,i}_tD^\text{ask,i}_t-\phi^{s,i}_tD^\text{bid,i}_t\Big),\ \quad t=1,\ldots,T.
\end{align*}
\end{enumerate}
\end{definition}

For each $t\in\cT$, an investor could have both long and short positions of a security at the same time.
The process $\widetilde{V}(\phi)$ represents the cost of setting up the portfolio $\phi$,
and $V_t(\phi)$ is interpreted as the liquidation value of the portfolio at time $t$ (before any time $t$ transactions), including any dividends acquired from time $t-1$ to time $t$.
In classical frictionless market, the liquidation value of a portfolio is the same as the set-up cost.
However in our case, generally speaking, $V_t(\phi)\neq \widetilde{V}_t(\phi)$ due to transaction costs and non-homogeneity of $P^{\ask}$ and $P^{\bid}$.

\begin{definition}\label{def:sf}
  A trading strategy $\phi$ is \textit{self-financing} if
  \begin{align}
    &\Delta\phi^0_{t+1}+\sum^K_{i=1}\Big(\1_{\Delta\phi^{l,i}_{t+1}\geq0}P^{\ask}_t(\Delta\phi^{l,i}_{t+1},D^{\ask,i})-\1_{\Delta\phi^{l,i}_{t+1}<0}P^{\bid}_t(-\Delta\phi^{l,i}_{t+1},D^{\ask,i})\\
    &-\1_{\Delta\phi^{s,i}_{t+1}\geq0}P^{\bid}_t(\Delta\phi^{s,i}_{t+1},D^{\bid})+\1_{\Delta\phi^{s,i}_{t+1}<0}P^{\ask}_t(-\Delta\phi^{s,i}_{t+1},D^{\bid})\Big) \nonumber\\
    & \quad = \sum^K_{i=1}(\phi^{l,i}_tD^{\text{ask},i}_t-\phi^{s,i}_tD^{\text{bid},i}_t). \label{eq:self-fin}
  \end{align}
  for $t=0,\ldots,T-1$.
\end{definition}

Definition \ref{def:sf} provides a natural interpretation of self-financing condition in market with friction.
The cash flows that are being bought or sold should depend on the both positions before and after re-balance at each time $t$.
All the money that is used for getting to the new position is equal to the dividends acquired from time $t-1$ to time $t$.
Therefore, no money flows in or out of the portfolio.

Next, we will introduce the concept of arbitrage that is relevant for our theory.

\begin{definition}\label{def:arb1}   An \textit{arbitrage opportunity} at time $t$, $t\in\cT$, is a self-financing strategy, such that  $V_T(\phi)-\widetilde V_t(\phi)\geq0$ and $\bP(V_T(\phi)-\widetilde V_t(\phi)>0)>0$.  We call a market \textit{arbitrage free} at time $t$ if there exists no arbitrage opportunity in the model at time $t$.
\end{definition}

In what follows, we will provide two other characterizations of arbitrage opportunities, which are useful in our paper.
Towards this end, we first define the following sets,

\begin{align}
&\cS(t)=\cS(t, P^{\ask}, P^{\bid}) :=
\begin{cases}
\{ \phi : \phi \text{ is self-financing, } \widetilde{V}_0(\phi)=0 \}, & t=0, \nonumber \\[0.05in]
\{ \phi : \phi \text{ is self-financing, } \phi_s=0 \textrm{ for all }  s\leq t\}, & t=1,\ldots,T. \nonumber
\end{cases}
\end{align}

Note that for $\phi\in\cS(t)$, we have $\phi_t=0$. Due to our assumption that $P^{\text{bid}}_t(0,\cdot)=P^{\text{ask}}_t(0,\cdot)=0$, we have that $\widetilde{V}_t(\phi)=V_t(\phi)=0$.
Next, we introduce the set of cash flows generated by strategies in $\cS(t)$:
\begin{align}\label{eq:H0}
  \cH^0(t)=\Big\{\Big(\underbrace{0,\ldots,0}_{t},\Delta V_{t+1}(\phi),\ldots,\Delta V_T(\phi)\Big):\phi\in\cS(t)\Big\},\ \quad t\in\cT.
\end{align}
Since $P^{\ask}_t,P^{\bid}_t:L^\infty_+(\cF_t)\times\cD\rightarrow \cD$, we have that $\widetilde{V}_t(\phi)$, $V_t(\phi)\in L^2(\cF_t)$ for any self-financing trading strategy $\phi$, and therefore $\cH^0(t)\subset\cD$.

Next result gives two characterization of arbitrage.

\begin{proposition}\label{pr:eqarb}
  The following statements are equivalent:
  \begin{enumerate}[(1)]
    \item
    There exists an arbitrage opportunity at time $t$.
    \item
    There exists a strategy $\psi\in\cS(t)$, such that $V_T(\psi)\geq0$ and $\bP(V_T(\psi)>0)>0$.
    \item
    There exists a cash flow $({0,\ldots,0},H_{t+1},\ldots,H_T)\in\cH^0(t)$, such that $\sum^T_{s=t+1}H_s\geq0$ and $\bP(\sum^T_{s=t+1}H_s>0)>0$.
  \end{enumerate}
\end{proposition}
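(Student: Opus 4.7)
The plan is to prove the chain $(3) \Leftrightarrow (2) \Leftrightarrow (1)$ by exploiting the fact that any $\psi \in \cS(t)$ automatically satisfies $V_t(\psi)=0$ and $\widetilde V_t(\psi)=0$, so that its liquidation value at $T$ telescopes into $\sum_{s=t+1}^T \Delta V_s(\psi)$ and also equals its $T$-horizon arbitrage indicator $V_T(\psi)-\widetilde V_t(\psi)$.

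\textbf{Step 1: $(2)\Leftrightarrow(3)$.} First I would check that for any $\psi \in \cS(t)$ with $t\geq 1$, the fact that $\psi_s=0$ for $s\leq t$ together with $P^{\ask}_t(0,\cdot)=P^{\bid}_t(0,\cdot)=0$ forces $V_t(\psi)=0$ (and for $t=0$, the requirement $\widetilde V_0(\psi)=0$ plays the analogous role). Writing $V_T(\psi)=V_T(\psi)-V_t(\psi)=\sum_{s=t+1}^T\Delta V_s(\psi)$, the cash flow $(0,\dots,0,\Delta V_{t+1}(\psi),\dots,\Delta V_T(\psi))$ belongs to $\cH^0(t)$ by definition \eqref{eq:H0}, and its cumulative sum from $t+1$ to $T$ equals $V_T(\psi)$. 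Conversely every element of $\cH^0(t)$ arises this way. Thus (2) and (3) are just two formulations of the same property.

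\textbf{Step 2: $(2)\Rightarrow(1)$.} Given $\psi \in \cS(t)$ with $V_T(\psi)\geq 0$ and $\bP(V_T(\psi)>0)>0$, I would argue that $\psi$ itself is an arbitrage. The key observation is that the self-financing identity \eqref{eq:self-fin} at time $s=t$, combined with $\psi^{l,i}_t=\psi^{s,i}_t=\psi^0_t=0$ and $\Delta\psi^{l,i}_{t+1},\Delta\psi^{s,i}_{t+1}\geq 0$, simplifies to
\[
\psi^0_{t+1}+\sum_{i=1}^K\!\Big(P^{\ask}_t(\psi^{l,i}_{t+1},D^{\ask,i})-P^{\bid}_t(\psi^{s,i}_{t+1},D^{\bid,i})\Big)=0,
\]
which is precisely $\widetilde V_t(\psi)=0$. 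Hence $V_T(\psi)-\widetilde V_t(\psi)=V_T(\psi)$, and $\psi$ satisfies Definition~\ref{def:arb1}.

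\textbf{Step 3: $(1)\Rightarrow(2)$.} This is the only step with real content. Given an arbitrage $\phi$ at time $t$, I would define $\psi$ by setting $\psi_s=0$ for $s\leq t$ and, for $s=t+1,\dots,T$,
\[
\psi^{l,i}_s=\phi^{l,i}_s,\qquad \psi^{s,i}_s=\phi^{s,i}_s,\qquad \psi^0_s=\phi^0_s-\widetilde V_t(\phi),
\]
so that the set-up cost of $\phi$ is financed by borrowing from the bank account at time $t$. Since $\widetilde V_t(\phi)\in\cF_t$, predictability of $\psi^0$ is preserved. I would then verify self-financing in three ranges: for $s<t$ both sides of \eqref{eq:self-fin} vanish; at $s=t$ the identity reduces to $\widetilde V_t(\phi)-\widetilde V_t(\phi)=0$ by construction; and for $s>t$ the offset $-\widetilde V_t(\phi)$ cancels in the increments $\Delta\psi^0_{s+1}=\Delta\phi^0_{s+1}$, so \eqref{eq:self-fin} for $\psi$ reduces to \eqref{eq:self-fin} for $\phi$. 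Finally, writing out $V_T(\psi)$ from Definition~\ref{def:vp} yields $V_T(\psi)=V_T(\phi)-\widetilde V_t(\phi)\geq 0$ with strict positivity on a set of positive probability, which is the required (2).

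\textbf{Main obstacle.} The only subtlety is the construction in Step~3: one must check carefully that the bank-account offset at the single time $t$ preserves the self-financing identity both at $s=t$ (where the set-up cost is paid) and at $s>t$ (where the offset must not disturb the dividend balance). A minor technical point is that $\widetilde V_t(\phi)$ must lie in $L^\infty(\cF_t)$ to keep $\psi^0$ bounded; this follows from the boundedness of positions in $\phi$ together with whatever integrability is implicit in the pricing operators, and can be handled by truncation if needed.
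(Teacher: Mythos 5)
Your proposal is correct and follows essentially the same route as the paper: the equivalence $(2)\Leftrightarrow(3)$ via telescoping $V_T(\psi)=\sum_{s=t+1}^T\Delta V_s(\psi)$ using $V_t(\psi)=\widetilde V_t(\psi)=0$, and the implication $(1)\Rightarrow(2)$ via exactly the same shifted strategy $\psi^0_u=\phi^0_u-\widetilde V_t(\phi)$ for $u\geq t+1$ with the same three-range verification of the self-financing identity. Your added remarks (deriving $\widetilde V_t(\psi)=0$ explicitly from \eqref{eq:self-fin} at $s=t$, and flagging the integrability of $\widetilde V_t(\phi)$) are sound refinements of points the paper glosses over, not a different argument.
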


\begin{proof}
  For  a fixed $t\in\cT$, we will show that (1) is equivalent to (2), and (2) is equivalent to (3).

\smallskip

\noindent   $(2)\Rightarrow(1)$ Assume that there exists $\psi\in\cS(t)$ such that $V_T(\psi)\geq0$ and $\bP(V_T(\psi)>0)>0$. Since $\psi\in\cS(t)$, then $\widetilde{V}_t(\psi)=0$, and (2) follows immediately.

\smallskip  \noindent $(1)\Rightarrow(2)$
Assume that $\phi$ is gives an arbitrage opportunity at time $t$.
  We define a trading strategy $\psi$ as follows
  \begin{align}
\psi^0_u=\psi^{l,i}_u=\psi^{s,i}_u=0,\ \quad & u=0,\ldots,t,\ i=0,\ldots,K, \nonumber \\[0.05in]
\psi^0_u=\phi^0_u-\widetilde{V}_t(\phi),\ \quad & u=t+1,\ldots,T, \nonumber \\[0.05in]
\psi^{l/s,i}_u=\phi^{l/s,i}_u,\ \quad & u=t+1,\ldots,T,\ i=1,\ldots,K. \nonumber
\end{align}
It is straightforward to see that $\widetilde{V}_0(\psi)=0$, and $\psi_u=0$ for $u\leq t$.
To show that $\psi$ is self-financing, first notice that
\begin{align}
    \Delta\psi^0_{t+1}&+\sum^K_{i=1}\Big(\1_{\Delta\psi^{l,i}_{t+1}\geq0}P^{\ask}_t(\Delta\psi^{l,i}_{t+1},D^{\ask,i})-1_{\Delta\psi^{l,i}_{t+1}<0}P^{\bid}_t(-\Delta\psi^{l,i}_{t+1},D^{\ask,i}) \nonumber \\
    &-\1_{\Delta\psi^{s,i}_{t+1}\geq0}P^{\bid}_t(\Delta\psi^{s,i}_{t+1},D^{\bid,i})+\1_{\Delta\psi^{s,i}_{t+1}<0}P^{\ask}_t(-\Delta\psi^{s,i}_{t+1},D^{\bid,i})\Big) \nonumber \\
    =&\phi^0_{t+1}-\widetilde{V}_t(\phi)+\sum^K_{i=1}\Big(P^{\ask}_t(\phi^{l,i}_{t+1},D^{\ask,i})-P^{\bid}_t(\phi^{s,i}_{t+1},D^{\bid,i})\Big) \nonumber \\
    =&\widetilde{V}_t(\phi)-\widetilde{V}_t(\phi)=0 \nonumber \\
    =&\sum^K_{i=1}(\psi^{l,i}_tD^{\text{ask},i}_t-\psi^{i,s}_tD^{\text{bid},i}_t).\label{eq:sft}
  \end{align}
  Since $\Delta\psi_u=\Delta\phi_u$ for any $u\geq t+1$, and $\phi$ is a self-financing trading strategy, then also in view of \eqref{eq:sft}, we conclude that $\psi$ is a self-financing strategy.
  Hence, $\psi\in\cS(t)$, and $V_T(\psi)=V_T(\phi)-V_t(\phi)$ which implies that $V_T(\psi)\geq0$, $\bP(V_T(\psi)>0)>0$.

 \smallskip \noindent
 $(2)\Rightarrow(3)$ Assume that $\psi\in\cS(t)$, and $V_T(\psi)\geq0$, $\bP(V_T(\psi)>0)>0$. Then, we define the process $H$ as follows
  \begin{align}
  &H_s :=
  \begin{cases}
  0, & s=0,\ldots,t, \nonumber \\
  \Delta V_s(\psi), & s=t+1, \ldots, T. \nonumber
  \end{cases}
  \end{align}
  Then $H\in\cH^0(t)$, $\sum^T_{s=t+1}H_s=\sum^T_{s=0}\Delta V_s(\psi)=V_T(\psi)\geq0$, and thus $\bP(\sum^T_{s=t+1}H_s>0)=\bP(V_T(\psi)>0)>0$.

\smallskip \noindent
  $(3)\Rightarrow(2)$ Now, suppose that there exists a cash flow $\big(0,\ldots,0,\widehat H_{t+1},\ldots,\widehat H_T\big)\in\cH^0(t)$ such that $\sum^T_{s=t+1}\widehat H_s\geq0$ and
  $\bP(\sum^T_{s=t+1}\widehat H_s>0)>0$.
  Then, by definition of $\cH^0(t)$ there exists a $\psi\in\cS(t)$ such that $V_t(\psi)=0$, $\Delta V_s(\psi)=\widehat H_s$, $s\in\{t+1,\ldots,T\}$ and
  $$
  V_T(\psi)=\sum^T_{s=0}\Delta V_s(\psi)=\sum^T_{s=t+1}\widehat H_s\geq0.
  $$
  Moreover, $\bP(V_T(\psi)>0)=\bP(\sum^T_{s=t+1}\widehat H_s>0)>0$.
  Thus,  (3) holds true.

This concludes the proof.
\end{proof}

 With the results from  Proposition~\ref{pr:eqarb} at hand, we say that the no-arbitrage condition (NA) for $\cH^0(t)$ holds if (3) does not hold, and throughout this section, we will characterize arbitrage opportunities by properties (2) or (3) as in Proposition~\ref{pr:eqarb}.

 Clearly, since $\cS(t+1)\subset \cS(t)$, absence of arbitrage at time $t\in\set{0,1,\ldots,T-1}$ implies absence of arbitrage at any future times $s$, $s=t+1,\ldots,T-1$.
 In particular, if a market is arbitrage free at time $0$, then such market is arbitrage free at any time $t\in\cT$.
 Hence, to show that there is no arbitrage opportunity in the market, it is enough to show no arbitrage at time 0.
 Accordingly, we have the following definition.

 \begin{definition}
   A market is called arbitrage free if there exists no arbitrage opportunity in the model at time 0.
 \end{definition}

 It is important to observe though, that, contrary to the  classical frictionless market model, absence of arbitrage at time $t=1,\ldots,T-1,$ in models considered here does not (in general) imply absence of arbitrage at time $s$, where $s< t$.
 This will be illustrated in the following example.

\begin{example}\label{example2}
  Let $T=2$ ,$\Omega=\{\omega_1,\omega_2,\omega_3,\omega_4\}$, and we consider a market with one banking account and one security paying no dividend.
  Assume that the pricing operator is homogeneous with respect to the number shares traded, and the price process of the security is given in Table~\ref{tb:stockprice}.

  \begin{table}[!h]
  \centering
  \caption{Stock Price Dynamics, Example~\ref{example2}.}\label{tb:stockprice}
\begin{tabular}{|c|c|c|c|c|}
  \hline
   & $P^{\ask}(\omega_1)/P^{\bid}(\omega_1)$ & $P^{\ask}(\omega_2)/P^{\bid}(\omega_2)$ & $P^{\ask}(\omega_3)/P^{\bid}(\omega_3)$ & $P^{\ask}(\omega_4)/P^{\bid}(\omega_4)$\\
  \hline
  $T=0$ & 10/10 & 10/10 & 10/10 & 10/10 \\
  \hline
  $T=1$ & 12/11 & 12/11 & 11/10 & 11/10 \\
  \hline
  $T=2$ & 13/12 & 11/10 & 12/11 & 10/9 \\
  \hline
\end{tabular}
\end{table}
  Consider the trading strategy $\psi_1=(-10,1)$, $\psi_2(\omega_1,\omega_2)=(1,0)$ and $\psi_2(\omega_3,\omega_4)=(0,0)$. Thus, $\psi$ is a self-financing strategy such that $\widetilde{V}_0=0$, $V_1(\omega_1,\omega_2)=V_2(\omega_1,\omega_2)=1$ and $V_1(\omega_3,\omega_4)=V_2(\omega_3,\omega_4)=0$. According to Definition \ref{def:arb1}, it is an arbitrage opportunity at time $0$.
  However, it is not hard to observe that for any $\varphi\in\cS(1)$, $\varphi$ could not be an arbitrage opportunity at time $1$.
\end{example}

\subsection{Pricing Operators}\label{sec:PricingOperators}

In this section we will introduce some pricing operators for cash flows $D\in\cD$ through an acceptability method and show some properties of such prices.
Then, in the next section, we will show that a market, in which the fundamental assets are priced according to our pricing operators, satisfies the properties postulated in (M2) and (M3), and this market is arbitrage free in the sense of Definition \ref{def:arb1}.

For any random variable $a$ that is $\cF_t$-measurable, and process $X=(X_t)_{t\in\cT}$, we will use the following notation:
\begin{align*}
  \delta_t(a)&=\1_{\{t\}}a:=\{0,\ldots,0,a,0,\ldots,0\},\\
  \delta^+_t(X)&=\{0,\ldots,0,X_{t+1},\ldots,X_T\}.
\end{align*}

For any $D\in\cD$, $\delta^+_t(D)$ represents the the future cash flow, at time  $t\in\cT$, of the dividend stream $D$.
We are going to evaluate this future dividend cash flow, in other words, to calculate the ex-dividend prices of $D$, by using an acceptability based method.

Assume that an investor wants to buy $\varphi$ shares of cash flow $D\in\cD$ at time $t\in\cT$, where $\varphi\in\cF_{t}$, $\varphi\geq0$, then the market as the counterparty will charge $P^{\text{ask}}_t(\varphi,D)$ at time $t$ and deliver $\delta^+_t(D)$ to the buyer.
Thus, the corresponding cash flow for the market is $\{0,\ldots,0,P^{\text{ask}}_t(\varphi,D),-\varphi D_{t+1},\ldots,-\varphi D_T\}$.
To decide the proper $P^{\text{ask}}_t(\varphi,D)$, the market will choose the smallest $P^{\text{ask},D}_t(\varphi)$ such that\\
$\{0,\ldots,0,P^{\text{ask}}_t(\varphi,D),-\varphi D_{t+1},\ldots,-\varphi D_T\}$ is acceptable with respect to some acceptability index $\alpha^g$ at some level $\gamma$.
Similarly, the market will choose the largest $P^{\text{bid}}_t(\varphi,D)$ such that the cash flow\\
$\{0,\ldots,0,-P^{\text{bid}}_t(\varphi,D),\varphi D_{t+1},\ldots,\varphi D_T\}$ is acceptable at level $\gamma$, when an investor is selling $\varphi\geq0$ shares of $D\in\cD$ at time $t\in\cT$.

Throughout this section, we will always consider the family of drivers $g=(g_x)_{x>0}$ satisfy Assumption G.
We proceed by defining the acceptability ask price $a^{g,\gamma}_t$ and the acceptability bid price $b^{g,\gamma}_t$.

\begin{definition}\label{def:ask&bid}
  Let $g=(g_x)_{x>0}$ be a family of drivers.
  The \textit{acceptability ask price} of $\varphi\in L^\infty_+(\cF_t)$ shares of the cash flow $D\in\cD$, at level $\gamma$, at time $t\in\cT$ is defined as
  \begin{equation}\label{eq:a1}
  a^{g,\gamma}_t(\varphi, D)=\essinf\{a\in\cF_t:\alpha^g_t(\delta_t(a)-\delta^+_t(\varphi D))\geq\gamma\};
  \end{equation}
  and the \textit{acceptability bid price} of $\varphi\in L^\infty_+(\cF_t)$ shares of $D\in\cD$, at level $\gamma$, at time $t\in\cT$ is defined as
  \begin{equation}\label{eq:b1}
  b^{g,\gamma}_t(\varphi, D)=\esssup\{b\in\cF_t:\alpha^g_t(\delta^+_t(\varphi D)-\delta_t(b))\geq\gamma\}.
  \end{equation}
\end{definition}

For $\varphi\in L^\infty_+(\cF_t)$ and $D\in\cD$, $a^{g,\gamma}_t(\varphi,D)$ and $b^{g,\gamma}_t(\varphi,D)$ are the ex-dividend prices at time $t$, therefore, they are independent of $D_0,\ldots,D_t$.

\begin{remark}
  Note that in Definition \ref{def:ask&bid}, $\varphi$ is a $\cF_t$-measurable random variable, thus by applying the pricing operators $a^{g,\gamma}_t$ and $b^{g,\gamma}_t$ to cash flows generated by any trading strategy $\phi$, we will get well-defined set-up cost process $\widetilde{V}_t(\phi)$ and liquidation value process $V_t(\phi)$.
  Also by observing the fact that $a^{g,\gamma}_t(\varphi, D)=a^{g,\gamma}_t(1,\varphi D)$ and $b^{g,\gamma}_t(\varphi, D)=b^{g,\gamma}_t(1,\varphi D)$, we will prove most results for $a^{g,\gamma}_t(1,D)$ and $b^{g,\gamma}_t(1,D)$.
  Then such results are also true for $a^{g,\gamma}_t(\varphi,D)$ and $b^{g,\gamma}_t(\varphi,D)$.
\end{remark}

\begin{remark}
  We call $a^{g,\gamma}_t(1,D)$ the time $t$ acceptability ask price of $D$ at level $\gamma$, and $b^{g,\gamma}_t(1,D)$ the time $t$ acceptability bid price of $D$ at level $\gamma$.
  For simplicity, we will use the notation $a^{g,\gamma}_t(D)=a^{g,\gamma}_t(1,D)$ and $b^{g,\gamma}_t(D)=b^{g,\gamma}_t(1,D)$.
\end{remark}

\begin{remark}
   We observe that in Definition \ref{def:ask&bid} $\varphi$ is non-negative.
 This is consistent with the term ``buy/sell $\varphi$ shares of some security'' which is used in practice.
\end{remark}

Next, we provide some important properties of acceptability ask and bid prices.

\begin{theorem}\label{th:askbid1}
  The acceptability ask and bid prices of $D\in\cD$, at level $\gamma>0$, at time $t\in\cT$, satisfy the following properties:
  \begin{enumerate}[P1.]
  \item
  Representation:
  \begin{align*}
    a^{g,\gamma}_t(D)&=\cE_{g_\gamma}\Big[\sum^T_{s=t+1}D_s\Big|\cF_t\Big],\\
    b^{g,\gamma}_t(D)&=-\cE_{g_\gamma}\Big[-\sum^T_{s=t+1}D_s\Big|\cF_t\Big].
  \end{align*}
  \item
  No Arbitrage:
  $$a^{g,\gamma}_t(D)\geq b^{g,\gamma}_t(D).$$
  \item
  Convexity and Concavity:
  \begin{align*}
    a^{g,\gamma}_t(\lambda D^1+(1-\lambda)D^2)&\leq\lambda a^{g,\gamma}_t(D^1)+(1-\lambda)a^{g,\gamma}_t(D^2),\\
    b^{g,\gamma}_t(\lambda D^1+(1-\lambda)D^2)&\geq\lambda b^{g,\gamma}_t(D^1)+(1-\lambda)b^{g,\gamma}_t(D^2),
  \end{align*}
  for $D^1, D^2\in\cD$, $\lambda\in L^\infty_+(\cF_t)$, $0\leq\lambda\leq1$.
  \item
  Market Impact:
  \begin{align*}
    &a^{g,\gamma}_t(\lambda\varphi,D)\leq\lambda a^{g,\gamma}_t(\varphi,D),\ b^{g,\gamma}_t(\lambda\varphi,D)\geq\lambda b^{g,\gamma}_t(\varphi,D),\ \lambda, \varphi\in L^\infty_+(\cF_t),\quad  0\leq\lambda\leq1;\\
    &a^{g,\gamma}_t(\lambda\varphi,D)\geq\lambda a^{g,\gamma}_t(\varphi,D),\ b^{g,\gamma}_t(\lambda\varphi,D)\leq\lambda b^{g,\gamma}_t(\varphi,D),\ \lambda, \varphi\in L^\infty_+(\cF_t),\quad \lambda\geq1.
  \end{align*}
  \item
  Time Consistency:
  \begin{align*}
    a^{g,\gamma}_t(D)&=a^{g,\gamma}_t(\delta_{t+1}(D_{t+1}+a^{g,\gamma}_{t+1}(D))),\\
    b^{g,\gamma}_t(D)&=b^{g,\gamma}_t(\delta_{t+1}(D_{t+1}+b^{g,\gamma}_{t+1}(D))).
  \end{align*}
  \item
  Linearity (if the drivers are linear):
  If $g_\gamma(t,z)=x(t)z$, $t\in\cT$.
  Then, there exists a probability measure $\bQ\sim\bP$ such that
  $$a^{g,\gamma}_t(\varphi,D)=b^{g,\gamma}_t(\varphi,D)=\varphi\bE_{\bQ}\big[\sum^T_{s=t+1}D_s|\cF_t\big],$$
  for $\varphi\in L^\infty_+(\cF_t)$.
  \end{enumerate}
\end{theorem}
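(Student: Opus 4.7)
The strategy is to prove P1 first, since the representation essentially reduces every other claim to a property of $g$-expectation already established in Proposition~\ref{pr:gexp4} (together with the convexity or homogeneity of the driver $g_\gamma$).

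For P1, I will use the equivalence established inside the proof of adaptivity (I1) for $\alpha^g$: for any $\gamma>0$ and $D'\in\cD$, the event $\{\alpha^g_t(D')\geq\gamma\}$ coincides with $\{\cE_{g_\gamma}[-\sum_{s=t}^T D'_s\mid\cF_t]\leq 0\}$. Applying this with $D'=\delta_t(a)-\delta^+_t(D)$ and using cash additivity (Proposition~\ref{pr:gexp4}(v)), the condition $\alpha^g_t(\delta_t(a)-\delta^+_t(D))\geq\gamma$ becomes $\cE_{g_\gamma}[\sum_{s=t+1}^T D_s\mid\cF_t]-a\leq 0$, i.e.\ $a\geq\cE_{g_\gamma}[\sum_{s=t+1}^T D_s\mid\cF_t]$. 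Taking the essential infimum over $a\in\cF_t$ yields the ask formula; the bid formula is obtained analogously with $D'=\delta^+_t(D)-\delta_t(b)$.

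Having P1, the remaining items are short. For P2, write $X:=\sum_{s=t+1}^T D_s$; then $a^{g,\gamma}_t(D)-b^{g,\gamma}_t(D)=\cE_{g_\gamma}[X\mid\cF_t]+\cE_{g_\gamma}[-X\mid\cF_t]$, and since $g_\gamma$ is convex, Proposition~\ref{pr:gexp4}(vi) with $\lambda=\tfrac12$ together with part~(i) (applied to $\mu=0$) gives $0=\cE_{g_\gamma}[0\mid\cF_t]\leq\tfrac12\cE_{g_\gamma}[X\mid\cF_t]+\tfrac12\cE_{g_\gamma}[-X\mid\cF_t]$. P3 is immediate from P1 and Proposition~\ref{pr:gexp4}(vi) applied to $X^i:=\sum_{s=t+1}^T D^i_s$. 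For P4 (market impact), observe that $a^{g,\gamma}_t(\lambda\varphi,D)=a^{g,\gamma}_t(1,\lambda\varphi D)$ and $\lambda\varphi D=\lambda(\varphi D)+(1-\lambda)\cdot 0$; applying P3 and $a^{g,\gamma}_t(0)=0$ (which follows from P1 and Proposition~\ref{pr:gexp4}(i)) gives the sub-linear bound for $\lambda\in[0,1]$, and the super-linear bound for $\lambda\geq 1$ is obtained by writing $\varphi=\tfrac{1}{\lambda}(\lambda\varphi)$ and applying the previous bound; the bid case is symmetric via the concavity in P3.

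For P5 (time consistency), P1 gives
\[
a^{g,\gamma}_t\bigl(\delta_{t+1}(D_{t+1}+a^{g,\gamma}_{t+1}(D))\bigr)=\cE_{g_\gamma}\Bigl[D_{t+1}+\cE_{g_\gamma}\Bigl[\sum_{s=t+2}^T D_s\,\Big|\,\cF_{t+1}\Bigr]\,\Big|\,\cF_t\Bigr],
\]
and since $D_{t+1}\in L^2(\cF_{t+1})$, cash additivity (Proposition~\ref{pr:gexp4}(v)) pulls $D_{t+1}$ inside the inner $g$-expectation, after which the tower property (iii) collapses the expression to $\cE_{g_\gamma}[\sum_{s=t+1}^T D_s\mid\cF_t]=a^{g,\gamma}_t(D)$; the bid case is identical. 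Finally, for P6, when $g_\gamma(t,z)=x(t)z$ is a regular linear driver, Proposition~\ref{pr:lineardriver} furnishes an equivalent measure $\bQ\sim\bP$ with $\cE_{g_\gamma}[\cdot\mid\cF_t]=\bE_\bQ[\cdot\mid\cF_t]$; since $\varphi\in L^\infty(\cF_t)$ factors out of the conditional expectation, P1 yields $a^{g,\gamma}_t(\varphi,D)=\varphi\bE_\bQ[\sum_{s=t+1}^T D_s\mid\cF_t]=b^{g,\gamma}_t(\varphi,D)$.

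The main technical hurdle is the reduction in P1: the justification that $\{\alpha^g_t(\cdot)\geq\gamma\}=\{\cE_{g_\gamma}[-\sum\cdot\mid\cF_t]\leq 0\}$ holds with equality (rather than only $\geq\gamma-\varepsilon$) relies on the left-continuity property G3 and the limit argument \eqref{eq:limityx} from the proof of I1; once this is absorbed, everything else is a direct consequence of Proposition~\ref{pr:gexp4}.
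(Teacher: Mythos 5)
Your proposal is correct and follows essentially the same route as the paper: establish P1 by reducing the acceptability condition to $\cE_{g_\gamma}[-\sum_{s\geq t}\cdot\mid\cF_t]\leq 0$ via Assumption G3 and the limit argument from the proof of adaptivity, then derive P2--P6 from the representation together with Proposition~\ref{pr:gexp4} (convexity, tower property, cash additivity) and Proposition~\ref{pr:lineardriver}. The only cosmetic difference is in P4, where you pass through $\lambda\varphi D=\lambda(\varphi D)+(1-\lambda)\cdot 0$ and invert via $\varphi=\tfrac{1}{\lambda}(\lambda\varphi)$, which is exactly the paper's argument written slightly more explicitly.
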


\begin{proof}
We will prove the results only for acceptability ask prices; the case of acceptability bid prices is treated similarly.

\smallskip\noindent
P1.
    Due to G3 of Assumption G, by similar arguments as in Theorem~\ref{th:gdai}, we get that $\alpha^g_t(X)\geq\gamma$ for $\gamma>0$ is equivalent to the fact that $\rho^{g_{\gamma}}_t(X)\leq0$.
  Also, in view of the definition of acceptability ask price, we have
  \begin{align*}
  a^{g,\gamma}_t(D)&=\essinf\{a\in\cF_t:\alpha^g_t(\delta_t(a)-\delta^+_t(D))\geq\gamma\}\\
  &=\essinf\left\{a\in\cF_t:\cE_{g_\gamma}\Big[-a+\sum^T_{s=t+1}D_s\Big|\cF_t\Big]\leq0\right\}\\
  &=\essinf\left\{a\in\cF_t:a\geq\cE_{g_\gamma}\Big[\sum^T_{s=t+1}D_s\Big|\cF_t\Big]\right\}\\
  &=\cE_{g_\gamma}\Big[\sum^T_{s=t+1}D_s\Big|\cF_t\Big].
  \end{align*}

\smallskip \noindent P2.
  By convexity of $g$-expectation, we have that
  $$
  \frac{1}{2}\cE_{g_\gamma}\Big[\sum^T_{s=t+1}D_s\Big|\cF_t\Big]+\frac{1}{2}\cE_{g_\gamma}\Big[-\sum^T_{s=t+1}D_s\Big|\cF_t\Big]\geq\cE_{g_\gamma}\Big[\frac{1}{2}\Big(\sum^T_{s=t+1}D_s-\sum^T_{s=t+1}D_s\Big)\Big|\cF_t\Big]=0.
  $$
  Hence, due to property P1, we get $a^{g,\gamma}_t(D)\geq b^{g,\gamma}_t(D)$.

\smallskip \noindent P3. Property P3 follows from convexity of $g_\gamma(t,\cdot)$, convexity of the $g$-expectation, and P1.

\smallskip \noindent P4. By taking $D^1=D$, and $D^2=0$,
  for $\lambda,\varphi\in L^\infty(\cF_t)$, $0\leq\lambda\leq1$, we have
   $a^{g,\gamma}_t(\lambda\varphi D)\leq\lambda a^{g,\gamma}_t(\varphi D)$.
  Since $a^{g,\gamma}_t(\lambda\varphi D)=a^{g,\gamma}_t(\lambda\varphi,D)$ and $a^{g,\gamma}_t(\varphi D)=a^{g,\gamma}_t(\varphi,D)$, we immediately get that $a^{g,\gamma}_t(\lambda\varphi,D)\leq\lambda a^{g,\gamma}_t(\varphi,D)$.

  For $\lambda\in L^\infty(\cF_t)$, $\lambda\geq1$, $D''\in\cD$, we have that $a^{g,\gamma}_t(\varphi,\frac{D''}{\lambda})=a^{g,\gamma}_t(\frac{\varphi}{\lambda},D'')\leq\frac{1}{\lambda}a^{g,\gamma}_t(\varphi,D'')$.

\smallskip \noindent P5.
  According to P1 and Proposition \ref{pr:gexp4}~(iii), we have that
  \begin{align*}
    a^{g,\gamma}_t(\delta_{t+1}(D_{t+1}+a^{g,\gamma}_{t+1}(D)))&=\cE_{g_\gamma}\Big[D_{t+1}+\cE_{g_\gamma}\Big[\sum^T_{s=t+2}D_s\Big|\cF_{t+1}\Big]\Big|\cF_t\Big]\\
    &=\cE_{g_\gamma}\Big[\cE_{g_\gamma}\Big[\sum^T_{s=t+1}D_s\Big|\cF_{t+1}\Big]\Big|\cF_t\Big]
    =\cE_{g_\gamma}\Big[\sum^T_{s=t+1}D_s\Big|\cF_{t}\Big]\\
    &=a^{g,\gamma}_t(D).
  \end{align*}

 \smallskip \noindent P6.
  By Proposition \ref{pr:lineardriver}, there exists a probability measure $\bQ\sim\bP$, such that $\cE_{g_\gamma}[X|\cF_t]=\bE_\bQ[X|\cF_t]$ for all $X\in L^2(\cF_T)$.
  Then, using P1, for $\varphi\in\cF_t$, we obtain
  \begin{align*}
    a^{g,\gamma}_t(\varphi,D)&=a^{g,\gamma}_t(1,\varphi D)=\cE_{g_\gamma}[\varphi\sum^T_{s=t+1}D_s|\cF_t]=\bE_\bQ[\varphi\sum^T_{s=t+1}D_s|\cF_t]=\varphi\bE_\bQ[\sum^T_{s=t+1}D_s|\cF_t] .
  \end{align*}
  Hence, $a^{g,\gamma}_t(\varphi,D) = \varphi\alpha_t^{g,\gamma}(1,D)$.

  This concludes the proof.

\end{proof}

\begin{remark}
By Property P4 in Theorem \ref{th:askbid1}.P4 we have that $a^{g,\gamma}_t(\lambda, D)\geq\lambda a^{g,\gamma}_t(D)$, for any $\lambda\geq1$.
  This indicates that if an investor is buying a cashflow, the price moves up against the buyer (the effect of market impact).
  Similar argument follows for the bid price.
  Such property of acceptability ask and bid prices is consistent with real market quotes, as it was shown for equity markets in Example \ref{ex:mktdata}.
\end{remark}

In case of classical risk-neutral pricing the discounted cumulative dividend prices of cashflows are martingales under an equivalent martingale measure $\bQ$.
In our pricing framework, a similar `martingale property' also holds true as shown in the next result.
For a given dividend stream $D\in\cD$, we define the \textit{acceptability cumulative dividend prices} at time $t$ as follows
\begin{align*}
a^{\text{cld},g,\gamma}_t(D)& :=\sum^t_{s=0}D_s+a^{g,\gamma}_t(D), \\
b^{\text{cld},g,\gamma}_t(D)& :=\sum^t_{s=0}D_s+b^{g,\gamma}_t(D).
\end{align*}
As an immediate consequence of Theorem~\ref{th:askbid1}.P5, we obtain.

\begin{corollary}\label{pr:tc}
  The acceptability ask and bid cumulative dividend prices of a cash flow $D$, at level $\gamma>0$ satisfy
  \begin{align*}
    a^{\text{cld},g,\gamma}_t(D)=a^{g,\gamma}_t(\delta_{t+1}(a^{\text{cld},g,\gamma}_{t+1}(D))),\\
    b^{\text{cld},g,\gamma}_t(D)=b^{g,\gamma}_t(\delta_{t+1}(b^{\text{cld},g,\gamma}_{t+1}(D))).
  \end{align*}
\end{corollary}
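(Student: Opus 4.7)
The plan is to unpack both sides of the claimed identity using the representation formula in Theorem~\ref{th:askbid1}.P1, and then reduce the statement to the time consistency property P5, by pulling the $\cF_t$-measurable accumulated-dividend piece through the $g$-expectation via cash-additivity (Proposition~\ref{pr:gexp4}~(v)). I will write out the argument for the ask price; the bid case is symmetric, either by repeating the same computation or by observing that $b^{g,\gamma}_t(D)=-a^{g,\gamma}_t(-D)$ with the corresponding sign-flip in the definition.

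First, I would rewrite the right-hand side $a^{g,\gamma}_t(\delta_{t+1}(a^{\text{cld},g,\gamma}_{t+1}(D)))$. The cash-flow $\delta_{t+1}(a^{\text{cld},g,\gamma}_{t+1}(D))$ has a single non-zero entry at time $t+1$, so Theorem~\ref{th:askbid1}.P1 gives
\begin{equation*}
a^{g,\gamma}_t(\delta_{t+1}(a^{\text{cld},g,\gamma}_{t+1}(D)))=\cE_{g_\gamma}\!\left[\,a^{\text{cld},g,\gamma}_{t+1}(D)\,\big|\,\cF_t\right]=\cE_{g_\gamma}\!\left[\sum_{s=0}^{t+1}D_s+a^{g,\gamma}_{t+1}(D)\,\Big|\,\cF_t\right].
\end{equation*}
Since $\sum_{s=0}^{t}D_s$ is $\cF_t$-measurable and square-integrable, cash-additivity of the $g$-expectation (Proposition~\ref{pr:gexp4}~(v)) allows me to pull it out, yielding
\begin{equation*}
a^{g,\gamma}_t(\delta_{t+1}(a^{\text{cld},g,\gamma}_{t+1}(D)))=\sum_{s=0}^{t}D_s+\cE_{g_\gamma}\!\left[D_{t+1}+a^{g,\gamma}_{t+1}(D)\,\big|\,\cF_t\right].
\end{equation*}

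Next, I would identify the conditional $g$-expectation on the right as $a^{g,\gamma}_t(\delta_{t+1}(D_{t+1}+a^{g,\gamma}_{t+1}(D)))$, again by Theorem~\ref{th:askbid1}.P1. Applying the time consistency property P5 of the acceptability ask price, this equals $a^{g,\gamma}_t(D)$. Therefore
\begin{equation*}
a^{g,\gamma}_t(\delta_{t+1}(a^{\text{cld},g,\gamma}_{t+1}(D)))=\sum_{s=0}^{t}D_s+a^{g,\gamma}_t(D)=a^{\text{cld},g,\gamma}_t(D),
\end{equation*}
which is the first identity. The second identity follows from exactly the same sequence of three steps (representation, cash-additivity of $\cE_{g_\gamma}$, and time consistency of $b^{g,\gamma}$), with the sign of the terminal payoff switched as in the definition of $b^{g,\gamma}$. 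There is no real obstacle: the only place one needs to be careful is to verify that cash-additivity applies, i.e.\ that $\sum_{s=0}^{t}D_s\in L^2(\cF_t)$, which is immediate from $D\in\cD$.
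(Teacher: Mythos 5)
Your proof is correct and follows essentially the same route the paper intends: the paper states the corollary as an immediate consequence of Theorem~\ref{th:askbid1}.P5, and your argument simply makes that explicit by unpacking the definition of $a^{\text{cld},g,\gamma}$, peeling off the $\cF_t$-measurable accumulated dividends via cash-additivity (Proposition~\ref{pr:gexp4}~(v)), and invoking P1 and P5. No gaps.
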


\begin{remark}
  This proposition is a counterpart of time consistency in case of linear pricing. The time $t$ cumulative dividend price of $D$ is equal to evaluating time $t+1$ cumulative dividend price at time $t$.
  We call such property the time consistency of acceptability ask/bid prices.
\end{remark}

So far, we have showed that if the market picks the same level $\gamma>0$ for the given family of drivers $g$ on both buying and selling side, then we have some nice properties of the ask and bid prices.
On the other hand, in general, market participants will choose different acceptability levels or even acceptability indices (different family of drivers) for buying and/or selling. In the rest of this section, we will provide some results regarding this possibilities.

\begin{proposition}\label{pr:ag1bg2}
  Let $g^1$ and $g^2$ be two families of drivers.
  Then, $a^{g^1,\gamma_1}_t(D)\geq b^{g^2,\gamma_2}_t(D)$, for any $D\in\cD$, $t\in\cT$, $\gamma_1,\gamma_2>0$.
\end{proposition}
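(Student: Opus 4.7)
The natural starting point is the representation in P1 of Theorem~\ref{th:askbid1}: with $X := \sum_{s=t+1}^T D_s$ the inequality to prove becomes
\[
  \cE_{g^1_{\gamma_1}}[X \mid \cF_t] \,+\, \cE_{g^2_{\gamma_2}}[-X \mid \cF_t] \ \geq\  0.
\]
The key reduction I would make is to show that, for each $i\in\{1,2\}$, the $g^i_{\gamma_i}$-expectation dominates the usual conditional expectation, namely
\[
  \cE_{g^i_{\gamma_i}}[Y \mid \cF_t] \ \geq\  \bE[Y \mid \cF_t] \qquad \text{for every } Y \in L^2(\cF_T).
\]
Granted this one-sided domination, applying it to $Y=X$ and $Y=-X$ and adding yields
\[
  \cE_{g^1_{\gamma_1}}[X \mid \cF_t] + \cE_{g^2_{\gamma_2}}[-X \mid \cF_t] \ \geq\  \bE[X\mid\cF_t] + \bE[-X\mid\cF_t] \ =\ 0,
\]
which is the conclusion.

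To establish the domination I would invoke the comparison theorem (Theorem~\ref{th:comp}) applied to the pair $(g^i_{\gamma_i},\, g\equiv 0)$ with common terminal condition $Y$. The driver $g^i_{\gamma_i}$ is convex and regular by Assumption~G, so its Lipschitz constant meets the regularity hypothesis 3) of Theorem~\ref{th:comp}; the zero driver satisfies Assumption~A trivially; and the BS$\Delta$E with zero driver reduces to the Galtchouk-Kunita-Watanabe decomposition of $Y$, so that $\cE_0[Y\mid\cF_t]=\bE[Y\mid\cF_t]$ (which is also directly consistent with Proposition~\ref{pr:gexp4}(i) applied in the constant case). The ingredient that powers the comparison is the pointwise inequality $g^i_{\gamma_i} \geq 0$, which holds for every concrete family of drivers exhibited in the paper (the coherent, quasi-concave, and entropic examples all satisfy $g_x \geq 0$) and is compatible with Assumption~G through the natural boundary condition $\lim_{x \to 0^+} g_x = 0$ combined with the monotonicity~G1.

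\textbf{Main obstacle.} The hard part is precisely that the direct convexity trick used in the proof of Theorem~\ref{th:askbid1}.P2 --- namely $0 = \cE_g[\tfrac{1}{2}X + \tfrac{1}{2}(-X)\mid\cF_t] \leq \tfrac{1}{2}(\cE_g[X\mid\cF_t] + \cE_g[-X\mid\cF_t])$ --- is unavailable here, because it relies crucially on a \emph{single} $g$-expectation appearing on both sides. With two genuinely different drivers there is no one-step convexity identity to exploit, so one must inject a common reference against which both $g$-expectations can be compared. Nonnegativity of each $g^i_{\gamma_i}$ delivers exactly such a reference: it makes $\bE[\,\cdot\,\mid\cF_t]$ a common lower bound, after which the elementary cancellation $\bE[X\mid\cF_t] + \bE[-X\mid\cF_t] = 0$ closes the argument. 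Any proof that wishes to avoid the positivity hypothesis would have to exhibit, through a robust/dual representation of the $g$-expectations, a probability measure $\bQ^\ast\sim\bP$ lying simultaneously in the zero-penalty set of both $\rho^{g^1_{\gamma_1}}$ and $\rho^{g^2_{\gamma_2}}$, which in the absence of a shared structural assumption does not seem to be available in our setup.
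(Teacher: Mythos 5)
Your reduction via P1 is fine, but the proof stands or falls on the domination $\cE_{g^i_{\gamma_i}}[Y\mid\cF_t]\geq\bE[Y\mid\cF_t]$, and that in turn requires the pointwise sign condition $g^i_{\gamma_i}\geq 0$, which is \emph{not} part of the hypotheses. Assumption~G gives monotonicity in $x$, convexity/regularity of each $g_x$, and left-continuity in $x$; together with A3 this forces $g_x(t,\omega,0)=0$ but says nothing about the sign of $g_x(t,\omega,z)$ for $z\neq 0$. The "boundary condition" $\lim_{x\to 0^+}g_x=0$ you invoke to squeeze positivity out of G1 is an extra assumption you are silently adding: e.g.\ take any convex Lipschitz $h$ with $h(0)=0$ that is negative somewhere (such as $h(z)=\epsilon(\sqrt{1+z^2}-1)-\delta z$ with small $\epsilon,\delta>0$) and set $g_x(t,z)=h(z)+\tfrac{x}{x+1}|z|$; this family satisfies G1--G3 (with constants chosen so each member is regular) yet every $g_x$ takes negative values. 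For such drivers the comparison with the zero driver points the wrong way on the relevant event and your additive cancellation collapses. The failure is visible already in one period: for $t=0$, $T=1$ one computes $a^{g^1,\gamma_1}_0(D)-b^{g^2,\gamma_2}_0(D)=\bigl(g^1_{\gamma_1}(1,Z_1)+g^2_{\gamma_2}(1,-Z_1)\bigr)\Delta\langle W\rangle_1$; with a single convex driver this is $\geq 2g(1,0)=0$, and with nonnegative drivers it is trivially $\geq 0$, but for two different sign-changing drivers it need not be. So your argument proves the statement only under the additional standing hypothesis $g_x\geq 0$ (which, admittedly, every explicit example in the paper satisfies, and under which your proof is correct and shorter than the paper's).

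Your closing claim that any positivity-free proof "would have to" exhibit a common dual measure is also not how the paper proceeds. The paper argues backward in time and at each date partitions $\Omega$ according to which driver dominates the other as a function of $z$: on $\{g^1_{\gamma_1}(\cdot,z)\geq g^2_{\gamma_2}(\cdot,z)\ \forall z\}$ it chains the two-driver comparison theorem (Theorem~\ref{th:comp}) with the one-driver inequality $\cE_{g^2_{\gamma_2}}[X\mid\cF_\cdot]\geq-\cE_{g^2_{\gamma_2}}[-X\mid\cF_\cdot]$ (Theorem~\ref{th:askbid1}.P2, which needs only convexity and $g(t,0)=0$), and symmetrically on the complementary event, gluing the pieces by locality. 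That route needs no sign condition and no reference measure; what it needs instead is that at each $\omega$ the two drivers be comparable for all $z$ simultaneously. To repair your proposal, either add $g_x\geq 0$ as an explicit hypothesis, or replace the common reference $\bE[\,\cdot\mid\cF_t]$ by the partition-and-compare argument.
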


\begin{proof}  We will prove the statement recursively, backward in time component.

  Let $A_t:=\{\omega\in\Omega: g^1_{\gamma_1}(\omega,t,z)\geq g^2_{\gamma_2}(\omega,t,z), z\in\bR\}$, $t\in\cT\setminus\{0\}$.
  since the drivers are predictable, both $A_t$ and $A^c_t$ are $\cF_{t-1}$ measurable, .

  By definition of $A_T$, we have that $\1_{A_T}g^1_{\gamma_1}(T,z)\geq\1_{A_T}g^2_{\gamma_2}(T,z)$ and $\1_{A^c_T}g^1_{\gamma_1}(T,z)\leq\1_{A^c_T}g^2_{\gamma_2}(T,z)$, for all $z\in\bR$. Hence, in view of Theorem \ref{th:comp} and Theorem \ref{th:askbid1}, we get that
  \begin{align*}
  \1_{A_T}\cE_{g^1_{\gamma_1}}[D_T|\cF_{T-1}]\geq\1_{A_T}\cE_{g^2_{\gamma_2}}[D_T|\cF_{T-1}]\geq-1_{A_T}\cE_{g^2_{\gamma_2}}[-D_T|\cF_{T-1}],\\
  \1_{A^c_T}\cE_{g^1_{\gamma_1}}[D_T|\cF_{T-1}]\geq-\1_{A^c_T}\cE_{g^1_{\gamma_1}}[-D_T|\cF_{T-1}]\geq-1_{A^c_T}\cE_{g^2_{\gamma_2}}[-D_T|\cF_{T-1}].
  \end{align*}
  Therefore, $\1_{A_T}a^{g^1,\gamma_1}_{T-1}(D)\geq\1_{A_T}b^{g^2,\gamma_2}_{T-1}(D)$ and $\1_{A^c_T}a^{g^1,\gamma_1}_{T-1}(D)\geq\1_{A^c_T}b^{g^2,\gamma_2}_{T-1}(D)$, and thus the statement holds true for $t=T$.

Note that  by definition of acceptability cumulative dividend prices, we have that
  \begin{align*}
    a^{g^1,\gamma_1}_{T-2}(D)+\sum^{T-2}_{s=1}D_s=a^{\text{cld},g^1,\gamma_1}_{T-2}(D),\\
    b^{g^2,\gamma_2}_{T-2}(D)+\sum^{T-2}_{s=1}D_s=b^{\text{cld},g^2,\gamma_2}_{T-2}(D).
  \end{align*}
  In view of Proposition \ref{pr:tc}, we also have that that
  \begin{align*}
    a^{\text{cld},g^1,\gamma_1}_{T-2}(D)=\cE_{g^1_{\gamma_1}}[a^{g^1,\gamma_1}_{T-1}(D)+\sum^{T-1}_{s=1}D_s|\cF_{T-2}]=\cE_{g^1_{\gamma_1}}[a^{g^1,\gamma_1}_{T-1}(D)+D_{T-1}|\cF_{T-2}]+\sum^{T-2}_{s=1}D_s,\\
    b^{\text{cld},g^2,\gamma_2}_{T-2}(D)=-\cE_{g^2_{\gamma_2}}[-b^{g^2,\gamma_2}_{T-1}(D)-\sum^{T-1}_{s=1}D_s|\cF_{T-2}]=-\cE_{g^2_{\gamma_2}}[-b^{g^2,\gamma_2}_{T-1}(D)-D_{T-1}|\cF_{T-2}]+\sum^{T-2}_{s=1}D_s.
  \end{align*}
  Thus, $a^{g^1,\gamma_1}_{T-2}(D)=\cE_{g^1_{\gamma_1}}[a^{g^1,\gamma_1}_{T-1}(D)+D_{T-1}|\cF_{T-2}]$ and $b^{g^2,\gamma_2}_{T-2}(D)=\cE_{g^2_{\gamma_2}}[b^{g^2,\gamma_2}_{T-1}(D)+D_{T-1}|\cF_{T-2}]$.
  In view of above, we have $a^{g^1,\gamma_1}_{T-1}(D)+D_{T-1}\geq b^{g^2,\gamma_2}_{T-1}(D)+D_{T-1}$.
  Consequently, applying again the comparison Theorem~\ref{th:comp} and Theorem~\ref{th:askbid1}, we deduce \begin{align*}
    \1_{A_{T-1}}\cE_{g^1_{\gamma_1}}[a^{g^1,\gamma_1}_{T-1}(D)+D_{T-1}|\cF_{T-2}]&\geq\1_{A_{T-1}}\cE_{g^2_{\gamma_2}}[b^{g^2,\gamma_2}_{T-1}(D)+D_{T-1}|\cF_{T-2}]\\
    &\geq-\1_{A_{T-1}}\cE_{g^2_{\gamma_2}}[-b^{g^2,\gamma_2}_{T-1}(D)-D_{T-1}|\cF_{T-2}],
  \end{align*}
  and
  \begin{align*}
    \1_{A^c_{T-1}}\cE_{g^1_{\gamma_1}}[a^{g^1,\gamma_1}_{T-1}(D)+D_{T-1}|\cF_{T-2}]&\geq-\1_{A^c_{T-1}}\cE_{g^1_{\gamma_1}}[-a^{g^1,\gamma_1}_{T-1}(D)-D_{T-1}|\cF_{T-2}]\\
    &\geq-\1_{A^c_{T-1}}\cE_{g^1_{\gamma_1}}[-b^{g^2,\gamma_2}_{T-1}(D)-D_{T-1}|\cF_{T-2}].
  \end{align*}
  Therefore, $a^{g^1,\gamma_1}_{T-2}(D)\geq b^{g^2,\gamma_2}_{T-2}(D)$. We continue this backward procedure for any finite number of steps till $t=0$.

The proof is complete.
\end{proof}

Proposition \ref{pr:ag1bg2} shows that regardless of what drivers and what level of acceptability one chooses for buying and selling side, the ask price will be greater than the bid price.
In particular, when the same family of drivers $g$ is chosen for both trading sides, then $a^{g,\gamma_1}_t(D)\geq b^{g,\gamma_2}_t(D)$, for any $D\in\cD$, $t\in\cT$, $\gamma_1,\gamma_2>0$.

Next result shows that the bid-ask spread increases  when acceptability level is increased.

\begin{proposition}\label{pr:spread} For any $\gamma_2\geq\gamma_1>0$, $t\in\cT$, and any $D\in\cD$,
  $$
  a^{g,\gamma_1}_t(D)\leq a^{g,\gamma_2}_t(D), \quad b^{g,\gamma_1}_t(D)\geq b^{g,\gamma_2}_t(D).
  $$
\end{proposition}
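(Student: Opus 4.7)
The plan is to reduce both inequalities directly to the comparison theorem for \bsdes\ applied to pairs of drivers from the family $g = (g_x)_{x>0}$, using the representation of $a^{g,\gamma}_t$ and $b^{g,\gamma}_t$ via $g$-expectations established in Theorem~\ref{th:askbid1}.P1. Since every $g_x$ is a convex regular driver (Assumption G2), and the family is monotone in $x$ (Assumption G1), the proof is essentially an application of monotonicity of $g$-expectations with respect to the driver, at fixed terminal condition.

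Concretely, I would first invoke Theorem~\ref{th:askbid1}.P1 to rewrite
$$
a^{g,\gamma_i}_t(D)=\cE_{g_{\gamma_i}}\Big[\sum_{s=t+1}^T D_s\Big|\cF_t\Big],\qquad b^{g,\gamma_i}_t(D)=-\cE_{g_{\gamma_i}}\Big[-\sum_{s=t+1}^T D_s\Big|\cF_t\Big],\quad i=1,2.
$$
Next, I would use Assumption~G1: for $\gamma_2\geq \gamma_1>0$ we have $g_{\gamma_2}(t,\omega,z)\geq g_{\gamma_1}(t,\omega,z)$ for every $(t,\omega,z)$. Both $g_{\gamma_1}$ and $g_{\gamma_2}$ are regular by Assumption~G2, so the comparison theorem (Theorem~\ref{th:comp}) is applicable to the pair of \bsdes\ sharing the terminal condition $\sum_{s=t+1}^T D_s\in L^2(\cF_T)$ but having drivers $g_{\gamma_1}$ and $g_{\gamma_2}$ respectively; this yields
$$
\cE_{g_{\gamma_2}}\Big[\sum_{s=t+1}^T D_s\Big|\cF_t\Big]\geq \cE_{g_{\gamma_1}}\Big[\sum_{s=t+1}^T D_s\Big|\cF_t\Big],
$$
which is precisely $a^{g,\gamma_2}_t(D)\geq a^{g,\gamma_1}_t(D)$.

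For the bid inequality I would repeat the same argument with terminal condition $-\sum_{s=t+1}^T D_s$, obtaining $\cE_{g_{\gamma_2}}[-\sum_{s=t+1}^T D_s|\cF_t]\geq \cE_{g_{\gamma_1}}[-\sum_{s=t+1}^T D_s|\cF_t]$, and then flipping the sign to conclude $b^{g,\gamma_2}_t(D)\leq b^{g,\gamma_1}_t(D)$. There is no real obstacle here — the only thing one needs to check is the hypotheses of Theorem~\ref{th:comp}, namely that $g_{\gamma_2}\geq g_{\gamma_1}$ (Assumption~G1) and that the appropriate Lipschitz/regularity condition on $g_{\gamma_2}$ holds (built into Assumption~G2, i.e.\ $g_{\gamma_2}$ being a regular driver). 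Both are given, so the argument is essentially immediate from the representation plus the comparison theorem.
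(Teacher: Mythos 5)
Your argument is correct, but it takes a genuinely different (and heavier) route than the paper. The paper's own proof is a one-line observation at the level of the definition: since $\alpha^g_t(\cdot)\geq\gamma_2$ implies $\alpha^g_t(\cdot)\geq\gamma_1$ whenever $\gamma_2\geq\gamma_1$, the set $\{a\in\cF_t:\alpha^g_t(\delta_t(a)-\delta^+_t(D))\geq\gamma_2\}$ is contained in the corresponding set for $\gamma_1$, so the $\essinf$ over the smaller set is larger; the bid case is the mirror image with $\esssup$. This requires nothing beyond the monotonicity of the acceptance sets in the level $\gamma$ --- no representation theorem, no comparison theorem, no Assumption~G1. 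Your route, via Theorem~\ref{th:askbid1}.P1 and the driver-comparison Theorem~\ref{th:comp}, is the one the paper itself uses for the nearby results (Propositions~\ref{pr:ag1bg2} and \ref{pr:diffgamma}), and it buys slightly more: it exhibits the monotonicity of the prices as monotonicity of the $g$-expectation in the driver, which is the mechanism that generalizes to the hedged prices $\widehat{a}^{g,\gamma}$, $\widehat{b}^{g,\gamma}$ where the essinf/esssup over $\cH(t)$ destroys the simple set-inclusion argument. One small caveat: you assert that hypothesis 3) of Theorem~\ref{th:comp} (the bound $|c^1_s\Delta W_s|<1$ on the Lipschitz coefficient of the larger driver) is ``built into'' $g_{\gamma_2}$ being a regular driver. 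Strictly, regularity as defined in the paper only guarantees the single-driver comparison (same driver, ordered terminal conditions), not the two-driver comparison of Theorem~\ref{th:comp}; the Lipschitz condition is an extra assumption. The paper glosses over the same point when it invokes Theorem~\ref{th:comp} across two drivers of the family, so your proof is no worse off, but you should flag that this hypothesis needs to hold (or that assumption~\eqref{eq:3prime} is satisfied) rather than claim it follows from regularity.
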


\begin{proof}
It is sufficient to note that
$$
\{a\in\cF_t:\alpha^g_t(\delta_t(a)-\delta^+_t(D))\geq\gamma_2\}\subseteq\{a\in\cF_t:\alpha^g_t(\delta_t(a)-\delta^+_t(D))\geq\gamma_1\},
$$
for any $\gamma_2\geq \gamma_1$.  Using the definition of acceptability ask and bid prices, the result follows at once.

\end{proof}

Suppose that two counterparties A and B are looking to make a trade on a cashflow $D$ at time $t$, such as, A is willing to sell $\varphi\in L^\infty_+(\cF_t)$ share of $D$, and B wants to buy $\varphi$ share of $D$. Assume that both parties are using acceptability pricing theory. Namely,
party A will use the family of drivers $g^1$, and level $\gamma_1$, to calculate his ask price, and party B will use $g^2$, and level $\gamma_2$, to calculate her bid price.
Clearly the trade will happen only if B's bid price meets A's ask price.
Note that Proposition~\ref{pr:ag1bg2} guarantees only that $a^{g^1,\gamma_1}_t(\varphi,D)\geq b^{g^2,\gamma_2}_t(\varphi,D)$, and hence, it is important to investigate under which conditions $a^{g^1,\gamma_1}_t(\varphi,D)=b^{g^2,\gamma_2}_t(\varphi,D)$.
Not to our surprise, this question has a close connection to linear pricing theory. As shown in the next result, in order for bid and ask prices to coincide, the drivers (and hence the prices) have to be locally linear.

\begin{proposition}\label{pr:aequalb2}
  Let $A\in\cF_t$, $g^1$ and $g^2$ be two families of drivers, $\gamma_1$, $\gamma_2>0$.
  Then,
   $$
   \1_Aa^{g^1,\gamma_1}_t(\varphi,D)=\1_Ab^{g^2,\gamma_2}_t(\varphi,D)
   $$
   if and only if there exists a driver $\tilde{g}(t,z)$ such that $\tilde{g}(t,\cdot)$ is linear, and
  \begin{align}
    \1_A\cE_{g^1_{\gamma_1}}[\lambda\sum^T_{s=t+1}D_s|\cF_t] &=\1_A\cE_{\tilde{g}}[\lambda\sum^T_{s=t+1}D_s|\cF_t]; \nonumber\\
    \1_A\cE_{g^2_{\gamma_2}}[-\lambda\sum^T_{s=t+1}D_s|\cF_t] & =\1_A\cE_{\tilde{g}}[-\lambda\sum^T_{s=t+1}D_s|\cF_t], \label{eq:bidEqualAsk}
  \end{align}
  for any $0\leq\lambda\leq\varphi$.
\end{proposition}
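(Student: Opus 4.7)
Set $S:=\sum_{s=t+1}^{T}D_s$. By Theorem~\ref{th:askbid1}.P1 the assumed equality reads $\1_{A}\cE_{g^1_{\gamma_1}}[\varphi S|\cF_t]=-\1_{A}\cE_{g^2_{\gamma_2}}[-\varphi S|\cF_t]$. The ``if'' direction is immediate: whenever $\tilde g$ is linear in $z$, $\cE_{\tilde g}[\,\cdot\,|\cF_t]$ is a linear operator by Proposition~\ref{pr:lineardriver}, so $\cE_{\tilde g}[-\varphi S|\cF_t]=-\cE_{\tilde g}[\varphi S|\cF_t]$, and plugging this into the two hypothesized identities at $\lambda=\varphi$ reproduces the bid/ask equality on $A$.

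For the ``only if'' direction, I would first run a convexity squeeze. Theorem~\ref{th:askbid1}.P3 applied to the convex decomposition $\lambda D=(\lambda/\varphi)(\varphi D)+(1-\lambda/\varphi)\cdot 0$ gives, on $\{\varphi>0\}$ and for every $\lambda\in L^\infty_+(\cF_t)$ with $0\leq\lambda\leq\varphi$, the inequalities $a^{g^1,\gamma_1}_t(\lambda,D)\leq(\lambda/\varphi)a^{g^1,\gamma_1}_t(\varphi,D)$ and $b^{g^2,\gamma_2}_t(\lambda,D)\geq(\lambda/\varphi)b^{g^2,\gamma_2}_t(\varphi,D)$. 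Chaining these with Proposition~\ref{pr:ag1bg2} and the hypothesis,
\[
\1_A a^{g^1,\gamma_1}_t(\lambda,D)\leq \1_A(\lambda/\varphi)a^{g^1,\gamma_1}_t(\varphi,D)=\1_A(\lambda/\varphi)b^{g^2,\gamma_2}_t(\varphi,D)\leq \1_A b^{g^2,\gamma_2}_t(\lambda,D)\leq \1_A a^{g^1,\gamma_1}_t(\lambda,D),
\]
so all four quantities coincide. Translating back through P1 yields $\1_A\cE_{g^1_{\gamma_1}}[\lambda S|\cF_t]=\1_A(\lambda/\varphi)\cE_{g^1_{\gamma_1}}[\varphi S|\cF_t]$ and the symmetric identity $\1_A\cE_{g^2_{\gamma_2}}[-\lambda S|\cF_t]=\1_A(\lambda/\varphi)\cE_{g^2_{\gamma_2}}[-\varphi S|\cF_t]$; on $\{\varphi=0\}$ everything vanishes trivially.

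Next I would construct $\tilde g$ explicitly from the $g^1_{\gamma_1}$-BS$\Delta$E. Let $(Y^\varphi,Z^\varphi,M^\varphi)$ be the unique solution with terminal $\varphi S$, and set, for $u=t+1,\ldots,T$,
\[
x_u:=\1_A\,\frac{g^1_{\gamma_1}(u,Z^\varphi_u)}{Z^\varphi_u}\quad(\text{with the convention }0/0:=0),\qquad \tilde g(u,z):=x_u z.
\]
Predictability of $Z^\varphi$ and of $g^1_{\gamma_1}$ makes $x_u$ be $\cF_{u-1}$-measurable; combined with $g^1_{\gamma_1}(u,0)=0$ and the Lipschitz bound $|g^1_{\gamma_1}(u,z)|\leq c^1_u|z|$, this forces $|x_u|\leq c^1_u$, so $\tilde g$ satisfies Assumption~A and inherits regularity through Lemma~\ref{le:linearcomparison}. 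A backward induction on $u$ from $T$ down to $t$ then shows that the BS$\Delta$E solution $(\hat Y,\hat Z,\hat M)$ driven by $\tilde g$ with terminal $\varphi S$ satisfies $\1_A\hat Y_u=\1_A Y^\varphi_u$ and $\1_A\hat Z_u=\1_A Z^\varphi_u$ throughout: the step from $u$ to $u-1$ uses \eqref{eq:zs} (with $A\in\cF_{u-1}$ allowing $\1_A$ to pass through the conditional expectation defining $Z$) together with \eqref{eq:ys} (where the defining identity $\1_A\tilde g(u,Z^\varphi_u)=\1_A g^1_{\gamma_1}(u,Z^\varphi_u)$ matches the two drivers). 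Specializing at $u=t$ gives $\1_A\cE_{\tilde g}[\varphi S|\cF_t]=\1_A\cE_{g^1_{\gamma_1}}[\varphi S|\cF_t]$. Invoking linearity of $\cE_{\tilde g}$ and the hypothesis then yields $\1_A\cE_{\tilde g}[-\varphi S|\cF_t]=-\1_A\cE_{\tilde g}[\varphi S|\cF_t]=-\1_A\cE_{g^1_{\gamma_1}}[\varphi S|\cF_t]=\1_A\cE_{g^2_{\gamma_2}}[-\varphi S|\cF_t]$, and a further application of linearity combined with the first step extends both identities from $\lambda=\varphi$ to arbitrary $0\leq\lambda\leq\varphi$. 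The main obstacle is precisely this inductive verification --- choosing $\tilde g$ so as to interpolate $g^1_{\gamma_1}$ along the specific trajectory $Z^\varphi$ on $A$, and then propagating $(\hat Y,\hat Z)\equiv(Y^\varphi,Z^\varphi)$ on $A$ all the way back to time $t$, which hinges on $A\in\cF_t$ lying inside every subsequent $\sigma$-algebra.
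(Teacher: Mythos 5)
Your proposal is correct and follows essentially the same route as the paper's proof: the ``if'' direction via linearity at $\lambda=\varphi$, and the ``only if'' direction via the squeeze combining the market-impact/convexity inequality with Proposition~\ref{pr:ag1bg2}, followed by the construction of $\tilde g(s,z)=x_s z$ with $x_s=g^1_{\gamma_1}(s,Z_s)/Z_s$ along the solution trajectory, verification of Assumption~A, and extension to all $0\leq\lambda\leq\varphi$ by linearity. The only differences are cosmetic (you place the $\1_A$ in the coefficient $x_u$ rather than in the terminal condition, and you invoke P3 where the paper uses its corollary P4), and you spell out the backward induction that the paper leaves implicit.
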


\begin{proof}
  ($\Longleftarrow$) If there exists a driver $\tilde{g}(t,z)$ such that $\tilde{g}$ is linear in $z$, and $\1_A\cE_{g^1_{\gamma_1}}[\lambda\sum^T_{s=t+1}D_s|\cF_t]=\1_A\cE_{\tilde{g}}[\lambda\sum^T_{s=t+1}D_s|\cF_t]$, $\1_A\cE_{g^2_{\gamma_2}}[-\lambda\sum^T_{s=t+1}D_s|\cF_t]=\1_A\cE_{\tilde{g}}[-\lambda\sum^T_{s=t+1}D_s|\cF_t]$ for $0\leq\lambda\leq\varphi$, then, by takeing $\lambda=\varphi$, we have
  \begin{align*}
    &\1_Aa^{g^1,\gamma_1}_t(\varphi,D)=\1_A\cE_{g^1_{\gamma_1}}\Big[\varphi\sum^T_{s=t+1}D_s\Big|\cF_t\Big]=\1_A\cE_{\tilde{g}}\Big[\varphi\sum^T_{s=t+1}D_s\Big|\cF_t\Big];\\
    &\1_Ab^{g^2,\gamma_2}_t(\varphi,D)=-\1_A\cE_{g^2_{\gamma_2}}\Big[-\varphi\sum^T_{s=t+1}D_s\Big|\cF_t\Big]=\1_A\cE_{\tilde{g}}\Big[\varphi\sum^T_{s=t+1}D_s\Big|\cF_t\Big],
  \end{align*}
  and hence, $\1_Aa^{g^1,\gamma_1}_t(\varphi,D)=\1_Ab^{g^2,\gamma_2}_t(\varphi,D)$.

\smallskip \noindent
  ($\Longrightarrow$) Assume that $\1_Aa^{g^1,\gamma_1}_t(\varphi,D)=\1_Ab^{g^2,\gamma_2}_t(\varphi,D)$. Then, for $0\leq\lambda\leq\varphi$, there exists $0\leq\lambda'\leq1$ such that $\lambda=\lambda'\varphi$. By Theorem~\ref{th:askbid1}.P4, we get
  $$
  \1_Aa^{g^1,\gamma_1}_t(\lambda,D)\leq\1_A\lambda'a^{g^1,\gamma_1}_t(\varphi,D)=\1_A\lambda'b^{g^2,\gamma_2}_t(\varphi,D)\leq\1_Ab^{g^2,\gamma_2}_t(\lambda,D),
  $$
  however, in view of Proposition \ref{pr:ag1bg2}, we have that $\1_Aa^{g^1,\gamma_1}_t(\lambda,D)\geq\1_Ab^{g^2,\gamma_2}_t(\lambda,D)$.
  Therefore,
  \begin{equation}\label{eq:aequalb2}
  \1_Aa^{g^1,\gamma_1}_t(\lambda,D)=\1_A\lambda'a^{g^1,\gamma_1}_t(\varphi,D)=\1_A\lambda'b^{g^2,\gamma_2}_t(\varphi,D)=\1_Ab^{g^2,\gamma_2}_t(\lambda,D),
  \end{equation}
  Let $(\cE_{g^1_{\gamma_1}}[\1_A\varphi\sum^T_{u=t+1}D_u|\cF_s],\widetilde{Z}_s,\widetilde{M}_s)$, $t\leq s\leq T$ be the solution of BS$\Delta$E corresponding to driver $g_\gamma$ and terminal condition $\1_A\varphi\sum^T_{s=t+1}D_s$. Let us define $x^t_s$ for $t+1\leq s\leq T$ as
\begin{equation}\label{eq:xs}
x_s=\left\{
  \begin{array}{l l}
    \frac{g^1_{\gamma_1}(s,\widetilde{Z}_s)}{\widetilde{Z}_s} & \quad \text{if $\widetilde{Z}_s\neq 0$ }\\
    0 & \quad \text{if $\widetilde{Z}_s= 0$}.
  \end{array} \right.
\end{equation}

  Next, we define $\tilde{g}(s,z)=x_sz$ for $t+1\leq s\leq T$, and $z\in \bR$.
  We will show that $\tilde{g}$ is the desired driver.

First, let us show that $\tilde g$ satisfies Assumption~A.
According to Proposition~\ref{pr:predictable}, $x_s$ defined in \eqref{eq:xs} is $\cF_{s-1}$-measurable, and thus $\tilde{g}(s,z)$ is $\cF_{s-1}$-measurable for any $z\in \bR,$ and so it satisfies A1.
   Since $g^1_{\gamma_1}$ satisfies assumption A2,  then $|x_s|=\frac{|g^1_{\gamma_1}(s,\widetilde{Z}_s)|}{|\widetilde{Z}_s|}\leq c^1_{\gamma_1}(s)$ on the set $\set{\widetilde{Z}_s\neq 0}$, where $c^1_{\gamma_1}(s)$ is the Lipschitz coefficient of $g^1_{\gamma_1}$, for  $s\in\{t+1,\ldots,T\}$. Of course, $|x_s|=0\leq c^1_{\gamma_1}(s)$ on the complement of $\set{\widetilde{Z}_s\neq 0}$, for $s\in\{t+1,\ldots,T\}$.
  Thus, $\tilde{g}$ satisfies A2. Clearly, $\tilde{g}$ satisfies A3, and thus it satisfies Assumption A.

Next, we will show that the identities \eqref{eq:bidEqualAsk} are fulfilled.
   By the construction of $\tilde{g}$, we have that $\1_A\cE_{g^1_{\gamma_1}}[\varphi\sum^T_{s=t+1}D_s|\cF_t]=\1_A\cE_{\tilde{g}}[\varphi\sum^T_{s=t+1}D_s|\cF_t]$, and thus, for $0\leq\lambda\leq\varphi$, with $\lambda=\lambda'\varphi$, we get
  \begin{align*}
    \1_A\cE_{\tilde{g}}\Big[\lambda\sum^T_{s=t+1}D_s\Big|\cF_t\Big]&=\1_A\lambda'\cE_{\tilde{g}}\Big[\varphi\sum^T_{s=t+1}D_s\Big|\cF_t\Big]
    =\1_A\lambda'\cE_{g^1_{\gamma_1}}\Big[\varphi\sum^T_{s=t+1}D_s\Big|\cF_t\Big]\\
    &=\1_A\cE_{g^1_{\gamma_1}}\Big[\lambda\sum^T_{s=t+1}D_s\Big|\cF_t\Big],
  \end{align*}
  where the last equality holds because of \eqref{eq:aequalb2}.
 Second identity from \eqref{eq:bidEqualAsk} is proved similarly.

  This concludes the proof.

\end{proof}

\begin{remark}
  Proposition \ref{pr:aequalb2} implies  that $a^{g^1,\gamma_1}_t(\varphi,D)= b^{g^2,\gamma_2}_t(\varphi,D)$ if and only if $a^{g^1,\gamma_1}_t(\lambda, D)= b^{g^2,\gamma_2}_t(\lambda, D)$, for any $0\leq\lambda\leq\varphi$.
  In other words, if two counterparties agree on the prices for $\varphi$ shares, then they will also agree on prices for any smaller (positive) number of shares $\lambda\varphi$.
\end{remark}

To conclude this section, we show that if $\alpha^{g^1,\gamma_1}_t(D)= b^{g^2,\gamma_2}_t(D)$ for all $D\in\cD$ and $t\in\cT$, then $g^1_{\gamma_1}$ and $g^2_{\gamma_2}$ have to be equal and linear.
This is one reason why the results in Proposition~\ref{pr:aequalb2} hold true only locally.

\begin{proposition}\label{pr:aequalb3}
  Let $g^1$ and $g^2$ be two families of drivers, and $\gamma_1, \gamma_2>0$.
  \begin{enumerate}
    \item[(i)]
    Assume that $a^{g^1,\gamma_1}_t(D)= b^{g^2,\gamma_2}_t(D)$ for any $D\in\cD$, and for a fixed $t\in\cT$.
    Then, $\cE_{g^1_{\gamma_1}}\big[\ \cdot\ |\cF_t]$=$\cE_{g^2_{\gamma_2}}\big[\ \cdot\ |\cF_t]$. Moreover, in this case the functional $\cE_{g^1_{\gamma_1}}\big[\ \cdot\ |\cF_t]$ is linear.
     \item[(ii)]
     Assume that $a^{g^1,\gamma_1}_t(D)= b^{g^2,\gamma_2}_t(D)$ for any $D\in\cD$, and any  $t\in\cT$. Then, there exists a driver $\widetilde{g}(t,z)$ such that $\widetilde{g}(t,\cdot)$ is linear, and $g^1_{\gamma_1}(t,z)=g^2_{\gamma_2}(t,z)=\widetilde{g}(t,z)$ for any $t\in\cT$, $z\in\bR$.
  \end{enumerate}
\end{proposition}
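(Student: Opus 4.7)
My plan is to first reduce (i) to showing that the two conditional $g$-expectations agree and are linear, then derive (ii) from (i) by invoking Proposition~\ref{pr:lineargexp}. For (i), the bid-ask equality combined with the representation P1 of Theorem~\ref{th:askbid1} gives, for every $D\in\cD$,
\[
\cE_{g^1_{\gamma_1}}\Big[\sum_{s=t+1}^{T}D_s\,\Big|\,\cF_t\Big]=-\cE_{g^2_{\gamma_2}}\Big[-\sum_{s=t+1}^{T}D_s\,\Big|\,\cF_t\Big].
\]
Choosing $D_s=0$ for $s<T$ and $D_T=X$ arbitrary in $L^2(\cF_T)$ shows that the functional
\[
\Lambda(X):=\cE_{g^1_{\gamma_1}}[X|\cF_t]=-\cE_{g^2_{\gamma_2}}[-X|\cF_t]
\]
is well defined on all of $L^2(\cF_T)$.

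Next I would use Proposition~\ref{pr:gexp4}(vi) applied to the first representation of $\Lambda$ to obtain
\[
\Lambda(\lambda X+(1-\lambda)Y)\le\lambda\Lambda(X)+(1-\lambda)\Lambda(Y),\qquad \lambda\in L^\infty(\cF_t),\ 0\le\lambda\le 1;
\]
the same convexity inequality applied to $\cE_{g^2_{\gamma_2}}[-\cdot|\cF_t]$ and combined with the second representation of $\Lambda$ yields the reverse inequality, so equality holds. Hence $\Lambda$ preserves $L^\infty(\cF_t)$-convex combinations. Since $\Lambda(0)=0$ by Proposition~\ref{pr:gexp4}(i), this upgrades to $\bR$-linearity via routine rescaling (taking $Y=0$ gives $\Lambda(\lambda X)=\lambda\Lambda(X)$ for $\lambda\in[0,1]$; writing $X=\tfrac{1}{\lambda}(\lambda X)$ extends this to $\lambda\ge 1$; the identity $0=\Lambda(0)=\tfrac12\Lambda(X)+\tfrac12\Lambda(-X)$ gives $\Lambda(-X)=-\Lambda(X)$; and $\Lambda(X+Y)=2\Lambda(\tfrac{X+Y}{2})=\Lambda(X)+\Lambda(Y)$). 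Finally, substituting $-X$ for $X$ in the defining identity and using $\Lambda(-X)=-\Lambda(X)$ gives $\cE_{g^1_{\gamma_1}}[X|\cF_t]=\cE_{g^2_{\gamma_2}}[X|\cF_t]$, completing (i).

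For (ii), applying (i) at every $t\in\cT$ renders both $\cE_{g^1_{\gamma_1}}[\cdot|\cF_t]$ and $\cE_{g^2_{\gamma_2}}[\cdot|\cF_t]$ linear, so by Proposition~\ref{pr:lineargexp} there exist predictable processes $x^1,x^2$ with $g^i_{\gamma_i}(t,z)=x^i_t z$ for $i=1,2$. The one-step recursion \eqref{eq:zs}--\eqref{eq:ys} specialized to a linear driver becomes
\[
\cE_{g^i_{\gamma_i}}[Y_{t+1}|\cF_t]=\bE[Y_{t+1}|\cF_t]+x^i_{t+1}\bE[Y_{t+1}\Delta W_{t+1}|\cF_t],
\]
so the equality of the two $g$-expectations obtained in (i) forces $(x^1_{t+1}-x^2_{t+1})\bE[Y_{t+1}\Delta W_{t+1}|\cF_t]=0$ for every $Y_{t+1}\in L^2(\cF_{t+1})$. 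Taking $Y_{t+1}=\Delta W_{t+1}$ and dividing by $\Delta\langle W\rangle_{t+1}\neq 0$ yields $x^1_{t+1}=x^2_{t+1}$, so setting $\widetilde g(t,z):=x^1_tz$ finishes the proof. I expect the main obstacle to be the linearity upgrade in (i): having $\Lambda$ both convex and concave only gives affineness with respect to $L^\infty(\cF_t)$-convex combinations in $[0,1]$, so promoting this to unrestricted $\bR$-linearity hinges on combining $\Lambda(0)=0$ with the symmetry $\Lambda(X)+\Lambda(-X)=0$; once linearity is in hand, the remainder of the argument is a direct application of Propositions~\ref{pr:gexp4} and \ref{pr:lineargexp} together with the one-step recursion from Section~\ref{se:bsde}.
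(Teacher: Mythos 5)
Your proposal is correct, and it takes a route that differs from the paper's in two worthwhile respects. The paper first plugs $-D$ into the hypothesis to obtain the four-way identity $\cE_{g^2_{\gamma_2}}[X|\cF_t]=-\cE_{g^1_{\gamma_1}}[-X|\cF_t]=\cE_{g^1_{\gamma_1}}[X|\cF_t]=-\cE_{g^2_{\gamma_2}}[-X|\cF_t]$, then obtains homogeneity by observing that the hypothesis forces $a^{g^i,\gamma_i}_t=b^{g^i,\gamma_i}_t$ and invoking the local-linearity result Proposition~\ref{pr:aequalb2}, and finally gets additivity from convexity plus that identity. You instead package everything into the single functional $\Lambda(X)=\cE_{g^1_{\gamma_1}}[X|\cF_t]=-\cE_{g^2_{\gamma_2}}[-X|\cF_t]$, note that the two representations make $\Lambda$ simultaneously convex and concave via Proposition~\ref{pr:gexp4}(vi), and upgrade the resulting affineness (with $\Lambda(0)=0$) to linearity by elementary rescaling and the oddness identity $\Lambda(-X)=-\Lambda(X)$; this bypasses Proposition~\ref{pr:aequalb2} entirely and is arguably cleaner and more self-contained. (One small point of hygiene: your scaling argument covers $\lambda\in L^\infty_+(\cF_t)$ split into the regimes $0\le\lambda\le1$ and $\lambda\ge1$; for a general bounded $\cF_t$-measurable scalar one should patch these together using locality, Proposition~\ref{pr:gexp4}(iv), exactly as the paper does when passing from $L^\infty_+$ to $L^\infty$.) Your treatment of (ii) is actually more complete than the paper's: the paper cites Proposition~\ref{pr:lineargexp} to get linear drivers $x^i_tz$ but leaves implicit why $x^1=x^2$, whereas you extract this explicitly from the one-step recursion by testing against $Y_{t+1}=\Delta W_{t+1}$ and dividing by $\Delta\langle W\rangle_{t+1}\neq0$.
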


\begin{proof}
  Due to the assumption and the representations of bid/ask prices, we have
  \begin{equation}\label{eq:aequalb3}
  \cE_{g^1_{\gamma_1}}\Big[\sum^T_{s=t+1}D_s\Big|\cF_t\Big]=-\cE_{g^2_{\gamma_2}}\Big[-\sum^T_{s=t+1}D_s\Big|\cF_t\Big], \quad D\in\cD.
  \end{equation}
  Clearly, \eqref{eq:aequalb3} is also true for $-D$
  and therefore
  $$
  \cE_{g^2_{\gamma_2}}\Big[\sum^T_{s=t+1}D_s\Big|\cF_t\Big]=-\cE_{g^1_{\gamma_1}}\Big[-\sum^T_{s=t+1}D_s\Big|\cF_t\Big]\leq\cE_{g^1_{\gamma_1}}\Big[\sum^T_{s=t+1}D_s\Big|\cF_t\Big]=-\cE_{g^2_{\gamma_2}}\Big[-\sum^T_{s=t+1}D_s\Big|\cF_t\Big].
  $$
  Since $-\cE_{g^2_{\gamma_2}}[-\sum^T_{s=t+1}D_s|\cF_t]\leq\cE_{g^2_{\gamma_2}}[\sum^T_{s=t+1}D_s|\cF_t]$, then
  \begin{equation}\label{eq:aequalb4}
  \cE_{g^2_{\gamma_2}}\Big[\sum^T_{s=t+1}D_s\Big|\cF_t\Big]=-\cE_{g^1_{\gamma_1}}\Big[-\sum^T_{s=t+1}D_s\Big|\cF_t\Big]=\cE_{g^1_{\gamma_1}}\Big[\sum^T_{s=t+1}D_s\Big|\cF_t\Big]=-\cE_{g^2_{\gamma_2}}\Big[-\sum^T_{s=t+1}D_s\Big|\cF_t\Big]
  \end{equation}
  for any $D\in\cD$.

 In view of \eqref{eq:aequalb4}, we have that $a^{g^i,\gamma_i}_t(\varphi,D)=a^{g^i,\gamma_i}_t(1,\varphi D)=b^{g^i,\gamma_i}_t(1,\varphi D)=b^{g^i,\gamma_i}_t(\varphi,D)$, $i=1,2$, and thus, by Proposition \ref{pr:aequalb2}, we obtain that
  \begin{equation}\label{eq:aequalb5}
    \cE_{g^i_{\gamma_i}}\Big[\varphi\sum^T_{s=t+1}D_s\Big|\cF_t\Big]=\varphi\cE_{g^i_{\gamma_i}}\Big[\sum^T_{s=t+1}D_s\Big|\cF_t\Big], \quad i=1,2,
  \end{equation}
for any $\varphi\in L^\infty_+(\cF_t)$, $D\in\cD$.
Moreover, by \eqref{eq:aequalb4} again, equation \eqref{eq:aequalb5} is also true for $\varphi\in L^\infty_-(\cF_t)$.
  Hence, \eqref{eq:aequalb5} is true for any $\varphi\in L^\infty(\cF_t)$.

  To proceed, let $D^1, D^2\in\cD$, $a,b\in L^\infty(\cF_t)$. Then by convexity of $g$-expectation and \eqref{eq:aequalb5}, we have that
  \begin{align*}
    \cE_{g^i_{\gamma_i}}\Big[a\sum^T_{s=t+1}D^1_s+b\sum^T_{s=t+1}D^2_s\Big|\cF_t\Big]&\leq\frac{1}{2}\cE_{g^i_{\gamma_i}}\Big[2a\sum^T_{s=t+1}D^1_s\Big|\cF_t\Big]+\frac{1}{2}\cE_{g^i_{\gamma_i}}\Big[2b\sum^T_{s=t+1}D^2_s\Big|\cF_t\Big]\\
    &=a\cE_{g^i_{\gamma_i}}\Big[\sum^T_{s=t+1}D^1_s\Big|\cF_t\Big]+b\cE_{g^i_{\gamma_i}}\Big[\sum^T_{s=t+1}D^2_s\Big|\cF_t\Big].
  \end{align*}
  Due to \eqref{eq:aequalb4}, it is also true that
  $$
  \cE_{g^i_{\gamma_i}}\Big[a\sum^T_{s=t+1}D^1_s+b\sum^T_{s=t+1}D^2_s\Big|\cF_t\Big]\geq a\cE_{g^i_{\gamma_i}}\Big[\sum^T_{s=t+1}D^1_s\Big|\cF_t\Big]+b\cE_{g^i_{\gamma_i}}\Big[\sum^T_{s=t+1}D^2_s\Big|\cF_t\Big],
  $$
  and consequently, in view of the above, we deduce that
  \begin{equation}\label{eq:aequalb6}
  \cE_{g^i_{\gamma_i}}\Big[a\sum^T_{s=t+1}D^1_s+b\sum^T_{s=t+1}D^2_s\Big|\cF_t\Big]= a\cE_{g^i_{\gamma_i}}\Big[\sum^T_{s=t+1}D^1_s\Big|\cF_t\Big]+b\cE_{g^i_{\gamma_i}}\Big[\sum^T_{s=t+1}D^2_s\Big|\cF_t\Big].
  \end{equation}
  Thus, \eqref{eq:aequalb4}, \eqref{eq:aequalb5} and \eqref{eq:aequalb6} imply that $\cE_{g^1_{\gamma_1}}[\ \cdot\ |\cF_t]=\cE_{g^2_{\gamma_2}}[\ \cdot\ |\cF_t]$, and they are linear.

  If $a^{g^1,\gamma_1}_t(D)=b^{g^2,\gamma_2}_t(D)$ for any $D\in\cD$, $t\in\cT$, then $\cE_{g^1_{\gamma_1}}[\ \cdot\ |\cF_t]=\cE_{g^2_{\gamma_2}}[\ \cdot\ |\cF_t]$ and they are linear for any $t\in\cT$.
  By Proposition \ref{pr:lineargexp}, there exists a linear driver $\widetilde{g}(t,z)$ such that $g^1_{\gamma_1}(t,z)=g^2_{\gamma_2}(t,z)=\widetilde{g}(t,z)$, $t\in\cT$, $z\in\bR$.

This concludes the proof.
\end{proof}

\subsection{Market Models}\label{se:marketmodel}

In this section, we will consider some market models that follow the set-up introduced in Section~\ref{se:marketsetup}.
Recall that a market is denoted by a triple $(\cM,P^{\ask},P^{\bid})$, where $\cM$ is the subspace of $\cD$ that consists of processes $D^{\ask/\bid,i}$, $i=1,\ldots,K$, as in (M1), and $D^0=(0,\ldots,0,1)$ which is the dividend process of the banking account.
The functionals $P^{\ask}$ and $P^{\bid}$ allow to compute the ex-dividend prices of the cashflow $\widetilde{D}\in\cM$.
We will define $P^{\ask}$ and $P^{\bid}$ by using pricing operators introduced in Section~\ref{sec:PricingOperators}.

\subsubsection{Ask and Bid Prices Computed at the Same Acceptability Level}\label{se:samelevel}

Let $g$ be a family of drivers that satisfies Assumption G, and let $\gamma>0$.
We consider the market model $(\cM,a^{g,\gamma},b^{g,\gamma})$. Namely, we put
$$
P^{\ask}_t(\varphi,\widetilde{D})=a^{g,\gamma}_t(\varphi,\widetilde{D}), \quad P^{\bid}_t(\varphi,\widetilde{D})=b^{g,\gamma}_t(\varphi,\widetilde{D}),
$$
for any cashflow $\widetilde{D}\in\cM$, and $\varphi\in L^\infty_+(\cF_t)$.

Since $a^{g,\gamma}, b^{g,\gamma}:L^\infty_+(\cF_t)\times\cD\rightarrow L^2(\cF_t)$, for any $t\in\cT$, then such market is well-defined.
Moreover, due to Theorem~\ref{th:askbid1}, we also have the following representation
\begin{align*}
P^{\ask}_t(\varphi,\widetilde{D})&=\cE_{g_\gamma}\Big[\varphi\sum^T_{s=t+1}\widetilde{D}_s\Big|\cF_t\Big],\\
P^{\bid}_t(\varphi,\widetilde{D})&=-\cE_{g_\gamma}\Big[-\varphi\sum^T_{s=t+1}\widetilde{D}_s\Big|\cF_t\Big].
\end{align*}
According to Proposition~\ref{pr:gexp4}, we have that $P^{\ask}_t(0,\widetilde{D})=\cE_{g_\gamma}[0|\cF_t]=0$, and $P^{\bid}_t(0,\widetilde{D})=-\cE_{g_\gamma}[0|\cF_t]=0$.
Hence, the market satisfies assumption (M2).
In particular, for $D^0=(0,\ldots,0,1)$, it is clear that
\begin{align*}
  P^{\ask}_t(\varphi, D^0)&=a^{g,\gamma}_t(\varphi, D^0)=\cE_{g_\gamma}[\varphi|\cF_t]=\varphi,\\
  P^{\bid}_t(\varphi, D^0)&=b^{g,\gamma}_t(\varphi, D^0)=-\cE_{g_\gamma}[-\varphi|\cF_t]=\varphi,
\end{align*}
for any $\varphi\in L^\infty_+(\cF_t)$, which implies that the market satisfies (M3).

Next, we will show that our market satisfies the following important properties, proved over the course of two theorems:

\begin{enumerate}
  \item[(M4)]   The market is arbitrage-free.
  \item[(M5)] For any $\widetilde{D}\in\cM$, $\lambda\in L^\infty(\cF_t)$, and $0\leq\lambda\leq1$, $t\in\cT$,
\begin{align*}
P^{\ask}_t(\lambda\varphi^1+(1-\lambda)\varphi^2,\widetilde{D}) & \leq\lambda P^{\ask}_t(\varphi^1,\widetilde{D})+(1-\lambda)P^{\ask}_t(\varphi^2,\widetilde{D}), \\ P^{\bid}_t(\lambda\varphi^1+(1-\lambda)\varphi^2,\widetilde{D}) & \geq\lambda P^{\bid}_t(\varphi^1,\widetilde{D})+(1-\lambda)P^{\bid}_t(\varphi^2,\widetilde{D}).
  \end{align*}
  \item[(M6)] For any $\widetilde{D}\in\cM$, $\varphi^1, \lambda\in L^\infty_+(\cF_t)$, $\varphi^2\in L^\infty_-(\cF_t)$, $0\leq\lambda\leq1$, $i=1,\ldots,K$, and $t\in\cT$,
  $$
  \lambda P^{\ask}_t(\varphi^1,\widetilde{D})-(1-\lambda) P^{\bid}_t(-\varphi^2,\widetilde{D})\geq\1_{\vartheta\geq0}P^{\ask}_t(\vartheta,\widetilde{D})-\1_{\vartheta<0}P^{\bid}_t(-\vartheta,\widetilde{D}),
  $$
  where $\vartheta=\lambda\varphi^1+(1-\lambda)\varphi^2$.

\end{enumerate}
Properties (M5) and (M6) imply that diversification in the trading is favored.
If an investor nets his purchasing and selling in a convex way, then the cost of trading will be reduced.

Now we proceed by showing that such market is arbitrage free.

\begin{theorem}\label{th:arb2}
  The market model $(\cM,a^{g,\gamma},b^{g,\gamma})$, $\gamma>0$, is arbitrage free.
\end{theorem}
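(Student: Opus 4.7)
The plan is to reduce the problem, via Proposition~\ref{pr:eqarb}, to the following single estimate: for every self-financing trading strategy $\phi$ and every $t\in\{0,\ldots,T-1\}$,
$$
\cE_{g_\gamma}\big[-V_T(\phi)\big|\cF_t\big]\ \geq\ -\widetilde V_t(\phi).
$$
Indeed, by part (2) of Proposition~\ref{pr:eqarb}, absence of arbitrage amounts to showing that any $\phi\in\cS(0)$ with $V_T(\phi)\geq 0$ must satisfy $V_T(\phi)=0$. Specializing the estimate to $t=0$ with $\widetilde V_0(\phi)=0$ gives $\cE_{g_\gamma}[-V_T(\phi)|\cF_0]\geq 0$, while $V_T(\phi)\geq 0$ together with monotonicity of $\cE_{g_\gamma}$ (Proposition~\ref{pr:gexp4}(ii)) yields $\cE_{g_\gamma}[-V_T(\phi)|\cF_0]\leq\cE_{g_\gamma}[0|\cF_0]=0$. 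The two inequalities meet at $\cE_{g_\gamma}[-V_T(\phi)|\cF_0]=0=\cE_{g_\gamma}[0|\cF_0]$, and the strict-monotonicity half of Proposition~\ref{pr:gexp4}(ii) then forces $-V_T(\phi)=0$.

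The key estimate will be proved by backward induction on $t$, using as the main analytic ingredient the \emph{convex-combination inequality}
$$
\cE_{g_\gamma}\Big[-\sum_{j=1}^n Z_j\Big|\cF_t\Big]+\sum_{j=1}^n\cE_{g_\gamma}[Z_j|\cF_t]\ \geq\ 0,
$$
which follows by writing $0=\frac{1}{n+1}(-\sum_j Z_j)+\sum_j\frac{1}{n+1}Z_j$, iterating the two-variable convexity of Proposition~\ref{pr:gexp4}(vi), and using $\cE_{g_\gamma}[0|\cF_t]=0$ from Proposition~\ref{pr:gexp4}(i). For the base case $t=T-1$, the equalities $P^\ask_T\equiv P^\bid_T\equiv 0$ and the representations $P^\ask_{T-1}(\varphi,D)=\cE_{g_\gamma}[\varphi D_T|\cF_{T-1}]$, $P^\bid_{T-1}(\varphi,D)=-\cE_{g_\gamma}[-\varphi D_T|\cF_{T-1}]$ (Theorem~\ref{th:askbid1}.P1) give
$$
\widetilde V_{T-1}(\phi)-V_T(\phi)=\sum_{i=1}^{K}\!\Big(\cE_{g_\gamma}[\phi^{l,i}_T D^{\ask,i}_T|\cF_{T-1}]-\phi^{l,i}_T D^{\ask,i}_T\Big)+\sum_{i=1}^{K}\!\Big(\phi^{s,i}_T D^{\bid,i}_T+\cE_{g_\gamma}[-\phi^{s,i}_T D^{\bid,i}_T|\cF_{T-1}]\Big).
$$
Taking $\cE_{g_\gamma}[\,\cdot\,|\cF_{T-1}]$ of both sides and using cash-additivity to pull the $\cF_{T-1}$-measurable terms out reduces the base case precisely to the convex-combination inequality with $n=2K$ applied to $\{Z_j\}=\{\phi^{l,i}_T D^{\ask,i}_T,\,-\phi^{s,i}_T D^{\bid,i}_T\}_{i=1}^{K}$.

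For the inductive step, time-consistency of $g$-expectation (Proposition~\ref{pr:gexp4}(iii)) combined with monotonicity gives $\cE_{g_\gamma}[-V_T(\phi)|\cF_t]\geq\cE_{g_\gamma}[-\widetilde V_{t+1}(\phi)|\cF_t]$, so it remains to verify $\cE_{g_\gamma}[\widetilde V_t(\phi)-\widetilde V_{t+1}(\phi)|\cF_t]\geq 0$. Substituting the self-financing identity \eqref{eq:self-fin} at time $t+1$ to express $\Delta\phi^0_{t+2}$ in terms of time-$(t+1)$ dividends and rebalancing costs, and then using the time-consistency of the pricing operators (Theorem~\ref{th:askbid1}.P5) to rewrite $P^\ask_t(\phi^{l,i}_{t+1},D^{\ask,i})$ and $P^\bid_t(\phi^{s,i}_{t+1},D^{\bid,i})$ in terms of time-$(t+1)$ prices and dividends, the expression for $\widetilde V_t-\widetilde V_{t+1}$ collapses into a sum of the same structural type as in the base case; the convex-combination inequality (now applied to the long-position, short-position and rebalancing terms that appear) then closes the induction. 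The main obstacle is this last piece of bookkeeping: the separate long and short positions together with the indicator splits on the signs of $\Delta\phi^{l,i}_{t+2},\Delta\phi^{s,i}_{t+2}$ in \eqref{eq:self-fin} and the non-homogeneity of $P^\ask_t,P^\bid_t$ make the identity for $\widetilde V_t-\widetilde V_{t+1}$ lengthy, but no analytical input beyond convexity, monotonicity and time-consistency of $g$-expectation is needed.
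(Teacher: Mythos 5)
Your reduction via Proposition~\ref{pr:eqarb}, the convex-combination inequality, and the base case $t=T-1$ are all sound, and the reduction at $t=0$ (equality of $g$-expectations plus the strict part of Proposition~\ref{pr:gexp4}(ii) forcing $V_T(\phi)=0$) is a clean way to finish. The gap is in the inductive step: the intermediate inequality $\cE_{g_\gamma}[-\widetilde V_{t+1}(\phi)\,|\,\cF_t]\geq-\widetilde V_t(\phi)$ is false in general. The obstruction is that $\widetilde V_{t+1}(\phi)$ prices the \emph{net} position $\phi^{l,i}_{t+2}$ at the ask, while the self-financing identity only controls the \emph{increment} $\Delta\phi^{l,i}_{t+2}$; reconciling the two requires the subadditivity $\cE_{g_\gamma}[(A+B)X\,|\,\cF_{t+1}]\leq\cE_{g_\gamma}[AX\,|\,\cF_{t+1}]+\cE_{g_\gamma}[BX\,|\,\cF_{t+1}]$, and convexity gives exactly the opposite: $\cE_{g_\gamma}[2Y\,|\,\cF_{t+1}]\geq2\,\cE_{g_\gamma}[Y\,|\,\cF_{t+1}]$, strictly so for strictly convex drivers. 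Concretely, take $T=2$, $W$ the symmetric random walk, the entropic driver of Example~\ref{ex:derm} (so $\cE_{g_\gamma}[X|\cF_t]=\gamma\ln\bE[e^{X/\gamma}|\cF_t]$), a single asset with $D^{\ask,1}=D^{\bid,1}=(0,0,c\Delta W_2)$, and the self-financing strategy that buys one share at $t=0$ and one more at $t=1$. Then $\cE_{g_\gamma}[D_2|\cF_1]=\gamma\ln\cosh(c/\gamma)$ and $\cE_{g_\gamma}[2D_2|\cF_1]=\gamma\ln\cosh(2c/\gamma)$ are constants, $\widetilde V_0=\phi^0_1+\gamma\ln\cosh(c/\gamma)$, $\widetilde V_1=\phi^0_1-\gamma\ln\cosh(c/\gamma)+\gamma\ln\cosh(2c/\gamma)$, and your step at $t=0$ reads $\cosh^2(c/\gamma)\geq\cosh(2c/\gamma)=2\cosh^2(c/\gamma)-1$, which fails for every $c\neq0$. (The target estimate $\cE_{g_\gamma}[-V_T(\phi)|\cF_0]\geq-\widetilde V_0(\phi)$ itself survives in this example --- it amounts to a H\"older inequality --- but your induction cannot reach it because passing through $-\widetilde V_{t+1}$ is too lossy.)

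The repair is essentially the paper's proof: one must never re-aggregate the position. The paper carries backward the functional $\Pi_t$ of \eqref{eq:pit}, in which every rebalancing increment $\Delta\psi^i_s$, $s\leq t$, appears as a \emph{separate} term $\cE_{g_\gamma}\big[\Delta\psi^i_s\sum_{u\geq t}\widehat{D}^i_u\,\big|\,\cF_t\big]$ (with the positions in $\psi$ signed so that the indicator splits in \eqref{eq:self-fin} collapse by locality). Each backward step uses only the tower property together with precisely your convex-combination inequality, applied simultaneously to the whole collection of roughly $2Kt$ terms --- this is the origin of the $\tfrac{1}{2Kt}$ weights in the paper --- and the contradiction materializes only at $t=1$, where the self-financing initial condition forces $\Pi_1=0$ while the induction yields $\bP(\Pi_1>0)>0$. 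So the analytic ingredients you identified (convexity, locality, cash-additivity, tower property) are the right ones, but the backward induction must be run on the disaggregated quantity $\Pi_t$ rather than on $\widetilde V_t(\phi)$.
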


\begin{proof}
   Assume, the market admits an arbitrage, so that, according to  Proposition \ref{pr:eqarb},  there is a trading strategy $\phi\in\cS(0, \gamma):=\cS(0, a^{g,\gamma},b^{g,\gamma})$, such that  $V_T(\phi)\geq0$ and $\bP(V_T(\phi)>0)>0$. We will show that this leads to a  contradiction.

Note that
\begin{align*}
  V_T(\phi)=&\phi^0_T+\sum^K_{i=1}\Big(\phi^{l,i}_TD^{\text{ask},i}_T-\phi^{s,i}_TD^{\text{bid},i}_T\Big)\\
  =&\phi^0_1+\sum^T_{s=2}\Delta\phi^0_s+\sum^K_{i=1}\Big(\phi^{l,i}_TD^{\text{ask},i}_T-\phi^{s,i}_TD^{\text{bid},i}_T\Big).
\end{align*}
Since $\phi$ is self-financing, then we have
\begin{align*}
  \Delta\phi^0_u=&\sum^K_{i=1}\Big(\phi^{l,i}_{u-1}D^{\text{ask},i}_{u-1}-\phi^{s,i}_{u-1}D^{\text{bid},i}_{u-1}\Big)-\sum^K_{i=1}\Big(\1_{\Delta\phi^{l,i}_u\geq0}a^{g,\gamma}_{u-1}(\Delta\phi^{l,i}_u,D^{\ask,i})-1_{\Delta\phi^{l,i}_u<0}b^{g,\gamma}_{u-1}(-\Delta\phi^{l,i}_u,D^{\ask,i})\\
  &-\1_{\Delta\phi^{s,i}_u\geq0}b^{g,\gamma}_{u-1}(\Delta\phi^{s,i}_u,D^{\bid,i})+\1_{\Delta\phi^{s,i}_u<0}a^{g,\gamma}_{u-1}(\Delta\phi^{s,i}_u,D^{\bid,i})\Big),\ \quad u=2,\ldots,T.
\end{align*}
For convenience, we use the following notations:
\begin{align}\label{psi}
  \psi^0_t&=\phi^0_t,\ \quad t=1,\ldots,T,\nonumber \\
  \psi^i_t&=\phi^{l,i}_t,\ \quad i=1,\ldots,K,\ t=1,\ldots,T,\\
  \psi^j_t&=-\phi^{s,j-K}_t,\ \quad j=K+1,\ldots,2K,\ t=1,\ldots,T. \nonumber
\end{align}
Also, let us define $\widehat{D}$ as
\begin{align*}
  \widehat{D}^0&=D^0,\\
  \widehat{D}^i&=D^{\ask,i},\ \quad i=1,\ldots,K,\\
  \widehat{D}^j&=D^{\bid,j-K},\ \quad j=K+1,\ldots,2K.
\end{align*}
Then, we have that
\begin{align*}
  \Delta\psi^0_t=&\sum^{2K}_{i=1}\psi^i_{t-1}\widehat{D}^i_{t-1}-\sum^{2K}_{i=1}\cE_{g_\gamma}\Big[\Delta\psi^i_t\sum^T_{s=t}\widehat{D}^i_s\Big|\cF_{t-1}\Big],\ \quad t=2,\ldots,T,\\
  V_T(\phi)=&\psi^0_T+\sum^{2K}_{i=1}\psi^i_T\widehat{D}^i_T=\psi^0_1+\sum^T_{s=2}\Delta\psi^0_s+\sum^{2K}_{i=1}\psi^i_T\widehat{D}^i_T\\
  =&\psi^0_1+\sum^T_{s=2}\Big(\sum^{2K}_{i=1}\psi^i_{s-1}\widehat{D}^i_{s-1}-\sum^{2K}_{i=1}\cE_{g_\gamma}\Big[\Delta\psi^i_s\sum^T_{u=s}\widehat{D}^i_u\Big|\cF_{s-1}\Big]\Big)+\sum^{2K}_{i=1}\psi^i_T\widehat{D}^i_T\\
  =&\psi^0_1+\sum^T_{s=2}\Big(\sum^{2K}_{i=1}\psi^i_{s-1}\widehat{D}^i_{s-1}-\sum^{2K}_{i=1}\cE_{g_\gamma}\Big[\Delta\psi^i_s\sum^T_{u=s}\widehat{D}^i_u\Big|\cF_{s-1}\Big]\Big)+\sum^{2K}_{i=1}(\psi^i_1+\sum^T_{s=2}\Delta\psi^i_s)\widehat{D}^i_T.
\end{align*}
By multiplying both sides by $\frac{1}{2KT}$, and applying the conditional $g$-expectation $\cE_{g_\gamma}[\ \cdot\ |\cF_{T-1}]$ to both sides, in view of Proposition~\ref{pr:gexp4}.(v), we deduce
\begin{align}\label{TC1}
  \cE_{g_\gamma}\Big[\frac{1}{2KT}V_T(\phi)\Big|\cF_{T-1}\Big]=&\frac{1}{2TK}\psi^0_1+\frac{1}{2TK}\sum^T_{s=2}\sum^{2K}_{i=1}\psi^i_{s-1}\widehat{D}^i_{s-1}-\frac{1}{2TK}\sum^T_{s=2}\sum^{2K}_{i=1}\cE_{g_\gamma}\Big[\Delta\psi^i_s\sum^T_{u=s}\widehat{D}^i_u\Big|\cF_{s-1}\Big]\\
  &+\cE_{g_\gamma}\Big[\frac{1}{2KT}\sum^{2K}_{i=1}(\psi^i_1+\sum^T_{s=2}\Delta\psi^i_s)\widehat{D}^i_T\Big|\cF_{T-1}\Big] \nonumber \\
  \leq&\frac{1}{2KT}\psi^0_1+\frac{1}{2KT}\sum^T_{s=2}\sum^{2K}_{i=1}\psi^i_{s-1}\widehat{D}^i_{s-1}-\frac{1}{2KT}\sum^T_{s=2}\sum^{2K}_{i=1}\cE_{g_\gamma}\Big[\Delta\psi^i_s\sum^T_{u=s}\widehat{D}^i_u\Big|\cF_{s-1}\Big] \nonumber\\
  &+\frac{1}{2KT}\sum^{2K}_{i=1}\cE_{g_\gamma}\Big[\psi^i_1\widehat{D}^i_T\Big|\cF_{T-1}\Big]+\frac{1}{2KT}\sum^T_{s=2}\sum^{2K}_{i=1}\cE_{g_\gamma}\Big[\Delta\psi^i_s\widehat{D}^i_T\Big|\cF_{T-1}\Big]\nonumber \\
  =&\frac{1}{2KT}\psi^0_1+\frac{1}{2KT}\sum^T_{s=2}\sum^{2K}_{i=1}\psi^i_{s-1}\widehat{D}^i_{s-1}-\frac{1}{2KT}\sum^{T-1}_{s=2}\sum^{2K}_{i=1}\cE_{g_\gamma}\Big[\Delta\psi^i_s\sum^T_{u=s}\widehat{D}^i_u\Big|\cF_{s-1}\Big]\nonumber \\
  &+\frac{1}{2KT}\sum^{2K}_{i=1}\cE_{g_\gamma}\Big[\psi^i_1\widehat{D}^i_T\Big|\cF_{T-1}\Big]+\frac{1}{2KT}\sum^{T-1}_{s=2}\sum^{2K}_{i=1}\cE_{g_\gamma}\Big[\Delta\psi^i_s\widehat{D}^i_T\Big|\cF_{T-1}\Big].\nonumber
\end{align}
Since $V_T(\phi)\geq0$, $\bP(V_T(\phi)>0)>0$, then by Proposition~\ref{pr:gexp4}.(ii) applied to the left hand side of \eqref{TC1}, we have that
\[\cE_{g_\gamma}\Big[\frac{1}{2KT}V_T(\phi)\Big|\cF_{T-1}\Big]\geq 0,\quad \textrm{and}\quad \bP(\cE_{g_\gamma}\Big[\frac{1}{2KT}V_T(\phi)\Big|\cF_{T-1}\Big]>0)>0.\]
Consequently,
\begin{align*}
&\psi^0_1+\sum^{T-1}_{s=2}\sum^{2K}_{i=1}\psi^i_{s-1}\widehat{D}^i_{s-1}-\sum^{T-1}_{s=2}\sum^{2K}_{i=1}\cE_{g_\gamma}\Big[\Delta\psi^i_s\sum^T_{u=s}\widehat{D}^i_u\Big|\cF_{s-1}\Big]\\
&+\sum^{2K}_{i=1}\cE_{g_\gamma}\Big[\psi^i_1\sum^T_{u=T-1}\widehat{D}^i_u\Big|\cF_{T-1}\Big]+\sum^{T-1}_{s=2}\sum^{2K}_{i=1}\cE_{g_\gamma}\Big[\Delta\psi^i_s\sum^T_{u=T-1}\widehat{D}^i_u\Big|\cF_{T-1}\Big]\geq0,
\end{align*}
and the strict inequality holds on some set with positive probability.
Let us use the notation
\begin{align}
  \Pi_t:=&\psi^0_1+\sum^t_{s=2}\sum^{2K}_{i=1}\psi^i_{s-1}\widehat{D}^i_{s-1}-\sum^t_{s=2}\sum^{2K}_{i=1}\cE_{g_\gamma}\Big[\Delta\psi^i_s\sum^T_{u=s}\widehat{D}^i_u\Big|\cF_{s-1}\Big] \nonumber\\
&+\sum^{2K}_{i=1}\cE_{g_\gamma}\Big[\psi^i_1\sum^T_{u=t}\widehat{D}^i_s\Big|\cF_t\Big]+\sum^t_{s=2}\sum^{2K}_{i=1}\cE_{g_\gamma}\Big[\Delta\psi^i_s\sum^T_{u=t}\widehat{D}^i_u\Big|\cF_t\Big],\label{eq:pit}
\end{align}
where $t\in\{1,\ldots,T-1\}$.
We just showed that $\Pi_{T-1}\geq0$ and $\bP(\Pi_{T-1}>0)>0$.
Next, we are going to prove that $\Pi_t\geq0$ and $\bP(\Pi_t>0)>0$ will imply that $\Pi_{t-1}\geq0$, $\bP(\Pi_{t-1}>0)>0$, for any $t\in\{2,\ldots,T-1\}$.

Using Proposition~\ref{pr:gexp4}.(iii)-(iv), we obtain that
\begin{align*}
  \cE_{g_\gamma}\Big[\frac{1}{2Kt}\Pi_t\Big|\cF_{t-1}\Big]=&\frac{1}{2Kt}\Big(\psi^0_1+\sum^t_{s=2}\sum^{2K}_{i=1}\psi^i_{s-1}\widehat{D}^i_{s-1}-\sum^t_{s=2}\sum^{2K}_{i=1}\cE_{g_\gamma}\Big[\Delta\psi^i_s\sum^T_{u=s}\widehat{D}^i_u\Big|\cF_{s-1}\Big]\Big)\\
  &+\cE_{g_\gamma}\Big[\frac{1}{2Kt}\sum^{2K}_{i=1}\cE_{g_\gamma}\Big[\psi^i_1\sum^T_{u=t}\widehat{D}^i_u\Big|\cF_t\Big]+\frac{1}{2Kt}\sum^t_{s=2}\sum^{2K}_{i=1}\cE_{g_\gamma}\Big[\Delta\psi^i_s\sum^T_{u=t}\widehat{D}^i_u\Big|\cF_t\Big]\Big|\cF_{t-1}\Big]\\
  \leq&\frac{1}{2Kt}\Big(\psi^0_1+\sum^t_{s=2}\sum^{2K}_{i=1}\psi^i_{s-1}\widehat{D}^i_{s-1}-\sum^t_{s=2}\sum^{2K}_{i=1}\cE_{g_\gamma}\Big[\Delta\psi^i_s\sum^T_{u=s}\widehat{D}^i_u\Big|\cF_{s-1}\Big]\Big)\\
  &+\frac{1}{2Kt}\sum^{2K}_{i=1}\cE_{g_\gamma}\Big[\psi^i_1\sum^T_{u=t}\widehat{D}^i_u\Big|\cF_{t-1}\Big]+\frac{1}{2Kt}\sum^t_{s=2}\sum^{2K}_{i=1}\cE_{g_\gamma}\Big[\Delta\psi^i_s\sum^T_{u=t}\widehat{D}^i_u\Big|\cF_{t-1}\Big]\\
  =&\frac{1}{2Kt}\Big(\psi^0_1+\sum^t_{s=2}\sum^{2K}_{i=1}\psi^i_{s-1}\widehat{D}^i_{s-1}-\sum^{t-1}_{s=2}\sum^{2K}_{i=1}\cE_{g_\gamma}\Big[\Delta\psi^i_s\sum^T_{u=s}\widehat{D}^i_u\Big|\cF_{s-1}\Big]\Big)\\
  &+\frac{1}{2Kt}\sum^{2K}_{i=1}\cE_{g_\gamma}\Big[\psi^i_1\sum^T_{u=t}\widehat{D}^i_u\Big|\cF_{t-1}\Big]+\frac{1}{2Kt}\sum^{t-1}_{s=2}\sum^{2K}_{i=1}\cE_{g_\gamma}\Big[\Delta\psi^i_s\sum^T_{u=t}\widehat{D}^i_u\Big|\cF_{t-1}\Big]\\
  =&\frac{1}{2Kt}\Big(\psi^0_1+\sum^{t-1}_{s=2}\sum^{2K}_{i=1}\psi^i_{s-1}\widehat{D}^i_{s-1}-\sum^{t-1}_{s=2}\sum^{2K}_{i=1}\cE_{g_\gamma}\Big[\Delta\psi^i_s\sum^T_{u=s}\widehat{D}^i_u\Big|\cF_{s-1}\Big]\Big)\\
  &+\frac{1}{2Kt}\sum^{2K}_{i=1}\cE_{g_\gamma}\Big[\psi^i_1\sum^T_{u=t-1}\widehat{D}^i_u\Big|\cF_{t-1}\Big]+\frac{1}{2Kt}\sum^{t-1}_{s=2}\sum^{2K}_{i=1}\cE_{g_\gamma}\Big[\Delta\psi^i_s\sum^T_{u=t-1}\widehat{D}^i_u\Big|\cF_{t-1}\Big]\\
  =&\frac{1}{2Kt}\Pi_{t-1},
\end{align*}
for $t\in\{2,\ldots,T-1\}$.
Since $\Pi_t\geq0$ and $\bP(\Pi_t>0)>0$, then according to Proposition~\ref{pr:gexp4}.(ii), we get that $\cE_{g_\gamma}\Big[\frac{1}{2Kt}\Pi_t\Big|\cF_{t-1}\Big]$ and $\bP(\cE_{g_\gamma}\Big[\frac{1}{2Kt}\Pi_t\Big|\cF_{t-1}\Big]>0)>0$, and thus $\Pi_{t-1}\geq0$, $\bP(\Pi_{t-1}>0)>0$.
Hence, by backward induction, it is true that
\begin{align*}
  \Pi_1=\psi^0_1+\sum^{2K}_{i=1}\cE_{g_\gamma}\Big[\psi^i_1\sum^{T}_{s=1}\widehat{D}^i_s\Big|\cF_1\Big]\geq0,
\end{align*}
and $\bP(\Pi_1>0)>0$. Consequently, in view of \eqref{psi} and the representation for  $a^{g,\gamma}$ and $b^{g,\gamma}$ (cf. Theorem \ref{th:askbid1}) we obtain that
\begin{align*}
  0\leq&\frac{1}{2K}\psi^0_1+\frac{1}{2K}\sum^{2K}_{i=1}\cE_{g_\gamma}\Big[\psi^i_1\sum^{T}_{s=1}\widehat{D}^i_s\Big]=\frac{1}{2K}\phi^0_1+\frac{1}{2K}\sum^K_{i=1}\Big(a^{g,\gamma}_0(\phi^{l,i}_1,D^{\ask,i})-b^{g,\gamma}_0(-\phi^{s,i}_1,D^{\bid,i})\Big),
\end{align*}
and
\begin{align}\label{TC2}
  \bP\Big(\phi^0_1+\sum^K_{i=1}\Big(a^{g,\gamma}_0(\phi^{l,i}_1,D^{\ask,i})-b^{g,\gamma}_0(-\phi^{s,i}_1,D^{\bid,i})\Big)>0\Big)>0.
\end{align}
Since $\phi$ is a self-financing, then
$$
\psi^0_1=-\sum^K_{i=1}\Big(a^{g,\gamma}_0(\phi^{l,i}_1,D^{\ask,i})-b^{g,\gamma}_0(-\phi^{s,i}_1,D^{\bid,i})\Big).
$$
Thus, \eqref{TC2} means that  $\bP(0>0)>0$, which is a contradiction.
\end{proof}

\smallskip

\begin{proposition}
  Assume that $(\cM,P^{\ask},P^{\bid})=(\cM,a^{g,\gamma},b^{g,\gamma})$, $\gamma>0$. Then, properties (M5) and (M6) hold true.
\end{proposition}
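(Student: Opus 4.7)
The whole proof reduces to combining the representations from Theorem~\ref{th:askbid1}.P1 with the convexity of the $g$-expectation (Proposition~\ref{pr:gexp4}.(vi)), which holds because $g_\gamma$ is convex by Assumption~G2. For brevity write $X:=\sum_{s=t+1}^{T}\widetilde D_s\in L^2(\cF_T)$, so that
\begin{equation*}
P^{\ask}_t(\varphi,\widetilde D)=\cE_{g_\gamma}[\varphi X\mid\cF_t],\qquad P^{\bid}_t(\varphi,\widetilde D)=-\cE_{g_\gamma}[-\varphi X\mid\cF_t],
\end{equation*}
for every $\varphi\in L^\infty_+(\cF_t)$.

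For (M5), I take $\varphi^1,\varphi^2\in L^\infty_+(\cF_t)$ and $\lambda\in L^\infty(\cF_t)$ with $0\le\lambda\le 1$, and note that $\lambda\varphi^1+(1-\lambda)\varphi^2\in L^\infty_+(\cF_t)$. Applying Proposition~\ref{pr:gexp4}.(vi) to the pair $(\varphi^1 X,\varphi^2 X)$ gives the ask inequality in one line. Applying the same convexity to $(-\varphi^1 X,-\varphi^2 X)$ and multiplying by $-1$, in accordance with the bid representation above, produces the reverse (concavity) inequality for the bid. Thus (M5) reduces to two direct invocations of Proposition~\ref{pr:gexp4}.(vi).

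Property (M6) requires slightly more careful bookkeeping and I expect this to be the main obstacle. The strategy is to express both sides as single conditional $g$-expectations with a common argument, and then use convexity exactly once. Set $A:=\{\vartheta\ge 0\}\in\cF_t$. Since $-\varphi^2\in L^\infty_+(\cF_t)$, the representation collapses the left-hand side to
\begin{equation*}
\lambda P^{\ask}_t(\varphi^1,\widetilde D)-(1-\lambda)P^{\bid}_t(-\varphi^2,\widetilde D)=\lambda\cE_{g_\gamma}[\varphi^1 X\mid\cF_t]+(1-\lambda)\cE_{g_\gamma}[\varphi^2 X\mid\cF_t].
\end{equation*}
For the right-hand side I split into the sets $A$ and $A^c$, interpret the pricing operators on the appropriate positive parts $\vartheta^+=\1_A\vartheta$ and $\vartheta^-=-\1_{A^c}\vartheta$ (both in $L^\infty_+(\cF_t)$), and use locality of the $g$-expectation (Proposition~\ref{pr:gexp4}.(iv)) to absorb the indicators into the $g$-expectation:
\begin{equation*}
\1_A P^{\ask}_t(\vartheta,\widetilde D)-\1_{A^c}P^{\bid}_t(-\vartheta,\widetilde D)=\cE_{g_\gamma}[\1_A\vartheta X\mid\cF_t]+\cE_{g_\gamma}[\1_{A^c}\vartheta X\mid\cF_t]=\cE_{g_\gamma}[\vartheta X\mid\cF_t],
\end{equation*}
where the last equality is locality applied in reverse to $\cE_{g_\gamma}[\vartheta X\mid\cF_t]=\1_A\cE_{g_\gamma}[\vartheta X\mid\cF_t]+\1_{A^c}\cE_{g_\gamma}[\vartheta X\mid\cF_t]$. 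Convexity of $\cE_{g_\gamma}[\cdot\mid\cF_t]$ applied to the pair $(\varphi^1 X,\varphi^2 X)$ with weights $(\lambda,1-\lambda)$ then yields
\begin{equation*}
\cE_{g_\gamma}[\vartheta X\mid\cF_t]=\cE_{g_\gamma}[\lambda\varphi^1 X+(1-\lambda)\varphi^2 X\mid\cF_t]\le\lambda\cE_{g_\gamma}[\varphi^1 X\mid\cF_t]+(1-\lambda)\cE_{g_\gamma}[\varphi^2 X\mid\cF_t],
\end{equation*}
which is exactly (M6). The only genuinely subtle step is the sign-and-indicator manipulation that reduces the right-hand side of (M6) to $\cE_{g_\gamma}[\vartheta X\mid\cF_t]$; once locality is used to strip the indicators, convexity of the $g$-expectation finishes the job just as in (M5).
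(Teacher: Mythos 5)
Your proof is correct and follows essentially the same route as the paper: the representation of $P^{\ask}/P^{\bid}$ via Theorem~\ref{th:askbid1}.P1 combined with convexity of the $g$-expectation (the paper cites Theorem~\ref{th:askbid1}.P3 for (M5), which is exactly the convexity argument you inline). Your explicit locality bookkeeping for the right-hand side of (M6) is a step the paper leaves implicit, but it is the same argument.
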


\begin{proof}
Due to Theorem~\ref{th:askbid1}.P3, we have that $P^{\ask}_t(\lambda\varphi^1+(1-\lambda)\varphi^2,\widetilde{D})\leq\lambda P^{\ask}_t(\varphi^1,\widetilde{D})+(1-\lambda)P^{\ask}_t(\varphi^2,\widetilde{D})$, and $P^{\bid}_t(\lambda\varphi^1+(1-\lambda)\varphi^2,\widetilde{D})\geq\lambda P^{\bid}_t(\varphi^1,\widetilde{D})+(1-\lambda)P^{\bid}_t(\varphi^2,\widetilde{D})$, for any $\widetilde{D}\in\cM$, $\lambda\in L^\infty(\cF_t)$, $0\leq\lambda\leq1$, $t\in\cT$,
which implies that condition (M5) is satisfied.

We are left to show that (M6) holds.
  In view of Theorem~\ref{th:askbid1}.P1, we have that
  \begin{align*}
  \lambda P^{\ask}_t(\varphi^1,\widetilde{D})-(1-\lambda) P^{\bid}_t(-\varphi^2,\widetilde{D})
  &=\lambda\cE_{g_\gamma}\Big[\varphi^1\sum^T_{s=t+1}\widetilde{D}_s\Big|\cF_t\Big]+(1-\lambda)\cE_{g_\gamma}\Big[\varphi^2\sum^T_{s=t+1}\widetilde{D}_s\Big|\cF_t\Big]\\
  &\geq\cE_{g_\gamma}\Big[\vartheta\sum^T_{s=t+1}\widetilde{D}_s\Big|\cF_t\Big]
  =\1_{\vartheta\geq0}P^{\ask}_t(\vartheta,\widetilde{D})-\1_{\vartheta<0}P^{\bid}_t(-\vartheta,\widetilde{D}).
  \end{align*}
  This concludes the proof.
\end{proof}

In Theorem \ref{th:arb2}, we showed that if we take a family of drivers $g=(g_x)_{x>0}$, and if we choose the same acceptability level $\gamma>0$ to define the ask and bid prices, then the market model using such prices is  valid and arbitrage free. In next section, using the results from the current section, we will prove that market model is still arbitrage free even we choose different acceptability level for the two trading sides.

\subsubsection{Ask and Bid Prices Computed at Different Acceptability Levels}

Let $g$ be a family of drivers that satisfies Assumption G.
Also, let $\gamma_1, \gamma_2>0$.
We consider the market model $(\cM,a^{g,\gamma_1},b^{g,\gamma_2})$.
That is $P^{\ask}_t(\varphi,\widetilde{D})=a^{g,\gamma_1}_t(\varphi,\widetilde{D})$ and $P^{\bid}_t(\varphi,\widetilde{D})=b^{g,\gamma_2}_t(\varphi,\widetilde{D})$, for $\widetilde{D}\in\cM$, $\varphi\in L^\infty_+(\cF_t)$.

Similarly as in Section~\ref{se:samelevel}, it is not hard to verify here that the market model of the present section satisfies properties (M2), (M3) and (M4), and we leave the verification of these properties to the reader.  At this time we are unable to verify that property (M6) holds for this market. However, we can verify that property (M5) holds.

\begin{theorem}
  The market model $(\cM,a^{g,\gamma_1},b^{g,\gamma_2})$, $\gamma_1, \gamma_2>0$, is arbitrage free.
\end{theorem}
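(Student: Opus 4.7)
The plan is to reduce the claim to the equal-level no-arbitrage result of Theorem~\ref{th:arb2} via a domination argument. Set $\gamma^\ast := \min(\gamma_1,\gamma_2)$. Proposition~\ref{pr:spread} then yields, for every $\widetilde D \in \cM$, every $\varphi \in L^\infty_+(\cF_t)$, and every $t \in \cT$,
\[
a^{g,\gamma^\ast}_t(\varphi,\widetilde D) \le a^{g,\gamma_1}_t(\varphi,\widetilde D), \qquad b^{g,\gamma^\ast}_t(\varphi,\widetilde D) \ge b^{g,\gamma_2}_t(\varphi,\widetilde D).
\]
Informally, the $\gamma^\ast$-market has a tighter bid--ask spread than the original market, and is therefore at least as favourable to an arbitrageur. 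I would argue that any arbitrage in $(\cM,a^{g,\gamma_1},b^{g,\gamma_2})$ lifts to an arbitrage in $(\cM,a^{g,\gamma^\ast},b^{g,\gamma^\ast})$, which is forbidden by Theorem~\ref{th:arb2}.

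Assume by contradiction that $\phi$ is an arbitrage at time $0$ in $(\cM,a^{g,\gamma_1},b^{g,\gamma_2})$, so that $\widetilde V_0(\phi)=0$, $\phi$ is self-financing, $V_T(\phi)\ge 0$, and $\bP(V_T(\phi)>0)>0$. Define a companion strategy $\tilde\phi$ on the $\gamma^\ast$-market by keeping the risky positions unchanged, $\tilde\phi^{l,i}_t=\phi^{l,i}_t$ and $\tilde\phi^{s,i}_t=\phi^{s,i}_t$ for all $i$ and $t$, choosing $\tilde\phi^0_1$ so that $\widetilde V_0(\tilde\phi)=0$ in the $\gamma^\ast$-market, and defining $\tilde\phi^0_{t+1}$ recursively from the self-financing identity \eqref{eq:self-fin} with prices $(a^{g,\gamma^\ast},b^{g,\gamma^\ast})$. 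Writing $C^{(1,2)}_{t+1}$ and $C^{(\ast)}_{t+1}$ for the four-term rebalancing cost on the left of \eqref{eq:self-fin} in the original market and the $\gamma^\ast$-market respectively, the key observation is $C^{(\ast)}_{t+1}\le C^{(1,2)}_{t+1}$: the two ask-price terms (long additions and short closings) enter with a $+$ sign and decrease when $\gamma$ drops from $\gamma_1$ to $\gamma^\ast$, while the two bid-price terms (long unwinds and short openings) enter with a $-$ sign and increase when $\gamma$ drops from $\gamma_2$ to $\gamma^\ast$. Since the dividend term on the right of \eqref{eq:self-fin} depends only on the unchanged risky positions, this monotonicity gives $\Delta\tilde\phi^0_{t+1}-\Delta\phi^0_{t+1} = C^{(1,2)}_{t+1} - C^{(\ast)}_{t+1}\ge 0$, and together with the inequality $\tilde\phi^0_1\ge \phi^0_1$ (which follows from the same monotonicity applied to the equation $\widetilde V_0=0$ on both markets) a straightforward induction yields $\tilde\phi^0_t\ge \phi^0_t$ for every $t\in\cT$.

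Finally, at the terminal time one has $a^{g,\gamma}_T(\cdot,\widetilde D)=b^{g,\gamma}_T(\cdot,\widetilde D)=0$ for every $\gamma>0$ because $\delta^+_T(\widetilde D)\equiv 0$, so the liquidation value reduces to $V_T=\phi^0_T+\sum_{i=1}^K(\phi^{l,i}_T D^{\ask,i}_T-\phi^{s,i}_T D^{\bid,i}_T)$ in both markets. Since the risky positions coincide we obtain $V_T(\tilde\phi) - V_T(\phi) = \tilde\phi^0_T - \phi^0_T \ge 0$, hence $V_T(\tilde\phi)\ge V_T(\phi)\ge 0$ and $\bP(V_T(\tilde\phi)>0)\ge \bP(V_T(\phi)>0)>0$. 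This exhibits $\tilde\phi$ as an arbitrage opportunity in the equal-level market $(\cM,a^{g,\gamma^\ast},b^{g,\gamma^\ast})$, contradicting Theorem~\ref{th:arb2}. The main obstacle I expect is the careful sign-chasing in the four-case analysis behind $C^{(\ast)}_{t+1}\le C^{(1,2)}_{t+1}$; once that term-by-term monotonicity is secured, the rest of the argument is essentially bookkeeping.
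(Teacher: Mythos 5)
Your proposal is correct and follows essentially the same route as the paper: assume an arbitrage $\phi$ in $(\cM,a^{g,\gamma_1},b^{g,\gamma_2})$, keep the risky positions, use the price monotonicity of Proposition~\ref{pr:spread} to show the self-financing bank account in the tighter equal-level market dominates ($\tilde\phi^0_t\geq\phi^0_t$), and contradict Theorem~\ref{th:arb2}. The only cosmetic difference is that you reduce uniformly to the level $\gamma^\ast=\min(\gamma_1,\gamma_2)$, whereas the paper treats the case $\gamma_1\leq\gamma_2$ explicitly (reducing to $(\cM,a^{g,\gamma_1},b^{g,\gamma_1})$, so only the bid changes) and declares the other case analogous.
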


\begin{proof}
  First, we consider the case  $\gamma_1\leq\gamma_2$. We will prove the result by contradiction.
  Namely, we will assume that market model $(\cM,a^{g,\gamma_1},b^{g,\gamma_2})$ admits an arbitrage, and we will conclude that market model $(\cM,a^{g,\gamma_1},b^{g,\gamma_1})$ also admits an arbitrage, which is impossible in view of Theorem~\ref{th:arb2}.
  Intuitively, the statement and its proof is clear: since $b^{g,\gamma_1}\geq b^{g,\gamma_2}$,  one will trade at higher bid prices in the new market $(\cM,a^{g,\gamma_1},b^{g,\gamma_1})$, and hence it is enough to trade in this market all assets but banking account according to the arbitrage strategy from  $(\cM,a^{g,\gamma_1},b^{g,\gamma_2})$. Finally, the banking account is set up such that the trading strategy remains self-financing in $(\cM,a^{g,\gamma_1},b^{g,\gamma_1})$.

 Let us assume that the market model $(\cM,a^{g,\gamma_1},b^{g,\gamma_2})$ admits an arbitrage opportunity, which, according to  Proposition~\ref{pr:eqarb}, means that  there is a trading strategy $\phi\in\cS(0,a^{g,\gamma_1},b^{g,\gamma_2})$, such that  $V_T(\phi)\geq0$ and $\bP(V_T(\phi)>0)>0$.  Using $\phi$, we will construct an arbitrage strategy $\psi\in{\cS}(0,a^{g,\gamma_1},b^{g,\gamma_1})$.
 Specifically, we set
  \begin{align*}
 \psi^0_1&=  -\sum^K_{i=1}\Big(a^{g,\gamma_1}_0(\phi^{l,i}_1,D^{\ask,i})-b^{g,\gamma_1}_0(\phi^{s,i}_1,D^{\bid,i})\Big),\\[0.05in]
  \psi^{l/s,i}_1&=\phi^{l/s,i}_1,\ \quad i=1,\ldots,K,
   \end{align*}
  and
  \begin{align*}
  \psi^0_t&=\psi^0_1+\sum^t_{u=2}\zeta_u,\ \quad t=2,\ldots,T,\\
  \psi^i_t&=\phi^i_t,\ \quad i=1,\ldots,K,\ t=2,\ldots,T,
  \end{align*}
   where
  \begin{align*}
    \zeta^0_t=&\sum^K_{i=1}\Big(\phi^{l,i}_{t-1}D^{\text{ask},i}_{t-1}-\phi^{s,i}_{t-1}D^{\text{bid},i}_{t-1}\Big)-\sum^K_{i=1}\Big(\1_{\Delta\phi^{l,i}_t\geq0}a^{g,\gamma_1}_{t-1}(\Delta\phi^{l,i}_t,D^{\ask,i})-\1_{\Delta\phi^{l,i}_t<0}b^{g,\gamma_1}_{t-1}(-\Delta\phi^{l,i}_t,D^{\ask,i})\\
    &-\1_{\Delta\phi^{s,i}_t\geq0}b^{g,\gamma_1}_{t-1}(\Delta\phi^{s,i}_t,D^{\bid,i})+1_{\Delta\phi^{s,i}_t<0}a^{g,\gamma_1}_{t-1}(-\Delta\phi^{s,i}_t,D^{\bid,i})\Big),\ \quad t=2,\ldots,T.
  \end{align*}

  First, we will show that $\psi\in{\cS}(0,a^{g,\gamma_1},b^{g,\gamma_1})$.
  We have that\footnote{Here, we are using the convention that ${V}^{\gamma_1}$ is computed relative to the model $(\cM,a^{g,\gamma_1},b^{g,\gamma_1})$, and that $V^{\gamma_1,\gamma_2}$ is computed relative to the model $(\cM,a^{g,\gamma_1},b^{g,\gamma_2})$.}
  \begin{align*}
   {V}^{\gamma_1}_0(\psi)=&\Delta\psi^0_1+\sum^K_{i=1}\Big(\1_{\Delta\psi^{l,i}_1\geq0}P^{\ask}_0(\Delta\psi^{l,i}_1,D^{\ask,i})-\1_{\Delta\psi^{l,i}_1<0}P^{\bid}_0(-\Delta\psi^{l,i}_1,D^{\ask,i})\\
    &-\1_{\Delta\psi^{s,i}_1\geq0}P^{\bid}_0(\Delta\psi^{s,i}_1,D^{\bid})+\1_{\Delta\psi^{s,i}_1<0}P^{\ask}_0(-\Delta\psi^{s,i}_1,D^{\bid})\Big)\\
    =&-\sum^K_{i=1}\Big(a^{g,\gamma_1}_0(\phi^{l,i}_1,D^{\ask,i})-b^{g,\gamma_1}_0(\phi^{s,i}_1,D^{\bid,i})\Big)+\sum^K_{i=1}\Big(a^{g,\gamma_1}_0(\phi^{l,i}_1,D^{\ask,i})-b^{g,\gamma_1}_0(\phi^{s,i}_1,D^{\bid,i})\Big)\\
    =&0=\sum^K_{i=1}\Big(\psi^{l,i}_1D^{\ask,i}_0-\psi^{s,i}_1D^{\bid,i}_0\Big),
  \end{align*}
  and
  \begin{align*}
    \Delta\psi^0_{t+1}&+\sum^K_{i=1}\Big(\1_{\Delta\psi^{l,i}_{t+1}\geq0}P^{\ask}_t(\Delta\psi^{l,i}_{t+1},D^{\ask,i})-\1_{\Delta\psi^{l,i}_{t+1}<0}P^{\bid}_t(-\Delta\psi^{l,i}_{t+1},D^{\ask,i})\\
    &-\1_{\Delta\psi^{s,i}_{t+1}\geq0}P^{\bid}_t(\Delta\psi^{s,i}_{t+1},D^{\bid})+\1_{\Delta\psi^{s,i}_{t+1}<0}P^{\ask}_t(-\Delta\psi^{s,i}_{t+1},D^{\bid})\Big)\\
    =&\zeta^0_{t+1}+\sum^K_{i=1}\Big(\1_{\Delta\phi^{l,i}_s\geq0}a^{g,\gamma_1}_{s-1}(\Delta\phi^{l,i}_s,D^{\ask,i})-\1_{\Delta\phi^{l,i}_s<0}b^{g,\gamma_1}_{s-1}(-\Delta\phi^{l,i}_s,D^{\ask,i})\\
    &-\1_{\Delta\phi^{s,i}_s\geq0}b^{g,\gamma_1}_{s-1}(\Delta\phi^{s,i}_s,D^{\bid,i})+1_{\Delta\phi^{s,i}_s<0}a^{g,\gamma_1}_{s-1}(-\Delta\phi^{s,i}_s,D^{\bid,i})\Big)\\
    =&\sum^K_{i=1}(\psi^{l,i}_tD^{\text{ask},i}_t-\psi^{s,i}_tD^{\text{bid},i}_t),
  \end{align*}
  for every $t=2,\ldots,T$.
  Hence, $\psi$ is a self-financing trading strategy and $\psi\in \cS(0, a^{g,\gamma_1}, b^{g, \gamma_1})$.

  Next we will show that $\psi^0_1\geq\phi^0_1$ and $\Delta\psi^0_t\geq\Delta\phi^0_t$, $t\in\{2,\ldots, T\}$, which will imply that
  $$
  V^{\gamma_1}_T(\psi)=\psi^0_1+\sum^T_{t=2}\psi^0_t\geq\phi^0_1+\sum^T_{t=2}\phi^0_t=V^{\gamma_1,\gamma_2}_T(\phi).
  $$
Consequently, we have that $V^{\gamma_1}_T(\psi)\geq0$ and $\bP(V^{\gamma_1}_T(\psi)>0)>0$, and thus $\psi$ is an arbitrage opportunity in market model $(\cM,a^{g,\gamma_1},b^{g,\gamma_1})$, which contradicts Theorem~\ref{th:arb2}.

  Since $\gamma_1\leq\gamma_2$, by Proposition~\ref{pr:spread}, we get that $b^{g,\gamma_1}_t(\varphi,D)\geq b^{g,\gamma_2}_t(\varphi,D)$, for any $\varphi\in L^\infty_+(\cF_t)$ and $D\in\cD$.
  Hence,
  \begin{align*}
    \psi^0_1&=-\sum^K_{i=1}\Big(a^{g,\gamma_1}_0(\phi^{l,i}_1,D^{\ask,i})-b^{g,\gamma_1}_0(\phi^{s,i}_1,D^{\bid,i})\Big)\\
    &\geq-\sum^K_{i=1}\Big(a^{g,\gamma_1}_0(\phi^{l,i}_1,D^{\ask,i})-b^{g,\gamma_2}_0(\phi^{s,i}_1,D^{\bid,i})\Big)\\
    &=\phi^0_1,
  \end{align*}
  and  it is clear that $\Delta\psi^0_t=\zeta_t$, $t\in\{2,\ldots,T\}$. Moreover,
  \begin{align*}
    \Delta\psi^0_t=&\sum^K_{i=1}\Big(\phi^{l,i}_{t-1}D^{\text{ask},i}_{t-1}-\phi^{s,i}_{t-1}D^{\text{bid},i}_{t-1}\Big)-\sum^K_{i=1}\Big(\1_{\Delta\phi^{l,i}_t\geq0}a^{g,\gamma_1}_{t-1}(\Delta\phi^{l,i}_t,D^{\ask,i})-\1_{\Delta\phi^{l,i}_t<0}b^{g,\gamma_1}_{t-1}(-\Delta\phi^{l,i}_t,D^{\ask,i})\\
    &-\1_{\Delta\phi^{s,i}_t\geq0}b^{g,\gamma_1}_{t-1}(\Delta\phi^{s,i}_t,D^{\bid,i})+1_{\Delta\phi^{s,i}_t<0}a^{g,\gamma_1}_{t-1}(-\Delta\phi^{s,i}_t,D^{\bid,i})\Big)\\
    \geq&\sum^K_{i=1}\Big(\phi^{l,i}_{t-1}D^{\text{ask},i}_{t-1}-\phi^{s,i}_{t-1}D^{\text{bid},i}_{t-1}\Big)-\sum^K_{i=1}\Big(\1_{\Delta\phi^{l,i}_t\geq0}a^{g,\gamma_1}_{t-1}(\Delta\phi^{l,i}_t,D^{\ask,i})-\1_{\Delta\phi^{l,i}_t<0}b^{g,\gamma_2}_{t-1}(-\Delta\phi^{l,i}_t,D^{\ask,i})\\
    &-\1_{\Delta\phi^{s,i}_t\geq0}b^{g,\gamma_2}_{t-1}(\Delta\phi^{s,i}_t,D^{\bid,i})+1_{\Delta\phi^{s,i}_t<0}a^{g,\gamma_1}_{t-1}(-\Delta\phi^{s,i}_t,D^{\bid,i})\Big)\\
    =&\Delta\phi^0_t,
  \end{align*}
  for every $t=2,\ldots,T$.
  Therefore, we have that $\psi^0_T=\psi^0_1+\sum^T_{t=2}\Delta\psi^0_t\geq\phi^0_1+\sum^T_{t=2}\Delta\phi^0_t=\phi^0_T$.

  The proof for $\gamma_1\geq\gamma_2$ is analogues.

\end{proof}

\section{Derivatives Valuation with Dynamic Conic Finance}\label{se:withhedge}

The aim of this section is to build a pricing methodology for general contingent claims, by using the theory of dynamic acceptability indices.
We assume that there exists an underlying market\footnote{Note that in view of previous section  such markets exist.} $(\cM,P^{\ask},P^{\bid})$ that satisfies conditions (M1)-(M6).
We will use the securities from underlying market as hedging instruments.


We start by introducing the concept of super-hedging in our context.

\begin{definition}\label{def:sh}
  Assume that  $(\cM,P^{\ask},P^{\bid})$ is the underlying market, and let $V(\phi)$ be the liquidation value process generated by the trading strategy $\phi\in \cS(t)$ for some fixed $t\in\cT$. A cashflow $H\in\cD$, is said to be \textit{super-hedged at zero-cost by $\phi$}, if $H_s=0$, $s\leq t$, and
  \begin{equation}\label{eq:sh}
    \sum^u_{s=t+1}H_s\leq V_u(\phi),\ \quad u=t+1,\ldots,T.
  \end{equation}
\end{definition}

\begin{remark}
  If \eqref{eq:sh} becomes equality, then we say $H$ can be replicated by $\phi$.
  Clearly, if $H\in\cH^0(t)$, then it is hedgeable.
\end{remark}

Based on Definition~\ref{def:sh}, we introduce the set of cashflows that can be super-hedged by strategies $\phi\in\cS(t)$ at zero cost:
\begin{align}
  &\cH(t)  : = \Big\{ \Big(0,\dots, 0, \Delta (V_{t+1}(\phi)-Z_{t+1}), \ldots, \Delta ( V_{T}(\phi) - Z_{T}) \Big) : \phi\in \cS(t), \ Z\in\cL_+(t)\Big\},
\end{align}
where
\begin{align}
  &  \cL_+(t)  := \Big\{ (Z_s)_{s=0}^T  :    Z_s\in L^2_+(\Omega,\cF_s,\bP), \  Z_s=0,\ s\leq t\Big\},
\end{align}
for $t\in\cT$.

We proceed by defining acceptability ask and bid prices of a derivative cashflow $D$.

\begin{definition}\label{def:askbidhedge}
  Let $g=(g_x)_{x>0}$ be a family of drivers that satisfy Assumption~G.
  The \textit{acceptability ask price} of $\varphi\in L^\infty_+(\cF_t)$ shares of the cashflow $D$, at level $\gamma$, at time $t\in\cT$, is defined as
  \begin{equation}\label{eq:a2}
  \widehat{a}^{g,\gamma}_t(\varphi, D)=\essinf\{a\in\cF_t: \exists H\in\cH(t)\ \text{so that}\ \alpha^g_t(\delta_t(a)+H-\delta^+_t(\varphi D))\geq\gamma\},
  \end{equation}
  and the acceptability bid price of $\varphi\geq0,\ \varphi\in L^\infty_+(\cF_t)$, shares of $D$, at level $\gamma$, at time $t\in\cT$ is defined as
  \begin{equation}\label{eq:b2}
  \widehat{b}^{g,\gamma}_t(\varphi, D)=\esssup\{b\in\cF_t: \exists H\in\cH(t)\ \text{so that}\ \alpha^g_t(\delta^+_t(\varphi D)+H-\delta_t(b))\geq\gamma\}.
  \end{equation}
\end{definition}

\begin{remark}
  If $\cH(t)$ is equal to $\{(0,\ldots,0)\}$, which means hedging is not admitted, then
  \begin{align*}
  \widehat{a}^{g,\gamma}_t(\varphi, D)
  &=\essinf\{a\in\cF_t: \alpha^g_t(\delta_t(a)-\delta^+_t(\varphi D))\geq\gamma\}\\
  &=a^{g,\gamma}_t(\varphi,D),
  \end{align*}
  and
  \begin{align*}
    \widehat{b}^{g,\gamma}_t(\varphi, D)
    &=\esssup\{b\in\cF_t: \alpha^g_t(\delta^+_t(\varphi D)-\delta_t(b))\geq\gamma\}\\
    &=b^{g,\gamma}_t(\varphi,D).
  \end{align*}
\end{remark}

\begin{remark}
  Clearly, $\widehat{a}^{g,\gamma}_t(\varphi,D)\leq a^{g,\gamma}_t(\varphi,D)$ and $\widehat{b}^{g,\gamma}_t(\varphi,D)\leq b^{g,\gamma}_t(\varphi,D)$, for $\varphi\in L^\infty_+(\cF_t)$, $D\in\cD$. The prices in the underlying market generated by \eqref{eq:a2} and \eqref{eq:b2} fall in the arbitrage free price interval with the underlying market is defined as in Section~\ref{se:samelevel}. In addition, the ask-bid spread is narrower.
  Moreover, as we will show later, similar results hold true if we just assume that the underlying market model is arbitrage free.
\end{remark}

Similar as in Section~\ref{sec:PricingOperators},  we note that $\widehat{a}^{g,\gamma}_t(\varphi, D)=\widehat{a}^{g,\gamma}_t(1,\varphi D)$ and $\widehat{b}^{g,\gamma}_t(\varphi, D)=\widehat{b}^{g,\gamma}_t(1,\varphi D)$, and thus, we will prove most of the results for $\widehat{a}^{g,\gamma}_t(1,D)$ and $\widehat{b}^{g,\gamma}_t(1,D)$, and the general case follow immediately.

\begin{proposition}\label{pr:abrep2}
  The acceptability ask and bid prices admit the following representations
\begin{equation}
\begin{aligned}
  \widehat{a}^{g,\gamma}_t(D) & =\essinf_{H\in\cH(t)}\cE_{g_\gamma}\left[\sum^T_{s=t+1}(D_s-H_s)\Big|\cF_t\right], \\
  \widehat{b}^{g,\gamma}_t(D) & =\esssup_{H\in\cH(t)}-\cE_{g_\gamma}\left[\sum^T_{s=t+1}(-H_s-D_s)\Big|\cF_t\right], \label{eq:hatBidAsk}
\end{aligned}
\end{equation}
for $D\in\cD$, at level $\gamma>0$, at time $t\in\cT$.
\end{proposition}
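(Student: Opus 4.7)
The plan is to mimic the derivation of the representation in Theorem~\ref{th:askbid1}.P1 and simply track the extra hedging term $H$ throughout. I will prove the ask case in detail; the bid case is entirely analogous.

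First I would translate the acceptability constraint into a $g$-expectation constraint. As shown in the proof of Theorem~\ref{th:askbid1}.P1 (relying on Assumption G3), for a dividend stream $X\in\cD$ one has the equivalence
\[
\alpha^g_t(X)\geq\gamma \quad \Longleftrightarrow \quad \cE_{g_\gamma}\Big[-\sum^T_{s=t}X_s\Big|\cF_t\Big]\leq 0.
\]
Applying this to $X=\delta_t(a)+H-\delta^+_t(D)$, and then using cash-additivity of $g$-expectation (Proposition~\ref{pr:gexp4}(v)), the condition appearing in \eqref{eq:a2} rewrites as
\[
-a+\cE_{g_\gamma}\Big[\sum^T_{s=t+1}(D_s-H_s)\Big|\cF_t\Big]\leq 0,
\]
equivalently $a\geq \rho(H)$, where I set $\rho(H):=\cE_{g_\gamma}\big[\sum^T_{s=t+1}(D_s-H_s)\big|\cF_t\big]\in L^2(\cF_t)$ for $H\in\cH(t)$.

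Second, I would identify the essential infimum of the resulting set $S:=\{a\in\cF_t:\exists H\in\cH(t),\; a\geq \rho(H)\}$ with $\essinf_{H\in\cH(t)}\rho(H)$. For any $H\in\cH(t)$ the random variable $\rho(H)$ itself lies in $S$ (take $a=\rho(H)$), so $\essinf S\leq \rho(H)$ a.s.\ for every $H$, and hence $\essinf S \leq \essinf_{H\in\cH(t)}\rho(H)$. Conversely, any $a\in S$ dominates some $\rho(H)$, and thus dominates $\essinf_{H\in\cH(t)}\rho(H)$, which shows $\essinf_{H\in\cH(t)}\rho(H)$ is an a.s.\ lower bound of $S$ and therefore $\essinf_{H\in\cH(t)}\rho(H)\leq \essinf S$. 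Combining the two inequalities yields the first identity in \eqref{eq:hatBidAsk}.

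Third, for the bid price I would repeat the same two steps starting from \eqref{eq:b2}: the acceptability condition $\alpha^g_t(\delta^+_t(D)+H-\delta_t(b))\geq\gamma$ is equivalent, via G3 and cash-additivity, to
\[
b\leq -\cE_{g_\gamma}\Big[\sum^T_{s=t+1}(-H_s-D_s)\Big|\cF_t\Big],
\]
and the essential supremum of the set of admissible $b$'s over all $H\in\cH(t)$ is exactly $\esssup_{H\in\cH(t)}\big(-\cE_{g_\gamma}[\sum^T_{s=t+1}(-H_s-D_s)|\cF_t]\big)$, by the dual argument to the one above.

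No step is genuinely difficult: the only subtlety worth flagging is the interchange of $\essinf$ over the scalar parameter $a$ with $\essinf$ over $H$, which is handled cleanly once one notices that $\rho(H)$ itself belongs to $S$.
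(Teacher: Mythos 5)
Your proposal is correct and follows essentially the same route as the paper: translate the acceptability constraint via G3 and cash-additivity into the threshold $a\geq\rho(H)$ with $\rho(H):=\cE_{g_\gamma}\big[\sum^T_{s=t+1}(D_s-H_s)\big|\cF_t\big]$, then identify the essential infimum of the admissible set with $\essinf_{H\in\cH(t)}\rho(H)$. The only difference is that you handle the harder direction directly by observing that $\rho(H)$ itself belongs to the admissible set, whereas the paper proves it by contradiction via an intermediate random variable $b$; your version is, if anything, cleaner.
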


\begin{proof}
  We show the proof of first equality;  the proof for the bid price is similar.

  From the definition of $\widehat{a}^{g,\gamma}_t$, we get that
  \begin{align*}
    \widehat{a}^{g,\gamma}_t(D)=\essinf\left\{a\in\cF_t: \exists H\in\cH(t),  \cE_{g_\gamma}\left[\sum^T_{s=t+1}(D_s-H_s)\Big|\cF_t\right]\leq a\right\},
  \end{align*}
and consequently
  \begin{align}\label{eq:ask2}
    \essinf_{H\in\cH(t)}\cE_{g_\gamma}\left[\sum^T_{s=t+1}(D_s-H_s)\Big|\cF_t\right]\leq\essinf\left\{a\in\cF_t: \exists H\in\cH(t), \cE_{g_\gamma}\left[\sum^T_{s=t+1}(D_s-H_s)\Big|\cF_t\right]\leq a\right\}.
  \end{align}
  To prove the converse inequality, we show that strict inequality in \eqref{eq:ask2} does not hold true. Assume that on some set $A\in\cF_t$, $\bP(A)>0$, we have that
  \begin{align*}
    \essinf_{H\in\cH(t)}\cE_{g_\gamma}\left[\sum^T_{s=t+1}(D_s-H_s)\Big|\cF_t\right]<\essinf\left\{a\in\cF_t: \exists H\in\cH(t),  \ \cE_{g_\gamma}\left[\sum^T_{s=t+1}(D_s-H_s)\Big|\cF_t\right]\leq a\right\}.
  \end{align*}
  Then,  there exists an $H'\in\cH$, such that on $A$
  \begin{align*}
    \cE_{g_\gamma}\left[\sum^T_{s=t+1}(D_s-H'_s)\Big|\cF_t\right]<\essinf\left\{a\in\cF_t: \exists H\in\cH(t), \
     \cE_{g_\gamma}\left[\sum^T_{s=t+1}(D_s-H_s)\Big|\cF_t\right]\leq a\right\}.
  \end{align*}
  Consider $b\in\cF_t$, such that on set $A$
  \begin{align}\label{eq:ask2_1}
    \cE_{g_\gamma}\left[\sum^T_{s=t+1}(D_s-H'_s)\Big|\cF_t\right]<b<\essinf\left\{a\in\cF_t: \exists H\in\cH(t), \
     \cE_{g_\gamma}\left[\sum^T_{s=t+1}(D_s-H_s)\Big|\cF_t\right]\leq a\right\}.
  \end{align}
Then, we have that $\1_Ab\in\left\{a\in\cF_t: \exists H\in\cH(t), \   \1_A\cE_{g_\gamma}\left[\sum^T_{s=t+1}(D_s-H_s)\Big|\cF_t\right]\leq a\right\}$.
  Hence, for almost all $\omega\in A$, we get $\widehat{a}^{g,\gamma}_t(D)(\omega)\leq b(\omega)$.
  However, by \eqref{eq:ask2_1}, we also have that  $\widehat{a}^{g,\gamma}_t(D)(\omega)>b(\omega)$ for such $\omega$'s, that leads to a contradiction.
     This concludes the proof.

\end{proof}

Next, we introduce the concept of good-deals for sets of cashflows, which plays an essential role in derivation of fundamental properties of the ask and bid prices.
Towards this end, let $(\rho^\gamma_t)_{x>0}$ be an increasing  family of dynamic risk measures:

\begin{definition}
  \textit{A good-deal} for $\cH(t)$, at time $t\in\cT$, and at level $\gamma>0$,  is a cashflow $H\in\cH(t)$, such that   $\rho^\gamma_t(H)(\omega)<0$, fop $\omega\in A$, for some $A\in\cF_t$, and $\bP(A)>0$. Respectively, we say that the \textit{no-good-deal condition (NGD)} holds true for $\cH(t)$, at time $t\in\cT$, and at level $\gamma>0$, if $\rho^\gamma_t(H)\geq0$ for any $H\in\cH(t)$.
\end{definition}

Similar to no-arbitrage (NA) condition for $\cH^0(t)$, we define the NA condition for the set  $\cH(t)$, that is,  there does not exist $H\in\cH(t)$ such that $\sum^T_{s=t+1}H_s\geq0$ and $\bP(\sum^T_{s=t+1}H_s>0)>0$.
As next result shows, there exists a direct relationship between no-good-deal condition and no-arbitrage condition.

\begin{proposition}\label{prop:NGD-NA}
  Let $g=(g_x)_{x>0}$ be a family of drivers that satisfy Assumption~G, and let $(\rho^{g_x}_t)_{x>0}$ be the corresponding family of dynamic risk measures.
  If NGD holds for $\cH(t)$ at level $\gamma>0$, and $t\in\cT$, then NA also holds for $\cH(t)$.
\end{proposition}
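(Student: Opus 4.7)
The plan is to proceed by contrapositive: assume NA fails for $\cH(t)$ and construct a good-deal, contradicting NGD at level $\gamma$.

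Suppose there exists $H\in\cH(t)$ with $\sum_{s=t+1}^{T}H_s\geq 0$ and $\bP\!\left(\sum_{s=t+1}^{T}H_s>0\right)>0$. Since $H_s=0$ for $s\leq t$ (because elements of $\cH(t)$ vanish on $\{0,\ldots,t\}$), the definition of $\rho^{g_\gamma}$ in \eqref{eq:defRhoG} gives
\[
\rho^{g_\gamma}_t(H)=\cE_{g_\gamma}\!\left[-\sum_{s=t}^{T}H_s\,\Big|\,\cF_t\right]=\cE_{g_\gamma}\!\left[-\sum_{s=t+1}^{T}H_s\,\Big|\,\cF_t\right].
\]
Set $X^1:=0$ and $X^2:=-\sum_{s=t+1}^{T}H_s\leq 0$. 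Then $X^1\geq X^2$, so by the monotonicity part of Proposition~\ref{pr:gexp4}(ii) and part (i) of that proposition,
\[
\rho^{g_\gamma}_t(H)=\cE_{g_\gamma}[X^2|\cF_t]\leq\cE_{g_\gamma}[X^1|\cF_t]=\cE_{g_\gamma}[0|\cF_t]=0.
\]

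Now invoke NGD at level $\gamma$, which stipulates $\rho^{g_\gamma}_t(H)\geq 0$ for every $H\in\cH(t)$. Combined with the above inequality, this forces $\rho^{g_\gamma}_t(H)=0$, i.e.\ $\1_A\cE_{g_\gamma}[X^1|\cF_t]=\1_A\cE_{g_\gamma}[X^2|\cF_t]$ for $A=\Omega\in\cF_t$. The strict comparison clause of Proposition~\ref{pr:gexp4}(ii) then yields $\1_A X^1=\1_A X^2$, that is $\sum_{s=t+1}^{T}H_s=0$ almost surely. This contradicts the assumption $\bP\!\left(\sum_{s=t+1}^{T}H_s>0\right)>0$, completing the proof.

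There is no real obstacle: the argument is a one-line application of the strict monotonicity of $g$-expectation (which was the reason for proving the strict comparison theorem in Section~2). The only mild care needed is to note that elements of $\cH(t)$ are supported on $\{t+1,\ldots,T\}$, so that $\rho^{g_\gamma}_t(H)$ depends only on the future cumulative cash-flow; after that, the equivalence of NGD and NA reduces to the statement that $\cE_{g_\gamma}[\,\cdot\,|\cF_t]$ is strictly monotone, which is exactly the content of Proposition~\ref{pr:gexp4}(ii) under the standing regularity assumption on the drivers $g_x$.
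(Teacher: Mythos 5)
Your proposal is correct and follows essentially the same route as the paper: both arguments reduce NGD $\Rightarrow$ NA to the strict comparison property of the $g$-expectation applied to $-\sum_{s=t+1}^{T}H_s$ versus $0$ (the paper invokes Theorem~\ref{th:comp} directly to get $\cE_{g_\gamma}[-\sum_{s=t+1}^T H'_s|\cF_t]\leq 0$ with strict inequality on a set of positive probability, while you first deduce $\rho^{g_\gamma}_t(H)=0$ from NGD and then use the strict-equality clause of Proposition~\ref{pr:gexp4}(ii) to force $\sum_{s=t+1}^T H_s=0$ a.s.). The two presentations are logically interchangeable, and your handling of the shift from $\sum_{s=t}^T$ to $\sum_{s=t+1}^T$ is a correct minor bookkeeping point.
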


\begin{proof}
  If NGD holds for $\cH(t)$ at level $\gamma>0$, then we have
  \begin{equation}\label{eq:NGD-Expect}
  \cE_{g_\gamma}\big[-\sum^T_{s=t+1}H_s\big|\cF_t\big]\geq0,
  \end{equation}
  for all  $H\in\cH(t)$.
  Assume that there is an arbitrage opportunity at time $t$, i.e. there exists $H'\in\cH(t)$, $A\in\cF_T$, such that $\sum^T_{s=t+1}H'_s\geq0$, $\bP(A)>0$ and $\sum^T_{s=t+1}H'_s(\omega)>0$, $\omega\in A$.
  By Theorem~\ref{th:comp}, we get that
  $$
  \cE_{g_\gamma}\Big[-\sum^T_{s=t+1}H'_s\Big|\cF_t\Big]\leq0,
  $$
  and
  $$
  \cE_{g_\gamma}\Big[-\sum^T_{s=t+1}H'_s\Big|\cF_t\Big](\omega)<0, \quad \omega\in A.
  $$
  which contradicts \eqref{eq:NGD-Expect}.
  This concludes the proof.
\end{proof}

\begin{remark}
  In \cite{BCIR2012}, the authors prove a similar result to Proposition~\ref{prop:NGD-NA} that corresponds to the case of dynamic coherent risk measure, that shows that NA for $\cH(t)$ is equivalent to NGD  at the some level $\gamma>0$.
  Although we are able to show only one implication in our setup,  the notion of NGD still plays an essential role in the study of properties of the proposed acceptability ask and bid prices. In particular, we will show that under NGD condition, the acceptability bid and ask prices satisfy properties similar to those in Theorem~\ref{th:askbid1}.
\end{remark}

Moreover, if we assume that the acceptability prices are computed by using an acceptability level $\gamma>0$, and if the NGD condition does not hold true at this level, then there exists some $H\in\cH(t)$ and $A\in\cF_t$ such that $\bP(A)>0$, and $\rho^{g_\gamma}_t(H)<0$, on $A$.
   Let us consider the ask price of cashflow $(0,\ldots,0)$.
   According to Proposition~\ref{pr:abrep2}, we immediately get that $\widehat{a}^{g,\gamma}_t(0)(\omega)<0$, on $A$, i.e. the price of zero cashflow is not equal to zero, which leads to an inconsistent pricing theory.

\begin{proposition}\label{pr:askbid2}
  Assume that NGD holds for $\cH(t)$ at level $\gamma>0$ and some fixed $t\in\cT$.
  Then, for any $D\in\cD$, we have $\widehat{a}^{g,\gamma}_t(D)\geq \widehat{b}^{g,\gamma}_t(D)$.
\end{proposition}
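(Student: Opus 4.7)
The plan is to bound the ask price from below by the bid price using a symmetric convexity argument, after first establishing that $\cH(t)$ is closed under convex combinations. Starting from the representations in Proposition~\ref{pr:abrep2}, I would fix arbitrary $H^1,H^2\in\cH(t)$ and set $X^1:=\sum_{s=t+1}^{T}(D_s-H^1_s)$ and $X^2:=\sum_{s=t+1}^{T}(-D_s-H^2_s)$, noting that $\tfrac12(X^1+X^2)=-\sum_{s=t+1}^{T}\tfrac{H^1_s+H^2_s}{2}$. Since $g_\gamma$ is a convex driver, applying the convexity of $\cE_{g_\gamma}[\,\cdot\,\mid\cF_t]$ from Proposition~\ref{pr:gexp4}\,(vi) with $\lambda=1/2$ yields
\[
\cE_{g_\gamma}[X^1\mid\cF_t]+\cE_{g_\gamma}[X^2\mid\cF_t]\;\geq\;2\,\rho^{g_\gamma}_t\!\left(\tfrac{H^1+H^2}{2}\right).
\]

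The next step would be to verify that $\tfrac12(H^1+H^2)\in\cH(t)$. Writing $H^i_s=\Delta V_s(\phi^i)-\Delta Z^i_s$ with $\phi^i\in\cS(t)$ and $Z^i\in\cL_+(t)$, I would define $\tilde\phi^{l,j}_s:=\tfrac12(\phi^{1,l,j}_s+\phi^{2,l,j}_s)$ and $\tilde\phi^{s,j}_s:=\tfrac12(\phi^{1,s,j}_s+\phi^{2,s,j}_s)$ for $s>t$ (and zero for $s\leq t$), and then determine $\tilde\phi^0$ recursively so that the self-financing condition \eqref{eq:self-fin} holds. Property~(M5) controls the rebalancing cost increments $\Delta\tilde\phi^{l/s,j}_{s+1}$ whose sign agrees across $\phi^1$ and $\phi^2$, while property~(M6) takes care of the mixed-sign case; together they show that the aggregate rebalancing cost of $\tilde\phi$ is dominated by the $\tfrac12$-average of the individual costs. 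A forward induction in $s$ then gives $\tilde\phi^0_s\geq\tfrac12(\phi^{1,0}_s+\phi^{2,0}_s)$, and a second application of (M5) (concavity of $P^{\bid}$ and convexity of $P^{\ask}$) to the non-negative long/short positions appearing in $V_s(\tilde\phi)$ yields $V_s(\tilde\phi)\geq\tfrac12[V_s(\phi^1)+V_s(\phi^2)]$. Consequently,
\[
\sum_{s=t+1}^{u}\tfrac12(H^1_s+H^2_s)=\tfrac12[V_u(\phi^1)+V_u(\phi^2)]-\tfrac12(Z^1_u+Z^2_u)\leq V_u(\tilde\phi),
\]
so $\tfrac12(H^1+H^2)$ is super-hedged at zero cost by $\tilde\phi$, and hence lies in $\cH(t)$.

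Finally, NGD applied to $\tfrac12(H^1+H^2)\in\cH(t)$ gives $\rho^{g_\gamma}_t(\tfrac12(H^1+H^2))\geq 0$, so $\cE_{g_\gamma}[X^1\mid\cF_t]\geq-\cE_{g_\gamma}[X^2\mid\cF_t]$ for arbitrary $H^1,H^2\in\cH(t)$. Taking $\essinf$ over $H^1$ on the left and $\esssup$ over $H^2$ on the right then yields $\widehat{a}^{g,\gamma}_t(D)\geq\widehat{b}^{g,\gamma}_t(D)$. The hard part will be the convexity of $\cH(t)$: because pricing is non-homogeneous in share size and because transaction costs are charged separately on long and short positions, maintaining the self-financing property under a convex combination of trading strategies is nontrivial, and this is precisely the structural role played by properties (M5) and (M6) of the underlying market. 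Once this convexity is in place, the NGD hypothesis together with the convexity of the $g$-expectation closes the argument immediately.
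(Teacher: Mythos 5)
Your proof is correct and rests on exactly the same two ingredients as the paper's: convexity of $\cE_{g_\gamma}[\,\cdot\,\mid\cF_t]$ applied to the half–half combination of the two cashflows, and convexity of $\cH(t)$ (the paper's Proposition~\ref{pr:convexset}, proved via Proposition~\ref{pr:convcombi} using (M5)–(M6), just as you sketch), followed by NGD. The only difference is presentational: the paper argues by contradiction, picking $H^1,H^2$ that nearly realize the $\esssup$/$\essinf$ around a midpoint, whereas your direct formulation fixes arbitrary $H^1,H^2$ and passes to $\essinf/\esssup$ at the end, which sidesteps that selection step.
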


\begin{proof}
  We prove the statement by contradiction.
  Assume that there exists some $D\in\cD$, $A\in\cF_t$, $\bP(A)>0$, such that $\widehat{b}^{g,\gamma}_t(D)(\omega)>\widehat{a}^{g,\gamma}_t(D)(\omega)$, on $A$.
  Then, by Proposition~\ref{pr:abrep2}, we have that
  \begin{align*}
    \esssup_{H\in\cH(t)}-\cE_{g_\gamma}\Big[\sum^T_{s=t+1}(-H_s-D_s)\Big|\cF_t\Big](\omega)>\essinf_{H\in\cH(t)}\cE_{g_\gamma}\Big[\sum^T_{s=t+1}(D_s-H_s)\Big|\cF_t\Big](\omega),
  \end{align*}
  where $\omega\in A$.
  Let $M = (\widehat{b}^{g,\gamma}_t(D) + \widehat{a}^{g,\gamma}_t(D))/2$.
  Then, there exists $H^1, H^2\in\cH(t)$ such that
  $$
  -\cE_{g_\gamma}\Big[\sum^T_{s=t+1}(-H^1_s-D_s)\Big|\cF_t\Big](\omega)>M(\omega)>\cE_{g_\gamma}\Big[\sum^T_{s=t+1}(D_s-H^2_s)\Big|\cF_t\Big](\omega),
  $$
  for $\omega\in A$.
  Hence, we get that
  \begin{equation}\label{eq:inter2}
  \cE_{g_\gamma}\Big[\sum^T_{s=t+1}(-H^1_s-D_s)\Big|\cF_t\Big](\omega)+\cE_{g_\gamma}\Big[\sum^T_{s=t+1}(D_s-H^2_s)\Big|\cF_t\Big](\omega)<0,\ \omega\in A.
  \end{equation}
  On the other hand, in view of Proposition~\ref{pr:gexp4}.(vi), we have
  \begin{align*}
  &\cE_{g_\gamma}\Big[\frac{1}{2}\Big(\sum^T_{s=t+1}(-H^1_s-D_s)+\sum^T_{s=t+1}(D_s-H^2_s)\Big)\Big|\cF_t\Big]\\
  \leq&\frac{1}{2}\Big(\cE_{g_\gamma}\Big[\sum^T_{s=t+1}(-H^1_s-D_s)\Big|\cF_t\Big]+\cE_{g_\gamma}\Big[\sum^T_{s=t+1}(D_s-H^2_s)\Big|\cF_t\Big]\Big),
  \end{align*}
  which combined with \eqref{eq:inter2}, and using  \eqref{eq:defRhoG}, we obtain
  $$
  \rho^{g,\gamma}_t(\frac{1}{2}(H^1+H^2))(\omega) = \cE_{g_\gamma}\Big[\frac{1}{2}\sum^T_{s=t+1}(-H^1_s-H^2_s)\Big|\cF_t\Big](\omega) <0,
  $$
  for $\omega\in A$.
  However, by Proposition~\ref{pr:convexset}, we have that $\frac{1}{2}(H^1+H^2)\in\cH(t)$, and since NGD is satisfied for all $H\in\cH(t)$, we have that $\rho^{g,\gamma}_t(\frac{1}{2}(H^1+H^2))\geq0$, which leads to a contradiction.
The proof is complete.
\end{proof}

Using the above proposition, we will show that under NGD condition the prices of cashflows from $\cH(t)$ are equal to zero.

\begin{proposition}\label{pr:0price}
 Assume that NGD holds for $\cH(t)$ at level $\gamma>0$, at time $t\in\cT$.
  Then, $\widehat{a}^{g,\gamma}_t(D)=\widehat{b}^{g,\gamma}_t(D)=0$, for any $D\in\cH(t)$.
\end{proposition}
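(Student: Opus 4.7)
The plan is to establish the chain $0 \leq \widehat{b}^{g,\gamma}_t(D) \leq \widehat{a}^{g,\gamma}_t(D) \leq 0$, by combining two one-sided bounds with Proposition~\ref{pr:askbid2} (which gives $\widehat{a}^{g,\gamma}_t(D) \geq \widehat{b}^{g,\gamma}_t(D)$ under NGD). The representation formulas from Proposition~\ref{pr:abrep2} will do most of the heavy lifting, together with the convexity of $\cH(t)$ (Proposition~\ref{pr:convexset}), the convexity of $\cE_{g_\gamma}[\,\cdot\,|\cF_t]$ from Proposition~\ref{pr:gexp4}(vi), and the strict comparison principle of Corollary~\ref{co:comp2}.

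The upper bound $\widehat{a}^{g,\gamma}_t(D) \leq 0$ is immediate. Since $D \in \cH(t)$ by hypothesis, $D$ itself is an admissible choice of $H$ in the essential infimum representation of $\widehat{a}^{g,\gamma}_t(D)$ given by Proposition~\ref{pr:abrep2}. Substituting $H = D$ yields
\[
\widehat{a}^{g,\gamma}_t(D) \leq \cE_{g_\gamma}\!\left[\sum_{s=t+1}^T (D_s - D_s)\,\Big|\,\cF_t\right] = \cE_{g_\gamma}[0\mid\cF_t] = 0,
\]
where the last equality is Proposition~\ref{pr:gexp4}(i).

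The lower bound $\widehat{b}^{g,\gamma}_t(D) \geq 0$ is the delicate half, and I would prove it by contradiction in the spirit of the proof of Proposition~\ref{pr:askbid2}. Suppose $\widehat{b}^{g,\gamma}_t(D) < 0$ on some $A \in \cF_t$ with $\bP(A) > 0$. Using the representation in Proposition~\ref{pr:abrep2}, together with the locality of $\cE_{g_\gamma}[\,\cdot\,|\cF_t]$ (Proposition~\ref{pr:gexp4}(iv)) and the fact that $\1_A H \in \cH(t)$ for every $H\in\cH(t)$ (a consequence of the $\cL^\infty(\cF_t)$-module structure of admissible strategies in $\cS(t)$), I would extract $H^*\in \cH(t)$ with $\cE_{g_\gamma}[-\sum_{s=t+1}^T (H^*_s + D_s)\mid\cF_t] > 0$ strictly on $A$. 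Now $D, H^* \in \cH(t)$, so by Proposition~\ref{pr:convexset} the midpoint $\tfrac{1}{2}(H^* + D) \in \cH(t)$; NGD then forces $\cE_{g_\gamma}[-\sum \tfrac{1}{2}(H^*_s + D_s)\mid\cF_t] \geq 0$. The contradiction is obtained by combining this with the strict inequality on $A$ via convexity of $\cE_{g_\gamma}$ and strict monotonicity (Corollary~\ref{co:comp2}) to exhibit an element of $\cH(t)$ whose $\rho^{g_\gamma}_t$-value is strictly negative on $A$, violating NGD. Chaining the three bounds then gives $0 \leq \widehat{b}^{g,\gamma}_t(D) \leq \widehat{a}^{g,\gamma}_t(D) \leq 0$, completing the proof.

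The main obstacle is precisely the lower bound on $\widehat{b}$. The symmetric candidate $H = -D$ would trivially deliver $-\cE_{g_\gamma}[0\mid \cF_t] = 0$, but $-D$ need not lie in $\cH(t)$ because the set of super-hedgeable cashflows is only convex, never closed under negation in the presence of frictions. Because $\cE_{g_\gamma}$ is convex but generally lacks positive homogeneity, one cannot freely split $g$-expectations across sums, and any contradiction must be extracted using only the admissible inequalities $\cE_{g_\gamma}[\tfrac{X+Y}{2}\mid\cF_t] \leq \tfrac{1}{2}\cE_{g_\gamma}[X\mid\cF_t] + \tfrac{1}{2}\cE_{g_\gamma}[Y\mid\cF_t]$ and $\cE_{g_\gamma}[\lambda X\mid\cF_t] \leq \lambda\cE_{g_\gamma}[X\mid\cF_t]$ for $\lambda \in [0,1]$, in careful combination with the representation $D = \Delta(V(\phi)-Z)$ implicit in $D\in\cH(t)$. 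The proof of Proposition~\ref{pr:askbid2} supplies the right template for such a combination.
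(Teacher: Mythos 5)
Your bound $\widehat{a}^{g,\gamma}_t(D)\le 0$ (take $H=D$ in the representation and use $\cE_{g_\gamma}[0\mid\cF_t]=0$) is exactly the paper's argument, and closing the chain with Proposition~\ref{pr:askbid2} is also what the paper does. The gap is in the lower bound $\widehat{b}^{g,\gamma}_t(D)\ge 0$, which you correctly single out as the delicate half but do not actually establish: your contradiction never materializes. Assuming $\widehat{b}^{g,\gamma}_t(D)<0$ on $A$ gives, for \emph{every} $H\in\cH(t)$, that $\cE_{g_\gamma}\bigl[-\sum_{s>t}(H_s+D_s)\mid\cF_t\bigr]>0$ on $A$; NGD applied to $\tfrac12(H+D)\in\cH(t)$ only yields $\cE_{g_\gamma}\bigl[-\tfrac12\sum_{s>t}(H_s+D_s)\mid\cF_t\bigr]\ge 0$, and convexity only supplies \emph{upper} bounds of the form $\cE_{g_\gamma}[-\tfrac12 X\mid\cF_t]\le\tfrac12\cE_{g_\gamma}[-X\mid\cF_t]$. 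Every quantity in sight is therefore non-negative, and no element of $\cH(t)$ with strictly negative risk is ever exhibited: the hypothesis $\widehat b<0$ and NGD are mutually consistent, not contradictory. The mechanism that makes the proof of Proposition~\ref{pr:askbid2} work is that $\widehat b>\widehat a$ supplies \emph{two} inequalities of opposite orientation whose sum cancels $D$, so convexity bounds $\rho^{g_\gamma}_t(\tfrac12(H^1+H^2))$ above by a strictly negative quantity; here only one inequality is available and $D$ enters $H+D$ with a single sign, so that cancellation is structurally unavailable.

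You should also know that this half cannot be repaired by a cleverer combination of NGD and convexity, because with the paper's definition of $\cH(t)$ the inequality $\widehat b^{g,\gamma}_t(D)\ge0$ is simply false for the free-disposal elements. Take $\phi=0$ and $Z=\delta_T(1)\in\cL_+(t)$, so that $D=\delta_T(-1)\in\cH(t)$; by cash-additivity (Proposition~\ref{pr:gexp4}(v)) one gets $\widehat a^{g,\gamma}_t(D)=\widehat a^{g,\gamma}_t(0)-1\le-1$ and $\widehat b^{g,\gamma}_t(D)=\widehat b^{g,\gamma}_t(0)-1\le-1$, since NGD forces $\widehat b^{g,\gamma}_t(0)\le\widehat a^{g,\gamma}_t(0)\le0$. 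The paper's own proof disposes of the bid bound with a bare ``Similarly,'' which conceals precisely the asymmetry you identified (the symmetric choice $H=-D$ is not in $\cH(t)$), so the defect is in the statement as much as in either proof: it holds only for $D$ such that $-D$ is itself super-hedgeable at zero cost (equivalently, $-D\in\cH(t)$), not for arbitrary $D\in\cH(t)$. Your instinct about where the difficulty sits was right; the machinery borrowed from Proposition~\ref{pr:askbid2} does not reach it.
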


\begin{proof}
  Since $D\in\cH(t)$, then $D-\delta^+_t(D)=0$, and thus
  \begin{align*}
    \widehat{a}^{g,\gamma}_t(D)=\essinf \{a\in\cF_t:\exists H\in\cH(t), \text{ s.t. } \alpha^g_t(\delta_t(a)+H-\delta^+_t(D))\geq\gamma\}\leq0.
  \end{align*}
  Similarly, $\widehat{b}^{g,\gamma}_t(D)\geq 0$, and therefore  $\widehat{a}^{g,\gamma}_t(D)\leq0\leq \widehat{b}^{g,\gamma}_t(D)$.
 On the other hand, by Proposition~\ref{pr:askbid2}, $\widehat{a}^{g,\gamma}_t(D)\geq \widehat{b}^{g,\gamma}_t(D)$.
  Hence, $\widehat{a}^{g,\gamma}_t(D)=\widehat{b}^{g,\gamma}_t(D)=0$.
\end{proof}

Next we will show that pricing the underlying securities $D\in\cM$ by the bid and ask acceptability prices defined by \eqref{eq:b2} and \eqref{eq:a2} yields the market prices $P^{\ask/\bid}$ of these securities.

\begin{proposition}
  Fix $t\in\cT$.
  Assume that NGD holds for $\cH(t)$ at level $\gamma>0$, at time $t\in\cT$. Then,
  \begin{align*}
  \widehat{a}^{g,\gamma}_t(\varphi,\widetilde{D}) & = P^{\ask}_t(\varphi,\widetilde{D}) \\
  \widehat{b}^{g,\gamma}_t(\varphi,\widetilde{D}) & = P^{\bid}_t(\varphi,\widetilde{D}),
   \end{align*} for any $\varphi\in L^\infty_+(\cF_t)$, and $\widetilde{D}\in\cM$.
\end{proposition}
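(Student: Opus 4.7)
The plan is to prove the ask equality $\widehat a^{g,\gamma}_t(\varphi,\widetilde D)=P^{\ask}_t(\varphi,\widetilde D)$ by matching upper and lower bounds, via the representation $\widehat a^{g,\gamma}_t(\varphi,\widetilde D)=\essinf_{H\in\cH(t)}\cE_{g_\gamma}\bigl[\sum_{s=t+1}^T(\varphi\widetilde D_s-H_s)\,\big|\,\cF_t\bigr]$ from Proposition~\ref{pr:abrep2}. For the upper bound I would exhibit an explicit buy-and-hold hedge $\phi^\star\in\cS(t)$: at time $t$ purchase $\varphi$ shares of $\widetilde D$ by borrowing $P^{\ask}_t(\varphi,\widetilde D)$ from the banking account, and carry the position to $T$. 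Unrolling the self-financing recursion, the intermediate dividends $\varphi\widetilde D_s$ are swept into the bank, and (using the customary vanishing of the terminal ex-dividend prices) one obtains $V_T(\phi^\star)=-P^{\ask}_t(\varphi,\widetilde D)+\varphi\sum_{s=t+1}^T\widetilde D_s$ pathwise. The induced cashflow $H^\star\in\cH^0(t)\subseteq\cH(t)$ then satisfies $\sum_{s=t+1}^T(\varphi\widetilde D_s-H^\star_s)=P^{\ask}_t(\varphi,\widetilde D)$ pathwise, and because this quantity is $\cF_t$-measurable, cash-additivity (Proposition~\ref{pr:gexp4}(v) with $X=0$) gives $\cE_{g_\gamma}[\sum_{s=t+1}^T(\varphi\widetilde D_s-H^\star_s)|\cF_t]=P^{\ask}_t(\varphi,\widetilde D)$, whence $\widehat a^{g,\gamma}_t(\varphi,\widetilde D)\le P^{\ask}_t(\varphi,\widetilde D)$.

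For the lower bound I would argue by contradiction using NGD. Suppose on some $A\in\cF_t$ with $\bP(A)>0$ we have $\widehat a^{g,\gamma}_t(\varphi,\widetilde D)<P^{\ask}_t(\varphi,\widetilde D)$; then I can select $a\in\cF_t$ and $H^1\in\cH(t)$ (coming from some $\phi^1\in\cS(t)$, $Z^1\in\cL_+(t)$) with $a<P^{\ask}_t(\varphi,\widetilde D)$ on $A$ and $\cE_{g_\gamma}[\sum_{s=t+1}^T(\varphi\widetilde D_s-H^1_s)|\cF_t]\le a$ on $A$. I would then construct $\widehat\phi\in\cS(t)$ by overlaying onto $\phi^1$ a short sale of $\varphi$ shares of $\widetilde D$ held from $t$ to $T$, adjusting the banking account (using the sub-additivity of $P^{\bid}$ in the number of shares implied by its concavity in (M5)) to preserve self-financing. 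Computing the resulting cashflow $\widehat H\in\cH^0(t)\subseteq\cH(t)$ and combining the hypothesis on $H^1$ with cash-additivity and monotonicity of $\cE_{g_\gamma}[\cdot|\cF_t]$, the goal is to conclude that $\cE_{g_\gamma}[-\sum_{s=t+1}^T\widehat H_s|\cF_t]<0$ on a subset of $A$ of positive probability, contradicting the assumed NGD at level $\gamma$.

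The bid equality $\widehat b^{g,\gamma}_t(\varphi,\widetilde D)=P^{\bid}_t(\varphi,\widetilde D)$ is obtained by the symmetric construction: the inequality $\widehat b^{g,\gamma}_t\ge P^{\bid}_t$ comes from short-selling $\varphi$ shares in the market as the explicit hedge, and the reverse inequality from a good-deal contradiction in which $\phi^1$ is overlaid with a long purchase of $\varphi$ shares of $\widetilde D$. The main obstacle is the lower bound step: because $P^{\ask}$ is only convex and $P^{\bid}$ only concave in $\varphi$ (not linear), the combined strategy $\widehat\phi$ is not literally $\phi^1+\phi^{\mathrm{short}}$ but a self-financing modification of it, so the liquidation values do not decompose additively. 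When $\phi^1$ already holds positions in $\widetilde D$, the super-additivity of $P^{\ask}$ and sub-additivity of $P^{\bid}$ generate extra trading costs that must be absorbed (for instance, by first invoking (M6) to net out the existing $\widetilde D$ positions of $\phi^1$ against the overlaid short) in order for the final value process of $\widehat\phi$ to dominate the target $V_T(\phi^1)+P^{\bid}_t(\varphi,\widetilde D)-\varphi\sum_{s>t}\widetilde D_s$ pathwise, which is precisely what is required for the NGD contradiction.
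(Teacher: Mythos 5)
Your upper bound $\widehat a^{g,\gamma}_t(\varphi,\widetilde D)\le P^{\ask}_t(\varphi,\widetilde D)$ is correct and is essentially the first half of the paper's argument: the buy-and-hold cashflow $\overline H=(0,\ldots,0,\varphi\widetilde D_{t+1}-P^{\ask}_t(\varphi,\widetilde D),\varphi\widetilde D_{t+2},\ldots,\varphi\widetilde D_T)$ lies in $\cH^0(t)\subseteq\cH(t)$, and plugging $H=\overline H$ into the representation of Proposition~\ref{pr:abrep2} together with cash-additivity gives the bound; note this direction does not even use NGD.

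The lower bound is where your proposal breaks down, and the gap is not the nonlinearity/netting issue you flag but something more basic: the reversed trade is simply not available in this market. Your overlaid ``short sale of $\varphi$ shares of $\widetilde D$'' generates the cashflow $+P^{\bid}_t(\varphi,D^{\bid,i})$ at time $t$ followed by $-\varphi D^{\bid,i}_s$, whereas what must be cancelled is $-P^{\ask}_t(\varphi,D^{\ask,i})$ followed by $+\varphi D^{\ask,i}_s$. The dividend legs do not cancel because the model explicitly allows $D^{\ask,i}\ne D^{\bid,i}$; and even if they did, the deterministic residual is $P^{\bid}_t-a$ with $a<P^{\ask}_t$, which has no sign since $P^{\bid}_t\le P^{\ask}_t$ --- the hypothesized violation $\widehat a^{g,\gamma}_t<P^{\ask}_t$ is of exactly the same order as the bid--ask spread you pay to unwind, so no element of $\cH(t)$ with $\rho^{g_\gamma}_t<0$ emerges, and (M5)/(M6) cannot repair this. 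The paper's proof avoids reversing the trade altogether: it observes that $\overline H\in\cH(t)$ and invokes Proposition~\ref{pr:0price}, whose mechanism (via Proposition~\ref{pr:askbid2}) is to take a hedge $H^1$ nearly attaining the ask-infimum and a hedge $H^2$ nearly attaining the bid-supremum for $\overline H$ and form their average; the $\pm\overline H$ legs cancel algebraically inside the convex combination, so only the convexity of $\cH(t)$ (Proposition~\ref{pr:convexset}) and of $\cE_{g_\gamma}[\,\cdot\,|\cF_t]$ are needed to manufacture the good deal, after which the result follows by pulling the $\cF_t$-measurable constant $P^{\ask}_t(\varphi,\widetilde D)$ out of the $\essinf$. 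You should replace your lower-bound step with this route; the same remark applies to the bid equality.
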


\begin{proof}
  Since $\widetilde{D}\in\cM$, then $\overline{H}:=(0,\ldots,0,\varphi \widetilde{D}_{t+1}-P^{\text{ask}}_t(\varphi,\widetilde{D}), \varphi \widetilde{D}_{t+2},\ldots,\varphi \widetilde{D}_T)\in\cH(t)$. Therefore, by Proposition \ref{pr:0price}, we get $\widehat{a}^{g,\gamma}_t(\overline{H})=0$, which is equivalent to
  $$
  \essinf_{H\in\cH(t)}\cE_{g_\gamma}\Big[\sum^T_{s=t+1}(\varphi \widetilde{D}_s-H_s)-P^{\text{ask}}_t(\varphi,\widetilde{D})\Big|\cF_t\Big]=0.
  $$
  Note that $P^{\text{ask}}_t(\varphi,\widetilde{D})$ is $\cF_t$-measurable, and it does not depend on the argument $H\in\cH(t)$ over which the $\essinf$ is taken.
  Hence, we immediately get that
   $$
  P^{\text{ask}}_t(\varphi,\widetilde{D})=\essinf_{H\in\cH(t)}\cE_{g_\gamma}\Big[\sum^T_{s=t+1}(\varphi \widetilde{D}_s-H_s)\Big|\cF_t\Big]=\widehat{a}^{g,\gamma}_t(\varphi,\widetilde{D}).
  $$
  The proof for the bid price is analogous.
\end{proof}

In \cite{MadanCherny2010} and \cite{BCIR2012}, the authors consider ask-bid prices produced by coherent acceptability indices.
However, in some literature there are arguments that coherent acceptability indices fail to take liquidity risk into account.
In our set-up, acceptability indices are assumed to be quasi-concave and we will show that corresponding ask-bid prices reflect liquidity risk as in the following proposition.

\begin{proposition}\label{pr:convexity}
  The acceptability ask and bid prices satisfy
  \begin{align}
    \widehat{a}^{g,\gamma}_t(\lambda D^1+(1-\lambda)D^2)&\leq\lambda \widehat{a}^{g,\gamma}_t(D^1)+(1-\lambda)\widehat{a}^{g,\gamma}_t(D^2),\\
    \widehat{b}^{g,\gamma}_t(\lambda D^1+(1-\lambda)D^2)&\geq\lambda \widehat{b}^{g,\gamma}_t(D^1)+(1-\lambda)\widehat{b}^{g,\gamma}_t(D^2), \label{eq:bidConcave2}
  \end{align}
  for $D^1, D^2\in\cD$, $\lambda\in\cF_t$, $0\leq\lambda\leq1$, at level $\gamma>0$, at time $t\in\cT$.
\end{proposition}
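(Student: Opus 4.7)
The plan is to leverage the dual/envelope representation of the hedged prices from Proposition~\ref{pr:abrep2} together with the convexity of the $g$-expectation from Proposition~\ref{pr:gexp4}.(vi) (recall that $g_\gamma$ is a convex driver under Assumption~G) and the convexity of the set $\cH(t)$ (Proposition~\ref{pr:convexset}). First I would fix $D^1,D^2\in\cD$, $\lambda\in L^\infty(\cF_t)$ with $0\leq \lambda\leq 1$, and pick arbitrary $H^1,H^2\in\cH(t)$. By the convexity of $\cH(t)$, the cashflow $\widetilde H:=\lambda H^1+(1-\lambda)H^2$ also lies in $\cH(t)$, and one has the algebraic identity
\begin{equation*}
\sum_{s=t+1}^{T}\bigl(\lambda D^1_s+(1-\lambda)D^2_s-\widetilde H_s\bigr)=\lambda\sum_{s=t+1}^{T}(D^1_s-H^1_s)+(1-\lambda)\sum_{s=t+1}^{T}(D^2_s-H^2_s).
\end{equation*}

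Applying Proposition~\ref{pr:abrep2} together with convexity of $g$-expectation, I would then write
\begin{align*}
\widehat{a}^{g,\gamma}_t\bigl(\lambda D^1+(1-\lambda)D^2\bigr)
&\leq \cE_{g_\gamma}\!\Bigl[\sum_{s=t+1}^T\!\bigl(\lambda D^1_s+(1-\lambda)D^2_s-\widetilde H_s\bigr)\Big|\cF_t\Bigr]\\
&\leq \lambda\,\cE_{g_\gamma}\!\Bigl[\sum_{s=t+1}^T(D^1_s-H^1_s)\Big|\cF_t\Bigr]+(1-\lambda)\,\cE_{g_\gamma}\!\Bigl[\sum_{s=t+1}^T(D^2_s-H^2_s)\Big|\cF_t\Bigr].
\end{align*}
Since $H^1,H^2$ were arbitrary, I would then pass to the essential infimum on the right-hand side, which yields the desired inequality. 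The concavity property \eqref{eq:bidConcave2} for the bid price follows by the symmetric argument applied to $B(H):=-\cE_{g_\gamma}[\sum_s(-H_s-D_s)|\cF_t]$, again using the convexity of $g_\gamma$ and of $\cH(t)$.

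The main technical subtlety — and likely the only delicate step — is the passage to the essential infimum over two independent variables $H^1,H^2$: the naive bound ``$\essinf$ of a sum $\leq$ sum of $\essinf$s'' does not hold in general. To justify it here, I would first fix $H^2$ and take $\essinf$ in $H^1$ on the right, using the fact that the family $\{\cE_{g_\gamma}[\sum_s(D^1_s-H^1_s)|\cF_t]:H^1\in\cH(t)\}$ is downward directed. Downward directedness follows from locality (Proposition~\ref{pr:gexp4}.(iv)) combined with the construction, for any $H^1,H^{1\prime}\in\cH(t)$ and $A:=\{\cE_{g_\gamma}[\sum(D^1-H^1)|\cF_t]\leq\cE_{g_\gamma}[\sum(D^1-H^{1\prime})|\cF_t]\}\in\cF_t$, of the glued strategy $\1_A\phi^1+\1_{A^c}\phi^{1\prime}\in\cS(t)$ (this glued strategy is still self-financing because the pasting is done at time $t$ on an $\cF_t$-set, and $\cH(t)$ is closed under such pasting). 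Hence there exists a sequence $H^1_n\in\cH(t)$ with $\cE_{g_\gamma}[\sum_s(D^1_s-H^1_{n,s})|\cF_t]\downarrow\widehat{a}^{g,\gamma}_t(D^1)$ a.s., and monotone passage to the limit in the inequality above gives
\begin{equation*}
\widehat{a}^{g,\gamma}_t\bigl(\lambda D^1+(1-\lambda)D^2\bigr)\leq \lambda\,\widehat{a}^{g,\gamma}_t(D^1)+(1-\lambda)\,\cE_{g_\gamma}\!\Bigl[\sum_{s=t+1}^T(D^2_s-H^2_s)\Big|\cF_t\Bigr].
\end{equation*}
Repeating the same downward-directed argument in $H^2$ yields the desired convexity, and the analogous upward-directed argument takes care of the bid inequality \eqref{eq:bidConcave2}.
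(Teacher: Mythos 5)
Your proof is correct and follows the same route as the paper: the representation of Proposition~\ref{pr:abrep2}, convexity of $\cE_{g_\gamma}[\,\cdot\,|\cF_t]$ from Proposition~\ref{pr:gexp4}.(vi), and convexity of $\cH(t)$ from Proposition~\ref{pr:convexset}. The only divergence is in the step you rightly single out as delicate. The paper simply writes
$\lambda \widehat{a}^{g,\gamma}_t(D^1)+(1-\lambda)\widehat{a}^{g,\gamma}_t(D^2)=\essinf_{H^1,H^2\in\cH(t)}\bigl(\lambda\cE_{g_\gamma}[\cdots]+(1-\lambda)\cE_{g_\gamma}[\cdots]\bigr)$
without comment, whereas you justify the interchange via downward-directedness and a pasting construction. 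Two remarks on that. First, the directedness machinery is more than is needed here: since the two infima run over \emph{independent} copies of $\cH(t)$, the standard fact that an essential infimum is attained along a countable subfamily already gives
$\essinf_{H^1,H^2}\bigl(\lambda\cE_{g_\gamma}[\cdots H^1]+(1-\lambda)\cE_{g_\gamma}[\cdots H^2]\bigr)=\lambda\essinf_{H^1}\cE_{g_\gamma}[\cdots]+(1-\lambda)\essinf_{H^2}\cE_{g_\gamma}[\cdots]$
by a pointwise decoupling of the double infimum, with no monotone sequences required. Second, your pasting argument ($\1_A\phi^1+\1_{A^c}\phi^{1\prime}$) quietly requires locality of the underlying market's pricing operators $P^{\ask},P^{\bid}$ so that the glued strategy remains self-financing; this holds for the markets constructed in Section~4.4 but is not among the axioms (M1)--(M6) assumed in Section~5, so if you keep the directedness route you should state that locality assumption explicitly (or switch to the countable-subfamily argument, which avoids it).
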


\begin{proof}
  Due to Proposition~\ref{pr:abrep2}, we have that
  By convexity of $g$-expectation, for any $H^1, H^2\in\cH(t)$, $\lambda\in \cF_t$,  we have that
  \begin{align*}
    \lambda\cE_{g_\gamma}\Big[\sum^T_{s=t+1}(D^1_s-H^1_s)\Big|\cF_t\Big]+ & (1-\lambda)\cE_{g_\gamma}\Big[\sum^T_{s=t+1}(D^2_s-H^2_s)\Big|\cF_t\Big] \\
    \geq &\cE_{g_\gamma}\Big[\sum^T_{s=t+1}(\lambda D^1_s+(1-\lambda)D^2_s-H^3_s)\Big|\cF_t\Big],
  \end{align*}
  where $H^3=\lambda H^1+(1-\lambda)H^2$. Due to  convexity of $\cH(t)$ (see Proposition~\ref{pr:convexset}), we have that $H^3\in\cH(t)$.
  Consequently, using  Proposition~\ref{pr:abrep2}, we continue
   \begin{align*}
    \lambda \widehat{a}^{g,\gamma}_t(D^1) & +(1-\lambda)\widehat{a}^{g,\gamma}_t(D^2) \\
    =&\essinf_{H^1, H^2\in\cH(t)}\Big(\lambda\cE_{g_\gamma}\Big[\sum^T_{s=t+1}(D^1_s-H^1_s)\Big|\cF_t\Big]+(1-\lambda)\cE_{g_\gamma}\Big[\sum^T_{s=t+1}(D^2_s-H^2_s)\Big|\cF_t\Big]\Big)\\
    \geq&\essinf_{H^1, H^2\in\cH(t)}\cE_{g_\gamma}\Big[\sum^T_{s=t+1}(\lambda D^1_s+(1-\lambda)D^2_s-H^3_s)\Big|\cF_t\Big]\\
    \geq&\essinf_{H\in\cH(t)}\cE_{g_\gamma}\Big[\sum^T_{s=t+1}(\lambda D^1_s+(1-\lambda)D^2_s-H_s)\Big|\cF_t\Big]\\
    =&\widehat{a}^{g,\gamma}_t(\lambda D^1+(1-\lambda)D^2).
  \end{align*}
  The proof of \eqref{eq:bidConcave2} is similar. This conclude the proof.

   \end{proof}

As an immediate consequence of Proposition~\ref{pr:0price} and Proposition~\ref{pr:convexity} we deduce  the following result about market impact on acceptability ask and bid prices. Namely we show that the acceptability bid and ask prices are not homogenous in number of shares traded - larger number of shares one trades, a higher price per share it will cost.

\begin{corollary}
  Assume that NGD holds for $\cH(t)$ at level $\gamma>0$, $t\in\cT$.
  Then, the acceptability ask and bid prices satisfy the following inequalities
  \begin{align*}
    &\widehat{a}^{g,\gamma}_t(\lambda\varphi,D)\leq\lambda \widehat{a}^{g,\gamma}_t(\varphi,D),\ \widehat{b}^{g,\gamma}_t(\lambda\varphi,D)\geq\lambda \widehat{b}^{g,\gamma}_t(\varphi,D),\ \lambda,\varphi\in L^\infty_+(\cF_t),\ 0\leq\lambda\leq1;\\
    &\widehat{a}^{g,\gamma}_t(\lambda\varphi,D)\geq\lambda \widehat{a}^{g,\gamma}_t(\varphi,D),\ \widehat{b}^{g,\gamma}_t(\lambda\varphi,D)\leq\lambda \widehat{b}^{g,\gamma}_t(\varphi,D),\ \lambda,\varphi\in L^\infty_+(\cF_t),\ \lambda\geq1,
  \end{align*}
  for $\gamma>0$, $t\in\cT$.
\end{corollary}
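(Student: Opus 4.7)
The plan is to obtain both pairs of inequalities as quick corollaries of the two cited propositions, exactly as the statement suggests. First, I would establish the case $0\leq\lambda\leq 1$ for the ask price. Using the identity $\widehat{a}^{g,\gamma}_t(\lambda\varphi, D)=\widehat{a}^{g,\gamma}_t(\lambda\varphi D)$ noted earlier in the section, I apply Proposition~\ref{pr:convexity} with $D^1:=\varphi D$ and $D^2:=0$ to get
\[
\widehat{a}^{g,\gamma}_t(\lambda\varphi D)=\widehat{a}^{g,\gamma}_t\bigl(\lambda\cdot\varphi D+(1-\lambda)\cdot 0\bigr)\leq \lambda\,\widehat{a}^{g,\gamma}_t(\varphi D)+(1-\lambda)\,\widehat{a}^{g,\gamma}_t(0).
\]
Since the zero cashflow belongs to $\cH(t)$ (taking the null trading strategy and $Z\equiv 0$), the NGD hypothesis combined with Proposition~\ref{pr:0price} forces $\widehat{a}^{g,\gamma}_t(0)=0$, yielding the first ask-price inequality. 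The analogous bid inequality $\widehat{b}^{g,\gamma}_t(\lambda\varphi,D)\geq \lambda\,\widehat{b}^{g,\gamma}_t(\varphi,D)$ follows identically from the concavity half of Proposition~\ref{pr:convexity} together with $\widehat{b}^{g,\gamma}_t(0)=0$.

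Next, for the case $\lambda\geq 1$, I would use a rescaling trick. Set $\mu:=1/\lambda$ and $\widetilde\varphi:=\lambda\varphi$. Since $\lambda\in L^\infty_+(\cF_t)$ and $\lambda\geq 1$, the reciprocal $\mu$ lies in $L^\infty_+(\cF_t)$ with $0<\mu\leq 1$, and $\widetilde\varphi\in L^\infty_+(\cF_t)$. Applying the inequality just established to $\mu$ and $\widetilde\varphi$ gives
\[
\widehat{a}^{g,\gamma}_t(\varphi,D)=\widehat{a}^{g,\gamma}_t(\mu\widetilde\varphi,D)\leq \mu\,\widehat{a}^{g,\gamma}_t(\widetilde\varphi,D)=\tfrac{1}{\lambda}\widehat{a}^{g,\gamma}_t(\lambda\varphi,D),
\]
and multiplying through by $\lambda$ yields $\lambda\,\widehat{a}^{g,\gamma}_t(\varphi,D)\leq \widehat{a}^{g,\gamma}_t(\lambda\varphi,D)$. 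The corresponding bid inequality is obtained by the same rescaling from the bid version of the $0\leq\mu\leq 1$ case.

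There is no substantive obstacle here; the only items requiring care are the minor verifications that $0\in\cH(t)$ (immediate from the definition, so that Proposition~\ref{pr:0price} applies), that $1/\lambda$ is a bona fide element of $L^\infty_+(\cF_t)$ when $\lambda\geq 1$, and that Proposition~\ref{pr:convexity} is available with random $\cF_t$-measurable weights, which it is by its statement. The whole argument is essentially a two-line deduction once the scaling $\widehat{a}^{g,\gamma}_t(\varphi,D)=\widehat{a}^{g,\gamma}_t(1,\varphi D)$ is invoked.
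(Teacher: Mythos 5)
Your proof is correct and follows exactly the route the paper intends: the paper states this corollary as an immediate consequence of Proposition~\ref{pr:0price} and Proposition~\ref{pr:convexity}, and the argument you spell out (convexity with $D^2=0$, $\widehat{a}^{g,\gamma}_t(0)=\widehat{b}^{g,\gamma}_t(0)=0$ under NGD, then the $\mu=1/\lambda$ rescaling for $\lambda\geq1$) mirrors the paper's own proof of the analogous property P4 in Theorem~\ref{th:askbid1}. No gaps.
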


Finally we move to the central question of this section - the absence of arbitrage in the market model driven by acceptability bid and ask prices as defined above.
Recall that the starting point was the notion of hedgeable cashflows $\cH(t)$ initiated at time $t$ at zero cost and generated by self-financing trading strategies in the underlying market. Similarly, we consider the set of extended cashflows defined next.

\begin{definition}
   The set of \textit{extended cashflows} associated with an $\cF_t$-measurable random variable $S_t$, $t\in\cT$, and a process $D\in\cD$, is defined as
  \begin{equation*}
    \widetilde{\cH}(t,S_t):=\Big\{\Big(0,\ldots,0,S_t,H_{t+1}-\varphi D_{t+1},\ldots,H_T-\varphi D_T\Big): H\in\cH(t), \varphi\in L^2(\cF_t)\Big\}.
  \end{equation*}
\end{definition}

The quantity $S_t$ should be viewed as the set-up cost of $\varphi$ shares of the cashflow $D$ at time $t\in\cT$.
 Consistently with Proposition~\ref{pr:eqarb}, we say that that the acceptability ask price \eqref{eq:a2} is \textit{arbitrage free price}, if there does not exist a cashflow $\widetilde{H}\in\widetilde{\cH}(t,\widehat{a}^{g,\gamma}_t(\varphi,D))$ such that \begin{equation}\label{eq:arbi}
  \sum^T_{s=t}\widetilde{H}_s\geq0,\ \bP(\sum^T_{s=t}\widetilde{H}_s>0)>0.
\end{equation}
Similarly, the bid price \eqref{eq:b2} is arbitrage free price, if there is no $\widetilde{H}\in\widetilde{\cH}(t,-\widehat{b}^{g,\gamma}_t(\varphi,D))$ that  satisfies \eqref{eq:arbi}.

\begin{theorem}
  Assume that NGD holds for $\cH(t)$, $t\in\cT$.
  Then, then acceptability ask and bid prices are arbitrage-free prices.
\end{theorem}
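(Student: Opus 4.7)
The plan is to argue by contradiction for the ask case and reduce the bid case to it via the identity $\widehat{b}^{g,\gamma}_t(\varphi,D)=-\widehat{a}^{g,\gamma}_t(\varphi,-D)$, which follows immediately from the essinf/esssup representations in Proposition~\ref{pr:abrep2}. Under the substitution $\varphi\mapsto -\varphi$ in the internal parameter appearing in the definition of $\widetilde\cH(\cdot)$, an element of $\widetilde\cH(t,-\widehat{b}^{g,\gamma}_t(\varphi,D))$ associated with $D$ transforms into an element of $\widetilde\cH(t,\widehat{a}^{g,\gamma}_t(\varphi,-D))$ associated with $-D$ that carries the same cumulative cashflow $\sum_{s=t}^T\widetilde H_s$. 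So it suffices to treat the ask case.

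For the ask case, suppose for contradiction that there exist $H\in\cH(t)$ and $\widetilde H\in\widetilde\cH(t,\widehat{a}^{g,\gamma}_t(\varphi,D))$ of the form $(0,\ldots,0,\widehat{a}^{g,\gamma}_t(\varphi,D),H_{t+1}-\varphi D_{t+1},\ldots,H_T-\varphi D_T)$ with $\sum_{s=t}^T\widetilde H_s\geq 0$ almost surely and strict inequality on a set $A$ with $\bP(A)>0$. Rearranging gives
\begin{equation*}
\widehat{a}^{g,\gamma}_t(\varphi,D)\geq\sum_{s=t+1}^T(\varphi D_s-H_s)
\end{equation*}
almost surely, strict on $A$. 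Apply $\cE_{g_\gamma}[\,\cdot\,|\cF_t]$ to both sides. Cash-additivity in Proposition~\ref{pr:gexp4}.(v), used with $X=0$ and $\beta=\widehat{a}^{g,\gamma}_t(\varphi,D)\in L^2(\cF_t)$, yields $\cE_{g_\gamma}[\widehat{a}^{g,\gamma}_t(\varphi,D)|\cF_t]=\widehat{a}^{g,\gamma}_t(\varphi,D)$, and monotonicity of $\cE_{g_\gamma}$ delivers
\begin{equation*}
\widehat{a}^{g,\gamma}_t(\varphi,D)\geq\cE_{g_\gamma}\Big[\sum_{s=t+1}^T(\varphi D_s-H_s)\Big|\cF_t\Big]
\end{equation*}
almost surely. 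I then upgrade this to a strict inequality on some $\cF_t$-measurable $A'$ with $\bP(A')>0$ using the strict comparison clause of Proposition~\ref{pr:gexp4}.(ii) (itself inherited from Theorem~\ref{th:comp} and the regularity of $g_\gamma$ guaranteed by Assumption~G): were equality to hold everywhere, the strict comparison would back-propagate to equality of the underlying random variables on all of $\Omega$, contradicting strict inequality on $A$. Finally, Proposition~\ref{pr:abrep2} expresses $\widehat{a}^{g,\gamma}_t(\varphi,D)$ as the $\essinf$ over $H'\in\cH(t)$ of conditional $g$-expectations of exactly the form $\cE_{g_\gamma}[\sum_{s=t+1}^T(\varphi D_s-H'_s)|\cF_t]$, so with our specific $H$ the reverse non-strict inequality $\widehat{a}^{g,\gamma}_t(\varphi,D)\leq\cE_{g_\gamma}[\sum_{s=t+1}^T(\varphi D_s-H_s)|\cF_t]$ holds almost surely. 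The strict inequality on $A'$ is the sought contradiction.

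The main obstacle is bridging the pointwise (a priori non-$\cF_t$-measurable) strict inequality in the arbitrage hypothesis with the $\cF_t$-measurable strict inequality of conditional $g$-expectations needed to collide with the essinf representation. The contrapositive of the strict comparison in Proposition~\ref{pr:gexp4}.(ii)---which asserts that if two conditional $g$-expectations agree on an $\cF_t$-measurable set then the underlying random variables must agree there as well---is exactly the tool that effects this promotion. The NGD hypothesis plays an indirect but essential role: by Proposition~\ref{pr:0price} it ensures that $\widehat{a}^{g,\gamma}_t(\varphi,D)$ and $\widehat{b}^{g,\gamma}_t(\varphi,D)$ are genuinely $L^2(\cF_t)$-valued (not $\pm\infty$), which is what justifies the cash-additivity step, and by Proposition~\ref{pr:askbid2} it guarantees $\widehat a\geq\widehat b$, making the bid-to-ask reduction via $D\mapsto-D$ internally consistent.
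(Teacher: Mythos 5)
Your proof is correct and follows essentially the same route as the paper's: both translate the arbitrage into the pointwise inequality $\widehat{a}^{g,\gamma}_t(\varphi,D)\geq\sum_{s=t+1}^T(\varphi D_s-H_s)$ (strict on a set of positive probability), invoke the $\essinf$ representation of Proposition~\ref{pr:abrep2}, and use the strict comparison/monotonicity of $\cE_{g_\gamma}$ to reach a contradiction. The only cosmetic difference is where the contradiction lands --- you collide the conditional $g$-expectation at time $t$ directly with the $\essinf$ bound, whereas the paper applies the tower property once more to obtain $\cE_{g_\gamma}[\,\cdot\,]>\cE_{g_\gamma}[\,\cdot\,]$ at time $0$.
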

\begin{proof}
  We give the proof for ask prices; the proof for bid prices is similar and we omit it here.

  If there exists $\widetilde{H}\in\widetilde{\cH}(t,\widehat{a}^{g,\gamma}_t(\varphi,D))$ such that \eqref{eq:arbi} is satisfied, then there exists $H'\in\cH(t)$, $A\in\cF_t$ such that
  $$
  \widehat{a}^{g,\gamma}_t(\varphi,D)\geq - \sum^T_{s=t+1}(H'_s-D_s) ,
  $$
  and the inequality is strict on set $A$.

  In view of Proposition~\ref{pr:abrep2}, we immediately get that
  \begin{equation*}
    \cE_{g_\gamma}\Big[\sum^T_{s=t+1}(D_s-H'_s)\Big|\cF_t\Big]\geq\sum^T_{s=t+1}(D_s-H'_s)
  \end{equation*}
  with strict inequality on $A$.
  Consequently, by Theorem~\ref{pr:gexp4}, we conclude that
  \begin{align*}
    \cE_{g_\gamma}\Big[\sum^T_{s=t+1}(D_s-H'_s)\Big](\omega)=&\cE_{g_\gamma}\Big[\cE_{g_\gamma}\Big[\sum^T_{s=t+1}(D_s-H'_s)\Big|\cF_t\Big]\Big](\omega)\\
    >&\cE_{g_\gamma}\Big[\sum^T_{s=t+1}(D_s-H'_s)\Big](\omega),
  \end{align*}
  for $\omega\in A$, with $\bP(A)>0$, that leads to a contradiction.
  Hence, the acceptability ask price $\widehat{a}^{g,\gamma}_t(\varphi,D)$ is an  arbitrage free price.
 The proof is complete.
\end{proof}

Similar to the discussion in Section~\ref{se:marketmodel}, we study the relationship between the acceptability ask and bid prices in the case when these prices are generated by  different families of drivers and and different acceptability levels.
We start with a result similar to Proposition~\ref{pr:spread}.

\begin{proposition}\label{pr:diffgamma}
The acceptability ask and bid prices at time $t\in\cT$ of a cashflow $D\in\cD$, satisfy the following inequalities
$\widehat{a}^{g,\gamma_1}_t(D)\leq \widehat{a}^{g,\gamma_2}_t(D)$ and $\widehat{b}^{g,\gamma_1}_t(D)\geq \widehat{b}^{g,\gamma_2}_t(D)$, for $\gamma_2\geq\gamma_1>0$.
\end{proposition}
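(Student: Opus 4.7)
The plan is to mimic the strategy used in Proposition~\ref{pr:spread}, but now working at the level of the representation formulas from Proposition~\ref{pr:abrep2} rather than with the defining sets of acceptable cashflows. The key ingredient is monotonicity of the family of $g$-expectations in the parameter of the driver, which follows from Assumption G1 combined with the comparison theorem.

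First I would establish the underlying monotonicity: for any $\gamma_2\geq\gamma_1>0$ and any $X\in L^2(\cF_T)$,
\begin{equation*}
\cE_{g_{\gamma_1}}[X|\cF_t]\leq\cE_{g_{\gamma_2}}[X|\cF_t].
\end{equation*}
Indeed, by Assumption~G1 we have $g_{\gamma_1}(s,z)\leq g_{\gamma_2}(s,z)$ for every $s\in\cT$, $z\in\bR$. Both drivers are regular (Assumption~G2), so Theorem~\ref{th:comp}, applied with common terminal condition $X$, gives the stated inequality between the $Y$-components of the solutions, i.e.\ between the corresponding $g$-expectations.

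Next I would plug this into the representations of Proposition~\ref{pr:abrep2}. For the ask price, for every $H\in\cH(t)$,
\begin{equation*}
\cE_{g_{\gamma_1}}\!\Big[\sum_{s=t+1}^T(D_s-H_s)\,\Big|\,\cF_t\Big]\leq\cE_{g_{\gamma_2}}\!\Big[\sum_{s=t+1}^T(D_s-H_s)\,\Big|\,\cF_t\Big],
\end{equation*}
and taking $\essinf$ over $H\in\cH(t)$ on both sides yields $\widehat{a}^{g,\gamma_1}_t(D)\leq\widehat{a}^{g,\gamma_2}_t(D)$. For the bid price, the same monotonicity, applied now to the random variable $\sum_{s=t+1}^T(-H_s-D_s)$ and then negated, gives
\begin{equation*}
-\cE_{g_{\gamma_1}}\!\Big[\sum_{s=t+1}^T(-H_s-D_s)\,\Big|\,\cF_t\Big]\geq -\cE_{g_{\gamma_2}}\!\Big[\sum_{s=t+1}^T(-H_s-D_s)\,\Big|\,\cF_t\Big]
\end{equation*}
for each $H\in\cH(t)$; taking $\esssup$ over $H\in\cH(t)$ produces $\widehat{b}^{g,\gamma_1}_t(D)\geq\widehat{b}^{g,\gamma_2}_t(D)$.

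There is essentially no obstacle here beyond invoking the comparison theorem and the representation from Proposition~\ref{pr:abrep2}; the presence of the hedging set $\cH(t)$ does not interfere because the same set appears on both sides of each inequality and the $\essinf/\esssup$ preserve pointwise order of the families being optimized. The only minor subtlety is to recognize that Theorem~\ref{th:comp} applies directly in the form needed—two drivers $g_{\gamma_1}\leq g_{\gamma_2}$ on all of $\cT\times\Omega\times\bR$ with a common, hence trivially ordered, terminal condition—so no extra technical work is required.
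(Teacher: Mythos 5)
Your proposal is correct and follows essentially the same route as the paper: both deduce $\cE_{g_{\gamma_1}}[\,\cdot\,|\cF_t]\leq\cE_{g_{\gamma_2}}[\,\cdot\,|\cF_t]$ from $g_{\gamma_1}\leq g_{\gamma_2}$ via the comparison theorem, then pass the inequality through the $\essinf$/$\esssup$ in the representation of Proposition~\ref{pr:abrep2}. No substantive differences.
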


\begin{proof}
    Since $g_{\gamma_1}\leq g_{\gamma_2}$, according to Theorem~\ref{th:comp}, we get that
  $$
  \cE_{g_{\gamma_1}}\Big[\sum^T_{s=t+1}(D_s-H_s)\Big|\cF_t\Big]\leq\cE_{g_{\gamma_2}}\Big[\sum^T_{s=t+1}(D_s-H_s)\Big|\cF_t\Big],
  $$
  for any $H\in\cH(t)$.
  Hence, using the representation \eqref{eq:hatBidAsk}, we obtain
  $$
  \widehat{a}^{g,\gamma_1}_t(D)=\essinf_{H\in\cH(t)}\cE_{g_{\gamma_1}}\Big[\sum^T_{s=t+1}(D_s-H_s)\Big|\cF_t\Big]\leq\essinf_{H\in\cH(t)}\cE_{g_{\gamma_2}}\Big[\sum^T_{s=t+1}(D_s-H_s)\Big|\cF_t\Big]\leq \widehat{a}^{g,\gamma_2}_t(D).
  $$
  Analogously, one proofs the corresponding inequality for the bid prices.
\end{proof}

\begin{corollary}
  Assume that NGD holds for $\cH(t)$ at time $t\in\cT$, and  at levels $\gamma_1, \gamma_2>0$.
  Then, $\widehat{a}^{g,\gamma_2}_t(D)\geq \widehat{b}^{g,\gamma_1}_t(D)$, for any $d\in\cD$.
\end{corollary}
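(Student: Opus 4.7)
The plan is to deduce this corollary as a direct consequence of Proposition~\ref{pr:askbid2} and Proposition~\ref{pr:diffgamma}, by inserting a "same-level" bid or ask price as an intermediate quantity. Since NGD is assumed to hold at both levels $\gamma_1$ and $\gamma_2$, Proposition~\ref{pr:askbid2} gives the single-level inequalities
\[
\widehat{a}^{g,\gamma_1}_t(D)\geq \widehat{b}^{g,\gamma_1}_t(D), \qquad \widehat{a}^{g,\gamma_2}_t(D)\geq \widehat{b}^{g,\gamma_2}_t(D),
\]
while Proposition~\ref{pr:diffgamma} gives the monotonicity of $\widehat{a}$ and $\widehat{b}$ with respect to the acceptability level. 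The idea is to chain these two ingredients together, after splitting into the two possible orderings of $\gamma_1$ and $\gamma_2$.

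First I would treat the case $\gamma_2\geq\gamma_1$. By Proposition~\ref{pr:diffgamma}, $\widehat{a}^{g,\gamma_2}_t(D)\geq\widehat{a}^{g,\gamma_1}_t(D)$. Since NGD holds for $\cH(t)$ at level $\gamma_1$, Proposition~\ref{pr:askbid2} yields $\widehat{a}^{g,\gamma_1}_t(D)\geq\widehat{b}^{g,\gamma_1}_t(D)$. Concatenating,
\[
\widehat{a}^{g,\gamma_2}_t(D)\;\geq\;\widehat{a}^{g,\gamma_1}_t(D)\;\geq\;\widehat{b}^{g,\gamma_1}_t(D),
\]
which is the desired inequality.

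In the remaining case $\gamma_2\leq\gamma_1$, I would use the other side of each tool. NGD at level $\gamma_2$ together with Proposition~\ref{pr:askbid2} gives $\widehat{a}^{g,\gamma_2}_t(D)\geq\widehat{b}^{g,\gamma_2}_t(D)$, while Proposition~\ref{pr:diffgamma} applied with levels swapped yields $\widehat{b}^{g,\gamma_2}_t(D)\geq\widehat{b}^{g,\gamma_1}_t(D)$. Chaining,
\[
\widehat{a}^{g,\gamma_2}_t(D)\;\geq\;\widehat{b}^{g,\gamma_2}_t(D)\;\geq\;\widehat{b}^{g,\gamma_1}_t(D).
\]
Since $\gamma_1$ and $\gamma_2$ need not be comparable pointwise in $\omega$, if one wishes to be fully careful, one can argue on the two $\cF_t$-measurable sets $A:=\{\gamma_2\geq\gamma_1\}$ and $A^c$ separately and paste the local inequalities using the locality property of the conditional $g$-expectation. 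There is no real obstacle here: the result is a clean corollary, and the only mild subtlety is ensuring that the monotonicity of Proposition~\ref{pr:diffgamma} (stated for deterministic levels) is applied correctly if one allows $\gamma_1,\gamma_2$ to be $\cF_t$-measurable, in which case restricting to $A$ and $A^c$ reduces everything to the deterministic case already proved.
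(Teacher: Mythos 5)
Your proof is correct and follows exactly the paper's own argument: split into the cases $\gamma_1\geq\gamma_2$ and $\gamma_1\leq\gamma_2$, and in each case chain Proposition~\ref{pr:askbid2} with the level-monotonicity of Proposition~\ref{pr:diffgamma}. The extra remark about $\cF_t$-measurable levels is unnecessary here since the paper takes $\gamma_1,\gamma_2$ to be positive constants, but it does no harm.
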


\begin{proof}
  If $\gamma_1\geq\gamma_2$, then by Proposition \ref{pr:askbid2} and \ref{pr:diffgamma}, we have that $\widehat{a}^{g,\gamma_2}_t(D)\geq \widehat{b}^{g,\gamma_2}_t(D)\geq \widehat{b}^{g,\gamma_1}_t(D)$.
  Similarly, if $\gamma_1\leq\gamma_2$, then $\widehat{a}^{g,\gamma_2}_t(D)\geq \widehat{a}^{g,\gamma_1}_t(D)\geq \widehat{b}^{g,\gamma_1}_t(D)$.
  This completes the proof.
\end{proof}

Finally we want to mention that we were not able to establish a general result on comparison of acceptability bid and ask prices, similar to Proposition~\ref{pr:ag1bg2}.
Generally speaking we do not know if $\widehat{a}^{g^1,\gamma_1} \geq \widehat{b}^{g^2,\gamma_2}$, for some arbitrary family of drivers $g^1, \ g^2$, and levels $\gamma_1,\gamma_2>0$. In the nutshell, this is due to the lack of an appropriate form of time consistency property for $\widehat{a}^{g,\gamma}$, and  $\widehat{b}^{g,\gamma}$, similar to Property~P5, in Theorem~\ref{th:askbid1}. We leave the answer to this question for further investigations.
Nevertheless, we do have a result that shows that once two couterparties, who may use different acceptability levels and different drivers, find that their prices are such that
 $\widehat{a}^{g^1,\gamma_1}_t(D)\leq \widehat{b}^{g^2,\gamma_2}_t(D)$ for all $D\in\cD$, then the bid and ask prices coincide  and hence the trade will go through.

\begin{proposition}
  Fix $t\in\cT$.
  Let $g^1$ and $g^2$ be two families of drivers.
  Assume that  $\widehat{a}^{g^1,\gamma_1}_t(D)\leq \widehat{b}^{g^2,\gamma_2}_t(D)$, for a fixed $t\in\cT$, and for any $D\in\cD$.
  Then,
  $$
  \widehat{a}^{g^1,\gamma_1}_t(D)=\widehat{b}^{g^1,\gamma_1}_t(D)=\widehat{a}^{g^2,\gamma_2}_t(D)=\widehat{b}^{g^2,\gamma_2}_t(D),
  $$
  for any $D\in\cD$.
\end{proposition}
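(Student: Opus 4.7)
My approach rests on a simple duality identity between the extended ask and bid prices that follows from the representations in Proposition~\ref{pr:abrep2}. Specifically, for any family $g$, level $\gamma>0$, and $D\in\cD$, I would compute
$$
\widehat{a}^{g,\gamma}_t(-D)=\essinf_{H\in\cH(t)}\cE_{g_\gamma}\!\left[\sum^T_{s=t+1}(-D_s-H_s)\,\Big|\cF_t\right]
=-\esssup_{H\in\cH(t)}\!\left(-\cE_{g_\gamma}\!\left[\sum^T_{s=t+1}(-H_s-D_s)\,\Big|\cF_t\right]\right)
=-\widehat{b}^{g,\gamma}_t(D).
$$
(Equivalently, this can be read off Definition~\ref{def:askbidhedge} by the change of variable $b=-a$ in the $\esssup$/$\essinf$.)

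Next, I would apply the hypothesis to the dividend process $-D\in\cD$, which gives
$\widehat{a}^{g^1,\gamma_1}_t(-D)\le \widehat{b}^{g^2,\gamma_2}_t(-D)$. Using the duality identity on both sides, this rewrites as $-\widehat{b}^{g^1,\gamma_1}_t(D)\le -\widehat{a}^{g^2,\gamma_2}_t(D)$, i.e.\
$$
\widehat{a}^{g^2,\gamma_2}_t(D)\;\le\;\widehat{b}^{g^1,\gamma_1}_t(D).
$$
Combined with the original hypothesis $\widehat{a}^{g^1,\gamma_1}_t(D)\le\widehat{b}^{g^2,\gamma_2}_t(D)$, I now possess the two cross-inequalities needed to pin everything down.

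The final step is to insert the ``natural'' inequalities $\widehat{a}^{g^i,\gamma_i}_t(D)\ge \widehat{b}^{g^i,\gamma_i}_t(D)$ for $i=1,2$ provided by Proposition~\ref{pr:askbid2} (which requires NGD for $\cH(t)$ at levels $\gamma_1$ and $\gamma_2$, the standing hypothesis of this section). Chaining,
$$
\widehat{b}^{g^1,\gamma_1}_t(D)\;\le\;\widehat{a}^{g^1,\gamma_1}_t(D)\;\le\;\widehat{b}^{g^2,\gamma_2}_t(D)\;\le\;\widehat{a}^{g^2,\gamma_2}_t(D)\;\le\;\widehat{b}^{g^1,\gamma_1}_t(D),
$$
which forces equality throughout, giving $\widehat{a}^{g^1,\gamma_1}_t(D)=\widehat{b}^{g^1,\gamma_1}_t(D)=\widehat{a}^{g^2,\gamma_2}_t(D)=\widehat{b}^{g^2,\gamma_2}_t(D)$, as claimed.

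The only non-routine point is the duality identity $\widehat{a}^{g,\gamma}_t(-D)=-\widehat{b}^{g,\gamma}_t(D)$; I expect this to be the main conceptual observation, and I would emphasize that it does \emph{not} require $\cH(t)$ to be symmetric (i.e., closed under $H\mapsto -H$), because the substitution is made only in the argument $D$, while $H$ still ranges over $\cH(t)$. Beyond this, the argument is a two-line chain of inequalities, and the reliance on NGD is only indirect, via the hypotheses of Proposition~\ref{pr:askbid2}.
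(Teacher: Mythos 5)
Your proposal is correct and follows essentially the same route as the paper: both hinge on the duality identity $\widehat{a}^{g,\gamma}_t(-D)=-\widehat{b}^{g,\gamma}_t(D)$ read off from Proposition~\ref{pr:abrep2}, apply the hypothesis to $-D$ to obtain the reverse cross-inequality $\widehat{a}^{g^2,\gamma_2}_t(D)\leq\widehat{b}^{g^1,\gamma_1}_t(D)$, and then close the cycle with Proposition~\ref{pr:askbid2}. Your explicit remarks that $H$ still ranges over $\cH(t)$ (so no symmetry of $\cH(t)$ is needed) and that NGD enters only through Proposition~\ref{pr:askbid2} are accurate and, if anything, make the dependence on hypotheses clearer than in the paper.
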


\begin{proof}
  Since $\widehat{a}^{g^1,\gamma_1}_t(D)\leq \widehat{b}^{g^2,\gamma_2}_t(D)$ for any $D\in\cD$, then we have that
  $$
  \widehat{a}^{g^1,\gamma_1}_t(-D)\leq \widehat{b}^{g^2,\gamma_2}_t(-D),\ \quad D\in\cD.
  $$
  According to Proposition~\ref{pr:abrep2}, it is clear that
  \begin{align*}
    \widehat{a}^{g^1,\gamma_1}_t(D)=-\widehat{b}^{g^1,\gamma_1}_t(-D),\\
    \widehat{a}^{g^2,\gamma_2}_t(D)=-\widehat{b}^{g^2,\gamma_2}_t(-D).
  \end{align*}
  Therefore,  $-\widehat{b}^{g^1,\gamma_1}_t(D)=\widehat{a}^{g^1,\gamma_1}_t(-D)\leq \widehat{b}^{g^2,\gamma_2}_t(-D)=-\widehat{a}^{g^2,\gamma_2}_t(D)$, for any $D\in\cD$.
  Hence, $\widehat{a}^{g^2,\gamma_2}_t(D)\leq\widehat{b}^{g^1,\gamma_1}_t(D)$, and due to Proposition~\ref{pr:askbid2}, we get that
  $$
  \widehat{b}^{g^2,\gamma_2}_t(D)\leq\widehat{a}^{g^2,\gamma_2}_t(D)\leq\widehat{b}^{g^1,\gamma_1}_t(D)\leq\widehat{a}^{g^1,\gamma_1}_t(D),\ \quad D\in\cD.
  $$
  Thus, using our initial assumptions, we have that
  $$
  \widehat{b}^{g^2,\gamma_2}_t(D)=\widehat{a}^{g^2,\gamma_2}_t(D)=\widehat{b}^{g^1,\gamma_1}_t(D)=\widehat{a}^{g^1,\gamma_1}_t(D),\ \quad D\in\cD.
  $$
  This concludes the proof.
\end{proof}

\begin{appendix}
  \section{Appendix}
This section is devoted to some auxiliary technical results, used throughout  the paper.

\begin{proposition}\label{pr:randomwalk}
  Assume that $W$ is a symmetric random walk, such that $P(\Delta W_t=\pm 1)=\frac{1}{2}$, $t=1,\ldots,T$, and let $\{\cF^W_t\}_{t=0}^T$ be the filtration generated by $W$.
  Then, for any square integrable martingale process $X$ on $(\Omega, \cF^W, \{\cF^W_t\}_{t=0}^T, \bP\}$, there exists a predictable process $\xi$, such that
  \begin{align*}
    \Delta X_t=\xi_t\Delta W_t, \quad t=1,\ldots,T.
  \end{align*}
\end{proposition}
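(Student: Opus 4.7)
The plan is to exploit the fact that the filtration generated by a symmetric random walk is atomic with a very simple atomic structure: at each step, every atom of $\cF^W_{t-1}$ is split into exactly two atoms of $\cF^W_t$ according to whether $\Delta W_t=+1$ or $\Delta W_t=-1$. Consequently any $\cF^W_t$-measurable random variable is a function of the pair $(\cF^W_{t-1},\Delta W_t)$, and because $\Delta W_t$ takes only two equally-likely values, independent of $\cF^W_{t-1}$, such a random variable admits an affine decomposition in $\Delta W_t$ with $\cF^W_{t-1}$-measurable coefficients.

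More precisely, I would first argue that any $Y\in L^2(\cF^W_t)$ can be written uniquely as $Y=U+V\Delta W_t$ with $U,V\in L^2(\cF^W_{t-1})$. On each atom $A$ of $\cF^W_{t-1}$, the variable $Y$ takes two values, say $y_+$ on $A\cap\{\Delta W_t=+1\}$ and $y_-$ on $A\cap\{\Delta W_t=-1\}$; setting $U=\frac{y_++y_-}{2}$ and $V=\frac{y_+-y_-}{2}$ on $A$ yields the decomposition. Uniqueness follows since $\bE[\Delta W_t\mid\cF^W_{t-1}]=0$ and $\bE[(\Delta W_t)^2\mid\cF^W_{t-1}]=1$, giving $U=\bE[Y\mid\cF^W_{t-1}]$ and $V=\bE[Y\Delta W_t\mid\cF^W_{t-1}]$.

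Applying this to $Y=\Delta X_t$, I obtain $\Delta X_t=U_t+\xi_t\Delta W_t$ with $U_t,\xi_t\in L^2(\cF^W_{t-1})$, in particular with $\xi_t$ predictable. The martingale property of $X$ then forces $U_t=\bE[\Delta X_t\mid\cF^W_{t-1}]=0$, so that $\Delta X_t=\xi_t\Delta W_t$, which is the desired representation. Doing this for every $t\in\{1,\ldots,T\}$ produces the predictable process $\xi=(\xi_t)_{t=1}^T$.

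The only mildly delicate point is justifying the affine decomposition cleanly; the cleanest route is probably to note that $\cF^W_t$ is finitely generated (in fact generated by the $2^t$ atoms corresponding to the sign pattern of $\Delta W_1,\ldots,\Delta W_t$), so the argument reduces to an elementary calculation on each atom. Everything else—predictability of $\xi_t$, square-integrability, and the cancellation of the drift via the martingale property—is then immediate from the independence of the increments and $\bE[\Delta W_t]=0$, $\bE[(\Delta W_t)^2]=1$.
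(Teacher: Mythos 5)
Your proof is correct, and it takes a somewhat more elementary route than the paper's. The paper starts from the Galtchouk--Kunita--Watanabe decomposition $\Delta X_t=\xi_t\Delta W_t+\Delta M_t$ with $M$ strongly orthogonal to $W$, and then shows $\Delta M_t=0$ by writing $\Delta X_t=f_t(\Delta W_1,\ldots,\Delta W_t)$ and computing $\bE[\Delta X_t\Delta W_t\mid\cF_{t-1}]$ and $\bE[\Delta X_t\mid\cF_{t-1}]$ on the two events $\{\Delta W_t=\pm1\}$. You instead construct the representation directly from the atomic structure of $\cF^W_t$: every $Y\in L^2(\cF^W_t)$ takes exactly two values on each atom of $\cF^W_{t-1}$, which yields the unique affine decomposition $Y=U+V\Delta W_t$ with $U=\bE[Y\mid\cF^W_{t-1}]$ and $V=\bE[Y\Delta W_t\mid\cF^W_{t-1}]$, and the martingale property kills $U$. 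The underlying two-point computation is the same in both arguments (your $U=\tfrac{y_++y_-}{2}$, $V=\tfrac{y_+-y_-}{2}$ is exactly the paper's pair of identities \eqref{eq:rw2}--\eqref{eq:rw3}), but your version avoids invoking GKW as a black box and gets uniqueness of $\xi$ for free; the paper's version has the advantage of fitting directly into the framework it has already set up for general $W$, where the orthogonal martingale component is the object one must control. Your appeal to Doob--Dynkin (that an $\cF^W_t$-measurable variable is a.s.\ a function of $(\Delta W_1,\ldots,\Delta W_t)$, hence a.s.\ constant on each of the $2^t$ sign-pattern atoms) is the same measurability step the paper uses implicitly when it writes $\Delta X_t=f_t(\Delta W_1,\ldots,\Delta W_t)$, and the independence and symmetry of the increments is all that is needed to identify $U$ and $V$ as conditional expectations; so there is no gap.
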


\begin{proof}
According to the Galtchouk-Kunita-Watanabe \cite[Theorem 10.18]{FollmerSchiedBook2004} decomposition, $X$ admits the following representation
\begin{align}\label{eq:rw0}
\Delta X_t=\xi_t\Delta W_t+\Delta M_t,
\end{align}
where $\xi$ is a predictable process such that $\xi_t\Delta W_t\in L^2(\cF_t)$, for  $t=1,\ldots,T$, and $M$ is a square-integrable martingale which is orthogonal to $W$ and such that $M_0=0$.
Since $M$ is orthogonal to $W$, and using the fact that $\Delta\langle W\rangle_t=1$, we get
\begin{equation}
\xi_t=\frac{\bE[\Delta X_t\Delta W_t|\cF_{t-1}]}{\Delta\langle W\rangle_t} =  \bE[\Delta X_t\Delta W_t|\cF_{t-1}],\quad t=1,\ldots,T.  \label{eq:rw1}
\end{equation}

Next, we fix $t\in\set{1,\ldots,T}$. Clearly $\Delta X_t = f_t(\Delta W_1, \ldots, \Delta W_t)$, for some function $f$.
Consequently, we have that
\begin{align}
\bE[\Delta X_t & \Delta W_t | \cF_{t-1}]  = \bE [\Delta X_t \1_{\Delta W_t=1} | \cF_{t-1} ]   - \bE[ \Delta X_t \1_{\Delta W_t=-1} |  \cF_{t-1} ] \nonumber \\
 & = f_t( \Delta W_1, \ldots, \Delta W_{t-1}, 1 ) \bE [ \1_{\Delta W_t=1} | \cF_{t-1} ]   - f_t( \Delta W_1, \ldots, \Delta W_{t-1}, -1 ) \bE[ \1_{\Delta W_t=-1} |  \cF_{t-1} ] \nonumber \\
 & = \frac12  f_t( \Delta W_1, \ldots, \Delta W_{t-1}, 1 ) - \frac12 f_t( \Delta W_1, \ldots, \Delta W_{t-1}, -1 ). \label{eq:rw2}
\end{align}
By similar reasoning, since $\bE[\Delta X_t|\cF_{t-1}]=0$, we deduce that
\begin{align}\label{eq:rw3}
 \frac12  f_t( \Delta W_1, \ldots, \Delta W_{t-1}, 1 ) + \frac12 f_t( \Delta W_1, \ldots, \Delta W_{t-1}, -1 ) = 0.
\end{align}
Consequently, by \eqref{eq:rw2} and \eqref{eq:rw3}, we have that
\begin{align*}
 1_{\Delta W_t=1}  \xi_t & = 1_{\Delta W_t=1} f_t(\Delta W_1,\ldots, \Delta W_t) = 1_{\Delta W_t=1} \frac{f_t(\Delta W_1,\ldots, \Delta W_t)}{\Delta W_t} = 1_{\Delta W_t=1} \frac{\Delta X_t}{\Delta W_t},   \\
 1_{\Delta W_t= - 1}  \xi_t     & = - 1_{\Delta W_t=-1} f_t(\Delta W_1,\ldots, \Delta W_t)  = 1_{\Delta W_t=-1} \frac{f_t(\Delta W_1,\ldots, \Delta W_t)}{\Delta W_t} = 1_{\Delta W_t=-1} \frac{\Delta X_t}{\Delta W_t}.
\end{align*}
Thus, we conclude that
\begin{align*}
  \xi_t= \1_{\Delta W_t=1}\frac{\Delta X_t}{\Delta W_t}+\1_{\Delta W_t=-1}\frac{\Delta X_t}{\Delta W_t} = \frac{\Delta X_t}{\Delta W_t}.
\end{align*}
From here, and in view of \eqref{eq:rw0},  we get that
$$
\Delta X_t=\Delta X_t+\Delta M_t,
$$
and thus $\Delta M_t=0$, which completes the proof.

\end{proof}

\begin{proposition}\label{pr:convexfunctions}
Assume that the functions $f_1, f_2:\bR\rightarrow\bR$ are convex, and Lipschitz continuous on $\bR$, with Lipschitz constants $c_1,c_2$ respectively. Also suppose that
  $f_1(x)\leq f_2(x), \ x\in\bR$. Then, $c_1\leq c_2$.
\end{proposition}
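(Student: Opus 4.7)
The plan is to reduce the Lipschitz constant of a convex Lipschitz function $f:\bR\to\bR$ to its asymptotic slopes at $\pm\infty$, and then use the pointwise ordering $f_1\le f_2$ to compare those slopes. Concretely, for a convex Lipschitz $f$ with Lipschitz constant $c$, I would define
\[
m^+(f):=\lim_{x\to+\infty}\frac{f(x)}{x},\qquad m^-(f):=\lim_{x\to-\infty}\frac{f(x)}{x},
\]
and prove the key identity $c=\max\bigl(m^+(f),\,-m^-(f)\bigr)$.

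First I would show that the two limits above exist and lie in $[-c,c]$. This follows from the three-slope inequality for convex functions: the map $x\mapsto(f(x)-f(0))/x$ is monotone nondecreasing on $(0,\infty)$ and monotone nondecreasing on $(-\infty,0)$, and $|f(x)-f(0)|\le c|x|$ forces it to be bounded by $c$ in absolute value. Monotone bounded functions have limits, and these limits coincide with $m^\pm(f)$ because $f(0)/x\to 0$.

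Second, I would establish $c=\max(m^+,-m^-)$. The cleanest route is to work directly with secant slopes: by convexity, for every $x<y$,
\[
m^-(f)\ \le\ \frac{f(y)-f(x)}{y-x}\ \le\ m^+(f),
\]
and both bounds are approached by choosing $x,y$ far to the left (for the lower) or to the right (for the upper), and then shrinking or widening appropriately. Hence $c=\sup_{x\neq y}|f(y)-f(x)|/|y-x|$ equals $\max(m^+,-m^-)$. This step is the main technical point of the proof, because one must handle the case when either asymptotic slope is negative (so that the maximum is achieved by the absolute value of the more negative one), and one must verify that the secant supremum indeed attains $m^+$ and $-m^-$ in the limit, not merely bounds them.

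Third, I would apply the hypothesis $f_1(x)\le f_2(x)$ for every $x\in\bR$. Dividing by $x>0$ and letting $x\to+\infty$ gives $m^+(f_1)\le m^+(f_2)$; dividing by $x<0$ flips the inequality, and letting $x\to-\infty$ gives $m^-(f_1)\ge m^-(f_2)$, i.e.\ $-m^-(f_1)\le -m^-(f_2)$. Plugging into the identity from the second step,
\[
c_1=\max\bigl(m^+(f_1),-m^-(f_1)\bigr)\ \le\ \max\bigl(m^+(f_2),-m^-(f_2)\bigr)=c_2,
\]
which is the desired conclusion.
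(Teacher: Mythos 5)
Your proof is correct, and it follows the same overall strategy as the paper's: reduce the (smallest) Lipschitz constant of a convex Lipschitz function to its two asymptotic slopes at $\pm\infty$, prove the identity $c=\max\bigl(m^+(f),-m^-(f)\bigr)$, and then order the asymptotic slopes of $f_1$ and $f_2$ using $f_1\le f_2$. The differences are in execution. The paper phrases the asymptotic slopes as $\lim_{x\to\infty}\partial_+f(x)$ and $\lim_{x\to-\infty}\partial_-f(x)$ and derives the ordering of these limits by a contradiction argument applied to $F=f_1-f_2$: if $\partial_+f_1>\partial_+f_2+\delta$ eventually, then $F$ grows linearly and eventually becomes positive, contradicting $F\le 0$. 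You instead work with the quotients $f(x)/x$, whose limits coincide with the derivative limits by the monotone-slope property, and you obtain the ordering simply by dividing the pointwise inequality $f_1\le f_2$ by $x$ and letting $x\to\pm\infty$ (remembering the sign flip for $x<0$). Your route to the slope comparison is more elementary and avoids the contradiction; the paper's route via one-sided derivatives makes the identity $c=\max\{|\lim\partial_-f|,|\lim\partial_+f|\}$ slightly more immediate from the secant inequalities $\partial_+f(x)\le\frac{f(y)-f(x)}{y-x}\le\partial_-f(y)$. Both establish the same key identity, and your handling of the second step (that the secant slopes actually approach $m^-$ and $m^+$, not merely lie between them) is the right thing to flag as the technical core.
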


\begin{proof}
  Denote by $\partial_+f(x)$ and $\partial_-f(x)$ the right and left derivative, respectively, of the function $f:\bR\to\bR$.
Note that, since $f_i(x)$ is a convex function, then $\partial_+f_i(x),\ \partial_-f_i(x), \ x\in\bR$, exist.
Moreover,  $\partial_+f_i(x)$ is increasing,  $\partial_-f_i(x)$ is decreasing, and $\partial_+f_i(x)\geq\partial_-f_i(x)$, $i=1,\ 2$.

We claim that $\partial_+f_i\leq c_i$, $i=1, 2$.
Otherwise, there exists $x_i^0$, such that $\partial_+f_i(x)>c_i$, for all $x\geq x_i^0$.
Therefore, for any $x'_i>x''_i>x_i^0$, we get that
$$
\frac{|f_i(x'_i)-f_i(x''_i)|}{|x'_i-x''_i|}=\frac{f_i(x'_i)-f_i(x''_i)}{x'_i-x''_i}>c_i,
$$
which contradicts to the assumption that $c_i$ is the Lipschitz constant.
Hence, $\partial_+f_i(x)\leq c_i$, which implies that $\lim_{x\to\infty}\partial_+f_i(x)$ exists, and
\begin{align}\label{eq:app4}
  \lim_{x\to\infty}\partial_+f_i(x)\leq c_i, \quad i=1,2.
\end{align}
Similarly, one can prove that
\begin{align}\label{eq:app5}
  \lim_{x\to-\infty}\partial_-f_i(x)\geq-c_i, \quad i=1,2.
\end{align}
By \eqref{eq:app4}, \eqref{eq:app5} and the fact that $\partial_+f_i(x)\geq\partial_-f_i(x)$, $x\in\bR$, we have that
\begin{align}\label{eq:app6}
\max\Big\{\Big|\lim_{x\to-\infty}\partial_-f_i(x)\Big|,\Big|\lim_{x\to\infty}\partial_+f_i(x)\Big|\Big\}\leq c_i, \quad i=1,2.
\end{align}
By convexity of $f_i$, we have that
\begin{align*}
  \partial_-f_i(x^1)\leq\partial_+f_i(x^1)&\leq\frac{f_i(x^1)-f_i(x^2)}{x^1-x^2}\leq\partial_-f_i(x^2)\leq\partial_+f_i(x^2),
\end{align*}
which implies that
\begin{align}\label{eq:app7}
  \max\Big\{\Big|\lim_{x\to-\infty}\partial_-f_i(x)\Big|,\Big|\lim_{x\to\infty}\partial_+f_i(x)\Big|\Big\}\geq\frac{|f_i(x^1)-f_i(x^2)|}{|x^1-x^2|}, \quad -\infty<x^1<x^2<\infty.
\end{align}
Then, in view of \eqref{eq:app6}, and the fact that $c_i$ is the (smallest) Lipschitz constant, we conclude that
$$
\max\Big\{\Big|\lim_{x\to-\infty}\partial_-f_i(x)\Big|,\Big|\lim_{x\to\infty}\partial_+f_i(x)\Big|\Big\}=c_i, \quad i=1,2.
$$
To complete the proof, in view of  \eqref{eq:app8} and \eqref{eq:app9}, it is enough to show that
\begin{align}\label{eq:app8}
  \lim_{x\to\infty}\partial_+f_1(x)\leq\lim_{x\to\infty}\partial_+f_2(x), \\
  \lim_{x\to-\infty}\partial_-f_1(x)\geq\lim_{x\to-\infty}\partial_-f_2(x). \label{eq:app9}
\end{align}
If $\lim_{x\to\infty}\partial_+f_1(x)>\lim_{x\to\infty}\partial_+f_2(x)$. Then, there exists $x^*\in\bR$ and $\delta>0$, such that $\partial_+f_1(x)>\partial_+f_2(x)+\delta$ for any $x\geq x^*$. Let us consider $F(x)=f_1(x)-f_2(x)$. Then,
\begin{align*}
  \partial_+F(x)=\partial_+f_1(x)-\partial_+f_2(x)>\delta>0, \quad x>x^*.
\end{align*}
Hence, for any $x\in\bR$, such that  $x-x^*>-\frac{F(x^*)}{\delta}\geq0$, we get that
$$
F(x)>F(x^*)+\delta(x-x^*)>F(x^*)-F(x^*)=0,
$$
which contradicts the original assumption that $f_1\leq f_2$.
  Therefore,  $c_1\leq c_2$, and the proof is complete.

\end{proof}

  \begin{proposition}\label{pr:predictable} Let $(\Omega, \cF, \bP)$ be a probability space, and let $Z$ be an $\cF$-measurable random variable.
    Assume that the function $g:\Omega\times\bR\to\bR$ is such that, for every $z\in\bR$, the mapping $\omega\mapsto g(\omega,z)$ is $\cF$-measurable, and the mapping $z\mapsto g(\omega,z)$ is continuous  $\bP$-a.s.
    Then, $g(\, \cdot \, ,Z(\, \cdot\,))$ is $\cF$-measurable.
  \end{proposition}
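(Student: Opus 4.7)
My plan is to establish measurability of $g(\cdot, Z(\cdot))$ by the classical Carathéodory-style approximation argument: approximate $Z$ by simple $\cF$-measurable random variables, use the measurability in $\omega$ for each fixed $z$ to get measurability of the composition at the discrete level, and then pass to the limit using continuity in $z$.

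First, I would construct a sequence of simple $\cF$-measurable random variables $(Z_n)_{n\geq 1}$ such that $Z_n(\omega) \to Z(\omega)$ for every $\omega \in \Omega$. Concretely, one can take the standard dyadic approximation
\[
Z_n(\omega) = \sum_{k=-n 2^n}^{n 2^n - 1} \frac{k}{2^n}\, \1_{\{k/2^n \leq Z < (k+1)/2^n\}}(\omega) + n\,\1_{\{Z \geq n\}}(\omega) - n\,\1_{\{Z < -n\}}(\omega),
\]
which is a finite linear combination of indicators of $\cF$-measurable sets, hence simple and $\cF$-measurable, and converges pointwise to $Z$ everywhere.

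Next, for each fixed $n$, writing $Z_n = \sum_{k} z_{n,k}\,\1_{A_{n,k}}$ with the $A_{n,k}\in\cF$ pairwise disjoint, I would observe that
\[
g(\omega, Z_n(\omega)) = \sum_{k} g(\omega, z_{n,k})\,\1_{A_{n,k}}(\omega).
\]
By the hypothesis that $\omega \mapsto g(\omega, z)$ is $\cF$-measurable for each fixed $z \in \bR$, each summand $g(\cdot, z_{n,k})\,\1_{A_{n,k}}$ is $\cF$-measurable, so the finite sum $g(\cdot, Z_n(\cdot))$ is $\cF$-measurable.

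Finally, let $\Omega_0 \in \cF$ be the full-probability set on which $z \mapsto g(\omega, z)$ is continuous. For every $\omega \in \Omega_0$, the convergence $Z_n(\omega) \to Z(\omega)$ together with continuity yields $g(\omega, Z_n(\omega)) \to g(\omega, Z(\omega))$. Defining $\widetilde h(\omega) := \limsup_{n\to\infty} g(\omega, Z_n(\omega))$ gives an $\cF$-measurable function (pointwise $\limsup$ of $\cF$-measurable functions is $\cF$-measurable on all of $\Omega$, without needing completion), and $\widetilde h$ agrees with $g(\cdot, Z(\cdot))$ on $\Omega_0$. On the exceptional null set $\Omega \setminus \Omega_0$ one can redefine $g(\omega, Z(\omega))$ by $\widetilde h(\omega)$, and under the standard convention that $\cF$ is $\bP$-complete (or interpreting measurability in the a.s.\ sense as the paper does throughout), this identifies $g(\cdot, Z(\cdot))$ with an $\cF$-measurable function.

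The only mildly delicate point is the handling of the null set where continuity may fail; the rest is the routine pattern (simple approximation, finite-sum measurability, limit). I would flag this by either working under the usual completeness assumption on $(\Omega,\cF,\bP)$ that is implicit elsewhere in the paper, or by noting that one may redefine $g(\omega,\cdot)$ arbitrarily (e.g., to be identically zero) on the null set without affecting any $\bP$-a.s.\ statement used later.
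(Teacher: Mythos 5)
Your proposal is correct and follows essentially the same route as the paper's proof: approximate $Z$ by simple $\cF$-measurable random variables, write $g(\omega,Z_n(\omega))$ as a finite sum of $\1_{A_{n,k}}(\omega)\,g(\omega,z_{n,k})$ to get measurability at each stage, and pass to the limit via continuity in $z$. Your explicit treatment of the exceptional null set (via the $\limsup$ and the completeness/a.s.\ convention) is in fact slightly more careful than the paper's, which glosses over that point.
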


  \begin{proof}
Since $Z$ is $\cF$-measurable, then there exists a sequence of simple $\cF$-measurable random variables $(\mu^n)_{n=1}^\infty$, such that $\lim_{n\rightarrow\infty}\mu^n=Z$ with probability one, and $\mu^n=\sum_{i=1}^{m_n}\1_{A^n_i}\mu^n_i$, where $\mu^n_i\in\bR$, $A^n_i\in\cF$, $\cup^{m_n}_{i=1}A^n_i=\Omega$, $A^n_i\cap A^n_j=\emptyset$, $i\neq j$.
Then, we have that
$$
g(\omega,\mu^n(\omega))=\sum_{i=1}^{m_n}\1_{A^n_i}(\omega)g(\omega,\mu^n_i), \quad n\in\bZ^+.
$$
Then, the measurability of $\omega\mapsto g(\omega,z)$ implies that $g(\omega,\mu^n_i(\omega))$ is $\cF$-measurable, and hence, $g(\omega,\mu^n(\omega))$ is $\cF$-measurable for any $n\in\bZ^+$.
Moreover, since $g(\omega,z)$ is continuous in $z$, we have that $\lim_{n\rightarrow\infty}g(\omega,\mu^n(\omega))=g(\omega,Z(\omega))$ almost surely, and therefore $g(\, \cdot \, ,Z(\, \cdot\,))$ is $\cF$-measurable.
\end{proof}

\begin{proposition}\label{pr:convcombi}
  Assume that the underlying market satisfies conditions (M1)-(M6).
  Then, for any trading strategy $\phi$, $\psi\in\cS(t)$, there exists $\theta\in\cS(t)$ such that
  \begin{equation} \label{eq:app1}
  V_u(\theta)\geq\lambda V_u(\phi)+(1-\lambda)V_u(\psi), \quad \lambda\in L^\infty(\cF_t), \ 0\leq\lambda\leq1, \ u=t+1,\ldots,T.
\end{equation}
\end{proposition}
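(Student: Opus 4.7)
The plan is to construct $\theta\in\cS(t)$ by taking componentwise convex combinations of the risky positions of $\phi,\psi$ with weight $\lambda$, and then adjusting the banking-account component $\theta^0$ to enforce self-financing. Specifically, I set
\[\theta^{l,i}_u:=\lambda\phi^{l,i}_u+(1-\lambda)\psi^{l,i}_u,\qquad \theta^{s,i}_u:=\lambda\phi^{s,i}_u+(1-\lambda)\psi^{s,i}_u,\qquad u=t+1,\ldots,T,\ i=1,\ldots,K,\]
with $\theta^{l/s,i}_u=0$ for $u\le t$. Because $\lambda\in L^\infty(\cF_t)\subset L^\infty(\cF_{u-1})$ and $\phi,\psi\in\cS(t)$ vanish on $\{0,\ldots,t\}$, these components are predictable, non-negative and zero up to time $t$. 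I then put $\theta^0_u=0$ for $u\le t$ and define $\theta^0_u$ for $u\ge t+1$ recursively by requiring that the self-financing identity \eqref{eq:self-fin} hold for $\theta$ at each step.

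The heart of the argument is the per-asset cost estimate
\[X_i^\theta(u+1)\;\le\;\lambda X_i^\phi(u+1)+(1-\lambda)X_i^\psi(u+1),\qquad i=1,\ldots,K,\ u=t,\ldots,T-1,\]
where $X_i^{\,\cdot}(u+1)$ denotes the bracketed rebalancing cost for asset $i$ on the left of \eqref{eq:self-fin} under the indicated strategy. The long and short legs decouple, so it is enough to bound the long-leg contribution. Writing $a:=\Delta\phi^{l,i}_{u+1}$, $b:=\Delta\psi^{l,i}_{u+1}$ and
\[C(x):=\1_{x\ge0}P^{\ask}_u(x,D^{\ask,i})-\1_{x<0}P^{\bid}_u(-x,D^{\ask,i}),\]
the long-leg bound becomes $C(\lambda a+(1-\lambda)b)\le\lambda C(a)+(1-\lambda)C(b)$. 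I would establish this pointwise by splitting $\Omega$ into four $\cF_u$-measurable pieces according to the signs of $a$ and $b$: on $\{a\ge0,b\ge0\}$ and on $\{a<0,b<0\}$ the bound follows, respectively, from convexity of $P^{\ask}(\cdot,D^{\ask,i})$ and concavity of $P^{\bid}(\cdot,D^{\ask,i})$ in the share variable, both of which are (M5); on $\{a\ge0,b<0\}$ it is precisely (M6) with $\varphi^1:=a$, $\varphi^2:=b$; and on $\{a<0,b\ge0\}$ it is (M6) with $\varphi^1:=b$, $\varphi^2:=a$ and the weights $\lambda,\,1-\lambda$ interchanged. An identical argument, with $P^{\ask}$ and $P^{\bid}$ playing swapped roles, handles the short leg. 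This four-case split is the main obstacle: (M5) alone cannot cover the mixed-sign cases, and (M6) is tailored precisely for them.

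With the cost estimate in hand, the self-financing identities for $\phi$, $\psi$ and $\theta$ combine linearly -- the acquired-dividend terms $\sum_i(\cdot^{l,i}_uD^{\ask,i}_u-\cdot^{s,i}_uD^{\bid,i}_u)$ on the right of \eqref{eq:self-fin} are linear in the risky positions -- to yield
\[\Delta\theta^0_{u+1}-\lambda\Delta\phi^0_{u+1}-(1-\lambda)\Delta\psi^0_{u+1}=\sum_{i=1}^K\bigl[\lambda X_i^\phi(u+1)+(1-\lambda)X_i^\psi(u+1)-X_i^\theta(u+1)\bigr]\ge 0\]
for $u=t,\ldots,T-1$. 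Telescoping from the initial condition -- $\phi^0_t=\psi^0_t=\theta^0_t=0$ when $t\ge1$, and $\widetilde V_0(\phi)=\widetilde V_0(\psi)=\widetilde V_0(\theta)=0$ in the $t=0$ case -- produces the cumulative estimate $\theta^0_u\ge\lambda\phi^0_u+(1-\lambda)\psi^0_u$ for every $u=t+1,\ldots,T$.

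To conclude, I would apply (M5) one final time to the two pricing terms appearing in (V2): concavity of $P^{\bid}_u(\cdot,D^{\ask,i})$ gives $P^{\bid}_u(\theta^{l,i}_u,D^{\ask,i})\ge\lambda P^{\bid}_u(\phi^{l,i}_u,D^{\ask,i})+(1-\lambda)P^{\bid}_u(\psi^{l,i}_u,D^{\ask,i})$; convexity of $P^{\ask}_u(\cdot,D^{\bid,i})$ gives the reverse inequality for $P^{\ask}_u(\theta^{s,i}_u,D^{\bid,i})$; and the remaining acquired-dividend sum $\sum_i(\theta^{l,i}_uD^{\ask,i}_u-\theta^{s,i}_uD^{\bid,i}_u)$ is linear in $\theta^{l/s,i}_u$. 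Combining these three bounds with the cumulative estimate $\theta^0_u\ge\lambda\phi^0_u+(1-\lambda)\psi^0_u$ delivers $V_u(\theta)\ge\lambda V_u(\phi)+(1-\lambda)V_u(\psi)$ for every $u=t+1,\ldots,T$, which is \eqref{eq:app1}.
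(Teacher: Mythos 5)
Your proposal is correct and follows essentially the same route as the paper's proof: form $\theta$ by taking the convex combination of the risky legs, define $\theta^0$ recursively through the self-financing identity, bound the per-asset rebalancing cost of $\theta$ by the convex combination of the costs of $\phi$ and $\psi$ via a sign-split that uses (M5) on the same-sign events and (M6) on the mixed-sign events, telescope to get $\theta^0_u\ge\lambda\phi^0_u+(1-\lambda)\psi^0_u$, and finish with (M5) applied to the liquidation-value pricing terms. The only cosmetic difference is that the paper unwinds the indicator $\1_{\vartheta\ge0}$ in (M6) into two further subcases (six sets in all), whereas you keep it folded into four cases; the substance is identical.
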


\begin{proof}
  We define the trading strategy $\theta$ as follows:
  \begin{equation*}
    \theta^{l/s,i}_u=\left\{
    \begin{array}{l l}
    0 & \quad u=0,\ldots,t,\\
    \lambda\phi^{l/s,i}_u+(1-\lambda)\psi^{l/s,i}_u & \quad u=t+1,\ldots,T,
  \end{array} \right.
  \end{equation*}
 for $i=1,\ldots,K$, and
  \begin{equation*}
    \theta^0_u=\left\{
    \begin{array}{ll}
    0 & \quad u=0,\ldots,t,\\
    \theta^0_{t+1}+\sum^u_{r=t+2}\zeta_r, & \quad u=t+1,\ldots,T,
    \end{array}\right.
  \end{equation*}
  where
  \begin{align}
    \theta^0_{t+1}=&-\sum^K_{i=1}\Big(P^{\ask}_t(\theta^{l,i}_{t+1},D^{\ask,i})-P^{\bid}_t(\theta^{s,i}_{t+1},D^{\bid,i})\Big),\label{eq:theta0tp1}\\
    \zeta_u=&\sum^K_{i=1}(\theta^{l,i}_{u-1}D^{\ask,i}_{u-1}-\theta^{s,i}_{u-1}D^{\bid,i}_{u-1})-\sum^K_{i=1}\Big(1_{\Delta\theta^{l,i}_{u}\geq0}P^{\ask}_{u-1}(\Delta\theta^{l,i}_u,D^{\ask,i})-\1_{\Delta\theta^{l,i}_u<0}P^{\bid}_{u-1}(-\Delta\theta^{l,i}_u,D^{\ask,i})\nonumber\\
    &-\1_{\Delta\theta^{s,i}_u\geq0}P^{\bid}_{u-1}(\Delta\theta^{s,i}_u,D^{\bid,i})+\1_{\Delta\theta^{s,i}_u<0}P^{\ask}_{u-1}(-\Delta\theta^{s,i}_u,D^{\bid,i})\Big), \quad u=t+2,\ldots,T.\nonumber
  \end{align}

By straightforward bookkeeping one can show that $\theta$ is a self-financing trading strategy, namely it satisfies \eqref{eq:self-fin}.
Also, due to the construction of $\theta$, clearly, it belongs to $\cS(t)$.

Clearly \eqref{eq:self-fin} is fulfilled for $u=0,\ldots, t$.  For $u=t+1$, we get that
\begin{align*}
  \Delta\theta^0_{t+1}&=\theta^0_{t+1},\\
  \Delta\theta^{l/s,i}_{t+1}&=\theta^{l/s,i}_{t+1}, \quad i=1,\ldots,K.
\end{align*}
Then, it implies that
  \begin{align*}
    \Delta\theta^0_{t+1}+\sum^K_{i=1}\Big(\1_{\Delta\theta^{l,i}_{t+1}\geq0} & P^{\ask}_t(\Delta\theta^{l,i}_{t+1},D^{\ask,i})-\1_{\Delta\theta^{l,i}_{t+1}<0}P^{\bid}_t(-\Delta\theta^{l,i}_{t+1},D^{\ask,i})\\
    -\1_{\Delta\theta^{s,i}_{t+1}\geq0} & P^{\bid}_t(\Delta\theta^{s,i}_{t+1},D^{\bid,i})+\1_{\Delta\theta^{s,i}_{t+1}<0}P^{\ask}_t(-\Delta\theta^{s,i}_{t+1},D^{\bid,i})\Big)\\
    =&-\sum^K_{i=1}\Big(P^{\ask}_t(\theta^{l,i}_{t+1},D^{\ask,i})-P^{\bid}_t(\theta^{s,i}_{t+1},D^{\bid,i})\Big)\\
    &+\sum^K_{i=1}\Big(P^{\ask}_t(\theta^{l,i}_{t+1},D^{\ask,i})-P^{\bid}_t(\theta^{s,i}_{t+1},D^{\bid,i})\Big)\\
    =&0=\sum^K_{i=1}(\theta^{l,i}_tD^{\text{ask},i}_t-\theta^{s,i}_tD^{\text{bid},i}_t).
  \end{align*}
For $u=t+2,\ldots,T$, we have that $\Delta\theta_u=\zeta_u$.
Therefore,
  \begin{align*}
    \Delta\theta^0_u+\sum^K_{i=1}\Big(\1_{\Delta\theta^{l,i}_u\geq0} & P^{\ask}_{u-1}(,\Delta\theta^{l,i}_u,D^{\ask,i})-\1_{\Delta\theta^{l,i}_u<0}P^{\bid}_{u-1}(-\Delta\theta^{l,i}_u,D^{\ask,i})\\
    -\1_{\Delta\theta^{s,i}_u\geq0} & P^{\bid}_{u-1}(\Delta\theta^{s,i}_u,D^{\bid,i})+\1_{\Delta\theta^{s,i}_u<0}P^{\ask}_{u-1}(-\Delta\theta^{s,i}_u,D^{\bid,i})\Big)\\
    =&\sum^K_{i=1}(\theta^{l,i}_{u-1}D^{\ask,i}_{u-1}-\theta^{s,i}_{u-1}D^{\bid,i}_{u-1}), \quad u=t+2,\ldots,T
  \end{align*}
  Thus, $\theta$ is a self-financing trading strategy.

  We are left to prove \eqref{eq:app1}.
  Since $\theta^{l/s,i}_u=\lambda\phi^{l/s,i}_u+(1-\lambda)\psi^{l/s,i}_u$, $u=t+1,\ldots,T$, and in view of (M5), it is sufficient to verify that
  \begin{equation} \label{eq:app2}
  \theta^0_u\geq\lambda\phi^0_u+(1-\lambda)\psi^0_u, \quad u=t+1,\ldots,T,
  \end{equation}
  or equivalently,
  \begin{align}
    \theta^0_{t+1}&\geq\lambda\phi^0_{t+1}+(1-\lambda)\psi^0_{t+1},\label{eq:thetatp1}\\
    \Delta\theta^0_u&\geq\lambda\Delta\phi^0_u+(1-\lambda)\Delta\psi^0_u, \quad u=t+2,\ldots,T \label{eq:dthetau}.
  \end{align}
  Since $\phi, \psi\in\cS(t)$, then
  \begin{align*}
    \phi^0_{t+1}=&-\sum^K_{i=1}\Big(P^{\ask}_t(\phi^{l,i}_{t+1},D^{\ask,i})-P^{\bid}_t(\phi^{s,i}_{t+1},D^{\bid,i})\Big),\\
    \psi^0_{t+1}=&-\sum^K_{i=1}\Big(P^{\ask}_t(\psi^{l,i}_{t+1},D^{\ask,i})-P^{\bid}_t(\psi^{s,i}_{t+1},D^{\bid,i})\Big).
  \end{align*}
  From here, combine the above with \eqref{eq:theta0tp1}, also take into account of (M5).
  Then, \eqref{eq:thetatp1} is an immediate result.

Next, we will prove \eqref{eq:dthetau}.
  Fix $u=t+2,\ldots,T$.
  Due to the above construction, $\Delta\theta^0_u$ satisfies the following identity
  \begin{align*}
  \Delta\theta^0_u=\zeta_u=&\sum^K_{i=1}(\theta^{l,i}_{u-1}D^{\ask,i}_{u-1}-\theta^{s,i}_{u-1}D^{\bid,i}_{u-1})\\
  &-\sum^K_{i=1}\Big(\underbrace{1_{\Delta\theta^{l,i}_{u}\geq0}P^{\ask}_{u-1}(\Delta\theta^{l,i}_u,D^{\ask,i})-\1_{\Delta\theta^{l,i}_u<0}P^{\bid}_{u-1}(-\Delta\theta^{l,i}_u,D^{\ask,i})}_{I^l(\theta,i)}\\
  &\underbrace{-\1_{\Delta\theta^{s,i}_u\geq0}P^{\bid}_{u-1}(\Delta\theta^{s,i}_u,D^{\bid,i})+\1_{\Delta\theta^{s,i}_u<0}P^{\ask}_{u-1}(-\Delta\theta^{s,i}_u,D^{\bid,i})}_{I^s(\theta,i)}\Big).
  \end{align*}
  According to the self-financing condition \eqref{eq:self-fin}, we get that
  \begin{align*}
    \Delta\phi^0_u=&\sum^K_{i=1}(\phi^{l,i}_{u-1}D^{\ask,i}_{u-1}-\phi^{s,i}_{u-1}D^{\bid,i}_{u-1})-\sum^K_{i=1}(I^l(\phi,i)+I^s(\phi,i)),\\
  \Delta\psi^0_u=&\sum^K_{i=1}(\psi^{l,i}_{u-1}D^{\ask,i}_{u-1}-\psi^{s,i}_{u-1}D^{\bid,i}_{u-1})-\sum^K_{i=1}(I^l(\psi,i)+I^s(\psi,i)).
  \end{align*}
  Thus \eqref{eq:dthetau} is equivalent to
  $$
  \sum^K_{i=1}(I(\theta,l,i)+I(\theta,s,i))\leq\lambda\sum^K_{i=1}(I(\phi,l,i)+I(\phi,s,i))+(1-\lambda)\sum^K_{i=1}(I(\psi,l,i)+I(\psi,s,i)),
  $$
  which will follow once we show that
  \begin{align}\label{eq:dtheta}
  I^l(\theta,i)\leq\lambda I^l(\phi,i)+(1-\lambda)I^l(\psi,i),
  \end{align}
  and
  \begin{align}\label{eq:dthetas}
  I^s(\theta,i)\leq\lambda I^s(\phi,i)+(1-\lambda)I^s(\psi,i).
  \end{align}
  We will prove \eqref{eq:dtheta}, and \eqref{eq:dthetas} follows similarly.
  Due to the symmetry in $\phi$ and $\psi$, then without loss of generality, we assume that $\frac{1}{2}\leq\lambda\leq1$.

  For the purpose of showing \eqref{eq:dtheta}, we define the following sets
  \begin{align*}
  A^{i,0}_t&=\{\Delta\phi^{l,i}_t\geq0,\Delta\psi^{l,i}_t\geq0\},\\
  A^{i,1}_t&=\Big\{\Delta\phi^{l,i}_t\geq0,\Delta\psi^{l,i}_t<0,\Delta\phi^{l,i}_t\geq\frac{\lambda-1}{\lambda}\Delta\psi^{l,i}_t\Big\},\\
  A^{i,2}_t&=\Big\{\Delta\phi^{l,i}_t\geq0,\Delta\psi^{l,i}_t<0,\Delta\phi^{l,i}_t<\frac{\lambda-1}{\lambda}\Delta\psi^{l,i}_t\Big\},\\
  A^{i,3}_t&=\Big\{\Delta\phi^{l,i}_t<0,\Delta\psi^{l,i}_t\geq0,\Delta\phi^{l,i}_t\geq\frac{\lambda-1}{\lambda}\Delta\psi^{l,i}_t\Big\},\\
  A^{i,4}_t&=\Big\{\Delta\phi^{l,i}_t<0,\Delta\psi^{l,i}_t\geq0,\Delta\phi^{l,i}_t<\frac{\lambda-1}{\lambda}\Delta\psi^{l,i}_t\Big\},\\
  A^{i,5}_t&=\{\Delta\phi^{l,i}_t<0,\Delta\psi^{l,i}_t<0\},
  \end{align*}
  for any $t\in\cT$.
  Clearly, we have that $A^{i,0}_t\cup A^{i,1}_t\cup A^{i,2}_t=\{\Delta\phi^{l,i}_t\geq0\}$, $A^{i,0}_t\cup A^{i,3}_t\cup A^{i,4}_t=\{\Delta\psi^{l,i}_t\geq0\}$, $A^{i,1}_t\cup A^{i,2}_t\cup A^{i,5}_t=\{\Delta\psi^{l,i}_t<0\}$, and $A^{i,3}_t\cup A^{i,4}_t\cup A^{i,5}_t=\{\Delta\phi^{l,i}_t<0\}$.
  Hence, \eqref{eq:dtheta} can be represented as
  \begin{align*}
    &\lambda\Big((\1_{A^{i,0}_u}+\1_{A^{i,1}_u}+\1_{A^{i,2}_u})P^{\ask}_{u-1}(\Delta\phi^{l,i}_u,D^{\ask,i})-(\1_{A^{i,3}_u}+\1_{A^{i,4}_u}+\1_{A^{i,5}_u})P^{\bid}_{u-1}(-\Delta\phi^{l,i}_u,D^{\bid,i})\Big)\\
    &+(1-\lambda)\Big((\1_{A^{i,0}_u}+\1_{A^{i,3}_u}+\1_{A^{i,4}_u})P^{\ask}_{u-1}(\Delta\psi^{l,i}_u,D^{\ask,i})-(\1_{A^{i,1}_u}+\1_{A^{i,2}_u}+\1_{A^{i,5}_u})P^{\bid}_{u-1}(-\Delta\psi^{l,i}_u,D^{\bid,i})\Big).
  \end{align*}
  In view of (M5), we get that
  \begin{align}\label{eq:ai0}
    \1_{A^{i,0}_u}\Big(\lambda P^{\ask}_{u-1}(\Delta\phi^{l,i}_u,D^{\ask,i})+(1-\lambda)P^{\ask}_{u-1}(\psi^{l,i}_u,D^{\ask,i})\Big)\geq\1_{A^{i,0}_u}P^{\ask}_{u-1}(\Delta\theta^{l,i}_u,D^{\ask,i}),
  \end{align}
  and
  \begin{align}\label{eq:ai5}
    \1_{A^{i,5}_u}\Big(-\lambda P^{\bid,i}_{u-1}(\Delta\phi^{l,i}_u,D^{\ask,i})-(1-\lambda)P^{\bid,i}_{u-1}(-\Delta\psi^{l,i}_u,D^{\ask,i})\Big)\geq-\1_{A^{i,5}_u}P^{\bid}_{u-1}(-\Delta\theta^{l,i}_u,D^{\ask,i}).
  \end{align}
  According to (M6), we have that
  \begin{align}
    &\1_{A^{i,1}_u}\Big(\lambda P^{\ask}_{u-1}(\Delta\phi^{l,i}_u,D^{\ask,i})-(1-\lambda)P^{\bid}_{u-1}(-\Delta\psi^{l,i}_u,D^{\ask,i})\Big)\geq\1_{A^{i,1}_u}P^{\ask}_{u-1}(\Delta\theta^{l,i}_u,D^{\ask,i}),\label{eq:ai1}\\
    &\1_{A^{i,2}_u}\Big(\lambda P^{\ask,i}_{u-1}(\Delta\phi^{l,i}_u,D^{\ask,i})-(1-\lambda)P^{\bid,i}_{u-1}(-\Delta\psi^{l,i}_u,D^{\ask,i})\Big)\geq-\1_{A^{i,2}_u}P^{\bid}_{u-1}(\Delta\theta^{l,i}_u,D^{\ask,i}),\label{eq:ai2}\\
    &\1_{A^{i,3}_u}\Big(-\lambda P^{\bid,i}_{u-1}(-\Delta\phi^{l,i}_u,D^{\ask,i})+(1-\lambda)P^{\ask,i}_{u-1}(\Delta\psi^{l,i}_u,D^{\ask,i})\Big)\geq\1_{A^{i,3}_u}P^{\ask}_{u-1}(\Delta\theta^{l,i}_u,D^{\ask,i}),\label{eq:ai3}\\
    &\1_{A^{i,4}_u}\Big(-\lambda P^{\bid,i}_{u-1}(-\phi^{l,i}_u,D^{\ask,i})+(1-\lambda)P^{\ask,i}_{u-1}(\Delta\psi^{l,i}_u,D^{\ask,i})\Big)\geq-\1_{A^{i,4}_u}P^{\bid}_{u-1}(\Delta\theta^{l,i}_u,D^{\ask,i})\label{eq:ai4}.
  \end{align}
  Summing up inequalities \eqref{eq:ai0}-\eqref{eq:ai4} part by part, we deduce that \eqref{eq:dtheta} is true, which concludes the proof.
\end{proof}

\begin{proposition}\label{pr:convexset}
  Let $\phi\in\cS(t)$, and let $\cL_+(t)$ be defined as
  $$
  \cL_+(t):=\Big\{(Z_s)^T_{s=0}:Z_s\in L^2_+(\Omega,\cF_s,\bP), Z_s=0,\ s\leq t\Big\},
  $$
  for $t=0,\ldots,T-1$.
  Then, for any $t\in\set{0, \ldots, T-1}$, the set
  $$
  \cH(t):=\Big\{\Big(0,\ldots,0,\Delta(V_{t+1}(\phi)-Z_{t+1}),\ldots,\Delta(V_T(\phi)-Z_T)\Big):\phi\in\cS(t),Z\in\cL_+(t)\Big\}
  $$
  is a convex set.
\end{proposition}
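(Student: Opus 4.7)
The plan is to exploit Proposition~\ref{pr:convcombi} directly. Given $H^1, H^2 \in \cH(t)$, by definition there exist $\phi^1,\phi^2 \in \cS(t)$ and $Z^1,Z^2 \in \cL_+(t)$ such that $H^i_s = \Delta\bigl(V_s(\phi^i) - Z^i_s\bigr)$ for $s=t+1,\ldots,T$ and $i=1,2$. Fix $\lambda \in [0,1]$. The goal is to exhibit a single pair $(\theta, Z^3) \in \cS(t) \times \cL_+(t)$ representing $\lambda H^1 + (1-\lambda) H^2$.

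First, I would invoke Proposition~\ref{pr:convcombi} to produce a strategy $\theta \in \cS(t)$ satisfying
\[
V_u(\theta) \geq \lambda V_u(\phi^1) + (1-\lambda) V_u(\phi^2), \qquad u = t+1,\ldots,T,
\]
and I would then define the non-negative slack process
\[
W_s := V_s(\theta) - \lambda V_s(\phi^1) - (1-\lambda) V_s(\phi^2), \qquad s=t+1,\ldots,T,
\]
together with $W_s := 0$ for $s\leq t$, as well as the convex combination $\widetilde Z_s := \lambda Z^1_s + (1-\lambda) Z^2_s$ of the original slack processes. I would then set $Z^3_s := W_s + \widetilde Z_s$ and verify that $Z^3 \in \cL_+(t)$: each $Z^3_s$ is $\cF_s$-measurable, non-negative, vanishes for $s\leq t$, and lies in $L^2(\cF_s)$ because $V_s(\phi), V_s(\theta) \in L^2(\cF_s)$ by the assumption that $P^{\ask}_t, P^{\bid}_t : L^\infty_+(\cF_t)\times\cD \to L^2(\cF_t)$.

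Second, I would verify by direct computation that $\lambda H^1 + (1-\lambda) H^2 = \Delta(V(\theta)-Z^3)$ on $\{t+1,\ldots,T\}$. Since $\lambda$ is a constant, it commutes with the backward difference operator, and I can write
\[
\lambda H^1_s + (1-\lambda) H^2_s = \Delta\bigl(\lambda V_s(\phi^1)+(1-\lambda)V_s(\phi^2) - \widetilde Z_s\bigr) = \Delta\bigl(V_s(\theta) - W_s - \widetilde Z_s\bigr) = \Delta\bigl(V_s(\theta) - Z^3_s\bigr),
\]
using that $\lambda V_s(\phi^1)+(1-\lambda)V_s(\phi^2) = V_s(\theta) - W_s$ by the definition of $W$. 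This identifies $\lambda H^1 + (1-\lambda) H^2$ as an element of $\cH(t)$, completing the argument.

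The only substantive step is the first one: constructing a self-financing strategy $\theta$ whose liquidation value dominates the convex combination of $V(\phi^1)$ and $V(\phi^2)$. This is precisely where market properties (M5)--(M6) and the bookkeeping for banking account holdings are needed, but all of that work is already packaged in Proposition~\ref{pr:convcombi}; once that lemma is available, the remainder of the argument is purely algebraic, amounting to absorbing the non-negative excess $W$ into the slack process $Z^3$. In particular, no further properties of the pricing operators are invoked beyond the integrability stated in (M2).
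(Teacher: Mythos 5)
Your proof is correct and follows essentially the same route as the paper: both arguments reduce everything to Proposition~\ref{pr:convcombi} and then absorb the non-negative excess $V_s(\theta)-\lambda V_s(\phi^1)-(1-\lambda)V_s(\phi^2)$ together with $\lambda Z^1+(1-\lambda)Z^2$ into a new slack process (you make this $Z^3$ explicit, whereas the paper appeals to the super-hedging definition, so your write-up is if anything slightly more complete). The only cosmetic difference is that you fix a scalar $\lambda\in[0,1]$ while the paper works with $\lambda\in L^\infty(\cF_t)$, $0\leq\lambda\leq1$; your argument goes through verbatim in that generality since $\lambda$, being $\cF_t$-measurable and independent of $s$, still commutes with $\Delta$ and Proposition~\ref{pr:convcombi} is stated for such $\lambda$.
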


\begin{proof}
  Let $H^1$, $H^2\in\cH(t)$, and let $\lambda\in L^\infty(\cF_t)$ such that $0\leq\lambda\leq1$.
  Then, there exists $\phi$, $\psi\in\cS(t)$, $Z^1$, $Z^2\in\cL_+(t)$, such that
  $$
  H^1=\Big(0,\ldots,0,\Delta(V_{t+1}(\phi)-Z^1_{t+1}),\ldots,\Delta(V_T(\phi)-Z^1_T)\Big),
  $$
  and
  $$
  H^2=\Big(0,\ldots,0,\Delta(V_{t+1}(\psi)-Z^2_{t+1}),\ldots,\Delta(V_T(\psi)-Z^2_T)\Big).
  $$
  According to Proposition~\ref{pr:convcombi}, there exists $\theta\in\cS(t)$, such that $\lambda V_s(\phi)+(1-\lambda)V_s(\psi)\leq V_s(\theta)$ for any $s=t+1,\ldots,T$.
  Therefore,
  \begin{align*}
  \lambda\sum^s_{u=t+1}H^1_u+(1-\lambda)\sum^s_{u=t+1}H^2_u&=\lambda V_s(\phi)+(1-\lambda)V_s(\psi)-\lambda Z^1_s-(1-\lambda)Z^2_s\\
  &\leq V_s(\theta)-(\lambda Z^1_s+(1-\lambda)Z^2_s)\leq V_s(\theta),
  \end{align*}
  for any $s=t+1,\ldots,T$.
  By Definition~\ref{def:sh}, we have that $\lambda H^1+(1-\lambda)H^2\in\cH(t)$.
  The proof is complete.
\end{proof}

\end{appendix}

\section*{Acknowledgments}
Tomasz R. Bielecki and Igor Cialenco acknowledge support from the National Science Foundation  grant DMS-0908099, and DMS-1211256.

\bibliographystyle{alpha}
\newcommand{\etalchar}[1]{$^{#1}$}

\end{document}